\documentclass{article}

\usepackage[english]{babel}

\usepackage[letterpaper,top=1in,bottom=1in,left=1in,right=1in]{geometry}

\usepackage{xcolor}
\usepackage{enumerate}
\usepackage{amsmath,amsthm,amssymb}
\usepackage{graphicx}
\usepackage{hyperref}
\usepackage[linesnumbered,ruled,vlined,algo2e]{algorithm2e}
\usepackage{algorithm}
\usepackage{algpseudocode}
\usepackage{svg}
\usepackage[most]{tcolorbox}
\usepackage{thm-restate}
\hypersetup{
    colorlinks = true,
    allcolors = black
}

\usepackage[nameinlink]{cleveref}

\newtheorem{theorem}{Theorem}
\newtheorem{corollary}[theorem]{Corollary}
\newtheorem{definition}[theorem]{Definition}
\newtheorem{lemma}[theorem]{Lemma}

\newtheorem{observation}[theorem]{Observation}

\newcommand{\A}{\mathcal{A}}
\newcommand{\C}{\mathcal{C}}
\newcommand{\E}{\mathcal{E}}

\newcommand{\V}{\mathcal{V}}
\newcommand{\La}{\mathcal{L}}
\newcommand{\T}{\mathcal{T}}
\newcommand{\Hp}{\mathcal{H}}
\newcommand{\I}{\mathcal{I}}
\newcommand{\Type}{\mathbb{T}}
\newcommand{\Pow}{\mathcal{P}}

\newcommand{\hE}{\bar{E}}
\newcommand{\cen}{\mathfrak{c}}
\newcommand{\Nh}{\mathcal{N}}

\newcommand{\Sin}{\Sigma_{\text{in}}}
\newcommand{\Sout}{\Sigma_{\text{out}}}
\newcommand{\poly}{\text{poly}}

\newcommand{\poLFL}{\mathbf{poLFL}}

\newcommand{\N}{\mathbb{N}}

\newcommand{\set}[1]{\left\{#1\right\}}

\newcommand\utimes{\mathbin{\ooalign{$\cup$\cr%
   \hfil\raise0.42ex\hbox{$\scriptscriptstyle\times$}\hfil\cr}}}
\newcommand\bigutimes{\mathop{\ooalign{$\bigcup$\cr%
   \hfil\raise0.36ex\hbox{$\scriptscriptstyle\boldsymbol{\times}$}\hfil\cr}}}

\newcommand\blfootnote[1]{
    \begingroup
    \renewcommand\thefootnote{}\footnote{#1}
    \addtocounter{footnote}{-1}
    \endgroup
}

\newenvironment{myabstract}
{\list{}{\listparindent 1.5em%
		\itemindent    \listparindent
		\leftmargin    1cm
		\rightmargin   1cm
		\parsep        0pt}%
	\item\relax}
{\endlist}

\newenvironment{mycover}
{\list{}{\listparindent 0pt
		\itemindent    \listparindent
		\leftmargin    1cm
		\rightmargin   1cm
		\parsep        0pt}
	\raggedright
	\item\relax}
{\endlist}

\begin{document}

\begin{mycover}
    \vspace*{10mm}
    \begin{center}
    \begin{minipage}{0.75\textwidth}
    \begin{center}
    {\Huge LCLs Beyond Bounded Degrees\par}
    \bigskip
    {\Large Gustav Schmid}

    { University of Freiburg, Germany}  
    \bigskip
    \bigskip

    \textbf{Abstract}
    \end{center}
    \end{minipage} 
    \end{center}
    \blfootnote{schmidg@informatik.uni-freiburg.de}
  \end{mycover}

\begin{myabstract}
    The study of Locally Checkable Labelings (LCLs) has led to a remarkably precise characterization of the distributed time complexities that can occur on bounded-degree trees \cite{BBOS18almostGlobal, BHOS19HomogeneousLCL, brandt21trees, chang20, CKP19exponential, CP19timeHierarchy}. A central feature of this \emph{complexity landscape} is the existence of strong gap results, which rule out large ranges of intermediate complexities. While it was initially hoped that these gaps might extend to more general graph classes \cite{chang20}, this has turned out not to be the case \cite{BHKLOS18lclComplexity}. In this work, we investigate a different direction: we remain in the class of trees, but allow arbitrarily large degrees.

    We focus on the \emph{polynomial regime}, i.e. complexities of the form $\Theta(n^{1/k})$ for $k \in \mathbb{N}$, and show that whether polynomial gap results persist in the unbounded-degree setting crucially depends on how LCLs are generalized beyond bounded degrees. 
    
    A complex construction in \cite{LaurentDiamCheckers} shows that the polynomial gaps also vanish for LCLs on unbounded-degree trees. Rather than stopping at this negative result, we give a much simpler set of problems that already contradicts the existence of any polynomial gaps. The insight obtained from this cleaner construction is that for gap results to exist, we cannot allow problem definitions to distinguish infinitely many local cases.

    This guides us to a natural class of problems for which polynomial gap results can still be recovered. We introduce \emph{Locally Finite Labelings} (LFLs), which formalize the intuition that \emph{every node must fall into one of finitely many local cases}, even in the presence of unbounded degrees.

    Our main result shows that this restriction is sufficient to restore the polynomial gaps: for any LFL $\Pi$ on trees of unbounded degrees, the deterministic LOCAL complexity of $\Pi$ is either
    \begin{itemize}
        \item $\Theta(n^{1/k})$ for some integer $k \geq 1$, or
        \item $O(\log n)$.
    \end{itemize}
\end{myabstract}

\clearpage

\section{Introduction}

Locally Checkable Labeling (LCL) problems, introduced by Naor and Stockmeyer in 1995~\cite{NaorStockmeyer95}, form a central framework for studying the locality of distributed graph labeling problems. An LCL is specified by a finite set of labeled $r$-hop neighborhoods ($r \in O(1)$), and a labeling is valid if the $r$-hop neighborhood of every node is isomorphic to one of these configurations. The finiteness of this set inherently restricts LCLs to bounded-degree graphs: if node degrees were unbounded, the number of possible local neighborhoods would no longer be finite, even for $r=1$.

LCLs are typically studied in the LOCAL model of distributed computing. In this model, the input is a communication network $G=(V,E)$ whose nodes represent processors and whose edges represent communication links. Computation proceeds in synchronous rounds, and in each round, nodes may exchange messages with their neighbors and perform local computation. After some number of rounds, each node must output a label, and these labels must collectively satisfy the constraints of the LCL. We focus on the LOCAL model in this work, though many of our ideas are compatible with the bandwidth restricted CONGEST model as well (using techniques from \cite{smallMessagesbcmos21}).

Over the past decade, the study of LCLs has led to remarkably clean and surprising complexity classifications. On bounded-degree trees, deterministic LCLs admit exactly the following round complexities:
\[
O(1),\ \Theta(\log^* n),\ \Theta(\log n),\ \text{and } \Theta(n^{1/k}) \text{ for any integer } k \ge 1,
\]
with randomness only helping in the $\Theta(\log n)$ class, reducing some problems in this class to $\Theta(\log\log n)$~\cite{BBOS18almostGlobal, BHOS19HomogeneousLCL, brandt21trees, chang20, CKP19exponential, CP19timeHierarchy}. Crucially, no other complexities are possible. Results ruling out intermediate complexities, so-called \emph{gap results}, stand in stark contrast to classical sequential complexity theory, where finer-grained hierarchies are ubiquitous. Many of these gaps arise from constructive speedup theorems, which transform any sufficiently fast algorithm into an even faster one.

These results give rise to what is often called the \emph{complexity landscape} of LCLs on bounded-degree trees, illustrated in \Cref{fig:Landscape}. The existence of large forbidden regions is what shapes the landscape, but perhaps more importantly these gap results often unveil a deeper understanding of why the complexity landscape is the way it is.

\begin{figure}[!ht]
    \centering
    \includesvg[width=0.7\linewidth]{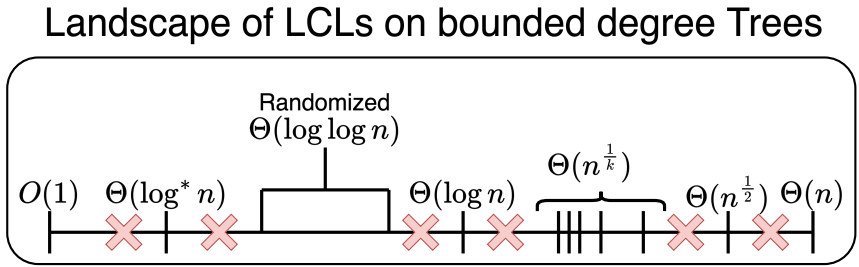}
    \caption{The deterministic complexity landscape of LCLs on bounded-degree trees. Every such LCL has one of the depicted complexities; no others are possible. In the randomized setting, the class $\Theta(\log\log n)$ also appears.}
    \label{fig:Landscape}
\end{figure}

Given the strength of these results, it was natural to hope that similarly clean landscapes might extend beyond trees. Indeed, it was conjectured, for instance, in the work of Chang and Pettie~\cite{chang20}, that such gaps might extend to general graphs. However, this hope was quickly dispelled. On bounded-degree general graphs, the complexity landscape becomes much denser: LCLs can have complexities such as $\poly(\log n)$, or arbitrary $\poly(n)$, and most of the gap results that hold on trees disappear~\cite{BHKLOS18lclComplexity}. Thus, even under the bounded-degree assumption, moving from trees to general graphs fundamentally changes the nature of the landscape.

This raises a different question: rather than generalizing from trees to general graphs, what happens if we generalize in the \emph{other} direction? That is, what if we retain the tree structure, which is responsible for many of the clean results, but drop the bounded-degree assumption?

Some evidence already suggests that unbounded degrees significantly alter the picture. Prior work has shown that, even on trees, the complexity of certain local problems depends explicitly on the maximum degree $\Delta$ (e.g.,~\cite{Balliu2019}), and that this dependence can lead to new asymptotic behaviors not present in the bounded-degree setting. In particular, by choosing $\Delta$ as a function of $n$, one can obtain deterministic complexities that do not fit into the classical bounded-degree tree landscape. This indicates that allowing unbounded degrees introduces genuinely new phenomena, even without cycles.

Unfortunately, investigating LCLs in the unbounded-degree tree setting also leads to a negative result. Bousquet, Feuilloley, and Pierron~\cite{LaurentDiamCheckers} show that on unbounded-degree trees any complexity $f(n) \in O(n/\log n)$ can occur\footnote{This is slightly imprecise, please refer to \cite{LaurentDiamCheckers} for the precise statement.} for what they call \emph{generalized LCLs}. 

However, the two negative results in bounded-degree general graphs and unbounded-degree trees differ in an important way.
The construction for general graphs in \cite{BHKLOS18lclComplexity} yielding arbitrary complexities for bounded-degree general graphs allows a meaningful interpretation, namely that we can abuse cycles to define easier versions of problems in a structured way.

In contrast the result of \cite{LaurentDiamCheckers} for the unbounded-degree tree setting, offers little to no insight into \emph{why} such generalized LCLs can behave so erratically. Their work is primarily geared toward locally verifying global graph properties, such as the diameter, via a framework of \emph{local checkers}. The corresponding LCL result is obtained through a reduction into this framework rather than through an explicit construction of LCLs. As a consequence, it is unclear which aspects are responsible for the collapse of structure in the complexity landscape of generalized LCLs. Do we only run into trouble, because we allow esoteric constructions in our generalization, or is this a problem fundamental to the unbounded-degree setting?

\vspace{0.15cm}
\begin{tcolorbox}%[innertopmargin=8pt,innerbottommargin=5pt,innermargin=5pt]
	%\center
	\textbf{Open Question} \noindent What causes the vanishing of gap results for LCLs in unbounded-degree trees?
\end{tcolorbox}

When considering LCLs in the unbounded-degree setting, a fundamental obstacle arises immediately: LCLs, as traditionally defined, do not make sense on unbounded-degree graphs, because they rely on a finite set of local configurations. Thus, before asking meaningful complexity questions, we must first decide how to generalize the notion of an LCL itself.
We believe that by carefully examining how exactly we define our class of problems we can get to the bottom of this issue of gap results. 

The notion of a generalized LCL, used in \cite{LaurentDiamCheckers}, just asks that the labeling can be verified by a $O(1)$ round LOCAL algorithm. That is, if the labeling is accepted by this constant round algorithm at every node, then the labeling is correct. We call such problems \emph{locally verifiable} problems in this work.

We give a clean construction of a family of \emph{locally verifiable} problems for unbounded-degree trees that have complexity $\Theta(n^r)$, for any real number $0<r\le1$. This recovers a weaker version of the result from \cite{LaurentDiamCheckers}, but is still enough to rule out many of the gaps in the complexity landscape. Since our construction directly deals with local problems, without involving complex machinery, it offers some insight into what causes these degenerate constructions. 

On some high level our construction just does the following: Define some very large number of neighborhoods such that each one can be next to each other, then order them arbitrarily by assigning them numbers 1, 2, ... 
Since we allow the degree to be arbitrarily large, there are much more than $n$ distinct radius 2 neighborhoods, even when each uses just a small number of nodes.

The problem now simply consists of checking whether these neighborhoods appear, adjacent to each other and in the correct order, starting with the first.

What determines the distributed complexity of these problems is that the last node in such a path must make sure, that the sequence of neighborhoods actually started with the correct neighborhood with number 1. The complexity is then the distance from the last node to the first node, or in other words the number of the last neighborhood in the chain.

By requiring nodes to only count up to a certain threshold, we can control the complexity of the problem and obtain any polynomial complexity we desire. Thus ruling out the existence of any gap results for such problems.

We think the main issue is the fact that our problem definition is able to make an infinite number of case distinctions, which leads us to the following question.

\begin{tcolorbox}
\textbf{Open Question.}
Is there a suitable (and natural) generalization of LCLs to unbounded-degree graphs for which a rich complexity landscape exists?
\end{tcolorbox}

We attempt to answer this question by giving a positive example of such a generalization in this work. 
We want to retain the essence of LCLs, that is we still just give a set of allowed configurations, for some fixed radius $r$. 
What is done in the bounded-degree setting is the following: The number of possible $r$-hop neighborhoods is finite, so we can just go through all of them and specify which ones we allow and which ones we do not allow. This is exactly the definition of LCLs.

However, in the unbounded-degree setting enumerating all possibilities necessarily means moving away from a finite description. In fact, writing down just a single configuration for each possible degree would result in an infinite number of such configurations.

In \cite{LaurentDiamCheckers} the authors define their generalized-LCLs in exactly this way. They just say that all possible labeled neighborhoods are classified into those that are allowed and those which are not allowed. While this captures all possible local labeling problems, it suffers exactly from the problem outlined above. Namely, that the infinitely many case distinctions that we can make will result in very unnatural problems as discussed above. 

Some other generalizations have been proposed in prior work. For example, the node-edge checkable framework, underlying the powerful round elimination technique, describes correctness via constraints on half-edge labels, and can be extended to unbounded degrees by specifying constraints for every possible degree \cite{brandt-2019-an-automatic-speedup-theorem-for}. This formalism can be viewed as a generalization, where nodes and edges have to only consider their radius 0.5 neighborhood to verify correctness. On Trees, this formulation is just as strong as the one above and so again suffers from the same problem\footnote{Although the node-edge formalism would also require an infinite number of outputlabels to be as expressive as generalized-LCLs.}.

Another approach appears in the study of binary labeling problems, where we are allowed only the labels 0 and 1. In \cite{24BinaryHighDegree} the authors look at these problems in the unbounded-degree setting, but restrict to a structurally simple class.

\subsection{Our Contributions}
A common feature of the above generalizations is that they allow an unbounded number of local configurations. 

Our first result shows that if such generalizations are permitted, then no meaningful polynomial gap results can exist: the complexity landscape becomes arbitrarily dense.

We do so by defining a family of problems that are locally checkable and that admit arbitrary polynomial complexities. Here, locally $4$-checkable refers to the fact that correctness can be verified by checking every node's 4-hop neighborhood. The solution is then globally correct if it is locally correct at every node.

\begin{restatable}{theorem}{PolyProblems}\label{thm:polyProblems}
    For any real $0 < r \le 1$, there exists a $4$-checkable problem using only labels $\{0,1\}$ with distributed complexity $\Theta(n^r)$ in unbounded-degree trees. 
\end{restatable}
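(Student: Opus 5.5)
We follow the high-level blueprint sketched in the introduction: build a chain of "numbered gadgets" whose numbers must increase by one along a path starting from gadget $1$. We cap the count at a threshold $T(n)$ chosen so that the complexity becomes $\Theta(n^r)$. Concretely, a \emph{gadget encoding the number $i$} is a node $v$ with exactly $i$ (or $i+c$ for a constant offset) pendant leaves. Using the input labels $\{0,1\}$ we can separate the roles of nodes, for instance by marking path nodes with $1$ and leaves with $0$. The encoded number is then visible within radius $1$, and comparing with a neighbor's number needs radius $2$. A path $v_1,v_2,\dots,v_\ell$ of marked nodes is a \emph{valid chain} if $v_1$ encodes $1$ and each $v_{j+1}$ encodes $j+1$.

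The output-labeling problem $\Pi$ asks every chain endpoint to output $1$ iff its chain is valid, with the outputs propagated consistently along the chain. The local constraints are as follows. Every chain node copies its chain neighbor's output bit. Whenever two consecutive chain nodes do not encode consecutive numbers, they must output $0$. If $v_1$ encodes $1$ and all increments are correct, the output must be $1$.

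The difficulty is to make a chain's validity locally checkable. Each break point is locally detectable, and the constraint "output $0$ if there is a break, $1$ otherwise" propagates along the chain through copying. This stays checkable within radius $4$ because every node only inspects numbers and outputs in its $2$-hop vicinity.

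To tune the complexity, we only let chains that are long relative to $n$ be "hard." The natural way is to make the increments grow. Let gadget $j$ carry $d_j$ leaves with $d_j$ growing like $j^{s}$, where the gadget size makes a chain of length $\ell$ use roughly $N(\ell)=\sum_{j\le \ell} d_j\approx \ell^{s+1}$ nodes. Setting $s=1/r-1$ gives $\ell\approx n^{r}$. Since $r$ is real, the sequence must be defined by integer rounding, for example $d_j=\lfloor j^{1/r}\rfloor-\lfloor (j-1)^{1/r}\rfloor+1$. The local check is then whether node $v_j$'s leaf count equals the predecessor's count plus the correct increment. Unfortunately this increment depends on $j$, which is not visible from the leaf counts unless the sequence $j\mapsto d_j$ is injective and computable from $d_j$ alone. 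We handle this by encoding the index directly: $v_j$ has exactly $\lfloor j^{1/r}\rfloor$ leaves. Because this map is strictly increasing for $r\le 1$, a node can decode $j$ from its degree and its neighbor's degree, and check that the two indices are consecutive. The total size of a valid chain of length $\ell$ is $\Theta(\ell^{1/r+1})$. We must therefore adjust the exponent, using $\lfloor j^{1/r-1}\rfloor$ leaves. For $r=1$ this gives no leaves; in that case we use a constant number of leaves plus a different label pattern, or simply a plain path with the endpoint label problem, which gives $\Theta(n)$.

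For the upper bound, every node gathers its chain up to the start or up to a break. Valid chains have length $O(n^{r})$ because they use $\Theta(\ell^{1/r})$ nodes. An invalid chain may be arbitrarily long, but a node only needs to look until the first break, at distance at most $O(n^r)$, since any prefix longer than that would exceed $n$ nodes. The lower bound is the main obstacle. We take a valid chain of length $\ell=\Theta(n^r)$ and a copy in which only the leaf count at $v_1$ is altered, padded to the same node count. The endpoint's view within radius $o(\ell)$ is identical in the two instances, but its output must differ. The delicate points are making the two instances have equal $n$ and keeping everything checkable within the stated radius $4$.
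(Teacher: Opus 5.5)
\textbf{Gap for $1/2<r<1$.} Your construction only works for $r\le 1/2$, plus the $r=1$ case you treat separately. The strict monotonicity you invoke was shown for leaf counts $\lfloor j^{1/r}\rfloor$, where the exponent is $\ge 1$. After you switch to $\lfloor j^{1/r-1}\rfloor$ leaves, the exponent $s=1/r-1$ is below $1$ exactly when $r>1/2$. Then $j\mapsto\lfloor j^{s}\rfloor$ is far from injective: for $r=0.9$ it equals $1$ for every $j<2^{9}$, so a node cannot decode its index.

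A radius-$4$ check also cannot count a run of $511$ equal leaf counts. Every window of an all-ones chain occurs in a genuinely valid chain, so such a chain, in which every node has one leaf, is accepted at any length. Its endpoint then needs $\Theta(n)$ rounds by your own delete-$v_1$ argument, so the complexity is $\Theta(n)$, not $\Theta(n^r)$.

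This is not a rounding problem. If the index of $v_j$ is read off from one integer per node, as in your scheme, then $d_1,\dots,d_\ell$ are pairwise distinct. Hence a chain of length $\ell$ carries at least $\binom{\ell}{2}$ leaves, which forces $\ell=O(\sqrt n)$. Exponents above $1/2$ are out of reach for this encoding.

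\textbf{Missing idea.} You need information-dense gadgets of a \emph{fixed} size, together with an explicit length cap. The paper attaches to $v_i$ the $i$-th height-$2$ partition gadget of a common size $g$. There are $p(g)=e^{\Theta(\sqrt g)}$ such gadgets, so a single size $g$ already supplies $g^{c}$ distinct indices for any constant $c$, at a cost of only $g$ nodes per path node. The path length is then limited by the locally checkable cap $L\le\tfrac12 g^{1/\alpha-1}$. Balancing $Lg\approx n$ against this cap gives $g\approx n^{\alpha}$ and $L\approx n^{1-\alpha}$, which covers every $r=1-\alpha\in(0,1)$. The lower bound is the same indistinguishability argument (delete $v_1$) that you describe.

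For $r\le 1/2$, your growing-leaf-count chains are a valid alternative route. There the chain length is bounded automatically by the increasing gadget sizes, rather than by an explicit cap.
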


This family of problems is defined by specifying an infinite number of $4$-hop neighborhoods that are allowed, much in the same way as LCLs, but no longer requiring the set of allowed configurations to be finite. 

Motivated by this observation, we introduce \emph{Locally Finite Labelings} (LFLs), which formalize the idea that \emph{every node must fall into one of finitely many local cases}. As in LCLs, an LFL is specified by a finite set of labeled $r$-hop configurations. However, unlike LCLs, the $r$-hop neighborhood of a node $v$ is no longer required to be identical (up to isomorphisms) to a configuration. Instead, we mark each edge as either \emph{optional} or \emph{required}. Required edges must appear exactly once in $v$'s neighborhood, while optional edges may appear arbitrarily often or not at all.

\begin{figure}[!ht]
    \centering
    \includesvg[width=0.5\linewidth]{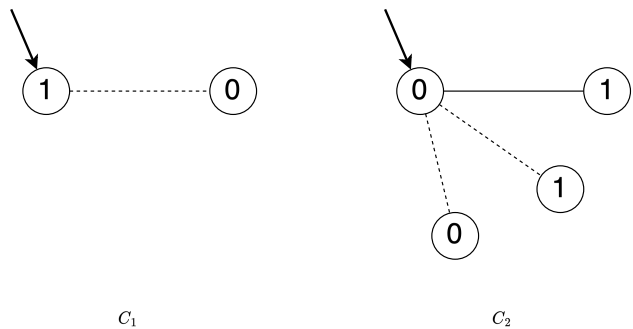}
    \caption{These two configurations encode the problem of finding an MIS. Optional edges are illustrated using dashed lines and required edges are solid lines. Any node that outputs 1 cannot have another neighbor that outputs 1, but arbitrarily many neighbors that output 0. On the other side any node that outputs 0 must have at least one neighbor that outputs 1 and can have arbitrarily many more. Therefore this correctly encodes the problem of finding an MIS.}
    \label{fig:MISIntroduction}
\end{figure}

As a first example we give the problem of finding a Maximal Independent Set, in the LFL formalism. We require nodes that are in the MIS to output 1 and nodes not in the MIS to output 0. There are now only 2 configurations\footnote{Defining an LCL technically requires writing up all possible neighborhoods, a very tedious task already for the simple problem of MIS. In contrast the description as an LFL captures exactly the idea behind MIS in just two simple configurations.}  required to define this problem as an LFL as can be seen in \Cref{fig:MISIntroduction}.

LFLs generalize LCLs to unbounded degree graphs while preserving a finiteness condition that rules out the constructions from \Cref{thm:polyProblems}. 
Perhaps surprisingly, this finiteness restriction already suffices to show that LFLs adhere to the same polynomial gap results as LCLs in the unbounded-degree setting.

\begin{restatable}{theorem}{GapResults}\label{thm:GapResults}
    Let $\Pi$ be an LFL on trees. Then one of the following two statements is true:
    \begin{itemize}
        \item There exists a positive integer $k$, such that $\Pi$ has complexity $\Theta(n^{1/k})$ in the deterministic LOCAL model.
        \item $\Pi$ can be solved in $O(\log n)$ rounds in the deterministic LOCAL model.
    \end{itemize}
    Furthermore, which case applies can be computed solely based on the description of $\Pi$.
\end{restatable}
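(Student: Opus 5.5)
I will adapt the rake-and-compress machinery behind the polynomial gaps on bounded-degree trees (Chang--Pettie; Balliu et al.). The finiteness of an LFL is what lets the finitely-many-types arguments survive unbounded degrees.

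\textbf{Step 1: reduce to radius 1 and define types.} First I will show that every LFL $\Pi$ on trees can be transformed into an equivalent node-edge-style LFL of radius $1$ with finitely many labels, with only an $O(1)$ change in round complexity. Nodes output their whole labeled $r$-hop configuration together with the role of each incident edge (the edge of the configuration it instantiates, or ``optional copy of edge $e$''), and neighbors check consistency. Since there are finitely many configurations, the label set stays finite. The key object is then, for a rooted subtree $T_v$ hanging from an edge, its \emph{type}: the set of labels on the connecting half-edge that extend to a valid labeling of $T_v$. There are at most $2^{|\Sigma|}$ types. The crucial lemma is that a node's type depends only on which child types occur, and for each such type whether it occurs $0$, $1$, or ``many'' times. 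This holds because required edges are matched exactly once and optional edges absorb any surplus, so multiplicities above a constant threshold (the number of required edges in any configuration) are indistinguishable. This is exactly where LFLs differ from the problems of \Cref{thm:polyProblems}, and it yields a finite semigroup-like ``type algebra'' for rake operations of unbounded arity.

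\textbf{Step 2: the certificate hierarchy.} Following Chang--Pettie, I define restricted label subsets $\Sigma_0 \supseteq \Sigma_1 \supseteq \dots$ and the notion of a \emph{good sequence} / certificate for $O(n^{1/k})$ solvability. Level $i$ is obtained by iterating two steps. Rake removes labels that cannot be extended to arbitrarily many pendant subtrees of previous levels; with unbounded degree, rake peels all leaves at once regardless of how many there are. Compress removes labels that cannot be used in long paths, i.e.\ it keeps labels that are flexible (appear in closed walks of all large lengths in the path automaton induced by the fixed child multisets). Because of the threshold lemma, each step is computable from $\Pi$'s description. If the process stabilises nonempty, I construct an $O(\log n)$ algorithm via the rake-and-compress decomposition. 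That decomposition takes $O(\log n)$ layers even for unbounded degree: a rake step removes all leaves, and a compress step removes long degree-2 paths, so a constant fraction of nodes disappears each round. Otherwise, with $k$ the first level at which it empties, I give an $O(n^{1/k})$ algorithm by running $k$ levels of rake/compress with thresholds $n^{1/k}$ and labeling bottom-up with the level-specific restricted labels, then top-down. This gives the upper bounds and the decidability claim.

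\textbf{Step 3: lower bound $\Omega(n^{1/k})$, the main obstacle.} If level $k$ is empty, I build hard instances recursively. Level-$i$ gadgets are long paths of length $\approx n^{1/k}$ with copies of level-$(i-1)$ gadgets attached, and the attached multiplicities are chosen so that the threshold types are realised. An $o(n^{1/k})$ algorithm would then output a label inside the restricted set at the center of a path, and an indistinguishability argument (swapping the far ends of paths that the algorithm cannot see) contradicts the emptiness certificate. Keeping $n$ bounded while still providing ``many'' copies needs care, but the threshold is constant, so this is only a constant factor. The subtle point I expect to fight with is making the emptiness proof robust to high-degree nodes. Here I would use that the adversary controls degrees, so bounded-degree gadgets suffice: I would first try to show that the lower-bound instances can be taken to have degree bounded by a constant depending only on $\Pi$, and so reduce directly to the bounded-degree lower bound argument.
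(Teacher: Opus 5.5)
\textbf{Gap in Step~1 (and everything after it depends on it).} You let each node output a configuration $C\in\C$ plus the role of each incident edge, and let neighbors check consistency. That is the LCL construction, and it works there only because ``$u$ uses $C_u$'' determines $\Nh_r(u)$ up to isomorphism. In an LFL the match is a homomorphism in which optional edges absorb any number of neighbors. So the label on $u$'s half-edge of $\{u,v\}$ says nothing about how many neighbors of a given label $u$ actually has. Yet $v$ needs exactly this to certify the outer layers of its own configuration.

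Concrete example: take $r=2$ and output labels $a,b,c$.
\begin{itemize}
\item Let $C_1$ be the only configuration with an $a$-center. All its edges are required, and it demands exactly one neighbor, labeled $b$, which has exactly one further neighbor, labeled $c$.
\item Let $C_2$ be the only configuration with a $b$-center. It has an optional $a$-neighbor and an optional $c$-neighbor, with optional edges below them.
\item Let $c$-centers have a configuration with only optional edges.
\end{itemize}
On the path $v,u,w$ labeled $a,b,c$ this is a valid solution. The half-edge labels on $\{u,v\}$ are forced to be ($C_1$, required $b$-edge) and ($C_2$, optional $a$-edge), so your edge check must accept this pair. It then also accepts the same labels when $u$ has a second $c$-neighbor: every node and edge check still passes, but $\Nh_2(v)$ no longer matches $C_1$. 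Hence no consistency rule over your label set is both sound and complete. The annotation ``optional copy of edge $e$'' does not help.

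\textbf{What is missing, and how the rest of your plan compares.} A node plays a role in the configurations of all nodes within distance $r-1$ and must certify all these roles simultaneously, with a finite certificate. The paper achieves this by lowering the radius one step at a time:
\begin{itemize}
\item Each node outputs the set $X\subseteq\T^\Pi$ of twig configurations it satisfies.
\item The neighborhoods matching all of $X$ are exactly those matching one radius-$1$ configuration $C_X$. For each label $\ell$, $C_X$ has $\max_{C\in X} m_\ell(C)$ required $\ell$-edges, plus an optional one iff every maximizer has one.
\item Each $C\in\C$ is merged with $C_X$ by attaching configbranches. This branches into several variants when optional $\ell$-neighbors of $C$ must be promoted to required copies.
\item Twigs only carry a label $T^{+}$ demanding $T\in X$, so the $(r-1)$-hop restriction suffices.
\end{itemize}
After $r-1$ such steps, the resulting radius-$1$ LFL is encoded with half-edge labels that are pairs of input/output labels. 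Finiteness survives because $\Pow(\T^\Pi)$ is finite.

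Given such a Step~1, your Steps~2--3 are a genuinely different route. You use canonical label restrictions, whereas the paper enumerates independent assigners on virtual compress paths of length between $\ell_{\mathrm{pump}}$ and $2\ell_{\mathrm{pump}}$ and runs its testing procedure. Your route gives a more direct decision procedure. However, your lower-bound sketch skips what the paper works hardest on: fixing IDs, via a pigeonhole over disjoint ID sets, so that all copies of the same compress path induce the same independent class.

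Your fallback of reducing to the bounded-degree theorem is plausible. The threshold lemma (correctly stated, multiplicities are capped at $s+1$, not ``$0$, $1$, many'') prunes every virtual tree to at most $(s+1)|\Sin|^2 2^{|\Sout|}$ incoming edges, which is why the paper's realizations have bounded degree. You would still have to show that your certificate level equals the one for the degree-bounded restriction of $\Pi$.
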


While the actual proof of these gap results for LFLs is in large parts based on the framework of the original results for LCLs, we believe that one of the central insights of this work is that the key question is not
\begin{center}
    \emph{``What is the correct generalization of LCLs?''}
\end{center}

but rather
\begin{center}
    \emph{``Which features of a problem definition give rise to interesting phenomena such as gap results?''}
\end{center}

From this perspective, \Cref{thm:polyProblems} establishes a fundamental barrier: any generalization expressive enough to encode the artificial problems constructed there necessarily admits arbitrarily dense complexity landscapes, and thus cannot support polynomial gap results.

On the other hand, LFLs together with \Cref{thm:GapResults} demonstrate how much expressive power can still be allowed while preserving strong gap phenomena. In this sense, LFLs sit at the boundary identified by \Cref{thm:polyProblems}. Taken together, these results aim to characterize when and how polynomial gap results emerge on trees.

We do not view LFLs as the unique or \emph{correct} generalization of LCLs to unbounded-degree graphs — indeed, they deliberately exclude natural problems such as $(\Delta+1)$-coloring. 
Rather, LFLs provide the right abstraction for understanding polynomial-time gap results on trees. 
As such, they form an important step toward a principled theory of distributed complexity on unbounded-degree trees, and we hope that this inspires new ways to tackle the systematic study of the lower $O(\log n)$ regime.

\subsection{Our Contributions in more Detail}
A striking feature of locally checkable problems is that purely local constraints can enforce global structure. The combinatorial explosion introduced by unbounded-degrees can easily lead to degenerate behavior.
We demonstrate this with a simple, insightful construction.

Let $\mathcal{T}_x$ denote the set of all rooted trees on exactly $x \in \mathbb{N}$ nodes and height~$2$. For each $x$, fix an arbitrary enumeration $(T_i)_{i \in [|\mathcal{T}_x|]}$ of the trees in $\mathcal{T}_x$.

Consider an isolated path $P = (v_1, \dots, v_L)$ and attach to each node $v_i$ the tree $T_i$, with $v_i$ serving as the root of $T_i$. We refer to such a path as an $x$-\emph{gadget path}. Both the topology and the correct order of trees can be enforced by local constraints.

We turn this construction into a labeling problem by requiring nodes that belong to an $x$-gadget path to output~$1$, and all other nodes to output~$0$. The problem is defined so that a $x$-gadget path is valid only if it starts with $T_1$. Consequently, the node at the opposite end of the path can output~$1$ if and only if the entire prefix $(T_1, T_2, \dots)$ is present. This forces any distributed algorithm to spend time linear in the length of the path in order to determine the correct output.

By requiring that only the first $L(x)$ nodes along the path output~$1$, we can control the maximum length of any gadget path in an instance with $n$ nodes. Choosing $L(x)$ appropriately then suffices to prove our theorem.

Instead of describing more details, we want to highlight the fact that all we are doing is making a lot of case distinctions (in a more or less arbitrary way). We then abuse these case distinctions by using them to count the length of a path. Our generalization will try to avoid exactly this behavior.

We defer the full technical details to \Cref{sec:partition-gadgets}.

\paragraph*{How we generalize LCLs:}
Based on the insights above, we attempt to define a generalization of LCLs to unbounded degrees, having the following goals in mind:
\begin{enumerate}
    \item Restricted to bounded-degree graphs, we should recover the class of LCLs (or at least have them as a subset of our problems).\label{req:subsetLCL}
    \item We want to ensure that the problem definition allows only for a finite number of configurations that describe the problem. \label{req:finite}
    \item \label{req:expressive} We want a generalization that captures as many problems of interest as possible.
    \item We want the formalism to feel natural and be easy to work with.\label{req:simple}
\end{enumerate}

Since what we are aiming for is a generalization of LCLs, Goal~\ref{req:subsetLCL} is only natural. To ensure that we do not run into the sort of degenerate behavior described above, we have Goal~\ref{req:finite}. Note that one immediate consequence of this second goal is that our problem description has to be independent of $\Delta$.

With Goal~\ref{req:expressive} we ensure that results about LFLs are interesting, since they apply to a large number of problems that we do actually care about. However, there is a fine line to be walked here: We also want the class of problems to have a clean and (somewhat) easy definition, so that we can reason about them. We capture this in Goal~\ref{req:simple}.

In the spirit of this trade-off between Goals~\ref{req:expressive} and \ref{req:simple}, we have omitted some possible extensions to LFLs. For example we could prove the same gap results even while allowing for constraints of the form "A node must have the same number of red neighbors as blue neighbors". Incorporating these sorts of constraints into our definition would strengthen our results, but at the same time it would make the definitions harder to understand and clutter the proofs with heavy notation and case distinctions. We believe that our notion of LFLs strikes a nice balance.

With the above goals in mind, we propose the class of Locally Finite Labelings (LFLs). An LFL $\Pi = (\Sigma_{\mathrm{in}}, \Sigma_{\mathrm{out}}, r, \mathcal{C})$ consists of an input alphabet, an output alphabet, a checking radius $r \in \mathbb{N}$, and a finite set of configurations $\mathcal{C}$.  

The key difference from LCLs is the notion of local correctness. In an LCL, the $r$-hop neighborhood of a node $v$ must be \emph{isomorphic} to one of the labeled $r$-hop neighborhoods in $\mathcal{C}$. In contrast, in an LFL each edge in a configuration is marked as either \emph{required} or \emph{optional}. A node $v$ proves local correctness by providing a graph \emph{homomorphism} from its $r$-hop neighborhood to a configuration such that each required edge is mapped into exactly once. Optional edges may be mapped into arbitrarily often (potentially 0 times).

As an additional example beyond MIS, we show how $3$-coloring can be expressed as an LFL.

\paragraph*{3-coloring}
The problem of 3-coloring can now be described in a more intuitive way, namely that $3COL = (\emptyset, \set{R,G,B}, 1, \C)$ with $\C$ containing three configurations depicted in \Cref{fig:3COL}. 

\begin{figure}[!ht]
    \centering
    \includegraphics[width=0.7\linewidth]{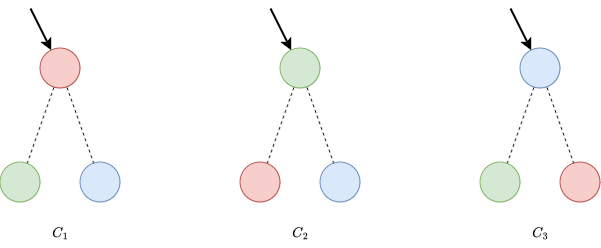}
    \caption{These three configurations suffice to encode the problem of computing a 3-coloring in unbounded-degree graphs. The arrows identify the center nodes and the edges are dashed to identify them as optional edges (there are no required edges for this problem).}
    \label{fig:3COL}
\end{figure}

\paragraph*{Recovering the gap results}

Our goal is to show that the polynomial gap results known for LCLs on bounded-degree trees extend to LFLs on trees of unbounded degree. The overall strategy closely follows the existing proofs for LCLs, but two conceptual obstacles must be addressed.

At a high level, the classical arguments break at two points: First, they rely on defining equivalence classes of subgraphs based on how a labeling on this subgraph interacts with the labeling of the remaining graph. In bounded-degree trees, these equivalence classes are finite because only finitely many local $r$-hop boundary neighborhoods can arise. In the unbounded-degree setting this finiteness breaks down, since there are infinitely many $r$-hop neighborhoods. The main task is to restore this finiteness in a principled way.

The second major problem is that we want to iterate all trees of a constant height\footnote{Intuitively a constant height is sufficient, because there is only a finite number of equivalence classes.}. This is no longer possible if the degrees are unbounded, as there are infinitely many trees of bounded height. We are tasked with defining a finite procedure that analyzes an infinite number of possible trees.

\paragraph*{Step 1: Finitely many Equivalence Classes via the node--edge checkable formalism}

The first step is to move from the standard neighborhood-based definition of LFLs to a node--edge checkable variant.
As it turns out, our notion of LFLs has a very natural translation into the world of node-edge checkable problems.

Conceptually, this reduction to the node-edge checkable setting corresponds to reducing the checking radius from an arbitrary constant \( r \) to a ``radius'' of \( 1/2 \), where correctness is verified solely by inspecting individual nodes and edges. This allows us to define the equivalence classes on subgraphs based on the edges connecting the subgraphs to the remaining graph. Since all considered subgraphs have at most two such edges, and the endpoint of such an edge is in one of finitely many cases, this recovers our finite equivalence classes.

For LCLs on bounded-degree trees, the equivalence between these two formalisms follows from a short and well-known argument. For LFLs, the situation is more delicate: although the set of configurations is finite, the set of admissible neighborhoods is infinite due to unbounded degrees, and correctness is witnessed by homomorphisms rather than isomorphisms.

We show that, nevertheless, for trees (and more generally for graphs of sufficiently large girth), every LFL can be encoded as a node--edge checkable LFL and vice versa. The key technical ingredient is a radius-reduction argument that allows us to focus on well-behaved 1-hop configurations. Repeating this reduction yields an equivalent LFL of radius one, which can then be turned node-edge checkable with a modified version of the original proof.

This establishes the following equivalence.

\begin{restatable}{theorem}{EquivNodeEdge}
    For every LFL $\Pi = (\Sin, \Sout, r, \C)$ in trees, there exists a node-edge checkable LFL $\Pi' = (\Sin', \Sout', \C_V, \C_E)$, such that:
    \begin{enumerate}[(1)]
        \item Any solution $(\sigma, F)$ to $\Pi$ can be turned to a valid solution for $\Pi'$ in $O(1)$ rounds of the LOCAL model.
        \item Any solution $\sigma'$ for $\Pi'$ can be turned into a valid solution $(\sigma, F)$ for $\Pi$ in $O(1)$ rounds of the LOCAL model.
    \end{enumerate}
\end{restatable}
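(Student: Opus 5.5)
The proof has two stages. First we reduce the checking radius from $r$ to $1$. Then we turn the radius-$1$ LFL into a node--edge checkable one. Throughout, a solution to $\Pi$ is a pair $(\sigma,F)$: the output labeling $\sigma$ together with, at each node $v$, a witnessing homomorphism $F_v$ from the labeled $r$-hop neighborhood of $v$ to some configuration in $\C$. Each required edge of that configuration must have exactly one preimage.

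\emph{Radius reduction.} We aim for a lemma of the form: for every LFL $\Pi$ of radius $r\ge 2$ on trees there is an LFL $\Pi^{\downarrow}$ of radius $r-1$ that is $O(1)$-round equivalent to it in both directions. Iterating $r-1$ times then gives radius $1$.

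The construction enlarges the output alphabet. Each node $v$ additionally outputs a bounded-size description: which configuration $C\in\C$ it uses, and which node of $C$ it is mapped to by the homomorphisms of its neighbors. Since $\C$ is finite and each configuration is a finite graph, this information lies in a finite set. The new radius-$(r-1)$ configurations are the finitely many ways of gluing such annotated pieces consistently.

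The crucial consistency check is as follows. For $v$'s homomorphism $F_v$ we need to know the image of every node at distance $r$. That node is at distance $r-1$ from some neighbor $u$ of $v$. Because $G$ is a tree, $v$'s ball decomposes into disjoint branches $B_u$, one per neighbor $u$. Each branch can be certified locally by $u$, whose own radius-$(r-1)$ view contains the branch up to depth $r-1$ beyond $u$. Concretely, $u$ outputs, for its parent direction, the claimed image of its subtree portion in $v$'s configuration.

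The counting condition on required edges then splits into two parts. Required edges incident to the center are handled at radius $1$. Deeper required edges are attributed to a unique branch and checked inside that branch. Optional edges cost nothing. Requiring each branch to be mapped into a designated "sub-configuration" rooted at a child of the center makes the exactly-once condition decompose branch by branch.

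A subtle point is that a node of the configuration may be the image of many tree nodes. Homomorphisms can fold, and the tree structure does not forbid this. I would handle it by having the annotations record the image node together with the edge of the configuration through which it is entered. Exactly-once counts for required edges are then verified at the unique tree node responsible for that edge.

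\emph{Radius $1$ to node--edge checkable.} Each node $v$ outputs its chosen star configuration $C$ and, on each incident half-edge $(v,e)$, the edge of $C$ that $e$ maps to, together with the label that $C$ prescribes for the neighbor. The node constraint $\C_V$ consists of the finitely many multisets of half-edge labels: each required edge of $C$ appears exactly once and each optional edge appears any number of times. This is a finite description of the same required/optional type, so $\Pi'$ is again an LFL. The edge constraint $\C_E$ checks that the label predicted for the neighbor by each endpoint matches the neighbor's actual output.

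Both directions of the conversion take $O(1)$ rounds. From $(\sigma,F)$, nodes read off annotations within radius $r$. Conversely, nodes reconstruct $F$ by composing annotations within radius $r$.

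I expect the main obstacle to be the radius-reduction step. The difficulty is making the exactly-once condition for required edges at depth $\ge 2$ decompose correctly into local checks, given that homomorphisms may collapse branches. Acyclicity (large girth) is used here to ensure that the branches are disjoint.
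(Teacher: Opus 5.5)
Your overall architecture matches the paper's: a lemma lowering the radius by one, iterated $r-1$ times, followed by the radius-$1$ to node--edge conversion. Your second stage is essentially \Cref{lem:rIsOne}.

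\textbf{The gap is in the radius-reduction step}, exactly where you write ``the finitely many ways of gluing such annotated pieces consistently.'' You have $u$ output, ``for its parent direction,'' the claimed image of its subtree in $v$'s configuration. But every node is a center. So $u$ lies in the $r$-ball of each of its unboundedly many neighbors $v$, and for each of them a \emph{different} branch (the one avoiding $v$) must match a \emph{different} configbranch of $C_v$.

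An LFL gives each node one label from a finite alphabet. Hence $u$ cannot attach a claim to each direction. At best it outputs a set of claims, with no indication of which neighbor each is for. Likewise, $v$ cannot say which of its neighbors goes to which child of $\cen$. What is then needed is:
\begin{itemize}
\item finitely many radius-$(r-1)$ configurations that force $u$'s neighborhood to satisfy all claims in such a set simultaneously; and
\item an argument that $v$ can align these anonymous certificates with its own matching.
\end{itemize}
Neither is given. This is precisely where naive gluing fails for LFLs. Because of optional edges, knowing a neighbor's configuration or claimed image does not determine which optional neighbors, with which labels and multiplicities, it actually has (\Cref{fig:inconsistentLFL}). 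Recording the entering edge of the image deals with folding, but not with this.

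\textbf{The missing idea is to certify only radius-$1$ facts, at distance $r-1$.}
\begin{itemize}
\item Each node outputs the set $X$ of twig configurations it serves as.
\item ``Matches all of $X$ simultaneously'' is captured by a single configuration $C_X$. It has required multiplicity $\max_{C\in X} m_\ell(C)$ for each label $\ell$, and an optional $\ell$-edge only if every member has one (\Cref{lem:CxIsIntersection}).
\item Each $C\in\C$ is then combined with $C_X$ into the family $C^X$. This includes the stars-and-bars case, where optional neighbors of $C$ are duplicated into required copies to meet the counts of $C_X$.
\item Twigs receive $T^+$ labels, and the configurations are truncated to radius $r-1$.
\end{itemize}
The direction problem disappears because the parent edge is made required in the twig configuration. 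The swap argument of \Cref{lem:radiusIncrease} then aligns it with the actual parent.

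Your depth-$(r-1)$ branch certificates would instead force you to solve this intersection problem at radius $r-1$, with direction-dependent exclusions. That is the harder problem the paper deliberately avoids. So the real obstacle is simultaneity of certificates, not only the exactly-once counting you single out.
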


Besides being essential for our gap results, this connection makes LFLs compatible with the powerful round elimination technique \cite{brandt-2019-an-automatic-speedup-theorem-for}. This opens up possibilities for future work, for example the gap between $\omega(1)$ and $o(\log^* n)$ for bounded-degree trees is based on round elimination. Does the same gap apply to LFLs? What is the dependence on $\Delta$ in that gap result?

\paragraph*{Step 2: Iterating unbounded-degree trees}

With the node--edge checkable formalism in place, we can handle the equivalence-class arguments underlying the polynomial gap results. However, a second obstacle is that the classical construction iterates over all trees of bounded height, which is no longer possible with unbounded-degrees. We resolve this by introducing \emph{virtual trees}: finite objects that represent entire families of trees with arbitrary degrees but identical local behavior. We show that only finitely many virtual trees need to be considered, and that each can be realized by an actual tree of bounded (though possibly large) degree.

The remaining parts of the gap results do not have major changes, but contain enough technical differences to warrant a full write-up. We provide a fully self-contained version of the proof, but restrict to just the deterministic case for brevity. Note that the result can be extended to randomized algorithms using the ideas from \cite[Lemma 15]{chang20}.

Overall, our contribution is not to introduce fundamentally new techniques, but to identify how far the polynomial gap results can be pushed beyond the bounded-degree setting. \Cref{thm:polyProblems} illustrates how allowing for an infinite number of case distinctions leads to degenerate behavior. Surprisingly, the fact that the gap results hold for LFLs suggests that this is indeed the defining characteristic of the polynomial gap results, which hold for a large class of problems, even in the unbounded-degree setting.

\subsection{A Tour of the Gap Results}
The proof of the gap results is quite long and technical, so we provide an overview of the main ideas.

\paragraph*{The node-edge checkable formalism}
In the node-edge formalism labels are put on halfedges, that is we cut every edge $e = \set{u,v}$ into two separate pieces $e_v = \set{e,v}$ and $e_u = \set{e,u}$. We then describe our problem by defining which combinations of labels are allowed on the halfedges around a node $\E_v = \set{e_v \mid e \in E : v\in e}$ and by defining which combinations of labels are allowed to be present on the two halves of an edge simultaneously. 

For LCLs in bounded-degree trees, there is a beautiful half-page argument, that any LCL can be expressed in the node-edge checkable formalism and vice versa. The key challenge here is that, in an LCL, correctness depends on the entire $r$-hop neighborhood of a node, while in the case of the node-edge checkable formalism the correctness must be checked by looking only at adjacent halfedges. Informally, the checking radius in the node-edge checkable formalism is $0.5$. 

The main reason why we can reduce the radius $r$ of an LCL to essentially $0.5$ is the fact that if we know what configuration $C$ a node $v$ uses, we know exactly how the neighborhood of $v$ looks like, as it must be isomorphic to $C$.

So we can have as an outputset for our radius $0.5$ LCL simply the set consisting of all possible configurations $C$, together with a sense of direction (each halfedge around $v$ outputs which edge of $C$ it represents). For now, assume the halfedges around $v$ can only output $C$ if the neighborhood around $v$ is isomorphic to $C$. Then, we can check on the constraints of the edges whether or not the two neighboring configurations are locally consistent.

Initially, it seems that extending this approach to LFLs is straightforward, there is still just a finite set of allowed configurations $\C$. However, crucially, the $r$-hop neighborhood of a node $v$ is no longer isomorphic to one of these configurations $C \in \C$. 

Consider again a node $v$ with a neighbor $u$, from $v$'s perspective, it is no longer sufficient to know that $u$ uses some configuration $C_u \in \C$. $C_u$ might include optional neighbors with some output label $x$, such that if such a neighbor is present in the neighborhood of $u$, then the $r$-hop neighborhood of $v$ is not valid. But if such a neighbor with output $x$ is not present, then the $r$-hop neighborhood of $v$ is valid. To distinguish this case, just knowing the fact that $u$ uses configuration $C_u$, is not sufficient. 

On a high level the solution to this problem is to have two versions of $C_u$, one version where a neighbor with label $x$ is guaranteed to exist and one version where such a neighbor does not exist. However, doing this proves to be highly nontrivial, mostly because maintaining compatibility and correctness of these new versions of our LFL becomes hard if the radius is larger than $1$. We navigate around this issue, by giving a reduction that reduces the radius of our LFL by just 1. This allows us to mostly focus on \emph{nicely behaved} subgraphs of our configurations that have a radius of 1. We then just repeatedly apply this reduction to obtain an LFL of radius 1, at which point we can apply the original argument to this LFL of radius 1 with only minor modifications.

\paragraph*{Reducing the radius by 1}

Intuitively, LFLs with radius~1 are easy to work with, because correctness depends only on \emph{how often each kind of labeling can appear on neighbors}, where a labeling refers to an input--output tuple. This is information that can be tracked easily.

Consider a configuration $C$ of radius $r>1$. Inside this configuration, consider all nodes at distance exactly $r-1$ from the center; we call these nodes \emph{twigs}, since all nodes attached to them lie at distance $r$ and are therefore leaves of~$C$. Our goal is to reduce the checking radius from $r$ to $r-1$. The idea is to let each twig output a certificate asserting that its $1$-hop neighborhood matches its $1$-hop neighborhood in~$C$. Again 1-hop configurations are (reasonably) easy to work with. Once these certificates are present, the information contained within distance $r-1$ of the center already suffices to ensure correctness of the entire $r$-hop neighborhood.

Now consider a node $v$ in an actual instance of our LFL, and let $A$ be the set of nodes at distance exactly $r-1$ from~$v$. Each node $u \in A$ matches some configuration $C_u$, as can be seen in \Cref{fig:radiusRed} on the left. In each such configuration, the node $v$ is mapped to a twig (since it lies at distance $r-1$ from~$u$). Consequently, $v$ must provide a certificate that simultaneously certifies that it plays the role of a twig in all configurations $(C_u)_{u \in A}$. 

\begin{figure}[!ht]
    \centering
    \includesvg[width=\linewidth]{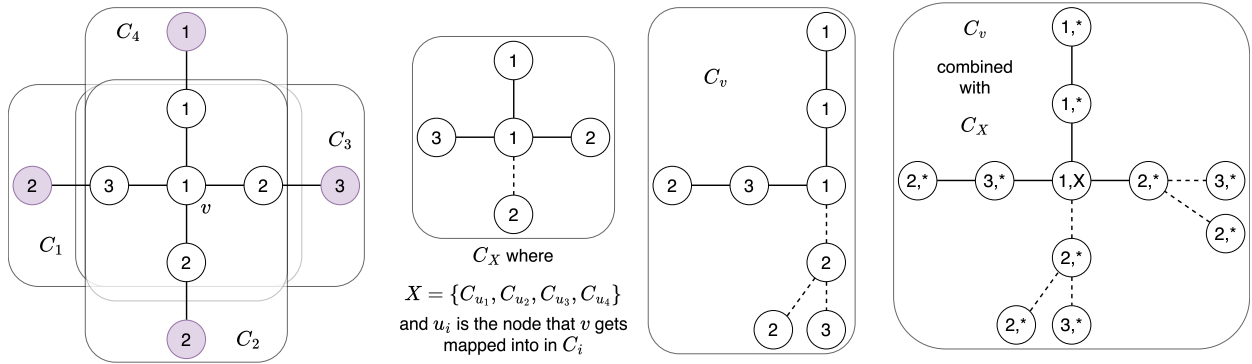}
    \caption{Illustration of the radius-reduction construction. 
The leftmost panel shows an instance of our LFL, with the set $A$ of nodes at distance $r\!-\!1$ from $v$ highlighted in purple. 
The node $v$ appears as a twig $u_i$ in each of the four configurations $C_1, C_2, C_3, C_4$. 
The second panel depicts the resulting configuration $C_X$ of the twigLFL, obtained by combining the four twig configurations $C_{u_1}, C_{u_2}, C_{u_3}, C_{u_4}$. 
The rightmost panel shows how $C_X$ is combined with the original configuration $C_v$ to obtain a version of the initial LFL augmented with certificates. 
In the final configuration, each node outputs both its original label and a certificate encoding the twig configurations it can match (a placeholder '*' is put everywhere, except the center node). 
In particular, the node $v$ outputs the set $X$, certifying that it can match all twig configurations $C_{u_1}, C_{u_2}, C_{u_3}, C_{u_4}$ \emph{simultaneously}, while still correctly solving the original problem.}
    \label{fig:radiusRed}
\end{figure}

We formalize this requirement as a separate problem, called the \emph{twigLFL}. For any twig $u$ in a configuration $C$, let $C_u$ denote the induced $1$-hop configuration of~$u$, with the property that if a node $v$ matches $C_u$, then its $1$-hop neighborhood can be correctly mapped to the $1$-hop neighborhood of $u$ in~$C$. Let $\mathcal{T}$ denote the set of all such twig configurations. In the twigLFL, each node outputs a set $X \subseteq \mathcal{T}$ of twig configurations such that its $1$-hop neighborhood can be matched to every configuration in~$X$. That is, $v$ simultaneously certifies that it is the correct twig for all nodes in $A$.
We provide an illustration of the setup in \Cref{fig:radiusRed}, second image from the left.

The final step is to combine the twigLFL with the original LFL that can be seen in \Cref{fig:radiusRed} on the right. Concretely, we modify the original LFL so that a node $v$ must simultaneously match one of the original configurations $C_v$ and a configuration of the twigLFL $C_X$. For every compatible pair of configurations $(C_v, C_X)$, we construct a combined configuration, in which each node outputs both an original output and a certificate in the form of a set of twig configurations. The resulting LFL can then be checked using only information from the $(r-1)$-hop neighborhood, by making sure that the nodes at distance $r-1$ output the correct certificates.

\paragraph*{The gap results}
The key idea behind the gap results is to understand how the topology of a tree (together with the input labels) restricts the set of output labels that may appear on a halfedge. Consider cutting a tree at some edge $e$, thereby splitting it into two parts. Let $T_1$ be one of the resulting components, still incident to a single halfedge $e_1$. We ask the following question: \emph{which output labels can be assigned to $e_1$ such that the labeling of $T_1$ can be completed to a valid solution?} The answer is a subset of the finite output alphabet $\Sout$. Hence, each rooted tree with a distinguished halfedge induces one of finitely many possible label sets, which we call the \emph{Type} of a tree.

Because LFLs in the node--edge checkable formalism have an effective checking radius of $0.5$, we can classify trees solely by the subsets of labels they permit on their boundary halfedges.

Our first goal is to compute all such subsets that may arise in any valid instance. In the bounded-degree setting, this can be done by enumerating all trees up to some constant height. In the unbounded-degree setting, this approach fails due to infiniteness. To overcome this, we introduce \emph{virtual trees} and show that it suffices to enumerate only finitely many of them to capture all possible boundary behaviors.

We then extend this idea from single edges to paths. Instead of cutting the tree at one edge, we cut out a path $P$ by separating it from the rest of the tree at two edges. Let $H$ denote the induced subgraph consisting of $P$ and all components attached to it, and let $e_s$ and $e_t$ be the two remaining boundary halfedges of $H$. This is depicted in \Cref{fig:padding}. We now ask: \emph{which pairs of labels can be assigned to $e_s$ and $e_t$ such that the labeling of $H$ can be completed?} Going one step further, we ask whether there exist sets $X_s, X_t \subseteq \Sout$ such that we can draw labels for $e_s$ from $X_s$ and $e_t$ from $X_t$ independently and can then still obtain a valid labeling of $H$ for any two choices.

\begin{figure}[!ht]
    \centering
    \includesvg[width=.7\linewidth]{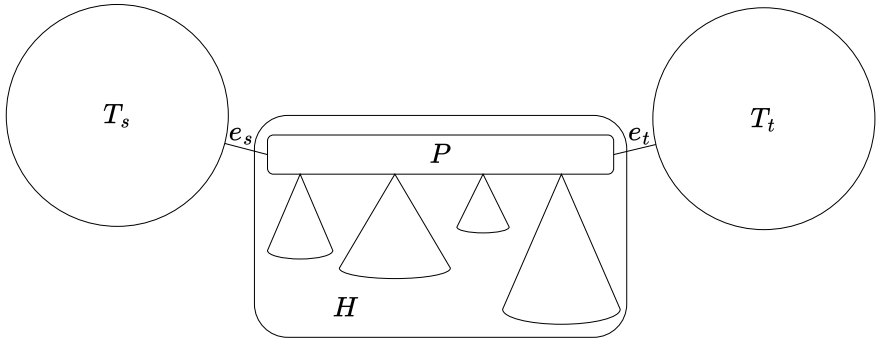}
    \caption{By cutting out a path $P$ from a tree, we split the tree into three parts $T_s, H,T_t$. The parts are connected by the edges $e_s$ and $e_t$. If we could pick labels for $e_s$ and $e_t$ without coordination, we could find labelings for $T_s$ and $T_t$ without the need for communication.}
    \label{fig:padding}
\end{figure}

We illustrate the idea in \Cref{fig:padding}, cutting $H$ out of the tree yields three components: $T_s$, $H$, and $T_t$. If such sets $X_s$ and $X_t$ exist, and if every label in $X_s$ (resp., $X_t$) can be extended to a valid labeling of $T_s$ (resp., $T_t$), then $H$ acts as a buffer between $T_s$ and $T_t$. In this case, the two sides can be solved independently, and any two solutions can be combined via $H$. This eliminates the need for global coordination across the cut and is the core mechanism behind fast algorithms in the polynomial regime.

We say that $(X_s, X_t)$ is an \emph{independent class} for $H$, since the choices of labels from $X_s$ and $X_t$ can be made independently. We study the existence of such independent classes using the notion of \emph{types} of trees introduced earlier. As before, the finiteness of the output alphabet $\Sout$ implies that there are only finitely many possible types into which such subgraphs $H$ can fall.

Using classical automata-theoretic arguments together with a pumping-style lemma for these subgraphs, we show that if $H$ is sufficiently long, then there exists another subgraph $H'$ with the same type—and hence the same admissible labelings on its boundary—but of arbitrary length. In particular, $H'$ can be chosen long enough for our purposes.

We can use these longer versions together with a fast algorithm $\A$ in the following way. To simulate $\A$ on $H$, we first replace it by a pumped version $H'$ with identical boundary behavior, chosen long enough so that when simulating $\A$ at the midpoint of $H'$, the algorithm cannot see outside of $H'$. Since $\A$ must be correct, the labeling it produces in the middle of $H'$ induces a non-empty independent class for $H'$, and by extension for $H$. To see this, note that $\A$ has to make its decisions without seeing outside of the path, so whatever output $\A$ produces must work no matter which labelings are present outside of the path.
In this way, fast algorithms give rise to independent classes.

The crucial question is how often we can use subgraphs like $H$ for padding. Suppose we have cut $T$ into $T_s,H,T_t$, what we really want to do is apply the same decomposition recursively inside $T_s$, cutting out another subgraph $H'$. If we also had an independent class for $H'$, we could reduce the amount of coordination even more. However, this is not as easy as before. The available labels on $T_s$ are not only constrained by the topology of $T_s$, but also on the boundary to $H_s$, where we must respect $X_s$. So the possible outputlabels that the nodes in $T_s$ may use are restricted by the fact that the halfedge $e_s$ must use an outputlabel from the previous independent class $X_s$. Note that for $X_s$ and $X_t$ to give the strong guarantee that we can label $e_s$ and $e_t$ independently it might have been necessary to severely restrict the available labels in $X_s,X_t$.

So if we try to find an independent class for some $H'$ in $T_s$, it is not clear that we can do so while \emph{respecting the restrictions imposed on $e_s$}.
The number of times we can find such independent classes even after having restricted our outputsets is what determines the polynomial complexity of the problem: if it can be done once, the complexity is $\Theta(\sqrt{n})$; if it can be done $k$ times, we obtain complexity $\Theta(n^{1/(k+1)})$.

The final step of our gap result is to show that if a problem admits a deterministic algorithm with complexity $o(n^{1/k})$, then this recursive decoupling must indeed be possible $k$ times. So we can still obtain good independent classes, even when we have restricted the possible labelings $k-1$ times already.

When using an algorithm to obtain an independent class as described above we pumped up a path so that the algorithm cannot see outside. If the algorithm has complexity $o(n^{1/k})$, then we needed to pump the path to length $\Omega(n^{1/k})$ with a suitably small constant. 

Consider a path $P'$ that we now want to find an independent class for, but where the nodes may potentially already be restricted.
We can turn this into an instance, where every node of the path  already is attached with a pumped path of length $\Omega(n^{1/k})$ each. When simulating our algorithm later, the outputs given by $\A$ on these paths will impose exactly the restrictions of the independent classes. So when we pump $P'$ to length $\Omega(n^{1/k})$, we will need $\Omega(n^{2/k})$ many nodes total. 

When we run our algorithm on this instance of roughly $n^{2/k}$ nodes, we still guarantee that the algorithm cannot see outside the paths. As a result we also obtain an independent class for $P'$ in the same way as before, but with potentially even more severe restrictions. If we did this construction for the third time, we would get an instance of size $~n^{3/k}$ nodes, $\ldots$. 
Hence, an algorithm with complexity $o(n^{1/k})$ can be used to repeat the above process $k$ times. By exploiting the fact that this recursive decoupling (as in \Cref{fig:padding}) can be done $k$ times, we obtain an algorithm with complexity $O(n^{1/(k+1)})$.
As a result there cannot be a problem with complexity in $\omega(n^{1/(k+1)}) \cap o(n^{1/k})$, proving the desired gap.

\section{Preliminaries}
We rely mostly on classical notation surrounding graphs, but to avoid confusion we state the following definitions explicitly. Formally, in a graph $G=(V,E)$ we fix the following notations:
\begin{itemize}
    \item For any two nodes $u,v \in V$ we denote the distance $d(u,v)$ as the length of the shortest path between these two nodes. We set $d(v,v) = 0$.
    \item For any edge $e \in E$ and any node $v \in V$, we define their distance $d(v,e) := 1+ \min\set{d(v,u)}_{u \in e}$.
    \item The neighborhood $\Nh(v)$ of $v$ is the subgraph using
    \begin{align*}
    \text{nodes }\set{u \in V \:|\: \text{dist}(u,v) \leq 1} \text{ and edges }\set{e \in E \:|\: \text{dist}(v, e) \leq 1}
    \end{align*}
    Similarly, for any $r \in \N$ we define the $r$-hop neighborhood $\Nh_r(v)$ of $v$ as the subgraph using
    \begin{align*}
    \text{nodes }\set{u \in V \:|\: \text{dist}(u,v) \leq r} \text{ and edges }\set{e \in E \:|\: \text{dist}(v, e) \leq r}
    \end{align*}
    Note that these definitions are different from the subgraph induced by the set\\ $\set{u \in V \:|\: \text{dist}(u,v) \leq 1}$. Let $u,w \in V$ be two neighbors of $v$, then if the edge $\set{u,w}$ exists, it will not be included in $\Nh(v)$.
    \item For any $r \in \N$ a centered radius $r$ ball is a connected graph $B=(V,E,\cen)$ with $\cen \in V$ being the center node such that for all nodes $v \in V$ it holds that $d(\cen,v) \leq r$ and similarly for all edges $e \in E$ it holds that $d(\cen, e) \leq r$.
    \item We denote the set of all half edges of a graph as $\hE := \set{(e,u), (e,v) \mid \forall \set{u,v} = e \in E}$. We will refer to these half edges mostly using $e_u=(\set{u,v},u)=(e,u)$ and $e_v = (\set{u,v},v) =(e,v)$
    \item For every node $v \in V$ we denote the set of adjacent halfedges as \\$\hE_v = \set{e' \in \hE \mid e' = (e,v) \text{ for } e\in E}$
    \item For two graphs $G=(V, E)$ and $G' = (V',E')$, $f:V \rightarrow V'$ is a graph homomorphism if for every edge $\set{u,w} \in E$ it holds that $\set{f(u), f(w)} \in E'$. Additionally $f$ is a graph isomorphism if it also is bijective and $f^{-1}$ is also a graph homomorphism. With abuse of notation, we additionally define $f(\set{u,v}) = \set{f(u),f(v)}$, which means that for $e\in E$, it holds that $f(e) \in E'$.
\end{itemize}

The class of Locally Checkable Labelings was first introduced by Naor and Stockmeyer \cite{NaorStockmeyer95} in their seminal work on local computation. It has since seen a vast amount of attention \cite{balliu19lcl-decidability,B0COSS22_LCLregularTrees,lcls_on_grids,chang20,CKP19exponential,CP19timeHierarchy,brandt21trees}. Intuitively to define an LCL, we just fix a set of labels and categorise all possible neighborhoods as either allowed or forbidden. We give a formal definition of the description in \cite{NaorStockmeyer95}.

\paragraph*{LCLs:} A Locally Checkable Labeling $\Pi = (\Sin, \Sout, r, \C)$ is a tuple consisting of the following:
\begin{itemize}
    \item A set of input labels $\Sin$.
    \item A set of output labels $\Sout$.
    \item The checkability radius $r \in \N$.
    \item A set of allowed configurations $\C$.
\end{itemize}

Where each configuration $C \in \C$ is a tuple $C = (\V, \E, \cen, \mu)$ describing a labeled centered ball $(\V, \E, \cen)$ of radius $r$ around center node $\cen$. Where the labels are defined by $\mu: \V \rightarrow \Sin \times \Sout$. 

An instance is given by a labeled graph $G= (V, E, \phi)$ where $\phi: V \rightarrow \Sin$ assigns each node an input label. 

A solution to $\Pi$ consists of a labeling $\sigma : V \to \Sout$ together with a family of isomorphisms 
\[
F = \{ f_v : \Nh_r(v) \to C \text{ ,where } C \in \mathcal{C} \}_{v \in V},
\]
These isomorphisms must map $v$ to the designated center $\cen$ of the chosen configuration, and they must preserve labels: for all $u \in \Nh_r(v)$ we require 
\[
(\phi(u), \sigma(u)) = \mu(f_v(u)).
\]

This completes the definition.  

\paragraph*{The LOCAL model:} We prove our results for the deterministic LOCAL model. Given a graph $G=(V,E)$, we view each node as a processor and each edge as a communication link. Nodes are assigned a unique id from the set $\set{1, \ldots, n^c}$ for some constant $c$. Initially nodes are only aware of their id, the exact\footnote{A linear upperbound in $n$ is sufficient (but necessary).}  number of nodes $n$ and its inputlabel. All nodes execute the same algorithm and communication proceeds in synchronous rounds. In each round nodes transmit messages of arbitrary size to all of their neighbors, receive messages and perform arbitrary deterministic computation on the available information. Note that because of the limited initial information and the communication in synchronous rounds, after $i$ rounds each node knows only the information in its local $i$-hop neighborhood, hence the name LOCAL model. Eventually each node must terminate and produce an output. The running time, or complexity of an algorithm is then defined as the maximum number of rounds any node requires to fix its output. An interesting observation to keep in mind is that a $T(n)$ round algorithm can be viewed as a (potentially very complicated) mapping from $T(n)$-hop neighborhoods to outputlabels.

\section{Locally Finite Labelings} \label{sec:LFLhE}
With the intuition explained in the introduction, we present the fully formal definition of Locally Finite Labelings.

A Locally Finite Labeling problem $\Pi=(\Sin, \Sout, r, \C)$ is a tuple consisting of the following:
\begin{itemize}
    \item $\Sin$ a finite set of input labels
    \item $\Sout$ a finite set of output labels
    \item $r \in \N$ a positive integer number called the radius
    \item $\C$ a set of $r$-hop configurations defining the constraints of the problem.
\end{itemize}

So far the definition is the same as for LCLs, but configurations for LFLs are (mildly) more complicated, requiring each edge to be labeled with one of \emph{required}, or \emph{optional}.

\paragraph*{Configuration}
An $r$-hop configuration $C=(\V,\E, \cen, \mu, \tau)$ is a tuple consisting of the following:
\begin{itemize}
    \item $(\V, \E, \cen)$ is a centered radius $r$ ball. ($\E \subset \set{\set{u,v} \:| \: u,v \in \V }$ allowing self loops)
    \item $\mu : \V \rightarrow \Sin \times \Sout$ is a map assigning every node of $C$ both an input and an output label.
    \item $\tau: \E \rightarrow \set{\text{required}, \text{optional}}$ is a map declaring every edge in $C$ either \emph{required}, or \emph{optional}.
\end{itemize}

An instance of $\Pi$ is then given by a labeled graph $G = (V,E, \phi)$ with $\phi: V \rightarrow \Sin$ giving every node an input label. A solution to $\Pi$ is given by a tuple $(\sigma, F)$ consisting of:
\begin{itemize}
    \item An output assignment $\sigma: V \rightarrow \Sout$ assigns every node an output label
    \item A family of graph homomorphisms $F = \set{f_v}_{v \in V}$ with $f_v: \Nh_r(v) \rightarrow \V$ for some configuration $C = (\V,\E, \cen, \mu, \tau) \in \C$.
\end{itemize}

Such a solution is correct \emph{if and only if} for every node $v \in V$ the corresponding map $f_v: \Nh_r(v) \rightarrow C$ \textbf{matches the configuration} $C$ under $\sigma$. This notion of \emph{matching} is the main difference to LCLs, where a graph isomorphism is required.

\paragraph*{Matching a configuration}
$f_v$ matches $C = (\V,\E, \cen, \mu, \tau)$ under $\sigma$ \emph{if and  only if} the following conditions are met:
\begin{enumerate}
    \item \textbf{Correctly centered}: $f_v(v) = \cen$, so $v$ must be mapped to the center node of the configuration.\label{rule:centered}
    \item \textbf{Labelings are respected}: For all $u \in \Nh_r(v)$ it holds that $(\phi(u),\sigma(u)) = \mu(f_v(u))$, so both inputs and outputs must match.\label{rule:labelings}
    \item \textbf{Requirements are satisfied:} Each required edge in the configuration exists exactly once in the neighborhood of $v$, or formally: Let $R = (\V_r, \E_r)$ be the subgraph induced by required edges, with $\E_r = \set{e \in \E \mid \tau(e) = \text{required}}$ and $\V_r = \bigcup_{e \in \E_r} e$. Then $f_v$ restricted to $f_v^{-1}(\V_r)$ is a graph isomorphism $f_v: f_v^{-1}(\V_r) \rightarrow \V_r$. \label{rule:required}
    \item \textbf{All edges are accounted for:} $f_v$ is a graph homomorphism, or in words: every edge in $v$'s neighborhood must be correctly mapped onto an edge existing in the configuration. \label{rule:homomorphism}
\end{enumerate}

Examples of MIS and 3-coloring as LFLs were already given in the introduction, but we additionally provide the problem of 3-cycle detection, as a more involved example in \Cref{apx:examples}.

\medskip

It is easy to see that our construction of LFLs satisfies Goal~\ref{req:subsetLCL}, that is on bounded-degree graphs, we recover LCLs.

\begin{lemma}
    On bounded-degree graphs, the set of problems that can be expressed as LCLs is the same as the set of problems that can be expressed as LFLs.
\end{lemma}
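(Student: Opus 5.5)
Fix a degree bound $\Delta$; the claim is that the two classes of problems coincide on graphs of maximum degree at most $\Delta$. I would prove the two inclusions separately, each by an explicit translation of configurations, keeping the same input/output alphabets and the same radius $r$ so that the sets of valid labelings $\sigma$ coincide.

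\emph{LCL $\Rightarrow$ LFL.} Given an LCL $\Pi=(\Sin,\Sout,r,\C)$, define the LFL with the same $\Sin,\Sout,r$, taking every LCL configuration $C=(\V,\E,\cen,\mu)$ as an LFL configuration with $\tau(e)=\text{required}$ for every edge. The point to check is that matching such an all-required configuration is the same as being label-preservingly isomorphic to it.
\begin{itemize}
\item Rule~\ref{rule:required} makes $f_v$ restricted to $f_v^{-1}(\V_r)$ an isomorphism onto $\V_r$. Since $C$ is a connected centered ball, $\V_r=\V$ unless $C$ is a single node; that trivial case is handled directly.
\item Rule~\ref{rule:homomorphism} says every edge of $\Nh_r(v)$ maps to an edge of $C$. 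Since $f_v$ is injective on nodes and every edge of $C$ has exactly one preimage, $f_v$ is a bijection on edges as well.
\end{itemize}
Hence $f_v$ is an isomorphism, and the labeling and centering conditions are literally the same as in the LCL definition. Conversely, a label-preserving isomorphism clearly matches $C$. So the solution sets agree, and this direction needs no degree bound.

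\emph{LFL $\Rightarrow$ LCL.} Given an LFL $\Pi$ of radius $r$, let $\C'$ be the set of all labeled centered radius-$r$ balls $B$ with maximum degree at most $\Delta$ that admit a homomorphism into some $C\in\C$ matching $C$ under their own labeling. Up to isomorphism $\C'$ is finite, since such balls have at most $1+\Delta+\dots+\Delta^r$ nodes and finitely many labelings. I then check both directions.
\begin{itemize}
\item If $(\sigma,F)$ solves $\Pi$, then each $\Nh_r(v)$, with its labels, is isomorphic to some ball in $\C'$ via the identity composed with $f_v$. So $\sigma$ together with these isomorphisms solves the LCL.
\item Conversely, given an LCL solution with isomorphism $g_v:\Nh_r(v)\to B$, the homomorphism $h_B$ witnessing $B\in\C'$ gives $f_v:=h_B\circ g_v$. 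This $f_v$ is centered, label-preserving and a homomorphism, and the required-edge isomorphism condition transfers through the isomorphism $g_v$.
\end{itemize}

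\emph{Main obstacle.} The step I expect to need care is bookkeeping rather than anything deep, and I would handle both points explicitly:
\begin{itemize}
\item Rule~\ref{rule:required} is stated on preimages of $\V_r$, so composing with an isomorphism must be checked to preserve it. It does, because preimage sets correspond bijectively.
\item LFL configurations may contain self-loops and non-tree edges, whereas $\Nh_r(v)$ excludes edges between nodes at distance $r$. The matching definition only constrains edges that actually lie in $\Nh_r(v)$, so defining $\C'$ in terms of $\Nh_r$-shaped balls resolves this.
\end{itemize}
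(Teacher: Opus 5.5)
Your proposal is correct and takes the same approach as the paper. For LCL $\Rightarrow$ LFL you mark every configuration edge as required, and for LFL $\Rightarrow$ LCL you take all bounded-degree labeled $r$-balls that can be matched to some LFL configuration; you additionally spell out the checks the paper leaves implicit (all-required matching equals isomorphism, the single-node case, and transferring the required-subgraph condition through composition with an isomorphism).
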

\begin{proof}
    Going from LCL to LFL, the only real challenge is to switch between the \emph{matching} conditions of LCLs and LFLs. LCLs require a mapping $f_v$ that is a graph isomorphism while LFLs only require that $f_v$ restricted to the required edges and nodes is an isomorphism. However, by simply using the same configurations and setting all edges to be required (so using the constant function $\tau(e) = \text{ required}$ for all $e \in \E$) the problem definitions become identical. 
    
    To go from LFL to LCL, we abuse the fact that we are in a bounded-degree graph. So we can enumerate all possible $r$-hop neighborhoods. We check for each such neighborhood, if it can be matched to one of the configurations of our LFL, if yes we add it to our LCL. Again the resulting problems will be identical.
\end{proof}

Additionally we provide one nontrivial extension to LFLs that we will use to make our proofs much more elegant. We call this extension partially ordered LFLs ($\poLFL$). In partially ordered LFLs we allow wildcard labels in the configurations. For example, we could then describe the problem of coloring using configurations where the center node has color e.g. \emph{Red} and it has a single optional neighbor with label \emph{Not Red}. Now any neighbor that does not output color \emph{Red} can be mapped onto this optional neighbor. We illustrate how to encode 3 coloring in just 3 intuitive configurations this way in \Cref{fig:3COLpOIntro}. This formalism is not just nice in the sense, that it gives us much more freedom to express problems in a concise and intuitive way, but it also makes some of our more involved proofs more manageable.

\begin{figure}[!ht]
    \centering
    \includesvg[width=0.7\linewidth]{img/3ColPO.svg}
    \caption{Using the ''not \_\_'' labels ($\{\bar{R}, \bar{G}, \bar{B}\}$) we can describe 3 coloring in the very intuitive way, where a node that outputs $R$ must have only neighbors that are ''not $R\,$''.}
    \label{fig:3COLpOIntro}
\end{figure}

In \Cref{sec:LFLpO} we give a formal definition and prove that partially ordered LFLs are equivalent to LFLs.

\subsection{LFLs in Trees - the Node-Edge Checkable Formalism}\label{sec:ReReduction}

To extend the gap results we have for LCLs in bounded-degree trees, we will rely on an alternative way of expressing LFLs. We extend the classical node--edge checkable formalism, known from the Round-Elimination technique, in a way that allows us to prove an equivalence between LFLs in trees and this enhanced formalism. 
The advantage of the node--edge checkable view is that it represents an LFL with an effective checking radius of $1/2$,
which makes it considerably cleaner to reason about the infinitely many possible $r$-hop neighborhoods that can appear in the unbounded-degree setting.

\paragraph*{Definition: LFLs -- node--edge checkable}
A Locally Finite Labeling in the node--edge checkable formalism is a tuple
\[
    \Pi = (\Sin, \Sout, \C_V, \C_E),
\]
where 
\begin{itemize}
    \item $\Sin$ is a set of input labels.
    \item $\Sout$ is a set of output labels, and we define $\Sout^* = \{a, a^* \mid a \in \Sout\}$. 
    Thus for every output label $a$ we include a designated ``optional'' version $a^*$.
    \item $\C_V$ is a set of allowed node configurations, each a multiset with elements from $\Sin \times \Sout^*$.
    \item $\C_E$ is a set of allowed edge configurations, each a multiset of the form $\{a,b\}$ with $a,b \in \Sin \times \Sout$.
\end{itemize}

An instance of $\Pi$ is a half-edge labeled graph $G=(V,E,\phi)$, and a solution is a mapping $\sigma:\hE \rightarrow \Sout$ satisfying all node and edge constraints. Formally:
\begin{itemize}
    \item For every node $v \in V$ with incident halfedges $\hE_v$, the multiset
    \[
        \La_v = \biguplus_{e_v \in \hE_v} \{(\phi(e_v),\,\sigma(e_v))\}
    \]
    must match some configuration $C \in \C_V$ in the sense that for every $(a,b)\in \Sin\times\Sout$ the multiplicity $m_{(a,b)}(\La_v)$ satisfies:
    \[
        m_{(a,b)}(\La_v)=m_{(a,b)}(C) \quad\text{or}\quad
        \bigl( m_{(a,b)}(\La_v) > m_{(a,b)}(C)\ \text{and}\ (a,b^*)\in C \bigr).
    \]
    \item For every edge $e=\{u,v\}\in E$, the multiset 
    \[
        \{(\phi(e_u),\sigma(e_u)),\,(\phi(e_v),\sigma(e_v))\}
    \]
    must belong to $\C_E$.
\end{itemize}

Intuitively, each node $v$ inspects the multiset of labeled halfedges around it and checks whether some node constraint matches. 
If a pair $(a,b)$ appears exactly as often as required in a configuration $C$, the constraint is satisfied; if it appears \emph{more} often, this is permitted only when $C$ contains $(a,b^*)$, which serves as the analog to an optional edge, allowing arbitrarily many occurrences.

\paragraph*{MIS as a node-edge checkable problem}
Using no input labels and the outputlabels $\set{M,O,P}$. The problem is given in the following way:
\begin{center}
\begin{tabular}{||c | c||} 
 \hline
 $\C_V$ & $\C_E$  \\ [0.5ex] 
 \hline\hline
 $M^*$ & $MM$   \\ 
 $PP^*O^*$ & $OO$  \\
  & $PM$\\
 \hline
\end{tabular}
\end{center}

The idea is that nodes in the MIS output label $M$ on all incident halfedges, while nodes not in the MIS must \emph{point} to at least one MIS neighbor via the label $P$. 
By the edge constraints, any halfedge labeled $P$ must face an $M$ on the opposite side, so the pointed-to neighbor indeed belongs to the MIS. 
Edges between two non-MIS nodes use label $O$ on both halfedges. 
Thus a non-MIS node may have arbitrarily many non-MIS neighbors using $O$ but must have at least one MIS neighbor using $P$, ensuring correctness of the MIS specification.

Again this presentation of MIS is much more intuitive, than explicitly writing down all of these combinations for each possible $\Delta$.

\paragraph*{Equivalence in trees}
Balliu et al.\ have shown that, in trees, the node--edge checkable formalism is equivalent to the standard definition of LCLs \cite{smallMessagesbcmos21}. This equivalence enables the use of the powerful round-elimination technique when reasoning about LCLs.

We obtain an analogous result for LFLs, but adapting their proof is not straightforward. Informally, their construction proceeds as follows. Given an LCL
\[
    \Pi = (\Sin, \Sout, r, \C),
\]
they define a node-edge checkable version
\[
    \Pi' = (\Sin, \Sout', \C_V, \C_E),
\]
where the output alphabet is
\[
    \Sout' := \{(C,e) \mid C \in \C,\ e \in \bar{\E}_C\}.
\]
Intuitively, each half-edge outputs exactly which half-edge of which configuration of $\Pi$ it represents. The node constraints $\C_V$ require all halfedges incident to a node $v$ to name the same configuration $C$, and jointly represent all halfedges adjacent to the center node of $C$. The edge constraints $\C_E$ require the halfedges $e_u, e_v$ of any edge $\{u,v\}$ to output labels $(C, e'_u)$ and $(C', e'_v)$ that are \emph{locally consistent}: the $(r-1)$-view of $u$ must be compatible with the configuration $C$ that $v$ selected.

This is illustrated in \Cref{fig:locallyConsistent}. There, $v$ chooses configuration $C_1$ and $u$ chooses configuration $C_2$. In the underlying instance, the $1$-hop view of $u$ is consistent both with the right-hand neighborhood of the center of $C_1$ and with the center node of $C_2$. Thus, if the halfedges $e_v,e_u$ output $(C_1,e')$ and $(C_2,e^*)$, respectively, the pair is locally consistent from $v$'s perspective. In essence, because $u$ identifies $v$ as its left neighbor in $C_2$, $v$ only needs to check $u$'s output to ensure that everything to the right of $v$ remains consistent with its own choice.

\begin{figure}[!ht]
    \centering
    \includesvg[width=0.8\linewidth]{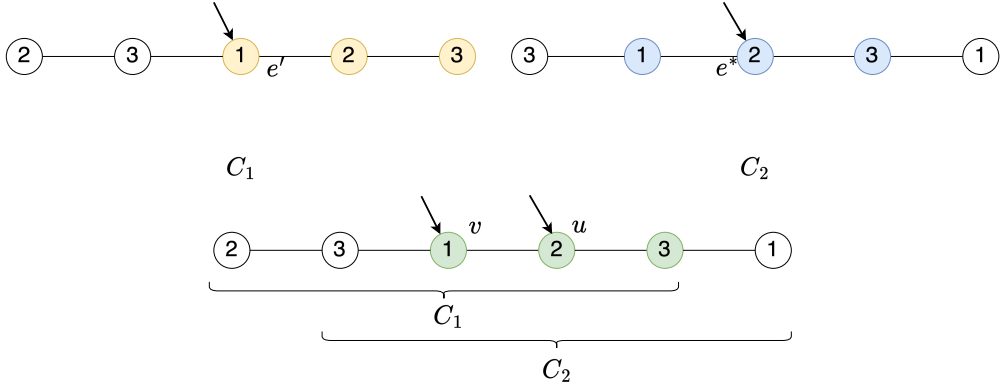}
    \caption{Illustrating local consistency. Because $e^*$ outputs $C_2$ as part of its output, $v$ can be sure that its neighborhood is consistent with $C_1$, without checking everything itself.}
    \label{fig:locallyConsistent}
\end{figure}

For LFLs this approach is insufficient. The presence of optional edges introduces additional ambiguity. Consider an LFL with two configurations as in \Cref{fig:inconsistentLFL}. The bottom path does not encode a valid solution: the neighborhood of $v$ does not match $C_1$. However, the $(r-1)$-view required for $v$ to justify matching $C_1$ also appears somewhere inside $C_2$. The node $u$ can therefore legitimately match $C_2$, even if doing so yields an output that is incompatible with $v$'s choice of $C_1$. Crucially, there are now \emph{multiple} ways to extend a partial neighborhood into a full configuration, and so the simple promise that a half-edge outputs $(C_2, e^*)$ is no longer strong enough to guarantee local consistency for $v$.

\begin{figure}[!ht]
    \centering
    \includesvg[width=0.8\linewidth]{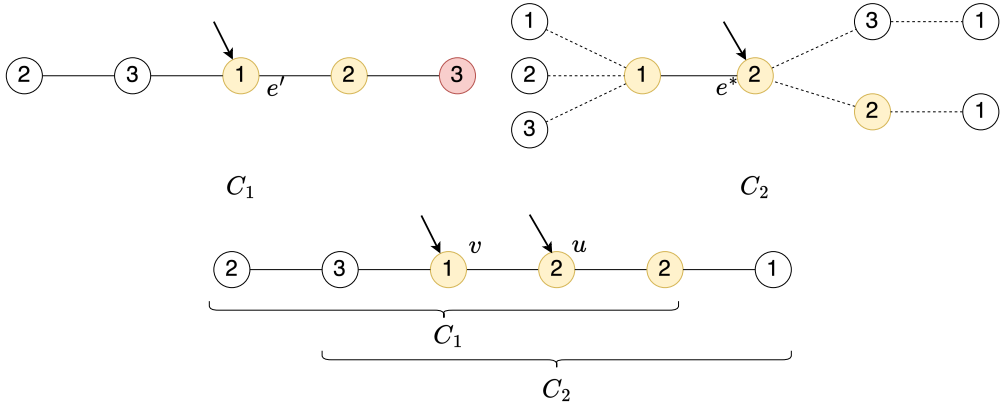}
    \caption{If $v$ tries to match $C_1$, but only knows that $u$ matches $C_2$, then $v$ will not see the error.}
    \label{fig:inconsistentLFL}
\end{figure}

One way to resolve this issue is to expand the configuration set $\C$ by introducing \emph{strengthened} versions of certain configurations. In the example above, we could add a modified configuration $C_2'$ that remains compatible with $C_2$ but also provides a sufficiently strong guarantee for $v$ to verify correctness. The strengthened configuration is shown in \Cref{fig:strengthened}. Any node that can match $C_2'$ could also have matched $C_2$, so this transformation does not make the problem harder. Instead, it introduces redundant but more informative configurations that allow us to encode the LFL correctly in the node-edge checkable formalism.

\begin{figure}[!ht]
    \centering
    \includegraphics[width=0.4\linewidth]{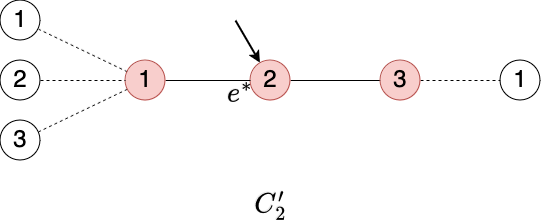}
    \caption{The new configuration $C_2'$ that $u$ can use to convince $v$ that the output is consistent. The nodes in red are the subgraph that is required in order for $v$ in \Cref{fig:inconsistentLFL} to be convinced of the correctness of the solution.}
    \label{fig:strengthened}
\end{figure}
However, while the underlying idea of the construction is essentially the same, the technical details become substantially more involved when $r>1$. Working directly with configurations of larger radius is cumbersome: handling all possible local cases and ensuring correctness and compatibility throughout turns out to be highly nontrivial. Instead, we employ an alternative approach where we first reduce the radius of a given LFL to 1.  We will now show that this is sufficient by proving that the construction of \cite{smallMessagesbcmos21} can be adapted---with minor modifications---to work for LFLs of radius~1.

\begin{lemma}\label{lem:rIsOne}
    Let $\Pi = (\Sout, \Sin, 1, \C)$ be an LFL. Then there exists a node--edge checkable LFL $\Pi' = (\Sout', \Sin', \C_V, \C_E)$ such that:
    \begin{enumerate}
        \item Any solution $(\sigma, F)$ to $\Pi$ can be transformed into a valid solution for $\Pi'$ in one round of the LOCAL model.
        \item Any solution $\sigma'$ to $\Pi'$ can be transformed into a valid solution $(\sigma, F)$ for $\Pi$ in one round of the LOCAL model.
    \end{enumerate}
\end{lemma}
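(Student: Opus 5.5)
The plan is to adapt the construction of \cite{smallMessagesbcmos21} by making every half-edge announce not only which configuration its endpoint uses, but exactly how that endpoint looks from the edge's perspective. For radius $1$ this information is finite. A $1$-hop configuration $C$ is a star around $\cen$, possibly with a self loop at $\cen$; the self loop can be ignored in trees, or treated as an optional edge to a neighbor with the same label. Each leaf $w$ of $C$ carries a label $\mu(w)\in\Sin\times\Sout$ and a type required/optional.

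We set
\[
\Sout' := \{(C,w,\,s)\mid C\in\C,\ w \text{ a leaf of } C,\ s\in\Sout\}.
\]
In words, the half-edge $e_v$ states that $v$ uses configuration $C$ with output $s=\sigma(v)$, and that the neighbor across $e$ is mapped to leaf $w$. The input part is handled by letting $\Sin'$ place $\phi(v)$ on every half-edge at $v$; the edge constraint can then see both endpoints' inputs.

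\textbf{Node constraints.} For each $C\in\C$ with center label $(a,s)$, we add one node configuration to $\C_V$. It contains each label $((C,w,s))$ exactly once for every required leaf $w$, and it contains $((C,w,s))^*$ for every optional leaf $w$. The multiplicity semantics of the node--edge formalism then say the following: required leaves are hit exactly once, optional leaves are hit any number of times, and no half-edge names a leaf outside $C$. All half-edges at $v$ must agree on $C$ and $s$, and the input must agree with $\mu(\cen)$; both are enforced because only these labels occur in the configuration.

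One subtlety needs care. A required leaf must not be hit by additional optional mappings. We must also check the ``exactly once'' semantics when the same output label appears both as a required and as an optional leaf. Since we distinguish leaves $w$ individually in the label, this is fine: an edge mapped to an optional leaf carries a different label than one mapped to the required leaf.

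\textbf{Edge constraints.} $\C_E$ contains $\{(a,(C,w,s)),(a',(C',w',s'))\}$ exactly when $\mu_C(w)=(a',s')$ and $\mu_{C'}(w')=(a,s)$. That is, each endpoint's claimed image of the other endpoint carries exactly the other endpoint's true input and output.

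\textbf{Direction 1 ($\Pi\Rightarrow\Pi'$).} Given $(\sigma,F)$, node $v$ knows $f_v$ and $\sigma(v)$ and labels $e_v$ by $(C_v,f_v(u),\sigma(v))$ for the neighbor $u$. This takes $0$ rounds beyond knowing its own data, and at most one round in any case. The node constraint holds by rules \ref{rule:required} and \ref{rule:homomorphism}. The edge constraint holds by rule \ref{rule:labelings}: $\mu(f_v(u))=(\phi(u),\sigma(u))$, and $\sigma(u)$ is what $u$ writes as $s$.

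\textbf{Direction 2 ($\Pi'\Rightarrow\Pi$).} Each node reads $C$ and $s$ off its half-edges (they agree) and sets $\sigma(v)=s$. It defines $f_v(v)=\cen$ and $f_v(u)=w$ from the label on $e_v$. The node constraint gives the bijection onto the required leaves and that every edge lands on an edge of $C$. The edge constraint, read after one round, guarantees $\mu(w)=(\phi(u),\sigma(u))$.

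The main obstacle is verifying that the node-configuration multiplicity semantics exactly captures ``isomorphism on required part plus homomorphism'' for radius $1$. This hinges on the distinct-leaf labeling, and I would check that step carefully, including the degenerate case where $\cen$ itself is incident to a required self loop.
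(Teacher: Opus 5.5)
Your construction is correct and is essentially the paper's, just with a finer alphabet. The paper labels the half-edge at $v$ toward $u$ by the pair $(x_v,x_u)\in(\Sin\times\Sout)^2$, starred when it comes from an optional leaf, and imposes the swap edge constraint; this merges equally labeled leaves, since at radius $1$ only per-label neighbor counts matter. You instead name $(C,w)$ leaf by leaf, as in \cite{smallMessagesbcmos21}. That is equally valid here, because the edge constraint checks the neighbor's label, which is the only thing a radius-$1$ leaf constrains.

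On loops, which the paper sidesteps by treating radius-$1$ configurations as stars, neither of your two options is right in general:
\begin{itemize}
\item A required loop at $\cen$ can never be matched.
\item An optional loop should be dropped if $C$ has a required edge: then $\cen\in\V_r$, and rule~\ref{rule:required} forbids a second preimage of $\cen$.
\item If $C$ has no required edge, the optional loop should be turned into an optional leaf labeled $\mu(\cen)$.
\end{itemize}
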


\begin{proof}
If $r=1$, then each configuration of $\Pi$ is a star. Thus the correctness of the labeling at a node $v$ depends only on the labels it sees on its neighbors.  
To express such a problem in the node--edge checkable formalism, we must assign labels to halfedges. In order for $v$ to verify its correctness, every half-edge should encode both the input/output pair of $v$ and the input/output pair of the corresponding neighbor.  

Therefore we use the output alphabet
\[
    \Sout' = (\Sin \times \Sout)^2.
\]
These are ordered pairs: if a half-edge adjacent to $v$ outputs $(x_1,x_2)$, this means that $v$ has input/output label $x_1 \in \Sin \times \Sout$ and its neighbor has input/output label $x_2$.

\medskip
\noindent
For each configuration $C$, we map each neighbor $u$ of the center $\cen$ to a label:
\[
    f_C(u) = 
    \begin{cases}
        (x_{\cen}, x_u), & \tau(e)=\mathrm{required}, \\[4pt]
        (x_{\cen}, x_u)^*, & \tau(e)=\mathrm{optional},
    \end{cases}
\]
where $e$ is the half-edge between $\cen$ and $u$, and $x_{\cen}, x_u \in \Sin \times \Sout$ are their input/output labels in $C$.

Using this map, each configuration $C$ of $\Pi$ yields a node configuration (which is a multiset)
\[
    C_V = \{\, f_C(u) \mid u \in V_C \setminus \{\cen\} \,\}.
\]
Notice that given a valid solution to $\Pi$, we have every node $v$ assign its incident halfedges, leading to some node $u$, the labels $f(u)$. Then in this solution $v$ can use configuration $C_V$. 

We set
\[
    \C_V = \{\, C_V \mid C \in \C \,\}.
\]

\medskip
\noindent
Finally, the edge constraint is simply that if one half-edge is labeled $(x_1,x_2)$, then the opposite half-edge must be labeled $(x_2,x_1)$.

\paragraph*{Equivalence of problems:}

To see that the two problems are equivalent, observe that the only restriction placed on a node $v$ is the number of neighbors it has with each label $x \in \Sin \times \Sout$.  
A configuration $C$ in the original LFL specifies exactly how many neighbors of the center node $\cen$ may or must have label $x$.  
If $\cen$ has $k$ required neighbors with label $x$, then the multiset $C_V$ contains the element $(x_{\cen}, x)$ exactly $k$ times.  
If $\cen$ has an optional neighbor with label $x$ (and therefore can have an arbitrary number of additional such neighbors), then $C_V$ contains the element $(x_{\cen}, x)^*$.  
Thus $C_V$ encodes precisely the constraints imposed by $C$, and therefore the two formalisms define equivalent problems.
\end{proof}

Now we turn our attention to the radius reduction result. As in the original proof, our goal is to have certain nodes provide certificates guaranteeing correctness. However, reducing the radius all the way down to one in a single step would require handling highly complex certificates. Instead, we reduce the radius by exactly one, which allows us to work with significantly simpler radius-1 objects.

Fix some configuration $C$ of radius $r$. We want to derive from it a configuration of radius $r-1$. The core idea is to focus on the nodes at distance $r-1$ from the center of $C$; we call such nodes \emph{twigs}. Each twig node must output a certificate asserting that its neighborhood is compatible with $C$. To build such certificates, we use \emph{twig configurations} $C_T$, which have radius one and capture exactly what is required of a twig in $C$.

In a real instance of our problem, a node $u$ will typically serve as a twig in many different configurations simultaneously. Indeed, if we consider the set $D_{r-1}(u)$ of nodes at distance exactly $r-1$ from $u$, then for each $v \in D_{r-1}(u)$ there is some configuration $C_v$ whose center is $v$, and $u$ is a twig in $C_v$. Therefore, $u$ must certify that it is the correct twig for \emph{all} of these configurations at once.

To handle this, we introduce the \emph{TwigLFL}, in which every node outputs precisely the set of twig configurations it is compatible with. This TwigLFL serves as the foundation for the radius-$1$ reduction step, and will be the key tool in our overall radius reduction procedure.

This is the last definition that works for graphs with cycles.
\begin{definition}[subconfiguration]\label{def:subconfig}
    For any $r$-hop configuration $C = (\V,\E,\cen,\mu,\tau)$ a subconfiguration $C'=(\V', \E', \cen', \mu', \tau')$ is any connected subgraph $(\V',\E') \subset (\V, \E)$ of $C$ with some designated center node $\cen' \in \V'$. We always have that $\mu'$ and $\tau'$ are the restrictions of $\mu$ and $\tau$ to the subset $\V' \subset \V$, $\E' \subset \E$ respectively.
\end{definition}

From now on let $\Pi = (\Sin, \Sout, r, \C)$ be an LFL for graphs of girth at least $2r + 1$. As a result all configurations $C \in \C$ are acyclic and therefore are trees. Whenever we talk about the parent of a node for a configuration, we implicitly assume the configuration to be rooted at the center node.

\begin{definition}[twig-configuration]
    Let $\Pi = (\Sin, \Sout, r, \C)$ be an LFL and $C=(\V, \E, \cen, \mu, \tau)$ be an $r$-hop configuration.
    Let $u \in \V$ be any twig in $C$ and $p \in \V$ its parent.
    We define the twig configuration \[C_u = (\Nh_1(u), \E', u, \mu', \tau')\] as the subconfiguration implied by $\Nh_1(u)$.
    Additionally, in the twig configuration we set $\tau(\set{u,p}) = required$.
    We define $\T^\Pi$ as the set of all twig-configurations of $\Pi$.
\end{definition}

As discussed above, a node in a real solution may simultaneously match many different twig configurations. We now formalize this through the construction of the TwigLFL.

\subsection{The TwigLFL}

For any LFL $\Pi = (\Sin, \Sout, r, \C)$, the TwigLFL is an LFL
\[
\Pi^T = (\Sin, \Sout^T, 1, \C^T),
\]
which we will construct over the course of this section. We begin by specifying the first three components of its definition:
\begin{enumerate}
    \item $\Sin$ is unchanged.
    \item $\Sout^T = \Sout \times \Pow(\T^\Pi)$; that is, each node outputs a normal output label together with a subset of twig configurations indicating those it is compatible with.
    \item The radius is $r = 1$.
\end{enumerate}

The main challenge lies in defining the configuration set $\C^T$. The guiding intuition is that each node should output a set $X \in \Pow(\T^\Pi)$ of twig configurations such that it could, in principle, match \emph{all} configurations in $X$. Crucially, we must also ensure that if a node outputs the set $X$, then it is indeed capable of matching every twig configuration in $X$.

We realize this requirement at the level of configurations as follows. For each subset $X \in \Pow(\T^\Pi)$ of twig configurations, we attempt to construct a 1-hop configuration $C_X$ that represents the ``intersection'' of all configurations in $X$. (For some choices of $X$, this intersection may be empty, in which case no configuration $C_X$ will be added to~$\C^T$.)

Since $C_X$ is a radius-$1$ configuration, we construct it by taking a center node $\cen$ and adding required and optional neighbors. For this, we inspect every input/output labeling of a node, appearing in any configuration in $X$, and add to $C_X$ as many copies of such labeled edges as are demanded by the combined constraints. In this way, $C_X$ exactly captures the compatibility conditions encoded by all twig configurations in $X$.

Let $u$ be any non-center node appearing in any configuration in $X$, that is, $u \in C$ for some $C \in X \in \Pow(\T^\Pi)$.  
For each such node, we consider its input/output labeling $(x_u,y_u) = \ell$ and record how often this labeling appears in each configuration of $X$, together with whether any of these occurrences are optional.

\begin{definition}[Multiplicity of a labeling]\label{def:multiplicity}
    For any input/output labeled radius-$1$ ball 
    \[
        B = (V, E, \mu)
    \]
    centered at a node $v$, we define the \emph{multiplicity} $m_\ell(B) \in \N$ to be the number of neighbors of $v$ whose labeling is $\ell \in \Sin \times \Sout$.

    For any configuration 
    \[
        C = (V, E, \cen, \mu, \tau),
    \]
    we define $m_\ell(C)$ as the number of \emph{required} edges adjacent to $\cen$ whose endpoint label is $\ell$.
\end{definition}

The following observations serve as guidelines for how we construct the configuration $C_X$.

\begin{observation}\label{obs:optExists}
    Let $C_1, C_2 \in X$. If $m_\ell(C_1) < m_\ell(C_2)$, then either $C_1$ contains an optional edge with labeling $\ell$, or no neighborhood can match both $C_1$ and $C_2$ simultaneously.
\end{observation}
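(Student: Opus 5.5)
We argue by contraposition: assume $m_\ell(C_1) < m_\ell(C_2)$ and that $C_1$ has \emph{no} optional edge at its center whose endpoint carries labeling $\ell$. We then show that no labeled radius-$1$ ball $B$ (with center $v$) can match both $C_1$ and $C_2$. Since twig configurations have radius $1$ and are stars (we are in the acyclic setting, girth at least $2r+1$), matching is decided by counting how many neighbors of $v$ carry each labeling. The plan is to extract two incompatible inequalities on $m_\ell(B)$, one from each configuration.

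First step, the lower bound from $C_2$. Suppose $f:\Nh_1(v)\to C_2$ matches. By the requirement rule (Rule~\ref{rule:required}), $f$ restricted to the preimage of the required subgraph is an isomorphism onto it. Hence every required edge $\{\cen,w\}$ of $C_2$ with $\mu(w)=\ell$ has exactly one preimage edge $\{v,u\}$. By the labeling rule, $u$ has labeling $\ell$, and distinct required edges have distinct preimages. This gives $m_\ell(B)\ge m_\ell(C_2)$.

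Second step, the upper bound from $C_1$. Suppose $g:\Nh_1(v)\to C_1$ matches. Every edge $\{v,u\}$ with $u$ labeled $\ell$ is mapped, by the homomorphism and centering rules, to an edge $\{\cen,g(u)\}$ of $C_1$ with $\mu(g(u))=\ell$. By assumption, every such edge is required. Because $g$ is injective on the preimage of the required part, distinct such $u$ go to distinct required edges. Thus $m_\ell(B)\le m_\ell(C_1)$, which contradicts $m_\ell(C_1)<m_\ell(C_2)\le m_\ell(B)$.

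Main obstacle. The one subtle point is the injectivity claim in the second step. We must rule out two neighbors $u\neq u'$ of $v$ both being sent onto the same required endpoint $w$. This is exactly where Rule~\ref{rule:required} is used: the restriction of $g$ to $g^{-1}(\V_r)$ is a bijection onto $\V_r$. So at most one vertex maps to $w\in\V_r$, and each of $u,u'$ lies in $g^{-1}(\V_r)$. We should also check that no self-loop at $\cen$ could absorb the labeling $\ell$. A self-loop would give $g(u)=\cen$, which requires $\mu(\cen)=\ell$, and it is excluded here since $C_1$ is a tree. This completes the plan.
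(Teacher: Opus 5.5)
Your proof is correct and follows the same counting argument as the paper: matching $C_2$ forces at least $m_\ell(C_2)$ neighbors labeled $\ell$, while matching $C_1$ without an optional $\ell$-edge forces at most $m_\ell(C_1)$ of them (the paper says exactly). Your contribution is to spell out, via the required-isomorphism, homomorphism and centering rules, the edge-mapping and injectivity steps that the paper's short proof leaves implicit.
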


\begin{proof}
    Suppose $m_\ell(C_1) < m_\ell(C_2)$ and $C_1$ contains no optional edge with label $\ell$.  
    Then every neighborhood matching $C_1$ must contain \emph{exactly} $m_\ell(C_1)$ nodes labeled $\ell$, while any neighborhood matching $C_2$ must contain at least $m_\ell(C_2)$ nodes labeled~$\ell$.  
    Thus no single neighborhood can satisfy both constraints.
\end{proof}

As a consequence, if there exist $C_1, C_2 \in X$ such that no neighborhood can match both simultaneously, then no configuration $C_X$ exists for this~$X$.  
In this case we simply ignore such an $X$ and do not introduce a configuration for it.  
Henceforth we assume that we are never in this bad case.  
Under this assumption, we make the following key observation.

\begin{observation}\label{obs:nrOccLabel}
    Any neighborhood that can match all configurations in $X$ simultaneously must contain at least
    \[
        m_\ell(X) := \max_{C \in X} m_\ell(C)
    \]
    non-center nodes that are labeled with~$\ell$.

    Moreover, if at least one configuration $C \in X$ with $m_\ell(C) = m_\ell(X)$ does \emph{not} not have an optional edge leading to a node with label~$\ell$, then every neighborhood that matches all configurations in $X$ must contain \emph{exactly} $m_\ell(X)$ non-center nodes labeled with $\ell$.
\end{observation}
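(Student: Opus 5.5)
The plan is to derive both parts of \Cref{obs:nrOccLabel} directly from the matching conditions (in particular Rule~\ref{rule:required} and Rule~\ref{rule:homomorphism}), specialised to radius-$1$ twig configurations. Throughout, fix a labeled radius-$1$ ball $B$ centered at a node $v$. Suppose that for every $C \in X$ there is a map $f_C : B \to C$ that matches $C$. Fix a labeling $\ell \in \Sin \times \Sout$.

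\textbf{Lower bound.} Take any $C \in X$. By Rule~\ref{rule:required}, $f_C$ restricted to $f_C^{-1}(\V_r)$ is an isomorphism onto the required part $\V_r$ of $C$. Consequently each of the $m_\ell(C)$ required neighbors of the center that carry label $\ell$ has exactly one preimage in $B$. By Rule~\ref{rule:centered} that preimage is adjacent to $v$, and it is a non-center node, since $v$ is sent to $\cen$ and the preimage of a required neighbor is a different node. By Rule~\ref{rule:labelings} it carries label $\ell$. Since $f_C$ is a function, distinct required neighbors have distinct preimages. This gives $m_\ell(B) \ge m_\ell(C)$ for every $C \in X$, and taking the maximum over $X$ yields $m_\ell(B) \ge m_\ell(X)$.

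\textbf{Exactness.} Now assume some $C \in X$ satisfies $m_\ell(C) = m_\ell(X)$ and has no optional edge from the center to an $\ell$-labeled node. Take any neighbor $w$ of $v$ in $B$ with label $\ell$. By Rule~\ref{rule:homomorphism} the edge $\{v,w\}$ is mapped to an edge $\{\cen, f_C(w)\}$ of $C$. By Rule~\ref{rule:labelings}, $f_C(w)$ has label $\ell$, so $f_C(w) \neq \cen$ (assuming we exclude a self-loop at $\cen$ or handle it like an optional edge). The edge $\{\cen, f_C(w)\}$ cannot be optional, so it is one of the $m_\ell(C)$ required edges. Rule~\ref{rule:required} says each required node has exactly one preimage, so $f_C$ is injective on $\ell$-labeled neighbors. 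Hence $m_\ell(B) \le m_\ell(C) = m_\ell(X)$, and together with the lower bound this gives equality.

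\textbf{Main subtlety.} The delicate point is the self-loop case allowed in configurations. If $\cen$ carries a self-loop and $\mu(\cen) = \ell$, then $\ell$-labeled neighbors of $v$ could be mapped onto $\cen$ itself. I would handle this by treating such a loop as an edge to an $\ell$-labeled node: if it is optional, it falls outside the hypothesis of the exactness part; if it is required, Rule~\ref{rule:required} again forces exactly one preimage. Twig configurations are trees under the girth assumption, so loops do not actually arise there, and I would note this explicitly. Apart from that the argument is bookkeeping.
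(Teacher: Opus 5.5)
Your proposal is correct and follows the reasoning the paper intends: the paper states this observation without a separate proof and relies on the same two facts used in the proof of Observation~\ref{obs:optExists}, which you verify rule by rule — matching $C$ forces at least $m_\ell(C)$ neighbors labeled $\ell$ via the required-isomorphism rule, and exactly $m_\ell(C)$ via the homomorphism rule when $C$ has no optional $\ell$-edge. Your self-loop caveat is harmless but unnecessary, since twig configurations are trees under the girth assumption; moreover, a required loop could not be matched in a simple graph at all, rather than forcing exactly one preimage as you state.
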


Based on this observation, we add exactly $m_\ell(X)$ \emph{required} edges leading to nodes with labeling $\ell$ to $C_X$.  
Additionally, we add an \emph{optional} edge leading to a node labeled~$\ell$ if and only if \emph{every} configuration $C \in X$ with $m_\ell(C) = m_\ell(X)$ contains such an optional edge.  
By \Cref{obs:optExists}, this condition implies that in fact \emph{all} configurations in $X$ contain such an optional edge.

By repeating this process for every labeling $\ell$ that appears in any configuration of~$X$, we obtain the configuration~$C_X$.  
We now show that this construction behaves as intended.

\begin{lemma}\label{lem:CxIsIntersection}
    Any neighborhood that can match all configurations in $X$ can match $C_X$, and any neighborhood that can match $C_X$ can match all configurations in $X$.
\end{lemma}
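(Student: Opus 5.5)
The proof rests on one reduction: for radius-$1$ configurations, whether a labeled $1$-hop neighborhood $B$ of a center $v$ can be matched to a star configuration $C$ depends only on counts. Concretely, I would first record the following characterization. $B$ matches $C$ if and only if the center labels agree and, for every labeling $\ell \in \Sin\times\Sout$, either
\begin{itemize}
    \item $m_\ell(B) = m_\ell(C)$, or
    \item $m_\ell(B) > m_\ell(C)$ and $C$ has an optional edge to a node labeled $\ell$.
\end{itemize}
To see this, build the homomorphism $f_v$ labeling by labeling. For each $\ell$, send $m_\ell(C)$ of the $\ell$-labeled neighbors bijectively onto the required $\ell$-leaves, and send the surplus onto the optional $\ell$-leaf. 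The isomorphism condition on required edges (rule~\ref{rule:required}) forces exactly this: no two neighbors may share a required leaf, and every required leaf must be hit. Rule~\ref{rule:labelings} forces label preservation, so the choices for different $\ell$ are independent. Since configurations in $\T^\Pi$ and $C_X$ are all stars, this characterization applies to both.

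One detail to settle here is the center. Every configuration in $X$ must share the center labeling, since otherwise no neighborhood matches them all and $C_X$ is not defined. I take the center of $C_X$ to carry that common label. If the twig configurations are viewed as rooted at the twig, the parent edge is required, so it is counted in $m_\ell$ like any other required edge and needs no separate treatment.

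\emph{Direction 1: matching all of $X$ implies matching $C_X$.} Suppose $B$ matches every $C\in X$. By the characterization, $m_\ell(B)\ge m_\ell(C)$ for all $C \in X$, so $m_\ell(B)\ge m_\ell(X)=m_\ell(C_X)$. If the inequality is strict, then for each $C$ attaining the maximum, $m_\ell(B) > m_\ell(C)$, so each such $C$ has an optional $\ell$-edge. By construction $C_X$ then also has one, and $B$ matches $C_X$ for this $\ell$.

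\emph{Direction 2: matching $C_X$ implies matching every $C \in X$.} Suppose $B$ matches $C_X$ and fix $C\in X$ and a labeling $\ell$. There are three cases.
\begin{itemize}
    \item If $m_\ell(C)<m_\ell(X)$, then \Cref{obs:optExists} (applied to the bad-case-free $X$) gives $C$ an optional $\ell$-edge. Since $m_\ell(B)\ge m_\ell(X)>m_\ell(C)$, the condition holds.
    \item If $m_\ell(C)=m_\ell(X)$ and $m_\ell(B)=m_\ell(X)$, the counts are equal and the condition holds.
    \item If $m_\ell(C)=m_\ell(X)$ and $m_\ell(B)>m_\ell(X)$, then $C_X$ must have an optional $\ell$-edge. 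By construction this means every maximizer, in particular $C$, has one.
\end{itemize}
Labelings $\ell$ that occur in no configuration of $X$ have multiplicity $0$ everywhere and no optional edges. Hence $B$ has no such neighbors in either direction.

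I expect the main obstacle to be the characterization itself: formalizing that a homomorphism satisfying rules~\ref{rule:required} and~\ref{rule:homomorphism} exists exactly under the counting condition. Once it is in place, the two directions are the short case checks above.
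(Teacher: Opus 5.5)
Your proof is correct and follows essentially the same approach as the paper's: both reduce matching a star configuration to per-label counts, sending $m_\ell(X)$ of the $\ell$-labeled neighbors to the required leaves and the surplus to the optional leaf. You go further than the paper by stating the counting characterization explicitly and by actually carrying out the converse direction and the center-label and absent-label edge cases, all of which the paper only asserts to be analogous.
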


\begin{proof}
    Both directions follow the same structure, so we only prove the first.

    Let $N$ be any radius--1 input/output labeled neighborhood that matches all configurations in $X$.  
    We show that $N$ can be matched to $C_X$.

    Fix any labeling~$\ell$ that appears on a neighbor of the center of $N$.  
    Since $N$ matches every configuration $C \in X$, it must contain at least $m_\ell(X)$ nodes labeled~$\ell$.  
    We map exactly $m_\ell(X)$ of these nodes to the $m_\ell(X)$ required edges of $C_X$ with label~$\ell$.

    If $N$ contains more than $m_\ell(X)$ such nodes, then by \Cref{obs:optExists} every configuration in $X$ must contain an optional edge labeled~$\ell$, and hence $C_X$ contains exactly one such optional edge.  
    We map all remaining $\ell$-labeled nodes of $N$ to this optional edge.

    Repeating this procedure for each labeling~$\ell$ appearing in $N$ assigns all neighbors of $N$ to nodes of $C_X$.

    It remains to check that all requirements of $C_X$ are satisfied.  
    Let $e_r$ be any required edge of $C_X$ with labeling~$\ell$.  
    By construction of~$C_X$, there exists a configuration $C \in X$ with a required edge labeled~$\ell$, which implies that any neighborhood matching all of $X$ contains at least $m_\ell(X)$ nodes labeled~$\ell$.  
    Hence $N$ has enough $\ell$-labeled nodes to satisfy the requirement for~$e_r$.

    This completes the proof.
\end{proof}

We now obtain a configuration for the TwigLFL by assigning appropriate output labels\footnote{Recall that 
$\Sout^T = \Sout \times \Pow(\T^\Pi)$, so each node outputs both a normal output label and a subset of twig configurations.} 
from $\Sout^T$ to the nodes of~$C_X$.  
From the construction above, the components $\V_X$, $\E_X$, $\cen_X$, and $\tau_X$ are already fixed; the only remaining task is to define the labeling function $\mu_X$.

We assign labels as follows.  
The center node receives
\[
    \mu_X(\cen_X) = (x_{\cen}, (y_{\cen}, X)),
\]
where $(x_{\cen},y_{\cen})$ is the input/output labeling of $\cen_X$ determined during the construction of $C_X$.  
Every other node $v \in \V_X \setminus \{\cen_X\}$ receives
\[
    \mu_X(v) = (x_v, (y_v, *)),
\]
where $(x_v,y_v)$ is again its input/output label from the construction of $C_X$, and where the symbol $*$ denotes that \emph{any} subset $Y \subseteq \T^\Pi$ is permitted.  In other words, non-center nodes enforce the correct input/output labeling, but impose no restriction on the set of twig configurations they output. In \Cref{sec:LFLpO} we formally prove, that using such wildcard labels does not change the class of LFL problems.

We may now complete the definition of the TwigLFL by specifying its set of configurations:
\[
    \C^T := \{\, C_X \mid X \in \Pow(\T^\Pi) \,\}.
\]

From \Cref{lem:CxIsIntersection}, and using the fact that nodes may perform unbounded local computation, we obtain the following corollary.

\begin{corollary}
    Given a solution to $\Pi$, we can construct a solution to $\Pi^T$ in $O(1)$ rounds in the LOCAL model.
\end{corollary}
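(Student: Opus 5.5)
Given a solution $(\sigma,F)$ to $\Pi$, the plan is to have each node $v$ gather its $r$-hop neighborhood $\Nh_r(v)$, which takes $r = O(1)$ rounds. From this it learns the outputs $\sigma(w)$ and the homomorphisms $f_w$ of every node $w$ at distance exactly $r-1$ from $v$, since $f_w$ depends only on $\Nh_r(w)$ and $w$ can compute and broadcast it. We assume the homomorphisms $f_w$ are part of the given solution; if they are not, each $w$ recomputes some valid $f_w$ from its own neighborhood by a deterministic rule, so that all nodes agree on it.

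With this information, $v$ keeps its original output $\sigma(v)$ and additionally outputs the set
\[
X_v := \{\, C_{f_w(v)} \mid w \in V,\ d(v,w) = r-1 \,\},
\]
where $C_{f_w(v)}$ is the twig configuration, inside the configuration matched by $w$, of the node $f_w(v)$. Because the girth is at least $2r+1$, $\Nh_r(w)$ is a tree. Hence $f_w(v)$ lies at distance exactly $r-1$ from the center, so it is a twig and $C_{f_w(v)}\in\T^\Pi$ is well defined. If $\Nh_r(w)$ were not a tree, $f_w$ could map $v$ closer to the center; ruling this out is exactly where the girth assumption is used.

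The main step is to show that $\Nh_1(v)$, labeled by $\sigma$, matches each $C_{f_w(v)}\in X_v$. Here $f_w$ restricted to $\Nh_1(v)$ gives the map. The edge of $\Nh_1(v)$ towards $w$ goes to the parent edge, which is required in the twig configuration and is hit exactly once: by acyclicity $v$ has a unique neighbor closer to $w$, and the other neighbors lie at distance $r$ from $w$, so they map to children of $f_w(v)$ rather than its parent. These remaining edges lie in $\Nh_r(w)$, so they are mapped homomorphically. Required edges are hit exactly once because $f_w$ is injective on the preimage of required parts, and this restricts correctly to $\Nh_1(v)$. Labels are respected by the labeling condition on $f_w$.

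Once $\Nh_1(v)$ matches every configuration in $X_v$, \Cref{lem:CxIsIntersection} says it matches $C_{X_v}$. In particular $C_{X_v}$ exists, since it is nonempty whenever some neighborhood matches all of $X_v$. It then remains to check the $\Pi^T$ labels. The center of $C_{X_v}$ carries $(\phi(v),(\sigma(v),X_v))$, which matches. The non-center nodes carry the wildcard $*$ on the set component, so they match whatever sets the neighbors output, and their input/output components agree by the matching above. The edge case $X_v=\emptyset$ is the main subtlety: one has to confirm that $C_\emptyset$ is included in $\C^T$ or is otherwise covered by the construction, or else add a trivial configuration of all-optional edges for it. I would handle this explicitly.
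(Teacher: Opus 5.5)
Your plan (keep $\sigma(v)$, add a set $X_v$ of twig configurations, invoke \Cref{lem:CxIsIntersection}) has the right shape. However, the key step, that $\Nh_1(v)$ matches every element of $X_v$, rests on a false claim. The maps $f_w$ are only homomorphisms, and a homomorphism from a tree into a tree can fold back along optional edges. Acyclicity of $\Nh_r(w)$ does not prevent this; the girth assumption only serves to make the configurations trees. So neither ``$f_w(v)$ lies at distance exactly $r-1$ from the center'' nor ``the other neighbors of $v$ map to children of $f_w(v)$'' holds in general.

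Here is a concrete counterexample with $r=2$. Let $C$ have center $\cen$ labeled $A$ and a single optional neighbor $x$ labeled $B$, with no required edges. Add, say, a $B$-centered configuration with an optional $A$-neighbor, so that instances are solvable. On the path $w,v,u$ labeled $A,B,A$, the only matching of $\Nh_2(w)$ to $C$ is $w\mapsto\cen$, $v\mapsto x$, $u\mapsto\cen$, and it is valid. Hence $C_x\in X_v$. But $C_x$ makes the parent edge $\{x,\cen\}$ required, so it admits exactly one $A$-neighbor, while $v$ has two. Thus $\Nh_1(v)$ does not match $C_x$, and $v$ cannot match $C_{X_v}$.

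The same star with $r=3$ also breaks well-definedness. On a path labeled $A,B,A,B$, the node at distance $2$ from the first node is mapped to $\cen$, which is not a twig, so $C_{f_w(v)}$ is not even defined.

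The corollary does not need the $f_w$ at all. The paper derives it from \Cref{lem:CxIsIntersection} and unbounded local computation alone, and that suffices if you choose $X_v$ differently. After one round, $v$ knows its labeled $\Nh_1(v)$ and can let $X_v$ be the set of twig configurations that $\Nh_1(v)$ actually matches. These all carry $v$'s label at the center and are matched simultaneously, so $C_{X_v}$ exists and \Cref{lem:CxIsIntersection} gives the matching. Your wildcard argument for the set components then finishes the proof.

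For context, in the proof of \Cref{thm:reduceR} the paper uses essentially your $f_u$-based set and asserts without argument that $\Nh_1(v)$ matches all of it. The example above bears on that step as well.

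Your caveat about $X_v=\emptyset$ is justified. The construction of $C_X$ fixes no center label when $X$ is empty, and in the example above the corrected $X_v$ is in fact empty. Some convention for $C_\emptyset$ is therefore needed, for instance an unconstrained configuration for each center label, and the paper does not address it.
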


\subsection{An LFL with Radius \texorpdfstring{$r-1$}{r-1}}
In the next step, we want to combine the TwigLFL with the original LFL~$\Pi$.
By \emph{combine} we mean that each node is required to output a tuple of labels that simultaneously satisfies the constraints of both problems.
It is immediate that such a conjunction of tasks is again an LCL whenever the two starting problems are LCLs---we simply combine isomorphic neighborhoods and pair up outputs for each possible graph isomorphism between these neighborhoods.
One can quickly convince themselves that something similar should be possible for LFLs as well, but the details are somewhat unclear.

For this reason, we present an explicit construction that correctly combines the TwigLFL with the original LFL~$\Pi$, ensuring that the resulting specification is again an LFL.

\paragraph*{Combining configurations:}
Let $\Pi = (\Sin, \Sout, r, \C)$ be any LFL, $X \subset \T^\Pi$ any subset of twig configurations such that $C_X$ exists, and let $C = (\V, \E, \cen, \mu, \tau) \in \C$ be any configuration.
We construct a set of configurations $C^X$ such that any node that can match both $C_X$ and $C$ can match at least one configuration in $C^X$. \\
Again, we will construct these new configurations $C' \in C^X$ manually.

To this end, we first build a 1-hop neighborhood $C'$ around the center in a way that is compatible with $C_X$, and then extend $C'$ by attaching appropriate subtrees of~$C$.
To formalize this, we introduce \emph{configbranches}.

\begin{definition}[configbranch]\label{def:configBranch}
    For any $r$-hop configuration $C = (\V, \E, \cen, \mu, \tau)$ and any node $u \in \V \setminus \{\cen\}$, we define the \emph{configbranch} $T_u$ as the subtree rooted at $u$ when viewing $C$ as a tree rooted at $\cen$.
\end{definition}

For every node $v$ in $C_X$, we again define $\ell_v \in \Sin \times \Sout$ as the input/output pair of $v$ in $C_X$ (ignoring the part of the output corresponding to $X \subset \T^\Pi$).
We further define
\[
    \Nh_\ell(\cen) = \{\, u \in \Nh_1(\cen) \mid \mu(u) = \ell \,\},
\]
the set of neighbors with label~$\ell$, and
\[
    T_\ell = \{\, T_u \mid u \in \Nh_\ell(\cen) \,\},
\]
the set of all configbranches rooted at such neighbors.

We will see that there is not always a unique way to construct the combination of $C_X$ and $C$, so we keep track of all possible compatible constructions.
The collection $C^X$ is defined as the set of all these possible combined configurations.

We describe the construction from a bird's--eye view in \Cref{alg:computeCX}, 
where we suppress the details of how exactly extensions are performed for each 
labeling. Note that every time we extend, we potentially obtain multiple distinct 
extensions. For each such extension, we then continue extending with the next 
label, and so on.

\begin{algorithm2e}
\caption{Compute $C^X$}\label{alg:computeCX}
\KwIn{$C, C_X$}
\SetKwFunction{Extend}{extend}

$C' \gets$ initialize as just the center node $\cen'$ \\
$C^X \gets \{ C' \}$ \\

\For{each labeling $\ell$ that appears in $C_X$}{
    $B \gets \emptyset$ \Comment{temporary buffer}\\
    \For{each $C' \in C^X$}{
        $\mathcal{E} \gets$ \Extend{$C', C, C_X, \ell$} 
        \Comment{returns a set of extended neighborhoods.}\\
        $B \gets B \cup \mathcal{E}$ 
    }
    $C^X \gets B$ \Comment{update}
}
\end{algorithm2e}

\paragraph*{How the extending works:}
We start with a partially constructed configuration $C'$ whose center is
denoted by~$\cen'$.  
Since none of the already attached neighbors of $\cen'$ can have labeling~$\ell$,
we ignore all existing neighbors of~$\cen'$ for this step.  
We then add new $\ell$-labeled neighbors to~$\cen'$, 
extending each such neighbor by one of the configbranches in~$T_\ell$, as
described next.

\begin{enumerate}
    \item \textbf{Case:} $m_\ell(C) > m_\ell(C_X)$
    \begin{enumerate}
        \item $\cen_X$ \textbf{is not} adjacent to an optional node with label $\ell$.
        
        $\Rightarrow$ $C$ and $C_X$ are incompatible and so $C^X = \emptyset$. Abort the algorithm. \label{case:empty1}

        \item $\cen_X$ \textbf{is} adjacent to an optional node with label $\ell$.
        
        $\Rightarrow$ Add all neighbors $\Nh_\ell(\cen)$ and extend each $u \in \Nh_\ell(\cen)$ by its respective $T_u \in T_\ell$. \label{case:single1}
    \end{enumerate}
    \item \textbf{Case:} $m_\ell(C) = m_\ell(C_X)$
    \begin{enumerate}
        \item $\cen_X$ \textbf{is not} adjacent to an optional node with label $\ell$. 
        
        $\Rightarrow$ Only add a neighbor $u \in \Nh_\ell(\cen)$ if the edge $\set{\cen, u}$ is required. Again we extend each $u$ that we did add, with $T_u$. \label{case:single2}

        \item $\cen_X$ \textbf{is} adjacent to an optional node with label $\ell$.
        
        $\Rightarrow$ Add all neighbors $\Nh_\ell(\cen)$ and extend each $u \in \Nh_\ell(\cen)$ by its respective $T_u \in T_\ell$. \label{case:single3}
    \end{enumerate}
    \item If $m_\ell(C) < m_\ell(C_X)$ \label{case:tooLittle}
     \begin{enumerate}
        \item $\cen$ \textbf{is not} adjacent to an optional node with label $\ell$.
        
        $\Rightarrow$ $C$ and $C_X$ are incompatible and so $C^X = \emptyset$. Abort the algorithm.\label{case:empty2}

        \item $\cen$ \textbf{is} adjacent to an optional node with label $\ell$. \label{case:badCase}
        
        $\Rightarrow$ Create multiple different versions of $C'$. Handle this case separately.
    \end{enumerate}
\end{enumerate}
Note that in Cases~\ref{case:empty1}~and~\ref{case:empty2} we can abort the algorithm, as $C$ and $C_X$ are incompatible. In Cases~\ref{case:single1}, \ref{case:single2} and \ref{case:single3} we simply alter the config that we started with. It is only in the last Case~\ref{case:badCase} that we have to create multiple different configurations.

\paragraph*{The difficult case:}
If $m_\ell(C) < m_\ell(C_X)$ and $\cen_X$ is adjacent to an optional neighbor with
label~$\ell$, then we must ensure that $C'$ contains at least $m_\ell(C_X)$
required nodes labeled~$\ell$. To achieve this, we may use the optional neighbors present in~$C$.

We show a small example, suppose that $C$ has exactly two optional neighbors 
$u, v \in \Nh_\ell(\cen)$ and that 
\[
    m_\ell(C) = 2, \qquad m_\ell(C_X) = 4.
\]
Then we must add $d = m_\ell(C_X) - m_\ell(C) = 2$ additional required copies of
optional neighbors. There are several valid choices: we may add two copies of $u$
(extending each by $T_u$), or two copies of~$v$ (extending each by $T_v$), or one copy of each (extending
by $T_u$ and $T_v$, respectively). Each choice yields a different configuration,
but every resulting configuration has the property that any neighborhood matching
it can also match both $C_X$ and $C$. See \Cref{fig:combConfig,fig:resultingConfig} for an
illustration of this toy example.

In the general case, let
\[
    d := m_\ell(C_X) - m_\ell(C) > 0
\]
and let
\[
    A := \{\, u \in \Nh_\ell(\cen) \mid \tau(\{\cen, u\}) = \text{optional} \,\}
\]
be the set of optional $\ell$-neighbors of the center in~$C$.  
By the stars--and--bars theorem, there are
\[
    k := \binom{d + |A| - 1}{|A| - 1}
\]
distinct ways to choose $d$ additional required copies from the set~$A$.  
For each such choice, we construct a corresponding version of~$C'$, extending
every added copy of $u \in A$ with its configbranch $T_u$ as usual.

\begin{figure}[!ht]
    \centering
    \includesvg[width=.5\linewidth]{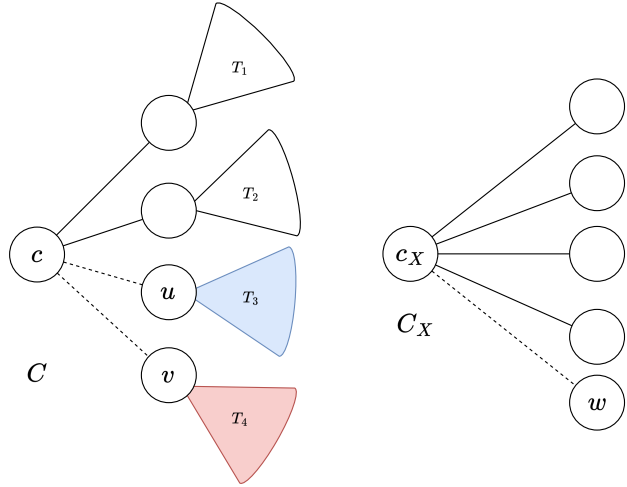}
    \caption{The case $m_\ell(C) < m_\ell(C_X)$ and $\cen$ is adjacent to 2 required neighbors with label $\ell$ and two optional neighbors $u,v$ both with label $\ell$.}
    \label{fig:combConfig}
\end{figure}

\begin{figure}[!ht]
    \centering
    \includesvg[width=\linewidth]{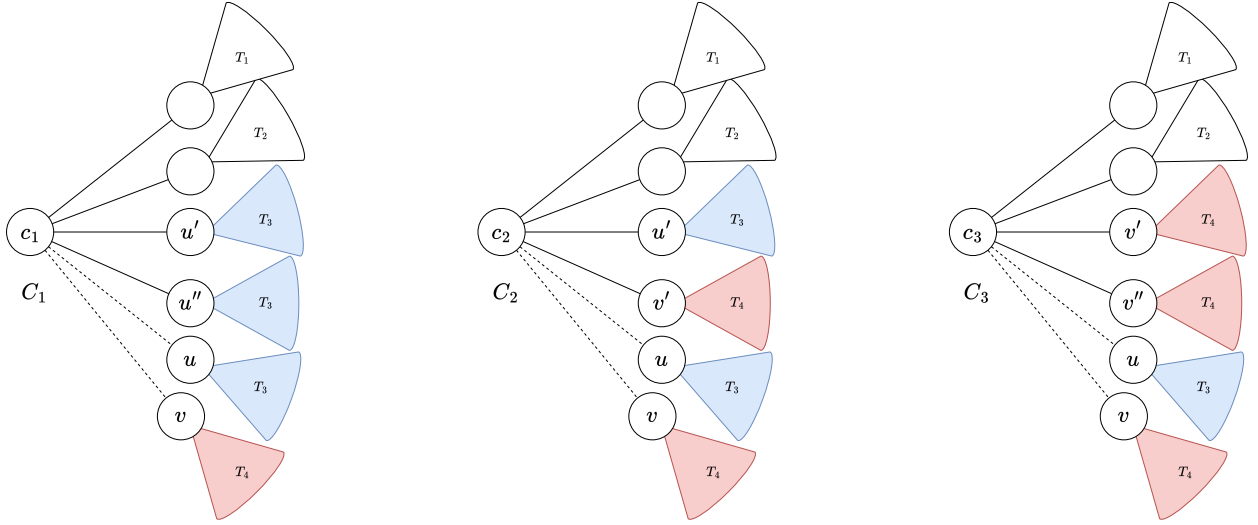}
    \caption{The resulting configurations from \Cref{fig:combConfig}, showing all possible combinations. They also contain optional copies of $u,v$ since $w$ is in the configuration $C_X$.}
    \label{fig:resultingConfig}
\end{figure}

Additionally if $\cen_X$ also has an adjacent optional edge with labeling $\ell$, then we also add optional copies of all nodes of $A$, extended by their respective configbranches, to $C'$. This finishes the description of the extend procedure.

\medskip

The construction is designed so that the following properties hold.
\begin{observation}\label{obs:cPrimeObs}
    If $C^X$ is non-empty, then for all $C' \in C^X$ and for every labeling $\ell$, the following statements are true:
    \begin{itemize}
        \item Every neighbor $u'$ of $\cen'$ is a copy of some neighbor $u$ of $\cen$, such that $T_u$ and $T_{u'}$ are isomorphic and identically labeled. Moreover, if $u$ is required, then there is exactly one copy $u'$ of~$u$.
        \item The center~$\cen'$ has an adjacent optional neighbor with label~$\ell$ if and only if both $\cen$ and $\cen_X$ do.
    \end{itemize}
\end{observation}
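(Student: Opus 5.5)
The plan is to argue by induction over the iterations of \Cref{alg:computeCX}. The invariant is that after processing a set $L$ of labelings, every partial configuration $C'$ in the current collection satisfies both bullets of \Cref{obs:cPrimeObs} restricted to labelings in $L$. The center $\cen'$ has no neighbors at the start, so the invariant holds trivially. Each call to \texttt{extend} for a labeling $\ell$ only attaches new $\ell$-labeled neighbors of $\cen'$ and never touches neighbors with other labels, so labelings already handled keep their properties. It therefore suffices to check, for a fixed $\ell$, that each non-aborting case of the extend procedure produces the right neighborhood structure for $\ell$. If some case aborts, then $C^X=\emptyset$ and there is nothing to prove; this covers Cases~\ref{case:empty1} and~\ref{case:empty2}.

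For the first bullet, the key point is that in every case a new neighbor $u'$ is created only as a copy of some $u\in\Nh_\ell(\cen)$, and the subtree of $C$ attached below it is always exactly $T_u$, with the labels inherited from $C$. So $T_{u'}\cong T_u$ with identical labeling, directly from the construction. The multiplicity claim for required $u$ is a case check:
\begin{itemize}
\item In Cases~\ref{case:single1} and~\ref{case:single3}, every element of $\Nh_\ell(\cen)$ is added exactly once.
\item In Case~\ref{case:single2}, only the required ones are added, each once.
\item In Case~\ref{case:badCase}, all of $\Nh_\ell(\cen)$ is added once. The $d$ extra copies are drawn only from the set $A$ of optional neighbors, and the further optional copies are also copies of elements of $A$.
\end{itemize}
In every case a required neighbor $u$ of $\cen$ therefore has exactly one copy.

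For the second bullet, I would first record how $\tau$ is set on the new edges, which the bird's-eye description leaves implicit. The convention should be that copies of required neighbors stay required. A copy of an optional neighbor stays optional exactly when $\cen_X$ also has an optional $\ell$-edge, and the $d$ extra copies in Case~\ref{case:badCase} are required. With this convention the bullet follows case by case:
\begin{itemize}
\item In Case~\ref{case:single1}, $\cen_X$ has an optional $\ell$-edge. The optional $\ell$-neighbors of $\cen$ are added and remain optional, so $\cen'$ has one iff $\cen$ does.
\item In Case~\ref{case:single2}, $\cen_X$ has no optional $\ell$-edge and no optional neighbor is added.
\item In Case~\ref{case:single3}, both sides are consistent as in Case~\ref{case:single1}.
\item In Case~\ref{case:badCase}, $\cen$ does have optional $\ell$-neighbors. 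Optional copies are added exactly when $\cen_X$ also has an optional $\ell$-edge, as the final paragraph of the extend description states.
\end{itemize}
Case~\ref{case:single1} needs one subtlety: when $m_\ell(C)>m_\ell(C_X)$ and $\cen$ has no optional $\ell$-edge, $\cen'$ must have none either, and this holds because only copies of existing neighbors are added.

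The main obstacle is this $\tau$ bookkeeping. The extend cases as written say ``add all neighbors'' without specifying required versus optional status. In Case~\ref{case:single3}, for example, the optional neighbors of $\cen$ must be kept optional, and in Case~\ref{case:single2} an optional neighbor of $\cen$ must not be added at all. I would make these conventions explicit first, possibly as a short remark fixing $\tau'$ on the new edges, and only then run the case analysis above. Everything else is immediate from the fact that configbranches are copied verbatim.
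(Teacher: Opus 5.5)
Your route, induction over the loop of \Cref{alg:computeCX} plus a case check of the extend procedure, is the verification the paper leaves implicit; it gives no proof beyond saying the construction is designed so that these properties hold. Your remark that the required/optional status of the added edges must be fixed explicitly is fair. There is, however, one real hole.

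The loop only ranges over labelings that appear in $C_X$. Your reduction ``it suffices to check, for a fixed $\ell$, each non-aborting case'' therefore only establishes the bullets for those labelings, not for every $\ell$. Suppose $\cen$ has a \emph{required} neighbor $u$ whose labeling $\ell_0$ does not occur in $C_X$. No iteration ever looks at $\ell_0$, so the algorithm neither aborts nor copies $u$. Hence $C^X$ can be non-empty while $u$ has zero copies in every $C'\in C^X$. This contradicts your claim that ``in every case a required neighbor $u$ of $\cen$ has exactly one copy''. Existence of that copy is exactly what \Cref{lem:extendedConfigs} uses (``there is a unique copy $u'$ of $u$''). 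Without it, such a $C'$ is matched by neighborhoods that do not match $C$. The second bullet is unaffected, since for such $\ell_0$ neither $\cen'$ nor $\cen_X$ has an $\ell_0$-neighbor.

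The fix is to let the loop range over all $\ell\in\Sin\times\Sout$, or at least over the labelings occurring in $C_X$ or on neighbors of $\cen$. For labelings absent from $C_X$, read $m_\ell(C_X)=0$ with no optional $\ell$-edge at $\cen_X$. A required $\ell_0$-neighbor then falls into Case~\ref{case:empty1} and forces $C^X=\emptyset$. Labelings carried only by optional neighbors fall into Case~\ref{case:single2} and are dropped. With this change your invariant argument goes through for every $\ell$.

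A smaller inaccuracy concerns Case~\ref{case:badCase}: it is not true that all of $\Nh_\ell(\cen)$ is added once. The construction does three things there:
\begin{itemize}
\item it keeps the required $\ell$-neighbors;
\item it adds the $d$ required extra copies drawn from $A$;
\item it adds optional copies of $A$ only if $\cen_X$ has an optional $\ell$-edge.
\end{itemize}
So state your $\tau$-convention as ``otherwise omitted'' rather than ``not kept optional''. Otherwise, when $\cen_X$ has no optional $\ell$-edge, the elements of $A$ would become additional required neighbors. That does not break the two bullets, but it breaks the count of required $\ell$-neighbors, $m_\ell(C)+d=m_\ell(C_X)$, that the later lemmas rely on.
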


Using the same argument as in \Cref{lem:CxIsIntersection}, we obtain the following lemma.  
The only additional work compared to the construction of $C_X$ is that we have carefully attached the appropriate configbranches.

\begin{lemma}\label{lem:extendedConfigs}
    For any $C \in \C$ and $X \in \Pow(\T^\Pi)$ the following statements are true.
    \begin{itemize}
        \item For any $C' \in C^X$, constructed as above. If any labeled $r$-hop neighborhood $\Nh_r(v)$ around some node $v$, can match $C'$, then $\Nh_r(v)$ can match $C$ and $\Nh(v)$ can match $C_X$.
        \item If any labeled $r$-hop neighborhood $\Nh_r(v)$ around some node $v$, can match $C$ and $\Nh(v)$ can match $C_X$, then there exists some $C' \in C^X$, such that $\Nh_r(v)$ can match $C'$.
    \end{itemize}
\end{lemma}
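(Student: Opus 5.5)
We prove both directions by working label by label on the 1-hop level, exactly as in \Cref{lem:CxIsIntersection}, and then lifting the 1-hop matching to the full $r$-hop neighborhood via the configbranches. Throughout we use that graphs have girth at least $2r+1$. Hence $\Nh_r(v)$ is a tree rooted at $v$, and it decomposes into the branches hanging off the neighbors of $v$. Likewise $C$ and every $C'\in C^X$ decompose into configbranches at the neighbors of their centers. A homomorphism from $\Nh_r(v)$ to such a configuration is therefore determined by two things: where each neighbor $w$ of $v$ goes, and a map from the branch of $w$ into the chosen configbranch that satisfies the matching rules inside that branch.

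\textbf{First direction.} Let $f$ be a matching of $\Nh_r(v)$ onto some $C'\in C^X$.
\begin{enumerate}
\item By \Cref{obs:cPrimeObs}, every neighbor $u'$ of $\cen'$ is a copy of a neighbor $u$ of $\cen$, with $T_{u'}\cong T_u$ and identical labels.
\item Composing $f$ with the projection $u'\mapsto u$, extended through these isomorphisms, gives a homomorphism $g:\Nh_r(v)\to C$. Centering, labels and the homomorphism property are inherited directly.
\item For the required edges of $C$: a required neighbor $u$ of $\cen$ has exactly one copy in $C'$, and the construction keeps that copy required. So it is hit exactly once. Required edges deeper inside $T_u$ are hit exactly once because they are required in $T_{u'}$, and the branches of distinct neighbors of $v$ are disjoint.
\item The optional copies of neighbors of $\cen$ present in $C'$ only cover edges that are already optional in $C$, by case \ref{case:badCase} and the second bullet of \Cref{obs:cPrimeObs}. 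The extra required copies from case \ref{case:badCase} map to optional neighbors of $\cen$, which may be hit arbitrarily often.
\item For $C_X$: restrict $f$ to $\Nh(v)$ and keep only the $\Sin\times\Sout$ part of the labels. For each $\ell$, the number of required $\ell$-neighbors of $\cen'$ is by construction exactly $m_\ell(C_X)$, or $m_\ell(C)=m_\ell(C_X)$ in case \ref{case:single2}. Moreover, optional $\ell$-neighbors exist in $C'$ only if $\cen_X$ has one. Counting $\ell$-neighbors of $v$ therefore gives a matching to $C_X$, as in \Cref{lem:CxIsIntersection}. The wildcard $*$ labels in $C_X$ impose no further constraint.
\end{enumerate}

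\textbf{Second direction.} Suppose $g$ matches $\Nh_r(v)$ onto $C$ and $\Nh(v)$ matches $C_X$. We follow the run of \Cref{alg:computeCX} label by label and select one branch of the extend procedure.

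Fix $\ell$ and let $N_\ell$ be the $\ell$-labeled neighbors of $v$. Write $n_\ell=|N_\ell|$, $n_{\mathrm{req}}$ for the number of those mapped by $g$ to required neighbors of $\cen$, and $n_{\mathrm{opt}}$ for the number mapped to optional ones.
\begin{itemize}
\item Cases \ref{case:empty1} and \ref{case:empty2} cannot occur. Each would give contradictory counts for $n_\ell$: exact versus at least, or at most versus at least. The argument is the one in \Cref{obs:optExists}.
\item In cases \ref{case:single1}--\ref{case:single3} there is a single candidate extension. We keep the same assignment as $g$. The constraints from $C_X$ guarantee that whenever $g$ uses an optional neighbor of $\cen$, $C'$ still contains that optional copy.
\item In case \ref{case:badCase}, $n_{\mathrm{req}}=m_\ell(C)$ and $n_\ell\ge m_\ell(C_X)$. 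Hence at least $d$ of the $\ell$-neighbors of $v$ are mapped by $g$ into $A$. We pick any $d$ of them and select the stars-and-bars choice that records how many of these $d$ land on each $a\in A$. These $d$ nodes are mapped bijectively onto the new required copies, together with their branches.
\item Any remaining nodes mapped into $A$ go to the optional copies. Those exist because, once $n_\ell>m_\ell(C_X)$, $\cen_X$ must have an optional $\ell$-neighbor.
\end{itemize}
Gluing the branch maps of $g$ onto the copies yields the required matching onto the resulting $C'\in C^X$.

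\textbf{Main obstacle.} The delicate step is this counting in case \ref{case:badCase}. We must check that every combination of the two matchings is realized by one of the generated configurations. We must also check that the optional copies are added precisely when needed: not adding them would break the second direction, and adding them spuriously would break the first. Both are handled by the second bullet of \Cref{obs:cPrimeObs}, which must be verified against each case of the extend procedure.
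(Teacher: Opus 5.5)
Your first direction is the paper's proof: compose the matching with the copy-to-original homomorphism $\V'\to\V$ and use \Cref{obs:cPrimeObs}. There is one small slip in your $C_X$ step. In case~\ref{case:single1}, $\cen'$ has $m_\ell(C)>m_\ell(C_X)$ required $\ell$-neighbors, not exactly $m_\ell(C_X)$. This is harmless, because $\cen_X$ has an optional $\ell$-neighbor in that case.

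The paper only says the converse is "analogous", and your converse rests on a false claim. You assert that a matching of $\Nh_r(v)$ to $C$ is just a choice of images for the neighbors of $v$ plus maps of each branch into the chosen configbranch. The paper's matching is only a homomorphism that is an isomorphism on the preimage of the required subgraph, so it need not preserve distance from the center. For example, a node at distance $2$ from $v$ may be sent back to $\cen$ whenever $\cen$ lies on no required edge. Your gluing step then breaks, and in fact the second bullet is false as literally stated.

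Take $r=2$ and outputs $\{a,b\}$. Let $D\in\C$ be the path $d_0,d_1,d_2$ centered at $d_0$, labeled $b,a,b$, with $\{d_1,d_2\}$ optional. Let $C\in\C$ be a center labeled $a$ with a single optional leaf neighbor labeled $b$.
\begin{itemize}
\item The twig $d_1$ gives $T\in\T^\Pi$: center $a$, one required and one optional $b$-neighbor. For $X=\{T\}$, $C_X$ has the same shape.
\item Case~\ref{case:badCase} with $d=1$ yields $C^X=\{C'\}$, where $C'$ is a center $a$ with one required and one optional leaf labeled $b$.
\item The path $v,w,x$ labeled $a,b,a$ satisfies the hypotheses. $\Nh_2(v)$ matches $C$ via $v\mapsto\cen$, $w\mapsto$ the leaf, $x\mapsto\cen$; this is legal because $C$ has no required edge. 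Also $\Nh(v)$ matches $C_X$.
\item $\Nh_2(v)$ does not match $C'$. The node $w$ must go to a leaf, so $x$ must go to $\cen'$. But $\cen'$ now lies on a required edge, so its only allowed preimage is $v$.
\end{itemize}

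The missing ingredient is a distance-preserving requirement $d(\cen,f_v(u))=d(v,u)$ on matchings. The paper also uses this tacitly, e.g.\ in \Cref{lem:radiusIncrease}, where nodes at distance $r-1$ from $v$ are assumed to land on twigs. Under that convention your branch decomposition is valid. Your case-by-case counting then proves the converse in more detail than the paper gives, in particular $n_{\mathrm{req}}=m_\ell(C)$ and $n_\ell\ge m_\ell(C_X)$ leading to a stars-and-bars choice in case~\ref{case:badCase}, with optional copies present exactly when needed.
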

\begin{proof}
    We prove the first implication; the converse is analogous.

    Let $\Nh_r(v)$ be any labeled $r$-hop neighborhood that matches some $C' \in C^X$,
    and let $f:\Nh_r(v)\to \V'$ be a witnessing matching of $C'=(\V',\E',\cen',\mu',\tau')$.
    First, by Observation~\ref{obs:cPrimeObs} and the same argument used in
    Lemma~\ref{lem:CxIsIntersection}, the neighborhood $\Nh(v)$ also matches $C_X$.

    It remains to show that $\Nh_r(v)$ matches $C=(\V,\E,\cen,\mu,\tau)$.  
    By Observation~\ref{obs:cPrimeObs}, every node of $C'$ is a copy of some node of $C$,
    and the corresponding configbranches are isomorphic. Hence there is a natural
    surjective map \(g:\V' \to \V\) that sends each copy in $C'$ to its original in $C$.
    Note that $g$ is a graph homomorphism (it preserves adjacency).

    Define $f' := g \circ f : \Nh_r(v) \to \V$. We claim that $f'$ is a correct matching
    of $\Nh_r(v)$ to $C$. We verify the matching conditions:
    \begin{enumerate}
        \item \emph{Centered.} By construction $f(v)=\cen'$, hence $f'(v)=g(\cen')=\cen$,
        so the center maps correctly.

        \item \emph{Labels respected.} The map $f$ respects node-labels of $C'$, and
        $g$ maps copies to their originals while preserving the node-labels assigned
        during the construction of $C'$. Therefore $f'$ respects the node-labels of $C$.

        \item \emph{Required-subgraph isomorphism.} View $C$ as a tree rooted at $\cen$.
        Let $e\in\E$ be any required edge of $C$, and let $u$ be the neighbor of $\cen$
        that lies on the unique path from $\cen$ toward $e$. Since $e$ is required,
        the edge $\{\cen,u\}$ is required as well (required edges form a connected
        subgraph containing the center). By Observation~\ref{obs:cPrimeObs} there is a
        unique copy $u'\in\V'$ of $u$ in $C'$, and the branch $T_{u'}$ contains a
        copy $e'\in\E'$ of $e$ which is required. Because $f$ matches $C'$, exactly
        one edge of $\Nh_r(v)$ is mapped to $e'$. Call that edge $e^*$. Then
        $f'(e^*) = g(f(e^*)) = g(e') = e$. Moreover, uniqueness of the copy $u'$ and
        the fact that $\{\cen',u'\}$ is required imply that $f'$ is injective on the
        required subgraph; hence $f'$ restricts to an isomorphism between the required
        subgraph of $\Nh_r(v)$ and that of $C$.
        
        \item \emph{Homomorphism.} Since both $f$ and $g$ are graph homomorphisms, so
        is their composition $f'$, and thus every edge of $\Nh_r(v)$ is mapped to an
        edge of $C$.
    \end{enumerate}

    This verifies that $f'$ correctly matches $\Nh_r(v)$ to $C$, completing the proof.
\end{proof}
The final step towards our radius $r-1$ LFL, is to convert each configuration in $C^X$ into an $(r-1)$-hop
configuration whose output labels guarantee correctness of the full $r$-hop
neighborhood.
To achieve this, we extend the output alphabet in the same manner as in the
TwigLFL: every node outputs a pair consisting of
\begin{enumerate}
    \item its ordinary output for $\Pi$, and
    \item a set $X \in \Pow(\T^\Pi)$ of twig-configurations that the node certifies it can match.
\end{enumerate}

Consider any configuration $C' \in C^X$, obtained from combining $C$ with
$C_X$.  
Let $u' \in C'$ be a twig node with associated\footnote{By associated, we mean the configuration attached to the original twig $u \in C$ of which $u'$ is a copy} twig-configuration $C_u$.  
We require that $u'$ outputs a set $X$ for which
$C_u \in X$.

Intuitively, this enforces that the $1$-hop neighborhood around $u'$ must
match the twig-configuration $C_u$.  
Since $u'$ is included in the $(r-1)$-hop neighborhood of the center, the center node sees that $u'$ outputs a set $X$ which contains $C_u$.
This information is precisely what is needed to determine---using only the
$(r-1)$-hop view---that the full $r$-hop neighborhood matches~$C'$.

In other words, by reading the set $X$ output by each twig, the verifier at
radius $(r-1)$ can verify that the missing outer layer (the $r$th hop) behaves exactly as
prescribed by $C'$.  
We now formalize this construction.

\paragraph*{A new labeling}
The final step in constructing our radius-$(r-1)$ LFL is to adjust the
labeling used on the configurations in $C^X$. We will again be using extra wildcard labels as before. A formal definition of LFLs with wildcard labels, together with an equivalence proof, can be found in \Cref{sec:LFLpO}.
We begin by extending the output alphabet in the same way as in the
TwigLFL:
\[
    \Sout^{(r-1)} := \Sout \times \Pow(\T^\Pi),
\]
so every node in an actual instance must output one of these labels.

For configurations, however, we will use a slightly richer set of
outputlabels.
Define
\[
    A := \{*\} \;\cup\; \{\,T^{+} \mid T \in \T^\Pi\,\},
\]
and set
\[
    \Sout^{*} := \Sout \times \bigl(\Pow(\T^\Pi) \cup A \bigr).
\]
Thus we introduce two kinds of ``variable'' outputlabels:
\begin{itemize}
    \item The wildcard label $*$, meaning that
    $(x,(y,Y)) \in \Sin \times \Sout^{(r-1)}$ is compatible with
    $(x,(y,*)) \in \Sin \times \Sout^{*}$ for all $x,y$.
    \item The labels $T^{+}$, indicating that the twig-configuration
    $T \in \T^\Pi$ must be included in the output set.
    Formally, $(x,(y,Y)) \in \Sin \times \Sout^{(r-1)}$ is compatible with
    $(x,(y,T^{+})) \in \Sin \times \Sout^{*}$ for all $x,y$
    if and only if $T \in Y$.
\end{itemize}

Now we describe how to assign the new outputlabels to the configurations.
For any $X \in \Pow(\T^\Pi)$ and any configuration $C \in \C$ for
which $C^X$ exists, we adjust the labeling on every
\[
    C' = (\V', \E', \cen', \mu', \tau') \in C^X
\]
as follows:
\begin{enumerate}
    \item The center receives the label
    \[
        \mu^{*}(\cen') = (x,(y,X)),
    \]
    where $\mu'(\cen') = (x,y)$ is its original label.
    \item For any twig node $u$ in $C'$ with original label
    $\mu'(u) = (x,y)$ and associated twig-configuration $T_u$, we set
    \[
        \mu^{*}(u) = (x,(y,T_u^{+})).
    \]
    \item For all remaining nodes $v$ with original label $\mu'(v) = (x,y)$,
    we assign the wildcard version
    \[
        \mu^{*}(v) = (x,(y,*)).
    \]
\end{enumerate}

We then define the full set of labeled configurations as
\[
    \C^{*}
    := \bigcup_{C \in \C} \;\bigcup_{X \in \Pow(\T^\Pi)}\;
       \bigcup_{C' \in C^{X}}
       \{\,(\V', \E', \cen', \mu^{*}, \tau')\,\}.
\]

Finally, we define the $(r-1)$--restriction of $\C^{*}$ as the set of all
$(r-1)$--hop subconfigurations of members of $\C^{*}$:
\[
    \C^{(r-1)}
    := \{\, C^{(r-1)} \mid C \in \C^{*} \,\}.
\]

With this, we have assembled all components needed to prove our main theorem.

\begin{theorem}\label{thm:reduceR}
    For any integer $r>1$ and any LFL $\Pi = (\Sin, \Sout, r, \C)$ in trees, there exists an LFL $\Pi^*=(\Sin, \Sout^{(r-1)}, r-1, \C^{(r-1)})$, such that:
    \begin{enumerate}[(1)]
        \item Any solution $(\sigma, F)$ to $\Pi$ can be turned to a valid solution for $\Pi^*$ in $r$ rounds of the LOCAL model. \label{enum:PiToPiStar}
        \item Any solution $(\sigma^*,F^*)$ for $\Pi^*$ can be turned into a valid solution $(\sigma, F)$ for $\Pi$ in $r$ rounds of the LOCAL model.\label{enum:PiStartToPi}
    \end{enumerate}
\end{theorem}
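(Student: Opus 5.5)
We prove the two directions separately, relying on \Cref{lem:CxIsIntersection} and \Cref{lem:extendedConfigs}, which already capture the local equivalences. Throughout we use that in a tree each $\Nh_r(v)$ is a tree, so the homomorphisms $f_v$ can be composed and restricted along branches without cycles interfering.

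\textbf{Direction (\ref{enum:PiToPiStar}).} Let $(\sigma,F)$ solve $\Pi$. Each node $u$ gathers its $r$-hop neighborhood in $r$ rounds. For every $v$ with $d(u,v)=r-1$ it learns $f_v$, and hence the twig of $C_v$ onto which $u$ is mapped. Let $X_u$ be the set of twig configurations $C_w$, where $w=f_v(u)$ ranges over all such $v$.

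Since $f_v$ matches $C_v$, its restriction to $\Nh_1(u)$ matches the twig configuration $C_w$. Declaring the parent edge required is harmless, because exactly one neighbor of $u$ lies on the path to $v$ and it is mapped to $p$. Thus $\Nh_1(u)$ matches every element of $X_u$, so $C_{X_u}$ exists, and by \Cref{lem:CxIsIntersection} $\Nh_1(u)$ matches $C_{X_u}$.

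Set $\sigma^*(u)=(\sigma(u),X_u)$. Now $\Nh_r(u)$ matches $C_u$ and $\Nh_1(u)$ matches $C_{X_u}$, so \Cref{lem:extendedConfigs} gives some $C'\in C_u^{X_u}$ with a matching $f$. Restrict $f$ to $\Nh_{r-1}(u)$ to obtain a candidate matching of the $(r-1)$-subconfiguration of $(C',\mu^*)$. We must check three label conditions:
\begin{itemize}
\item the center label $(x,(y,X_u))$ holds by definition;
\item wildcards are trivially satisfied;
\item each twig $t$ of $C'$ with label $T_t^+$ that some node $z$ at distance $r-1$ is mapped to satisfies $T_t\in X_z$, because $z$ included exactly this twig configuration, coming from $v=u$, when computing $X_z$.
\end{itemize}
For this last point we need the twig configuration of the copy in $C'$ to coincide with that of the original in $C_u$ under $f_u$. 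This follows from \Cref{obs:cPrimeObs}, since the branches are isomorphic copies. The required and homomorphism conditions survive restriction to radius $r-1$, since required edges within distance $r-1$ stay required and appear exactly once.

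\textbf{Direction (\ref{enum:PiStartToPi}).} Let $(\sigma^*,F^*)$ solve $\Pi^*$ and set $\sigma$ to the first coordinate. Fix $v$. Its matching $f^*_v$ maps $\Nh_{r-1}(v)$ to the $(r-1)$-restriction of some $(C',\mu^*)$ with $C'\in C^X$. We extend $f^*_v$ to the last layer. Every node $z$ at distance $r-1$ from $v$ maps to a twig $t$ labeled $T_t^+$, so $T_t\in X_z$, where $X_z$ is the second coordinate of $\sigma^*(z)$.

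The missing step is that a node outputting $X_z$ really matches all of $X_z$ on $\Nh_1(z)$. To guarantee this, the combined configurations must also force the center to match $C_X$. By \Cref{lem:extendedConfigs}, matching the full $C'$ implies $\Nh(v)$ matches $C_X$, but after truncating to radius $r-1\ge 1$ the center's 1-hop layer is still present. Hence $\Nh_1(z)$, as the center of its own $\Pi^*$ configuration, matches its $C_{X_z}$ by the same reasoning, and by \Cref{lem:CxIsIntersection} it matches $T_t$.

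Take such a matching $g_z$ of $\Nh_1(z)$ onto $T_t$. The required parent edge forces $g_z$ to send $z$'s parent toward $v$ to the parent of $t$, consistently with $f^*_v$. Gluing $f^*_v$ with the $g_z$ on the children of all such $z$ gives a map $\Nh_r(v)\to C'$. In a tree these pieces overlap only along parent edges, so the glued map is well defined. It is a homomorphism, it respects labels, and required edges are hit exactly once: the inner part is handled by $f^*_v$, and the outer leaves by $g_z$ together with the required parent edge, which makes copies of distinct twigs disjoint. So $\Nh_r(v)$ matches $C'$, and by \Cref{lem:extendedConfigs} it matches $C$.

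\textbf{Main obstacle.} The delicate step is the gluing in direction (\ref{enum:PiStartToPi}). We must check that the required-edge isomorphism condition is preserved when the outer layer is assembled from independently chosen local matchings, and that the twig configuration attached to each copy in $C'$ is the correct one. We would handle this branch by branch along the rooted tree, using the fact that required edges form a connected subtree containing the center.
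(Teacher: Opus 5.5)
Your plan follows the paper's proof. Direction (1) builds $X_u$ from the twigs $u$ is mapped to and then applies \Cref{lem:CxIsIntersection} and \Cref{lem:extendedConfigs}. Direction (2) is \Cref{lem:radiusIncrease} inlined, followed by \Cref{lem:extendedConfigs}. Your argument that $\Nh_1(z)$ matches $C_{X_z}$, because $z$'s own $(r-1)$-hop configuration still contains the center's 1-hop layer, is correct and fills in a step the paper leaves implicit.

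The gap is in the gluing step of direction (2). The claim that ``the required parent edge forces $g_z$ to send $z$'s parent toward $v$ to the parent of $t$'' is false. Making $\{t,p_t\}$ required in $T_t$ only guarantees that \emph{exactly one} neighbor of $z$ is sent to $p_t$; it does not say that this neighbor is the parent $w$. Suppose $z$ has a child $s$ with the same input/output label as $w$, and $t$ has a child $q$ with that label. Then a valid matching $g_z$ may send $s\mapsto p_t$ and $w\mapsto q$, so $g_z$ and $f^*_v$ give $w$ different images. Keeping $f^*_v(w)=p_t$ leaves $p_t$ with two preimages ($w$ and $s$) and possibly leaves $q$ with none. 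This violates the required-edge condition whenever $p_t$ or $q$ is required in $C'$.

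The missing step is the normalization the paper performs in \Cref{lem:radiusIncrease}. Since $w$ and $s$ are both neighbors of $z$ and carry the same label in $\Sin\times\Sout$, you can swap their images under $g_z$. The resulting map is still a label-respecting homomorphism with the same preimage multiplicities on $T_t$, so it still matches $T_t$. It now agrees with $f^*_v$ on $\Nh_1(z)\cap\Nh_{r-1}(v)=\{z,w\}$. Only after this swap is the glued map well defined and your count of required edges valid.

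A smaller point in direction (1): to conclude $T_t\in X_z$ for $t=f(z)$, the matching $f$ onto $C'$ must be a lift of $f_u$, i.e.\ $f(z)$ must be a copy of $f_u(z)$. \Cref{obs:cPrimeObs} only says the branches are isomorphic copies, which does not give this by itself. You can ensure it by building the matching to $C'$ directly from $f_u$, as the converse part of \Cref{lem:extendedConfigs} naturally does.
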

To keep the argument structured, we first prove the key technical statement.

\begin{lemma}\label{lem:radiusIncrease}
    Let $(\sigma^*, F^*)$ be any correct solution to $\Pi^*$, and let $v$ be any node whose
    $(r\!-\!1)$--hop neighborhood matches some configuration $C$.
    Let $C' \in \C^*$ be such that $C$ is the $(r\!-\!1)$--hop subconfiguration of $C'$.
    Then $v$ also matches~$C'$.
\end{lemma}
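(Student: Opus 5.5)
The plan is to extend the given homomorphism $f := f^*_v : \Nh_{r-1}(v) \to C$ to a map $\hat f : \Nh_r(v) \to C'$ and to check the four matching conditions for $C'$. Since $C$ is the $(r-1)$-hop subconfiguration of $C'$, its nodes are exactly the nodes of $C'$ at distance at most $r-1$ from $\cen'$. So $\hat f$ agrees with $f$ on $\Nh_{r-1}(v)$. The only new objects are the outermost edges $\{w,z\}$ with $d(v,w)=r-1$ and $d(v,z)=r$. Because the graph is a tree, these edges form disjoint stars hanging off the nodes $w$ at distance exactly $r-1$. Each such $w$ is mapped by $f$ to a twig $u'=f(w)$ of $C'$. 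The outer layer can therefore be handled one twig at a time.

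Fix such a $w$ and let $u' = f(w)$. By construction of $\mu^*$, the node $u'$ carries the label $(x,(y,T_{u}^{+}))$, where $T_u$ is its associated twig-configuration. Since $f$ respects labels, $w$ outputs a set $Y$ with $T_u \in Y$. Because $(\sigma^*,F^*)$ is a correct solution, $w$ must actually justify $Y$. The natural way to make this precise is that the certificate component is verified through the embedded TwigLFL structure: the center label $(x,(y,X))$ of every configuration in $\C^{(r-1)}$ forces $\Nh(w)$ to match $C_Y$. I would extract this from the configuration $f^*_w$ of $w$ itself, whose center is labeled with $w$'s own set $Y$, together with \Cref{lem:extendedConfigs}. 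By \Cref{lem:CxIsIntersection}, $\Nh(w)$ then matches every configuration in $Y$, in particular $T_u$. This gives a matching $g_w : \Nh(w) \to T_u$.

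Next I would make $g_w$ compatible with $f$. The twig-configuration $T_u$ is the 1-hop subconfiguration of $C'$ around $u'$, with the parent edge made required. The unique required parent edge forces $g_w$ to send exactly one neighbor of $w$ to the parent $p'$ of $u'$. I need that neighbor to be the parent of $w$ towards $v$, i.e. the neighbor that $f$ maps to $p'$. If $g_w$ uses a different neighbor, I would use the tree structure and label equality to swap the two images. I expect this to need the observation that label multiplicities toward $p'$ agree. This swapping step is the main obstacle, together with confirming that $w$ really certifies $Y$, since $w$'s own view is centered at $w$.

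With compatibility in hand, I define $\hat f(z) := g_w(z)$ for every child $z$ of $w$. I then check the conditions. Centering and labels are inherited from $f$ and the $g_w$. The homomorphism property holds edge by edge. For required edges, the required subgraph of $C'$ splits into the part within radius $r-1$, which $f$ handles, and the outer required stars at the twigs, each hit exactly once by $g_w$. Injectivity across different $w$ follows because $f$ is injective on the required twigs and the outer stars are disjoint.
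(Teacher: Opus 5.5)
Your proposal is essentially the paper's proof: you extend $f$ over the outer layer one distance-$(r-1)$ node $w$ at a time, use the label $T_u^{+}$ and Lemma~\ref{lem:CxIsIntersection} to match $\Nh(w)$ onto $T_u$, swap images so that the neighbour $q$ of $w$ towards $v$ goes to $p'$ (no multiplicity argument is needed: if $g_w$ sends another neighbour $s$ to $p'$, then $f(q)=p'$ and $g_w(s)=p'$ already force $q$ and $s$ to carry the same label, so exchanging their images preserves all matching conditions), and glue using disjointness in the tree; deriving the certificate from $w$'s own configuration via the one-hop part of Lemma~\ref{lem:extendedConfigs}, and checking the required edges explicitly, spell out steps that the paper only asserts. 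One caveat applies equally to the paper's proof: both take for granted that $f$ sends every node at distance exactly $r-1$ from $v$ to a twig of $C'$, but a matching is only a homomorphism and can fold back along optional edges, sending $w$ to a node at distance at most $r-3$ from the center that carries no $T^{+}$ label, in which case nothing constrains the children of $w$ and the extension of $f$ is not guaranteed, so this step needs its own justification (or a restriction to non-folding matchings).
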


\begin{proof}
    Let
    \[
        f_v : \Nh_{r-1}(v) \to \V
    \]
    be the matching map witnessing that $\Nh_{r-1}(v)$ matches $C$.
    Our goal is to extend $f_v$ to a mapping
    \[
        g : \Nh_r(v) \to \V'
    \]
    that matches all of $C'$.

    The key observation is that every node $u$ at distance exactly $r-1$ from $v$ is,
    by construction, a twig in $C'$.
    Thus its label in $C$ is of the form $(x,(y,T_u^{+}))$, meaning that the associated
    twig configuration $T_u$ must be included in the output set $X_u$ of~$u$.
    Consequently, by Lemma~\ref{lem:CxIsIntersection}, the 1-hop neighborhood $\Nh(u)$
    can be matched to the twig configuration
    \[
        T_u = (\V_u, \E_u, \cen_u, \mu_u, \tau_u).
    \]
    Let $h : \Nh(u) \to \V_u$ be such a matching.

    Since $T_u$ is (by construction) isomorphic to the 1-hop neighborhood of $f_v(u)$
    in $C'$, we may regard $h$ as mapping $\Nh(u)$ into the appropriate copy of $T_u$
    inside $C'$.
    To extend $f_v$, we must ensure that $h$ is consistent with $f_v$ on the overlap
    $\Nh(u) \cap \Nh_{r-1}(v)$.

    The nodes in this intersection are $u$ itself and its unique neighbor $w$ that lies
    closer to $v$.
    The maps $f_v$ and $h$ already agree on $u$.
    If $h(w)$ is not equal to the corresponding copy $w'$ of $f_v(w)$ in $T_u$,
    then $h$ must map some other node $s$ (which necessarily has the same label as $w$)
    to~$w'$.
    In that case we define a modified map $h'$ obtained from $h$ by swapping the images
    of $w$ and~$s$.
    This ensures that $h'$ and $f_v$ now agree on the intersection
    $\Nh(u) \cap \Nh_{r-1}(v)$.
    Since the swap only exchanges two nodes with identical labels, $h'$ still correctly
    matches the configuration $T_u$.

    We may now extend $f_v$ over $\Nh(u)$ by setting
    \[
        f_v'(z) =
        \begin{cases}
            f_v(z), & z \in \Nh_{r-1}(v),\\[4pt]
            h'(z),  & z \in \Nh(u).
        \end{cases}
    \]
    Because the graph is a tree, the neighborhoods of two nodes $u,u'$ at distance exactly $r-1$ from $v$ are pairwise disjoint (except at their
    respective closest to $v$ nodes $w,w'$), so performing this extension one twig at a time produces
    a well-defined global mapping
    \[
        g : \Nh_r(v) \to \V'.
    \]

    Finally, since $f_v$ and each of the modified twig matchings $h'$ are correct
    configuration matchings, and they agree on overlaps by construction, the combined
    map $g$ is a valid matching from $\Nh_r(v)$ to $C'$.
    Hence $v$ matches $C'$, as required.
\end{proof}

We are now ready give the proof of our main theorem.

\begin{proof}[Proof of \Cref{thm:reduceR}]
    We first prove (\ref{enum:PiToPiStar}), so let $(\sigma, F)$ be a solution to $\Pi$. We first fix the outputlabels. For every node $v$, we collect all information in the $r$-hop neighborhood of $v$, this can be done in $r$ rounds of the local model. We get 
    \[
        X_v := \set{T_w \mid u \in \Nh_r(v) \text{ and } f_u:\Nh_r(u) \rightarrow C \text{ maps } v \text{ into a twig } w}.
    \]
    Here $T_w$ refers to the twig configuration of $w \in C$, so $X_v \subset \Pow(\T^\Pi)$ is the set of all twig configurations that $v$ is matched to. We define our output labeling with $\sigma^*(v) = (\sigma(v), X_v)$. This uniquely determines $\sigma^*$, so what is left is to match every $r-1$ hop neighborhood to some configuration in $\C^{r-1}$.

    Fix any node $v$, let $C$ be the configuration that $v$ is matched to and $X = X_v$ (to clean up notation).
    Since $\Nh_r(v)$ can match $C$, there must exist $f_v:\Nh_r(v)\rightarrow C$ that correctly matches $C$. Similarly, since $\Nh_1(v)$ can match all configurations in $X$, we get that $\Nh_1(v)$ can match $C_{X}$ and so by \Cref{lem:extendedConfigs} there exists $C' \in C^X$ such that $\Nh_r(v)$ can match $C' = (\V', \E', \cen', \mu^*, \tau')$.

    Since the $(r-1)$-hop restriction of $C'$ is a configuration in $\Pi^{(r-1)}$, it is sufficient to match $v$ to $C'$.

    Let $f': \Nh_r(v) \rightarrow \V'$ be the function that matches $\Nh_r(v)$ to $C'$. It is easy to see that the labeling $\mu^*$ was defined precisely to allow the labeling $\sigma^*$ (that is, every node outputs the set of all used twig-configurations).

    This finishes the first part.

    \medskip
    
    We now prove (\ref{enum:PiStartToPi}), so let $(\sigma^*, F^*)$ be a solution to $\Pi^*$. With the work we have already done in \Cref{lem:radiusIncrease} this is easy. The new output labeling $\sigma$ simply takes the $\Sout$ outputs assigned by $\sigma^*$, completely ignoring the sets of twig configurations. We claim that this already constitutes a valid solution to $\Pi$. What is left is to obtain the appropriate matching functions $F$.
    
    Let $v$ be any node. Since $(\sigma^*, F^*)$ is a correct solution, $f_v \in F^*$ correctly matches $v$ to some configuration $C_v \in \C^{(r-1)}$. By \Cref{lem:radiusIncrease} we get that $\Nh_r(v)$ can match some $C' \in \C^*$. By the construction of $\C^*$ this means that $C' \in C^X$ for some configuration $C \in \C$ and some set $X \in \Pow(\T^\Pi)$. Now by \Cref{lem:extendedConfigs} this means that $\Nh_r(v)$ can match $C = (\V, \E, \cen, \mu, \tau) \in \C$. So in $r$ rounds $v$ can collect all of the information in $\Nh_r(v)$ and brute force a mapping $f_v:\Nh_r(v) \rightarrow \V$ that correctly matches $\Nh_r(v)$ to $C$.
\end{proof}

We now simply repeat this $r-1$ times.

\EquivNodeEdge*
\begin{proof}
    Apply \Cref{thm:reduceR} $r-1$ times and then \Cref{lem:rIsOne}.
\end{proof}

\section{Polynomial Gaps in Trees}
For the remainder of this section we fix a node--edge checkable LFL
\(\Pi = (\Sin,\Sout,\C_V,\C_E)\) on trees.  We will prove the following gap:
for every integer \(k\ge 1\), either \(\Pi\) has deterministic complexity
\(\Omega\big(n^{1/k}\big)\) or it can be solved in
\(O\big(n^{1/(k+1)}\big)\).

The argument proceeds in two steps.  First we define a structural property of
\(\Pi\) that allows one to place outputs on the two ends of a long path
without global coordination.  If this operation can be iterated \(k\) times we
say that a \emph{\(k\)-good} function exists.  We then show that the
existence of a \(k\)-good function yields an \(O\big(n^{1/(k+1)}\big)\)-round
algorithm for~\(\Pi\).  Conversely, we show that any algorithm running in
$o\big(n^{1/k}\big)$ rounds implies the existence of a \(k\)-good function.

To make the property above precise we introduce the notion of \emph{types},
and then analyze how types behave on long paths.  We closely follow
the presentation in \cite{smallMessagesbcmos21}, however we must repeatedly accommodate the
possibility of unbounded node degree, which requires careful adjustments at
each stage of the proof.

\subsection{Computing Types of Trees}
Consider an input-labeled tree \(T = (V,E,\phi)\) and pick a connected
subgraph \(T_v = (V_v,E_v,\phi_v)\) that is attached to \(T \setminus T_v\)
by at most two edges, all incident to a single node \(v \in T_v\).
If there is exactly one such edge, let \(e_v\) denote the corresponding
adjacent half-edge.  If there are two such edges, let the corresponding
halfedges be \(e_{\mathrm{left}}\) and \(e_{\mathrm{right}}\).

\begin{definition}[Type]
    Let \(T, v, T_v, e_v\) (or \(e_{\mathrm{left}}, e_{\mathrm{right}}\))
    be as above.  We define the \emph{type} \(\Type(T_v)\subseteq \Sout\)
    (or \(\Type(T_v)\subseteq \Sout^2\)) as the maximal set of all
    output labels that may be assigned to \(e_v\) (or to
    \(e_{\mathrm{left}}, e_{\mathrm{right}}\)) such that there exists a
    labeling
    \[
        \sigma : \hE_v \cup \{e_v\} \to \Sout
    \]
    that is locally correct everywhere inside \(T_v\).  
    (In the two-pole case, the domain is
    \(\hE_v \cup \{e_{\mathrm{left}}, e_{\mathrm{right}}\}\).)

    With abuse of notation, we also let \(\Type\) denote the set of all
    types.
\end{definition}

Consider such a tree $T_v$, with $e_v$, as rooted at $v$, with $A$ being the set of $v$'s children.
We get that the type of $T$ can be computed as a function of $(\Type(a))_{a\in A}$ the types of the subtrees of $v$'s children. 
We will use this to compute all possible types that can actually appear in a real instance of $\Pi$.

However, there are already infinitely many such trees of height 1 in the unbounded-degree setting. To cope with this fact, we will need a concise representation of such a tree.

\begin{definition}[Virtual Trees]
    A virtual tree with 0,1 or 2 poles $\T=(\La_{in})$ \emph{(}, or $\T = (\La_{in}, x)$, or $\T = (\La_{in}, x_{left}, x_{right})$\emph{)}, consists of the following:
    \begin{itemize}
        \item $\La_{in}$ the \emph{set of incoming labelings} is a multiset of elements from $\Sin^2 \times \Type$.
        \item $x \in \Sin$ is a single inputlabel, or both $x_{left}, x_{right} \in \Sin$ are input labels 
    \end{itemize}
    Virtual trees represent a node $v$ with $|\La_{in}|$ many incoming edges and 0, 1 or 2 outgoing edges. Each $(x_{far},x_{adj},t) \in \La_{in}$ represents an incoming edge $e$ of $v$ with half edges labeled $x_{far}$ and $x_{adj}$ (adjacent) respectively, furthermore the halfedge not adjacent to $v$ may only output labels from $t$. 

    Virtual trees with 0 poles do not have any outgoing edges, virtual poles with one pole, have a single outgoing half-edge $e_{out}$ input-labeled with $x$ and virtual trees with 2 poles have two outgoing halfedges $e_{left},e_{right}$, input-labeled with $x_{left},x_{right}$ respectively.
\end{definition}
A visualization of a virtual tree with one pole is given in \Cref{fig:virtualTree}

\begin{figure}[!ht]
    \centering
    \includesvg[width=0.6\linewidth]{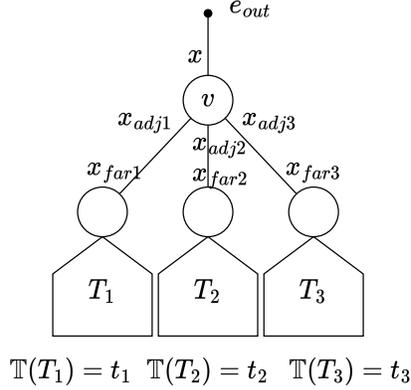}
    \caption{Representation of a virtual tree with 3 incoming edges. The types of the incoming edges are represented by subtrees $T_1, T_2, T_3$ that have the respective types $t_1, t_2, t_3$.}
    \label{fig:virtualTree}
\end{figure}

Similarly to types of actual trees, we define the type of such a virtual tree.
\begin{definition}[Type of a Virtual Tree] \label{def:virtTrees}
    For a virtual tree with 1, or 2 poles $\T = (\La_{in}, x)$ \emph{(},or $\T = (\La_{in}, x_{left}, x_{right})$\emph{)} we define the type $\Type(\T) \subset \Sout$ \emph{(}, or $\Type(\T) \subset \Sout^2$\emph{)} as the set of all admissible outputlabels. More precisely $y \in \Type(\T)$ \emph{(},respectively $(y_{left}, y_{right})$\emph{)} if and only if there exists a $\sigma: \La_{in} \rightarrow \Sout \times \Sout$ satisfying, for every $(x_{far}, x_{adj}, t) \in \La_{in}$ with  $\sigma((x_{far}, x_{adj}, t)) = (a,b)$:
    \begin{itemize}
        \item The types are respected, so $a \in t$. 
        \item Each edge satisfies the edge constraints, so $((x_{far},a),(x_{adj},b)) \in \C_E$
        \item We have a valid labeling around our node, so the multiset union $\set{(x,y)}\bigutimes_{\La_{in}} \set{(x_{adj},b)}$is in $\C_V$. 
        \emph{($\set{(x_{left},y_{left}), (x_{right},y_{right})} \bigutimes_{\La_{in}} \set{(x_{adj},b)} \in \C_V$)}
        \item $\Type(\T)$ is inclusion wise maximal.
    \end{itemize}
\end{definition}

In more handwavy terms: the type of $\T$ includes every outputlabel $y$ such that we can find a valid labeling of the virtual tree that satisfies the types. When thinking about \Cref{fig:virtualTree}, we notice that $\sigma$ in the definition of type just assigns every incoming edge of $v$ two outputlabels, one for each half edge. As a result every half edge is now input and output labeled. We require the outputlabels put on the \emph{far} edges to match their respective type. Additionally every such edge must have an input/outputlabeling that is allowed in our LFL $\Pi$. Lastly the labeling around $v$, where the outgoing half edge is labeled $(x,y)$ must also be a valid labeling according to $\Pi$.

Lastly, we complete the set of definitions, by also giving a statement about virtual trees with 0 poles, for which we just require that any valid assignment exists.
\begin{definition}
    A virtual tree with 0 poles is good, if there exists an assignment of outputlabels $\sigma: \La_{in} \rightarrow \Sout \times \Sout$ to the elements of $\La_{in}$, satisfying for every $(x_{far}, x_{adj}, t) \in \La_{in}$ with  $\sigma((x_{far}, x_{adj}, t)) = (a,b)$:
    \begin{enumerate}
        \item $a \in t$. 
        \item $((x_{far},a),(x_{adj},b)) \in \C_E$
        \item the multiset union $\bigutimes_{\La_{in}} \set{(x_{adj},b)}$is in $\C_V$.
    \end{enumerate}
\end{definition}

We show the power of these definitions, by giving a procedure that computes all of the types that can be encountered when solving $\Pi$ in a tree. For this we need to make sure that we do not have to look at an infinite amount of virtual trees. To that end, we show that for any very large virtual tree, there is a finite virtual tree with the same properties.

We define $s:= \max \set{|C| \mid C \in \C_V}$ as the maximum cardinality of any node configuration. Importantly, since each configuration is finite, we have $s \in O(1)$.

\begin{lemma}\label{lem:finiteVirtualTrees}
    For every virtual tree with 0,1 or 2 poles, $\T$ with set of incoming labelings $\La_{in}$, there exists a virtual subtree $\T'$ with set of incoming labelings $\La_{in}' \subset \La_{in}$, such that $|\La_{in}'| \le (s+1) \cdot |\Sin|^2 \cdot 2^{|\Sout|}$ and such that 
    \begin{itemize}
        \item If $\T$ has 0 poles, then $\T'$ is good if and only if $\T$ is good.
        \item If $\T$ has 1, or 2 poles, then $\Type(\T) = \Type(\T')$.
    \end{itemize}
    Moreover, $\T'$ can be constructed in time linear in the size of $\La_{in}$.
\end{lemma}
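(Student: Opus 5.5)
The idea is to group the incoming labelings $\La_{in}$ by their \emph{kind} $(x_{far},x_{adj},t)\in\Sin^2\times\Type$. Since $\Type\subseteq\Pow(\Sout)$, there are at most $|\Sin|^2\cdot 2^{|\Sout|}$ kinds. For each kind $\kappa$ with multiplicity $m_\kappa$ in $\La_{in}$, I would keep $\min(m_\kappa, s+1)$ copies in $\La_{in}'$. This immediately gives the size bound. It can be done in linear time with one pass over $\La_{in}$ and a counter per kind; the number of kinds is a constant. It remains to show that goodness (0 poles) and the type (1 or 2 poles) are unchanged. Both are existence statements about an assignment $\sigma$ on $\La_{in}$ together with fixed outgoing labels, so one argument handles all three cases.

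\textbf{Direction $\T$ valid $\Rightarrow$ $\T'$ valid.} Fix the outgoing labels (none, $y$, or $(y_{left},y_{right})$) and a witnessing $\sigma$ for $\T$. Let $C\in\C_V$ be the node configuration matched by $\La_v$, the multiset of adjacent half-edge labels. I would restrict $\sigma$ to the kept copies, choosing \emph{which} copies to keep carefully. For a kind $\kappa$ with $m_\kappa>s+1$, partition its copies by their value $(a,b)=\sigma(\cdot)$. The adjacent label contributed is $(x_{adj},b)$. If the multiplicity of $(x_{adj},b)$ in $\La_v$ exceeds $m_{(x_{adj},b)}(C)$, then $C$ contains $(x_{adj},b^*)$, and any number of occurrences at least $m_{(x_{adj},b)}(C)$ is fine. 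Otherwise the multiplicity is exact and at most $s$, so at most $s$ copies of kind $\kappa$ carry that $b$. So I keep all copies whose $b$ is an ``exact'' label. For the remaining budget I keep copies with starred labels, making sure each starred label $(x_{adj},b)$ still appears at least $m_{(x_{adj},b)}(C)\le s$ times overall.

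Here is the main obstacle. The budget $s+1$ per kind must simultaneously cover all exact occurrences of that kind and enough starred occurrences. Exact labels use at most $s$ copies of $\kappa$ in total, because the required (non-starred) part of $C$ has total size at most $s$. Still, one must check that reducing starred counts stays above the threshold. The threshold for a starred label is $m(C)$, which is part of the same size-$s$ budget, and this threshold can be met across several kinds that share $x_{adj}$. So I would first try to keep, per kind, the copies realizing the required multiplicities of $C$ (at most $s$ copies) plus one extra copy. If the counting is tight I would enlarge the bound; the $+1$ in the statement suggests this is exactly what is intended. Edge constraints and the type constraints $a\in t$ are per-element and survive restriction.

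\textbf{Direction $\T'$ valid $\Rightarrow$ $\T$ valid.} Take $\sigma'$ for $\T'$ matching $C$. For a truncated kind, $\T'$ has exactly $s+1$ copies. Their adjacent labels contribute at most $s$ exact occurrences in total, so at least one copy carries a label $(x_{adj},b)$ that occurs more than $m(C)$ times, i.e.\ a starred one (if a label occurs $\ge s+1>m(C)$ times it must be starred). Assign every discarded copy of that kind the same value $(a,b)$ as this copy. Edge and type constraints hold because the kind is identical. The node constraint still holds because only starred multiplicities, already above $m(C)$, increase. Hence goodness and membership in the type coincide for $\T$ and $\T'$, and by maximality $\Type(\T)=\Type(\T')$.
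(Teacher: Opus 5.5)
Your construction (keep $\min(m_\kappa,s+1)$ copies of each kind) is the paper's, and so is your direction $\T'\Rightarrow\T$: among $s+1$ kept copies at most $s$ carry labels whose multiplicity equals $m_\lambda(C)$, so one carries a starred label, and the discarded copies reuse its assignment. The gap is the direction $\T\Rightarrow\T'$. You stop at the ``main obstacle'' and consider enlarging the bound if the counting is tight. That would not prove the lemma as stated, so this direction is currently unproven.

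The missing step is short, and the budget is never tight. Fix a truncated kind $\kappa$ and an adjacent label $\lambda=(x_{adj},b)$. Let $c_{\kappa,\lambda}$ be the number of copies of $\kappa$ to which $\sigma$ gives adjacent label $\lambda$. First keep $\min(c_{\kappa,\lambda},m_\lambda(C))$ of these copies for every $\lambda$. This keeps at most $\sum_\lambda m_\lambda(C)\le|C|\le s$ copies of $\kappa$, because exact labels and starred thresholds both draw on the same unstarred part of $C$. Then fill the remaining $\ge 1$ slots with further copies of $\kappa$; there are enough, since $m_\kappa>s+1$. Every copy not kept in the first round has a label with $c_{\kappa,\lambda}>m_\lambda(C)$. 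Hence $m_\lambda(\La_v)>m_\lambda(C)$, so $C$ contains $(x_{adj},b^*)$, and adding such copies is harmless.

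Your cross-kind worry disappears for the following reason. Copies with label $\lambda$ are discarded from $\kappa$ only when $c_{\kappa,\lambda}>m_\lambda(C)$, and then $\kappa$ alone still contributes $m_\lambda(C)$ copies of $\lambda$. If instead $c_{\kappa,\lambda}\le m_\lambda(C)$ for every truncated $\kappa$, no copy labeled $\lambda$ is discarded at all. Consequently, every non-starred multiplicity is preserved exactly and every starred one stays $\ge m_\lambda(C)$. The restricted $\sigma$ therefore matches the same $C$, and the edge and type constraints survive because they are per element. The paper compresses this direction into the one-line remark that $s+1$ copies suffice to meet the required labels of $C$.
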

\begin{proof}
     We will prove the statement for virtual trees with 1 pole, the proof for 0 or 2 poles work exactly the same. So let $\T = (\La_{in},x)$.

     If $|\La_{in}| \le (s+1) \cdot |\Sin|^2 \cdot 2^{|\Sout|}$, the statement is trivial.

     So now assume that $|\La_{in}|>(s+1) \cdot |\Sin|^2 \cdot 2^{|\Sout|}$. Since there are only at most $2^{|\Sout|}$ many different types and only $|\Sin|^2$ many ways to input label an edge, there must be a type $t$ with input labels $x_{far},x_{adj}$ such that $\La_{in}$ contains more than $s+1$ copies of $(x_{far}, x_{adj}, t)$. We show that an equivalent virtual tree is reached if $(x_{far}, x_{adj}, t)$ appears only $s+1$ times. By repeating this for all different such elements of $\La_{in}$ that appear too often, we get some virtual tree with $|\La_{in}'| \le (s+1) \cdot |\Sin|^2 \cdot 2^{|\Sout|}$.
    
    So let $\La_{in}$ have more than $s+1$ copies of some $(x_{far}, x_{adj}, t)$ we argue that $\Type(\T') = \Type(\T)$, where $\T'$ is the virtual tree $\T$ with only $s+1$ copies of $(x_{far}, x_{adj}, t)$.
    
    To see that $\mathbf{\Type(\T) \subset \Type(\T')}$: Consider some $y \in \Type(\T)$ there must be an assignment of outputlabels to the elements of $\La_{in}$, such that the multiset union $\set{(x,y)}\bigutimes_{\La_{in}} \set{(x_{adj},b)}$ matches some $C \in \C_V$. $C$ contains two kinds of elements, required elements $a \in \Sout$ and respective optional elements of the form $a^*\in \Sout^* \setminus \Sout$. To show that also $y \in \Type(\T')$, by our definition of matching a configuration, it suffices to show, that $\T'$ contains enough copies of $(x_{far}, x_{adj}, t)$ to satisfy the requirements.
    Seeing as $s$ is the maximum cardinality of any configuration, it is also an upper bound on the number of required labels in any configuration. So since $\T'$ contains $s+1$ copies of $(x_{far}, x_{adj}, t)$ we can satisfy the required labels of $C$.
    
    To see that $\mathbf{\Type(\T') \subset \Type(\T)}$: The only difference between $\T'$ and $\T$ is that $\T$ contains more copies of $(x_{far}, x_{adj}, t)$.
    Since $\La_{in}'$ contains $(x_{far}, x_{adj}, t)$ exactly $s+1$ times, there must exist at least one $b^* \in C$, such that $b \in t$. So excess elements of the form $(x_{far}, x_{adj}, t)$ can output $b^*$.
\end{proof}

Since $|\La_{in}'| \le (s+1) \cdot |\Sin|^2 \cdot 2^{|\Sout|}$ is finite, the type of a virtual tree can just be bruteforced.

\begin{lemma}\label{lem:typeComputation}
    For a given virtual tree $\T$ with 1 or 2 poles the type $\Type(\T)$ can be computed. For a given virtual tree $\T$ with 0 poles whether $\T$ is good or not can be computed.
\end{lemma}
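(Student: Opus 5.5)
The plan is to reduce to a finite search using \Cref{lem:finiteVirtualTrees}, then check each candidate by brute force. Given $\T$ with incoming multiset $\La_{in}$, I would first run the linear-time reduction from \Cref{lem:finiteVirtualTrees}. This produces $\T'$ with $|\La_{in}'| \le (s+1)\cdot|\Sin|^2\cdot 2^{|\Sout|}$, together with the guarantees $\Type(\T')=\Type(\T)$ (1 or 2 poles) and \emph{$\T'$ good iff $\T$ good} (0 poles). It therefore suffices to decide everything for $\T'$, whose size is bounded by a constant depending only on $\Pi$.

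For $\T'$ with one pole and input label $x$, I would enumerate every candidate $y\in\Sout$. For each $y$, I would enumerate all assignments $\sigma:\La_{in}'\to\Sout\times\Sout$; there are at most $|\Sout|^{2|\La_{in}'|}$ of them, a finite number. For each pair $(y,\sigma)$ I check the three conditions of \Cref{def:virtTrees}.
\begin{enumerate}
    \item Check $a\in t$ for each element $(x_{far},x_{adj},t)$ with $\sigma(\cdot)=(a,b)$. This is a finite membership test, since $t\subseteq\Sout$ is given explicitly.
    \item Check $((x_{far},a),(x_{adj},b))\in\C_E$, again a lookup in a finite set.
    \item Form the multiset $\La=\{(x,y)\}\uplus\biguplus\{(x_{adj},b)\}$ and test whether it matches some $C\in\C_V$ in the node--edge sense. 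For each of the finitely many $C$ and each of the finitely many pairs in $\Sin\times\Sout$, compare multiplicities: either they are equal, or the count exceeds $m_{(a,b)}(C)$ and $(a,b^*)\in C$.
\end{enumerate}
I put $y$ into $\Type(\T')$ exactly when some $\sigma$ passes all checks. The set obtained this way is by definition the inclusion-wise maximal admissible set, so it equals $\Type(\T')$ and hence $\Type(\T)$.

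The two-pole case is identical except that the candidates are pairs $(y_{left},y_{right})\in\Sout^2$ and the multiset contains both pole entries. In the zero-pole case there is no candidate label to range over: $\T'$ is good exactly when some $\sigma$ passes the three checks, and by \Cref{lem:finiteVirtualTrees} this is the same as $\T$ being good.

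There is no deep obstacle here; the only point needing care is finiteness. The original $\La_{in}$ may be arbitrarily large, and the matching condition involves optional elements $a^*$ that allow unboundedly many occurrences. Both issues are absorbed by \Cref{lem:finiteVirtualTrees}. After the reduction the search space is finite, and the matching test on a concrete multiset is a finite comparison of multiplicities, so the whole procedure is a terminating brute force whose running time is bounded by a constant depending only on $\Pi$.
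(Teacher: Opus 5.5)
Your proposal is correct and is essentially the paper's proof: invoke \Cref{lem:finiteVirtualTrees} to shrink $\La_{in}$ to constant size, then brute-force over the candidate pole labels $y$ (or pairs $(y_{left},y_{right})$) and all assignments $\sigma$, checking the conditions of \Cref{def:virtTrees}; the zero-pole case is handled identically. Strictly speaking, any given $\La_{in}$ is a finite multiset, so brute force would already terminate without the reduction. The reduction is what makes the running time a constant depending only on $\Pi$, and that uniform bound is what \emph{Compute Types} later relies on.
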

\begin{proof}
    We first invoke \Cref{lem:finiteVirtualTrees} to obtain a finite virtual tree with the same type.\\
    Since $\Sin, \Sout$ and $\La_{in}$ are finite, we can iterate all choices $y \in \Sout$ for virtual trees with 1 pole (,respectively iterate all $(y_{left}, y_{right})$ for trees with 2 poles) and check for all possible mappings $\sigma((x_{far}, x_{adj}, t)) = (a,b)$ if they satisfy the conditions in \cref{def:virtTrees}. The argument for 0 poles is the same.
\end{proof}

Now that we have shown that virtual trees allow us to restrict to finite representations, we also show that they are expressive enough to yield results for any instance.

\begin{lemma}\label{lem:TreesAreVirtualTrees}
    For every subtree $T_v$ as above, connected to $T\setminus T_v$ with one or two edges, there exists a virtual tree with one pole \emph{(}resp. two poles\emph{)} $\T = (\La_{in}, x)$ \emph{(}resp $\T = (\La_{in}, x_{left}, x_{right})$\emph{)}, such that $\Type(T_v) = \Type(\T)$.
\end{lemma}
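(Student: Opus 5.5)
The plan is to build the virtual tree directly from the children of $v$ and then show the two types agree by translating labelings back and forth. Root $T_v$ at $v$ and let $A$ be the set of children of $v$ inside $T_v$, that is, the neighbors of $v$ in $T_v$. Since $T_v$ is connected and meets $T \setminus T_v$ only through edges at $v$, removing $v$ splits $T_v$ into the subtrees $T_a$ for $a \in A$. Each $T_a$ is attached to the rest of $T$ by the single edge $\{v,a\}$, and its pole half-edge is $e_a := (\{v,a\},a)$. So $\Type(T_a) \subseteq \Sout$ is well defined as a one-pole type.

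For each $a \in A$ I add to $\La_{in}$ the element
\[
    \bigl(\phi(e_a),\, \phi(e'_a),\, \Type(T_a)\bigr),
\]
where $e'_a := (\{v,a\},v)$ is the half-edge at $v$. The first coordinate is the far half-edge and the second the adjacent one. The pole labels $x$ (respectively $x_{left}, x_{right}$) are the input labels of the outgoing half-edge(s) at $v$. This gives a virtual tree $\T$. I will use the convention that the outgoing half-edge is the one adjacent to $v$, so its output label enters $v$'s node constraint. This is the reading under which \Cref{def:virtTrees} places $(x,y)$ into the multiset at $v$.

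\textbf{Inclusion $\Type(T_v) \subseteq \Type(\T)$.} Take $y \in \Type(T_v)$ with a witnessing labeling $\sigma$ that is locally correct inside $T_v$. Set
\[
    \sigma'\bigl((\phi(e_a),\phi(e'_a),\Type(T_a))\bigr) := \bigl(\sigma(e_a),\, \sigma(e'_a)\bigr).
\]
We then check the conditions of \Cref{def:virtTrees}. First, $\sigma(e_a) \in \Type(T_a)$, because $\sigma$ restricted to $T_a$ is locally correct inside $T_a$, and $\Type(T_a)$ is the maximal set of labels with this property. Second, the edge constraint on $\{v,a\}$ holds because $\sigma$ is correct on that edge, which lies inside $T_v$. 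Third, the node multiset at $v$ is exactly the one $\sigma$ produces at $v$, so it matches some configuration in $\C_V$.

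\textbf{Inclusion $\Type(\T) \subseteq \Type(T_v)$.} Take $y \in \Type(\T)$ with witness $\sigma'$, and write $\sigma'(a) = (a_a, b_a)$. Since $a_a \in \Type(T_a)$, there is a labeling $\sigma_a$ of $T_a$ that is locally correct inside $T_a$ and assigns $a_a$ to $e_a$. I glue all the $\sigma_a$ together, put $b_a$ on $e'_a$, and put $y$ on the pole(s). Local correctness then holds in three places:
\begin{itemize}
    \item inside each $T_a$, by the choice of $\sigma_a$;
    \item on each edge $\{v,a\}$, by the edge condition in the definition of $\Type(\T)$;
    \item at $v$, by the node condition in that definition.
\end{itemize}
Every node and edge of $T_v$ lies in exactly one of these places. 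This is where the tree structure is used: the $T_a$ are pairwise disjoint, so there are no conflicts when gluing. Hence $y \in \Type(T_v)$. The two-pole case is identical with pairs $(y_{left}, y_{right})$.

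\textbf{Main obstacle.} The mathematics is routine; the main risk is bookkeeping. The definitions of $\Type$ and of virtual trees must be matched carefully on three points:
\begin{itemize}
    \item which half-edge of the pole carries the output;
    \item that $\Type(T_a)$ counts only correctness \emph{inside} $T_a$, so constraints at $v$ and on $\{v,a\}$ are excluded;
    \item that $\La_{in}$ is a multiset, so children with the same triple are all kept.
\end{itemize}
If $\Type$ is instead read as requiring local correctness only at nodes, the edge constraint on the pole edge falls outside $T_v$ on both sides of the comparison, so the argument is unchanged. No finiteness is needed here, since $\La_{in}$ may be arbitrarily large; \Cref{lem:finiteVirtualTrees} can be applied afterwards.
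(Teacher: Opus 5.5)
Your proposal is correct and follows the paper's own proof. The paper builds exactly the same virtual tree from the children $c_1,\dots,c_k$ of $v$ and their subtree types, treating the leaf case separately (your construction covers it with $\La_{in}=\emptyset$), but then simply asserts $\Type(\T)=\Type(T_v)$ ``by definition''. Your explicit two-inclusion argument, restricting and gluing along the disjoint subtrees $T_a$ and assigning labels to each copy in the multiset $\La_{in}$ separately, supplies the verification the paper leaves implicit.
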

\begin{proof}
    We only prove the statement for virtual trees with 1 pole. Either $v$ is a leaf, or $v$ is the root of some tree $T_v$.
    If $v$ is a leaf, use $\T = (\set{}, x)$, with $x$ being the input label on $v$'s adjacent halfedge.

    Otherwise, let $v$ have $k$ children $c_1, \ldots, c_k$. Each of the subtrees $T_{c_j}$ has some type  $\Type(T_{c_j})$. So we can use the virtual tree 
    \[
    \T = (\set{(x_{far1}, x_{adj1}, \Type(T_{c_1}), \ldots, (x_{fark}, x_{adjk}, \Type(T_{c_k}))}, x)
    \]
    , where the inputlabeling matches exactly the inputlabeling around $v$. By definition of types for trees and virtual trees, it holds that $\Type(\T) = \Type(T_v)$.
\end{proof}

In the same way as in the proof of \Cref{lem:TreesAreVirtualTrees} we can construct trees of any possible type iteratively, by using smaller trees. If we create a tree with a new type, that we have not seen before, we can try using it to create even more trees with different types. 

We now use this idea to compute all possible types.  The following
procedure starts from an initial set \(R = \emptyset\) and repeatedly
enlarges it until a fixed point is reached.

\paragraph*{Compute Types}
Given an LFL \(\Pi = (\Sigma_{\mathrm{in}}, \Sigma_{\mathrm{out}}, \mathcal{C}_V, \mathcal{C}_E)\) and an initial set
\(R \subseteq \Type \times \Sigma_{\mathrm{out}}\), the procedure computes the maximal
set of types that any subtree with a single outgoing labeled edge can
have.

Remember that $s \;:=\; \max\{\,|C| \mid C \in \mathcal{C}_V\,\}$.  

\smallskip
\noindent Repeat the following steps:

\begin{enumerate}
    \item For all values $i \in \set{0, \ldots, (s+1) \cdot |\Sin|^2 \cdot 2^{|\Sout|}}$ create all possible virtual trees with 0 and 1 poles $\T = (\La_{in}, x)$ (, or $\T = (\La_{in})$) with $|\La_{in}| = i$, by choosing $x \in \Sin$, $x_{adj} \in \Sin$ and $(t,x_{far})$ from $R$ (for the elements $(x_{far}, x_{adj}, t) \in \La_{in}$).
    \item Check that all virtual trees with 0 poles are good, otherwise output \emph{invalid problem}.
    \item Compute the types of all of the virtual trees $\T = (\La_{in}, x)$ with 1 pole and let $R'$ be the set of tuples $(\Type(\T), x)$. 
    \item If the type of any of these virtual trees is the empty set output \emph{invalid problem}
    \item If $R = R'$ terminate, else set $R \gets R \cup R'$ and go to step 1.
\end{enumerate}

Note that by \Cref{lem:finiteVirtualTrees} it is enough to consider only virtual trees of size at most $(s+1) \cdot |\Sin|^2 \cdot 2^{|\Sout|}$, to obtain all possible types.

Because $\Sin,\Sout$ are finite, $R \subset \Pow(\Sout)\times \Sin$ must be finite. But since we require $R$ to grow in each iteration, there can only be a finite number of iterations.
\begin{corollary}
    The procedure Compute Types terminates.
\end{corollary}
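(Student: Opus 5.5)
The plan is to show that each iteration of the loop either terminates or strictly increases $R$, and that $R$ can increase only finitely often because it lives in a finite set.

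First, the objects handled in a single iteration form a finite set. The bound $i \le (s+1)\cdot|\Sin|^2\cdot 2^{|\Sout|}$ limits the size of $\La_{in}$. Each element of $\La_{in}$ is drawn from $\Sin \times \Sin \times \Type$, which is finite, with the type restricted to those in $R$. Hence only finitely many virtual trees with 0 or 1 poles are generated in step 1. For each of them, \Cref{lem:typeComputation} computes the type, or decides goodness for 0 poles, in finite time. It does this by brute force over the finitely many assignments $\sigma: \La_{in} \to \Sout\times\Sout$ and choices of $y\in\Sout$, which finishes steps 2--4 of that iteration.

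Next comes a monotonicity and finiteness argument. Throughout the procedure $R \subseteq \Pow(\Sout)\times\Sin$, a set of size at most $2^{|\Sout|}\cdot|\Sin|$. In step 5, either $R' \subseteq R$ and the procedure stops, or $R \cup R'$ strictly contains $R$. The literal test is $R = R'$; I read it as the fixed-point test $R' \subseteq R$, since $R$ only grows. I would also note that $R'$ can only grow as $R$ grows, because more allowed types yield more virtual trees, so the $R'$ of a later round contains that of earlier rounds. This makes the fixed-point reading consistent with the written test once $R$ stabilizes.

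So each non-terminating iteration adds at least one new element to $R$. Hence there are at most $2^{|\Sout|}\cdot|\Sin|+1$ iterations, and the procedure may also stop early by outputting \emph{invalid problem}.

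The only subtle point is making sure the termination test really compares against a growing set, i.e. that the literal condition $R=R'$ is eventually met rather than merely $R'\subseteq R$. I expect this to be the main obstacle. I would handle it by the monotonicity observation above: once $R$ is a fixed point, the next computed $R'$ equals $R$ itself. Everything else is routine finiteness.
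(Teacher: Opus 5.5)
Your proof is correct and is essentially the paper's argument: $R$ ranges over subsets of the finite set $\Pow(\Sout)\times\Sin$ and strictly grows in every round that does not stop, so only finitely many rounds occur. Your remark that each single round is a finite brute-force computation via \Cref{lem:typeComputation} is a detail the paper leaves implicit.

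One caveat concerns your reconciliation of the literal test $R=R'$ with the fixed-point test $R'\subseteq R$. Monotonicity yields $R\subseteq R'$ at stabilization only when the procedure starts from $R=\emptyset$, because only then was every element of $R$ itself produced as some earlier $R'$. The testing procedure also calls \emph{Compute Types}$(\Pi,R\cup R')$ with a nonempty initial set. That set contains independent-class tuples $(X,x)$ that need not be the type of any virtual tree. Under the literal test one can then have $R'\subsetneq R$ forever, and the procedure would loop. So the fixed-point reading is genuinely required there, not merely consistent; it is also what the paper's ``$R$ must grow in each iteration'' tacitly assumes.
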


Now that we know that the procedure is well behaved, we argue that it is also computing what we want. For any subgraph $T_v$ in any instance the type $\Type(T_v)$ will be produced at some point.
\begin{lemma}\label{lem:constructTrees}
    Let $R$ be the set returned by \emph{Compute Types}, initialized with $R=\emptyset$.
Let $T_v$ be any subtree with one outgoing halfedge labeled $x$, rooted at some node $v$ in any valid instance of $\Pi$.
Then
\[
    (\Type(T_v),x) \in R.
\]
\end{lemma}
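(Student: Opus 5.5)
We argue by strong induction on the height of $T_v$, i.e.\ the maximum distance from $v$ to a leaf of $T_v$. The inductive claim is that for every rooted subtree $T_v$ of height at most $h$, with outgoing halfedge input-labeled $x$, the pair $(\Type(T_v),x)$ lies in $R$. Since \emph{Compute Types} only ever enlarges $R$, a pair that appears in some iteration stays in the final output. Moreover, the final $R$ is a fixed point: one more round of steps 1--3 yields $R'\subseteq R$.

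\textbf{Base case.} If $v$ is a leaf, \Cref{lem:TreesAreVirtualTrees} gives the virtual tree $\T=(\{\},x)$ with $\Type(\T)=\Type(T_v)$. Step 1 creates this virtual tree for $i=0$, which needs no elements of $R$. Hence $(\Type(T_v),x)\in R'$ already in the first iteration.

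\textbf{Inductive step.} Let $v$ have children $c_1,\dots,c_k$, where $k$ may be arbitrarily large. Each child subtree $T_{c_j}$ has smaller height, and its outgoing halfedge (the halfedge at $c_j$ of the edge $\{v,c_j\}$) carries some input label $x_{far,j}$. By the induction hypothesis, $(\Type(T_{c_j}),x_{far,j})\in R$ for the final $R$. The proof of \Cref{lem:TreesAreVirtualTrees} gives the virtual tree
\[
\T=\bigl(\{(x_{far,j},x_{adj,j},\Type(T_{c_j}))\}_{j\le k},\,x\bigr)
\qquad\text{with}\qquad \Type(\T)=\Type(T_v),
\]
and all of its incoming labelings are built from pairs in $R$. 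The obstacle is that $k$ is unbounded, whereas step 1 only enumerates virtual trees with $|\La_{in}|\le (s+1)\cdot|\Sin|^2\cdot 2^{|\Sout|}$. We resolve this with \Cref{lem:finiteVirtualTrees}: it yields a virtual subtree $\T'$ with $\La'_{in}\subseteq\La_{in}$, size within the bound, and $\Type(\T')=\Type(\T)$. Because $\La'_{in}\subseteq \La_{in}$, every element of $\La'_{in}$ still uses a pair $(t,x_{far})\in R$. So $\T'$ is among the virtual trees enumerated in step 1 when run on the final $R$. By the fixed-point property, $(\Type(\T'),x)=(\Type(T_v),x)\in R'\subseteq R$, which completes the induction.

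We expect the main point needing care to be exactly this fixed-point use combined with the compression lemma. \Cref{lem:finiteVirtualTrees} is only applied to virtual trees whose incoming types already lie in $R$, and termination occurs only when $R'=R$. Together these guarantee that no large-degree tree escapes the enumeration. We also check that the input labels on halfedges match: the pole label of the child's virtual tree must coincide with the $x_{far}$ component stored in $R$, so we index $R$ by the child's outgoing halfedge label, as in step 1.
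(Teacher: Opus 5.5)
Your proof is correct and follows the same route as the paper: induction on the height of $T_v$, building the virtual tree of $v$ from the types of its children's subtrees. Your write-up is more careful than the paper's, which simply asserts that the possibly unboundedly large virtual tree is considered by \emph{Compute Types}. Your explicit use of \Cref{lem:finiteVirtualTrees}, together with the fixed-point property of the returned $R$, closes that gap; indexing $R$ by $x_{far}$ rather than the paper's $x_{adj}$ also fixes a notational slip.
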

\begin{proof}
    We prove the claim by induction on the height. 
    To start, if $T_v$ has height 0, then in the first iteration, for $i=0$ the virtual tree $(\set{},x)$ is considered, so $(\Type(T_v),x) \in R$.

    For the inductive step, let $v$ have children $c_1, \ldots, c_k$. They are connected to $v$ with edges input labeled $x_{far1}, x_{adj1}, \ldots,x_{fark}, x_{adjk}$ and are roots of respective, height $\le i-1$, trees $T_1, \ldots, T_k$. By the induction hypothesis $(\Type(T_1),x_{adj1}),\ldots, (\Type(T_k),x_{adjk}) \in R$ in some iteration of \emph{Compute Types}. So in that iteration, the virtual tree 
    \[
    \T = (\set{(x_{far1}, x_{adj1}, \Type(T_1)),\ldots, (x_{fark}, x_{adjk}, \Type(T_k))}, x)
    \]
    is considered, by definition $\Type(\T) = \Type(T_v)$, so $\Type(T_v) \in R$.
\end{proof}

As a direct corollary of this it follows, that if ever $\emptyset \in R$, then the problem is unsolvable in some instances.

\begin{corollary}
    If Compute Types, with initial set $R = \emptyset$ outputs \emph{invalid problem}, then there exist some tree $T$ on which there are no solutions to $\Pi$.
\end{corollary}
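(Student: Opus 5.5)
The plan is to turn the failure of \emph{Compute Types} into an explicit tree $T$ on which $\Pi$ has no solution. There are two ways the procedure reports \emph{invalid problem}: in step 2 some virtual tree with $0$ poles built from $R$ is not good, or in step 4 some virtual tree with $1$ pole built from $R$ has empty type. In both cases the offending virtual tree $\T$ has incoming labelings $(x_{far},x_{adj},t)$ whose pairs $(t,x_{far})$ lie in the current set $R$. The core of the argument is therefore a \emph{realizability} claim: every pair $(t,x)\in R$ is witnessed by an actual finite input-labeled tree $T_{t,x}$ with one outgoing halfedge labeled $x$ and $\Type(T_{t,x})=t$. \Cref{lem:constructTrees} gives the opposite inclusion, so we have to prove this direction on its own.

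I will prove realizability by induction on the iteration in which $(t,x)$ enters $R$. The base case is the leaf $(\set{},x)$. For the inductive step, $(t,x)$ is the type of some virtual tree $(\La_{in},x)$ whose entries use pairs already in $R$. By induction each entry has a realizing tree, and I take a fresh copy of it for each entry. I then create a new root $v$, attach these copies by edges carrying input labels $x_{far},x_{adj}$, and give the outgoing halfedge input label $x$. Repeated entries need separate disjoint copies. Since $|\La_{in}|\le (s+1)\cdot|\Sin|^2\cdot 2^{|\Sout|}$, the resulting tree is finite and of bounded degree. Its type equals the type of the virtual tree, by the same computation as in \Cref{lem:TreesAreVirtualTrees}: the type of a rooted tree depends only on the input labels around $v$ and on the types of the child subtrees.

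To finish, I glue realizing trees onto the offending virtual tree. If a $0$-pole virtual tree is not good, I build $T$ by hanging a realizing tree for each entry off a root $v$. A valid labeling of $T$ would restrict on each child subtree to a labeling that is locally correct there, so the far halfedge of each child would carry a label from that child's type. Together with the edge constraints and the node constraint at $v$, this labeling is exactly a witness that the virtual tree is good, which is a contradiction. If instead a $1$-pole virtual tree has empty type, I realize it as a tree $T_v$ with a dangling halfedge whose type is $\emptyset$. I then close it off by attaching an arbitrary realizing subtree, say the single leaf $(\set{},x')$, across the outgoing edge, to get a genuine tree $T$. Any solution on $T$ would assign the outgoing halfedge of $T_v$ some label in $\Type(T_v)=\emptyset$, which is impossible.

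I expect the main obstacle to be the type-equality step in the realizability induction. The definition of $\Type$ for actual trees is stated somewhat loosely, so I must check that ``locally correct everywhere inside $T_v$'' matches the virtual-tree conditions exactly. In particular, the far halfedge of a child must range over exactly that child's type, and the edge constraint across each child edge must be the only coupling between the parts. I will also need a minor convention for the pole halfedge in the empty-type case: any completion edge imposes one more constraint, so it can only shrink the set of admissible labels, which keeps that set empty.
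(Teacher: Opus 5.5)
Your proposal is correct and follows essentially the same route as the paper. The paper also recursively replaces each incoming entry of type $t$ in the offending virtual tree by the earlier virtual tree that first produced $t$, which terminates because that tree was created in an earlier iteration. It then adds a parent leaf in the $1$-pole case and argues that any solution would place a label in the empty type, or would witness that the $0$-pole tree is good. Your version differs only in presentation: it states the realizability induction explicitly, uses disjoint copies for repeated entries, and checks that real-tree types agree with virtual-tree types, all of which the paper leaves implicit.
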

\begin{proof}
    Start with the virtual tree $\T$ such that $\Type(\T) = \emptyset$, or with the virtual tree with 0 poles that is not good. We reconstruct a tree on which there is no solution for $\Pi$.
    To do so, we recursively replace the incoming edges of type $t$ with the first virtual tree $\T_t$ that created type $t$. Notice that $\T_t$ must have been created at least one iteration before $\T$, therefore this recursion will end. 
    
    If we started with a virtual tree with 1 pole, we add a direct parent node $p$ to $v$ and put an arbitrary input label on the halfedge $(\set{v,p},p)$. We have produced a labeled tree $T = (V, E, \phi)$, with $\Type(T) = \Type(\T) = \emptyset$.
    There cannot be an output labeling $\sigma$ of $T$ that is correct with respect to $\Pi$, as this labeling would imply that $\sigma(e_v) \in \Type(T_v)$ (,or that $\T$ was good).
\end{proof}

\paragraph{An $O(D)$ algorithm:}
Notice that \emph{Compute Types} immediately yields an $O(D)$-round algorithm, where $D$ is the diameter of the tree.
The algorithm proceeds by iteratively removing leaves.

If $v$ is a leaf, then in the first round it can locally compute its type and propagate this information to its unique neighbor.
More generally, suppose that $v$ is removed in some later round.
Using the types propagated to $v$ in previous rounds, node $v$ can construct the corresponding virtual tree $\T_v$, compute $\Type(\T_v)$, and propagate this type to its remaining neighbor.

Eventually, only a single node $r$ remains.
(If the last remaining structure is an edge with two nodes, ties are broken arbitrarily.)
The virtual tree of $r$ must be good, since otherwise $\Pi$ would not be solvable on this tree.
Node $r$ can then choose any output labeling that is compatible with the types of all its incident subtrees.
This choice is subsequently propagated backwards.

Once $v$'s unique upwards edge receives a label through this backwards propagation process, it is $v$'s turn to choose. The output label on its upward edge has already been fixed to some value, guaranteed to be in $\Type(T_v)$.
By definition of types, $v$ can therefore choose labels on its incident halfedges to form a valid local labeling that is consistent with the types of all its children.
Hence, the entire process results in a globally valid labeling.

While correct, this algorithm has round complexity $O(D)$, which is far too slow for our purposes.
Removing only leaves is therefore insufficient.
To obtain faster algorithms, we also need to remove long degree-$2$ paths.
The following definitions extend the notion of types to paths.

\begin{definition}[Compress Path]
    A compress path $H=(V, E, \phi, e_s, e_t)$ is formed by a core path $P=(v_1, \ldots, v_l)$ with the endpoints $s=v_1$ and $t=v_l$. Each node $v_i$ on the path might be the root of an arbitrary subtree $T_{v_i} $ attached to it. $e_s$ and $e_t$ are halfedges attached to $s$ and $t$ respectively. $e_s$ and $e_t$ do not have a corresponding half edge and also the other endpoint is not in $H$. All halfedges in $H$, including $e_s,e_t$ are inputlabeled by $\phi$.

    We say that the length of $H$ is the length of its core path.
\end{definition}

We now extend the notion of types to such paths.
Analogously to the previous definitions, the \emph{type} of a path $H$ is defined as the set of output labelings on its \emph{connecting edges} $e_s$ and $e_t$ that can be extended to a valid labeling of the entire path $H$.

\begin{definition}[Type of a compress path]
    The type $\Type(H)$ of a compress path $H = (V, E, \phi, e_s, e_t)$, is the maximal set of tuples of outputlabels $(y_s,y_t) \in \Sout^2$, satisfying the following property: For any $(y_s,y_t) \in \Type(H)$ there exists a valid labeling $\sigma$ of $H$, such that $e_s$ has outputlabel $y_s$ and $e_t$ has outputlabel $y_t$.
\end{definition}

Note that here a valid labeling means every node in $H$ must match a configuration of $\C_V$, where $s$ and $t$ include the labels from $e_s$ and $e_t$ respectively. Furthermore all edges in $H$, except $e_s$ and $e_t$ must match a configuration of $\C_E$.

Clearly, from the way types are defined, any type is a subset of $( \Sout)^2$, since this set is finite, also the number of types is finite.

\begin{corollary}
    The number of different types that a compress path can have is upper bounded by $2^{(|\Sout| )^2}$.
\end{corollary}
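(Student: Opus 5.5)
The plan is a direct counting argument on the codomain of the type map. By the definition of the type of a compress path, $\Type(H)$ is a subset of $\Sout^2$: it collects exactly those pairs $(y_s,y_t)$ of output labels on the two connecting halfedges $e_s,e_t$ that extend to a valid labeling of $H$. So the map $H \mapsto \Type(H)$ sends every compress path, whatever its length and however large the degrees of the nodes on its core path or in the attached subtrees $T_{v_i}$, into the power set $\Pow(\Sout^2)$. Nothing about the structure of $H$ enters beyond this.

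Next I count the target set. Since $\Sout$ is finite, $|\Sout^2| = |\Sout|^2$, and hence $|\Pow(\Sout^2)| = 2^{|\Sout|^2}$. The number of distinct types that can occur is at most the size of the codomain, which gives the claimed bound $2^{(|\Sout|)^2}$.

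There is no real obstacle here. The one point to state explicitly is that unbounded degrees play no role. Unlike the one-pole case, where \Cref{lem:finiteVirtualTrees} was needed to make the enumeration of virtual trees finite, this bound depends only on the finiteness of $\Sout$ and not on which compress paths actually arise. The bound need not be tight. Some subsets may never be realized, and the empty type indicates an unsolvable instance, but an upper bound is all the statement asks for.
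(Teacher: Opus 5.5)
Your proposal is correct and follows the paper's own reasoning: every type of a compress path is by definition a subset of $\Sout^2$, so the number of distinct types is at most $|\Pow(\Sout^2)| = 2^{|\Sout|^2}$. Your remark that unbounded degrees play no role is also accurate, since the bound uses only the finiteness of $\Sout$.
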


The difficulty with types for paths is that they effectively require us to choose the output labels on $e_s$ and $e_t$ \emph{simultaneously}. These choices generally depend on the structures of the subgraphs $G_s$ and $G_t$ attached to $H$ at either end. Consequently, the labeling chosen in $G_s$ must be coordinated with the labeling chosen in $G_t$.

To avoid this global coordination, we introduce the notion of \emph{independent classes}. Independent classes allow us to select the output labels on $e_s$ and $e_t$ \emph{independently}, while still guaranteeing that the resulting choices can be extended to a valid labeling of the entire path.

\begin{definition}[Independent Class]
    Let $H=(V, E, \phi, e_s, e_t)$ be a compress path, an independent class $\I$ for $H$ is a pair of sets of outputlabels $(X,Y) \in \Pow(\Sout)\times \Pow(\Sout)$, such that for any choice of outputlabels $x \in X, y \in Y$, there exists a valid labeling $\sigma$ of $H$ with $\sigma(e_s) = x, \sigma(e_t) = y$.\\
    If either $X$ or $Y$ is empty we call $\I$ empty.
\end{definition}
Again, let $G_s$ and $G_t$ be the graphs attached to $H$ through $e_s$ and $e_t$.
Under the restriction that $e_s$ is assinged a label from $X$ and $e_t$ is assigned a label from $Y$, the labeling in $G_s$ can be computed without regard to the choices made in $G_t$, and vice versa. For every such pair of choices, the labeling can be extended to a valid labeling of $H$. Thus, $H$ acts as padding between $G_s$ and $G_t$, allowing both sides to be solved independently. This padding will be what allows our algorithms to be fast, but it only exists, if we have ''good'' independent classes. 

We turn to studying these independent classes, again by first abstracting to virtual compress paths.

\begin{definition}[virtual compress path]
    A \emph{virtual compress path} $\Hp = (\T_i)_{1 \le i \le l}$ of length $l$ is a sequence of virtual trees, each with two poles.

Such a virtual compress path represents a core path
\[
P = (v_1, v_2, \ldots, v_l),
\]
where each node $v_i$ may have multiple incoming neighbors. For each $i$, the virtual tree
\[
\T_i = (\La_{in}^i, x_{left}^i, x_{right}^i)
\]
corresponds to the node $v_i$. Consecutive nodes $v_i$ and $v_{i+1}$ are connected by identifying the right pole of $\T_i$ with the left pole of $\T_{i+1}$, forming an edge $\{v_i, v_{i+1}\}$ whose input labels are $(x_{right}^i, x_{left}^{i+1})$.

The remaining poles $x_{left}^1$ and $x_{right}^l$ are assigned to two distinguished halfedges $e_s$ attached to $v_1$ and $e_t$ attached to $v_l$, respectively.
\end{definition}

In exactly the same manner as we already did for virtual trees, we can extend the definition of types to virtual compress paths. 

\begin{definition}[type of a virtual compress]
    For a virtual compress path $\Hp = (T_i)_{1 \le i \le l}$, the type $\Type(\Hp) \subset (\Sout)^2$ of $\Hp$ is a set of tuples of outputlabels.\\
    For every element $(y_s,y_t) \in \Type(\Hp)$ it holds that there exists a labeling $\sigma$ of the halfedges of the path $P=(v_1, \ldots, v_l)$, including the two halfedges $e_s, e_t$, such that:
    \begin{itemize}
        \item $\sigma(e_s) = y_s, \sigma(e_t)=y_t$.
        \item For every $T_i$, the labels assigned to the two poles of $T_i$ are in $\Type(T_i)$.
        \item $\sigma$ assigns all of the edges $\set{v_i,v_{i+1}}$ a valid labeling, for $1 \le i < l$.
        \item $\Type(\Hp)$ is inclusion wise maximal.
    \end{itemize}
\end{definition}

This is exactly the reason, why we also defined virtual trees with 2 poles and proved the previous results for these as well. It is now easy to conclude the following.

\begin{lemma}\label{lem:CompressToVirtual}
    For every compress path $H =(V,E, \phi, e_s, e_t)$, there exists a virtual compress path $\Hp=(\T_i)_{1\le i\le l}$, such that $\Type(H) = \Type(\Hp)$.
\end{lemma}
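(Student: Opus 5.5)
The plan is to build $\Hp$ node by node along the core path, the same way \Cref{lem:TreesAreVirtualTrees} is built, and then show that the path-level types agree. Let the core path of $H$ be $P=(v_1,\ldots,v_l)$. For each $i$, the node $v_i$ has its own attached subtrees, hanging from children $c^i_1,\ldots,c^i_{k_i}$ that are not on $P$. Each subtree $T_{c^i_j}$ meets the rest of the graph through exactly one edge, namely $\{v_i,c^i_j\}$. By definition it therefore has a one-pole type $\Type(T_{c^i_j})\subseteq\Sout$. We define
\[
\T_i=\bigl(\{(x^{i,j}_{far},x^{i,j}_{adj},\Type(T_{c^i_j}))\}_{j\le k_i},\,x^i_{left},\,x^i_{right}\bigr),
\]
with the input labels read off from $\phi$. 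Here $x^i_{left}$ and $x^i_{right}$ are the input labels of the halfedges of $v_i$ pointing to $v_{i-1}$ and $v_{i+1}$. At the ends of the path, these are replaced by the labels of $e_s$ (for $i=1$) and $e_t$ (for $i=l$). With this choice, the edge $\{v_i,v_{i+1}\}$ carries the input pair $(x^i_{right},x^{i+1}_{left})$, exactly as the definition of a virtual compress path requires.

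Next we show $\Type(H)\subseteq\Type(\Hp)$. Take $(y_s,y_t)\in\Type(H)$, witnessed by a valid labeling $\sigma$ of $H$. Restrict $\sigma$ to the halfedges of $P$, together with $e_s$ and $e_t$. Every edge $\{v_i,v_{i+1}\}$ is valid because $\sigma$ is valid. Now fix $i$ and look at the two halfedges of $v_i$ on the path. Their labels lie in $\Type(\T_i)$: the labels $\sigma$ puts on the edges $\{v_i,c^i_j\}$ give an assignment on $\La^i_{in}$. The far labels lie in $\Type(T_{c^i_j})$, because $\sigma$ restricted to $T_{c^i_j}$ witnesses this. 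The edge constraints hold, and the node constraint at $v_i$ holds. This is exactly the condition of \Cref{def:virtTrees}.

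For the converse, $\Type(\Hp)\subseteq\Type(H)$, take a labeling of the path halfedges that witnesses $(y_s,y_t)\in\Type(\Hp)$. For each $i$, membership of the pole labels in $\Type(\T_i)$ gives an assignment $(a_j,b_j)$ on $\La^i_{in}$ satisfying the node constraint at $v_i$ and the edge constraints. Since $a_j\in\Type(T_{c^i_j})$, there is a labeling of $T_{c^i_j}$ that is locally correct inside it and puts $a_j$ on the halfedge at $c^i_j$. We glue all of these pieces together. This is well defined because the subtrees are pairwise disjoint and meet the path only through the edges $\{v_i,c^i_j\}$, which is where $H$ being a tree (a subgraph of the tree instance) is used.

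It remains to check that the glued labeling is valid everywhere. Inside each subtree this holds by choice. At each $v_i$ it holds by the node constraint. On the path edges and on the edges $\{v_i,c^i_j\}$ the edge constraints hold. Finally, $e_s$ and $e_t$ get $y_s$ and $y_t$.

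The only step requiring care is this gluing: one has to make sure that the node constraint at $c^i_j$, which involves the label $a_j$ on its upward halfedge, is already covered by the definition of $\Type(T_{c^i_j})$. Everything else is bookkeeping with the definitions, mirroring \Cref{lem:TreesAreVirtualTrees}.
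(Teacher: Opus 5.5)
Your proposal is correct and follows essentially the same route as the paper. The paper obtains each $\T_i$ by applying \Cref{lem:TreesAreVirtualTrees} to $v_i$ with its hanging subtrees, which is exactly your construction from the children's one-pole types. It then simply asserts that the types coincide; your restriction argument for one inclusion and gluing argument for the other spell that step out, including the node constraint at each $c^i_j$, which is covered by the definition of $\Type(T_{c^i_j})$.
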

\begin{proof}
    Use \Cref{lem:TreesAreVirtualTrees} to replace every node $v_i$ in the core path of $H$ with a virtual tree with two poles $\T_i$. The fact that the types are the same follows from the fact, that the types of the $\T_i$ are identical to the types of the nodes on the core path.
\end{proof}

For a given compress path, there might be more than one reasonable independent class, which one to choose is not clear a priori. For now we assume that we are given a function, that maps virtual compress paths of a certain length to independent classes.

In \Cref{sec:algoImpliesFunc} we will turn a fast algorithm for $\Pi$ into an independent assigner for $\Pi$. There we will specify exactly which length $\ell$ we use. For now think of $\ell$ as some constant that depends only on $\Pi$ (and not on a specific independent assigner).

\begin{definition}[independent assigner]
    For every $\ell \in \N$ an independent assigner $f$ for $\Pi$ is a mapping that assigns every virtual compress path of length between $\ell$ and $2 \ell$ a (possibly empty) independent class with respect to $\Pi$.
\end{definition}

All virtual trees—and, by extension, all virtual compress paths—may have many incoming edges, each carrying a restriction in the form of a type that limits the output labels allowed on the far half-edge.

Initially, these types are determined solely by the topology of the tree and the constraints of the LCL, and are exactly the types $\Type_0$ computed by \emph{Compute Types}. However, once independent classes on compress paths are introduced, additional types may arise $\Type_0'$. We may now repeat the process of the \emph{Compute Types} procedure while including these new types. In particular, the multiset $\La_{in}$ of a virtual tree may now include constraints induced by independent classes, leading to new types $\Type_1$ for virtual trees with one pole.

These newly obtained types can again be used to form virtual trees with two poles, concatenate them into virtual compress paths, and apply independent classes, yielding further types $\Type_1'$. This induces a hierarchy of types: starting from the base set $\Type_0$ computed without independent classes, each application of independent classes may generate a new set of types, which can in turn be closed under \emph{Compute Types} to obtain the next level.

Crucially, the total number of possible types is finite. Therefore, this iterative process must eventually terminate. If it ever produces an empty type (or a virtual tree with zero poles that admits no valid labeling), the problem becomes unsolvable. Hence, it is essential that independent assigners avoid creating such empty types.

For a given independent assigner we can check how many such iterations we can do, before we obtain an empty type (if at all).

\paragraph*{Testing procedure:}
Given $\ell \in O(1)$, an LFL $\Pi$ and an independent assigner $f$ for $\Pi$, determine how often $f$ can be applied before obtaining an empty type.

First get all of the (type,input)-tuples that can be encountered naturally, by setting 
$R \gets$ \emph{Compute Types}($\Pi, \emptyset$).

Initialize $k = 0$, then repeat the following steps until termination:
\begin{enumerate}
    \item For all values $i \in \set{0, \ldots, (s+1) \cdot |\Sin|^2 \cdot 2^{|\Sout|}}$ create all possible virtual trees with two poles $\T = (\La_{in}, x, y)$ with $|\La_{in}| = i$,  by choosing $x,y \in \Sin$, $x_{adj} \in \Sin$ and $(t,x_{far})$ from $R$ (for the elements $(x_{far}, x_{adj}, t) \in \La_{in}$).\label{item:createBipolar}
    \item Create all possible virtual compress paths with lengths between $\ell$ and $2\ell$, using only the created virtual trees with two poles.
    \item For every such virtual compress path $\Hp = (\T_j)_{1\le j \le L}$ we apply $f$ to it, to obtain an independent class $I=(X,Y)$ consisting of two subsets of outputlabels $X,Y \subset \Pow(\Sout)$. We set $R' \gets R' \cup \set{(X,x_1),(Y,y_L)}$. ($x_1$, is the input label from $\T_1 = (\La_{in1}, x_1, y_1)$ and $y_L$ the input label from $\T_L = (\La_{inL}, x_L, y_L)$)\label{step:applyF}
    \item if any of the obtained set $X,Y$ are empty terminate and output $k$.
    \item $R' \gets $ \emph{Compute Types}($\Pi, R \cup R'$).\label{step:computeTypes2}
    \item If \emph{Compute Types} outputs \emph{invalid problem}, then terminate and output $k$.
    \item If $R' = R $ terminate and output $\infty$.
    \item Set $R \gets R'$, and $k \gets k+1$. 
\end{enumerate}

We say the initial call to \emph{Compute Types} is iteration 0 and after that we say that the execution of the loop with $k=0$ is the first iteration. The execution of the loop with $k=1$ is the second iteration and so on. The intuition behind this is that we set $k=i$ after the function has successfully passed the $i$-th iteration of the loop.

We argue that the testing procedure is well behaved. Notice that because there is a finite number of types we get the following.

\begin{corollary}\label{cor:constantlyManyCompresses}
    For a LFL $\Pi$ the number of different virtual compress paths considered in the testing procedure is a constant.
\end{corollary}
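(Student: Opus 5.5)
The plan is a counting argument. The testing procedure only ever builds virtual compress paths out of virtual trees with two poles generated in Step~\ref{item:createBipolar}. So I first bound the number of such virtual trees by a constant depending only on $\Pi$, and then bound the number of sequences of them whose length lies between $\ell$ and $2\ell$.

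For the first count, recall that each virtual tree with two poles is a triple $\T = (\La_{in}, x, y)$ with $x,y \in \Sin$. Its multiset $\La_{in}$ has size $i \le (s+1)\cdot|\Sin|^2\cdot 2^{|\Sout|}$, and its elements are drawn from $\Sin \times \Sin \times \Type$, where the types come from $R \subseteq \Pow(\Sout)\times\Sin$. The set $\Pow(\Sout)\times\Sin$ has at most $2^{|\Sout|}\cdot|\Sin|$ elements, so each element of $\La_{in}$ is one of at most $N := |\Sin|^2\cdot 2^{|\Sout|}$ possibilities. A multiset of size at most $M := (s+1)\cdot N$ over an $N$-element ground set can be chosen in at most $(M+1)^{N}$ ways. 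Including the choice of the two poles, the number of virtual trees with two poles is at most
\[
B := |\Sin|^2\,(M+1)^N .
\]
This quantity depends only on $|\Sin|$, $|\Sout|$ and $s$. The bound holds uniformly in every iteration, because $R$ always stays inside the same finite universe, even though $R$ grows.

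For the second count, I would note that $\ell$ is a constant fixed in advance, as stated before the definition of independent assigners. A virtual compress path is a sequence of length $L$ with $\ell \le L \le 2\ell$, each entry one of at most $B$ virtual trees. Hence at most $\sum_{L=\ell}^{2\ell} B^L \le (\ell+1)B^{2\ell}$ virtual compress paths can appear in a single iteration. To bound the total over the whole procedure, I would observe that the set of candidates in any iteration is a subset of this fixed universe of size at most $(\ell+1)B^{2\ell}$. Alternatively, the number of iterations is itself finite: $R$ strictly grows in each non-terminating iteration and lies in a finite set, so there are at most $2^{|\Sout|}|\Sin|$ iterations.

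The main subtlety is making sure the count does not silently depend on $n$ or on the degrees. This is exactly where I would invoke \Cref{lem:finiteVirtualTrees}: the size cap on $\La_{in}$ is what rules out arbitrarily large multisets. I would also check that $\ell$ is truly independent of the assigner $f$ and of $n$; if it is chosen later as a function of $\Pi$ only, the constant remains a function of $\Pi$ alone.
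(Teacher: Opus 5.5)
Your counting argument is correct and follows the same route as the paper, which justifies the corollary only by remarking that there are finitely many types. You make explicit how this finiteness, together with the cap $|\La_{in}| \le (s+1)\cdot|\Sin|^2\cdot 2^{|\Sout|}$ and the length window $[\ell,2\ell]$, gives a bound depending only on $\Pi$ and $\ell$; note that this cap is written into Step~1 of the testing procedure, so the count does not need \Cref{lem:finiteVirtualTrees}, which only explains why the cap loses no types.
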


As a result also this procedure terminates in finite time.

\begin{corollary}
    The testing procedure terminates.
\end{corollary}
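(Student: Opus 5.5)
The plan is to show that every loop iteration of the testing procedure either terminates or strictly enlarges a finite set. Termination then follows by a monotonicity argument.

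\textbf{Step 1: each iteration is a finite computation.} By \Cref{cor:constantlyManyCompresses}, the first two steps enumerate only constantly many virtual trees with two poles and constantly many virtual compress paths. This holds because $|\La_{in}|$ is bounded by $(s+1)\cdot|\Sin|^2\cdot 2^{|\Sout|}$ and the lengths lie between $\ell$ and $2\ell$. Applying $f$ to each of these paths is therefore a finite operation; we treat $f$ as a given finite table on this finite domain. The inner call to \emph{Compute Types} terminates by the earlier corollary, since its proof only used that $R$ grows inside the finite set $\Pow(\Sout)\times\Sin$, and that argument does not depend on the initial set. All remaining checks are finite comparisons. So every single iteration halts.

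\textbf{Step 2: the sets $R$ form a monotone chain.} In step~\ref{step:computeTypes2} we call \emph{Compute Types} with initial set $R\cup R'$. That procedure only ever performs updates of the form $R\gets R\cup R'$, so its output contains its input, and hence the new set contains the old $R$. If the new set equals the old one, the procedure stops and outputs $\infty$. Otherwise $R$ is replaced by a strict superset. Every $R$ is a subset of the finite set $\Pow(\Sout)\times\Sin$, which has at most $2^{|\Sout|}\cdot|\Sin|$ elements. Hence at most that many strict enlargements are possible, and the loop runs at most $2^{|\Sout|}|\Sin|+1$ times before one of the termination branches fires.

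\textbf{Main obstacle.} The only subtle point is monotonicity: one must check that the variable $R'$, which accumulates the independent classes, and the reassignment $R'\gets$ \emph{Compute Types}$(\Pi,R\cup R')$ never discard elements of $R$. I would verify this by inspecting the update rule of \emph{Compute Types}. If the intended reading instead resets $R'$ at each iteration, the same bound still applies, because the output still contains $R$.
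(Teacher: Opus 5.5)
Your proof is correct and uses the same argument as the paper: every iteration that does not stop replaces $R$ by a strict superset inside the finite set $\Pow(\Sout)\times\Sin$, so the loop can run only finitely many times. You also spell out two points the paper's one-line proof leaves implicit: each single iteration halts (by \Cref{cor:constantlyManyCompresses} and termination of \emph{Compute Types}), and \emph{Compute Types} returns a superset of its input. For the latter, you rightly read the stopping test inside \emph{Compute Types} as ``$R$ did not grow'' ($R'\subseteq R$) rather than literally as $R=R'$.
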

\begin{proof}
    In every iteration either the set $R$ grows by one, or (if $R' = R$) the procedure terminates. Since the number of possible types and the number of inputlabels are finite, the procedure must terminate.
\end{proof}

We may now use the testing procedure to define our notion of what a good independent assigner is.

\begin{definition}
    We say that an independent assigner $f$ for $\Pi$ is $k$-good (resp. $\infty$-good), if the testing procedure returns $k$ (resp. $\infty$) when testing $f$.
\end{definition}

We next describe how to solve any LFL given a good independent assigner. To do this we will assign virtual trees and virtual compress paths to the nodes of our input graph.
For this we will refer to what is known in the literature as a rake and compress decomposition. Each rake will correspond to a virtual tree with 1 or 0 poles and each compress will correspond to a virtual compress path.

In a Rake operation all nodes of degree $\le 1$ are removed, breaking ties arbitrarily when two (otherwise) isolated nodes have an edge. A Compress($\ell$) operation is parameterized by some value $\ell$ and removes all paths of degree 2 nodes of length at least $\ell$. 

On a high level a $(\gamma, \ell, k)$-decomposition is then obtained by $k$ repetitions of the following:
\begin{itemize}
    \item Perform $\gamma$ Rakes
    \item Perform a single Compress($\ell$)
\end{itemize}
After which we finish by performing another $\gamma$ rakes. 

The formal definition is analogous to the above algorithm, with the restriction that after the last $\gamma$ rakes the graph must be empty.
\begin{definition}[R\&C Decomposition]
A $(\gamma, \ell, k)$-decomposition of a graph $G=(V,E)$, is a partition of the nodes into $2k+1$ layers. The \emph{Rake Layers} $R_0, \ldots, R_{k}$ and the \emph{Compress Layers} $C_1, \ldots, C_k$. Each Rake Layer $R_i$ consists of $\gamma$ sublayers $R_{i,1}, \ldots, R_{i,\gamma}$. The layers satisfy the following constraints:
\begin{enumerate}
    \item The subgraph induced by the nodes in each $R_i$ has components of diameter at most $O(\gamma)$.
    \item Each sublayer $R_{i,j}$ consists only of isolated nodes. Each such isolated node has at most one neighbor in a higher layer.
    \item The subgraph induced by the nodes in each $C_i$ consists only of isolated paths of length between $\ell$ and $2\ell$. The endpoints of each such path have exactly one neighbor in a higher layer and no other nodes have neighbors in higher layers.
\end{enumerate}
Where higher and lower refers to the following order on layers: $R_0 < C_1 < R_1 < C_2 < \ldots < C_k < R_k$ ($R_{i,1} < R_{i,2} < \ldots < R_{i,\gamma}$).
\end{definition}

The following results due to \cite{CP19timeHierarchy} give optimal algorithms to compute these decompositions for different values of $k$ and any constant $\ell$.

\begin{lemma}[\cite{CP19timeHierarchy}]\label{lem:polyDecomp}
For $\ell \in O(1)$ and any positive integer $k$, we can set $\gamma \in O(n^{1/(k+1)})$ and compute a $(\gamma, \ell , k)$-decomposition in $O(kn^{1/(k+1)})$ rounds.
\end{lemma}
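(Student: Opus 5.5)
The plan is to follow the rake-and-compress algorithm of Chang and Pettie~\cite{CP19timeHierarchy}: peel the tree in phases and bound the total round count by the number of phases times the work per phase. Each phase first does $\gamma$ Rakes and then one Compress($\ell$). The key quantitative claim is the standard one: after each phase the number of remaining nodes shrinks by a factor of roughly $\gamma/\ell$. So with $\gamma = \Theta(n^{1/(k+1)})$, after $k$ phases the remaining forest has at most about $n/\gamma^{k} = O(\gamma)$ nodes. The final $\gamma$ Rakes then empty the graph, which gives the $k+1$ rake layers and $k$ compress layers.

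The counting argument runs as follows. Fix the forest $G_i$ that remains before phase $i$. Contract each maximal path of degree-$2$ nodes and look at the resulting tree of nodes of degree $\neq 2$. Since leaves outnumber degree-$\ge 3$ nodes, the count of both is governed by the leaves. I will show that any node surviving $\gamma$ Rakes has a remaining subtree "behind" it on every side of depth at least $\gamma$. More precisely, each surviving node of degree $\ne 2$, and each surviving long degree-$2$ path, can be charged $\Omega(\gamma)$ raked nodes. The reason is that a node survives $j$ rakes only if it lies on paths of length $\ge j$ toward at least two directions.

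The charging goes through the contracted tree. Every leaf of the surviving forest must have had a raked chain of length $\gamma$ hanging off it. The surviving degree-$2$ paths of length $\ge \ell$ are removed by Compress. Paths shorter than $\ell$ connect degree-$\ne2$ nodes, so their total size is at most $\ell$ times the number of those nodes. This yields $|G_{i+1}| \le O(\ell/\gamma)\,|G_i|$. Iterating $k$ times gives $|G_k| \le n\,(O(\ell)/\gamma)^k$. Since $\ell,k = O(1)$, choosing the constant in $\gamma = c\,n^{1/(k+1)}$ large enough makes this at most $\gamma$. Consequently, any component of $G_k$ has diameter below $\gamma$, and $\gamma$ final Rakes remove everything.

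For the layer properties, I take $R_{i,j}$ to be the nodes removed by the $j$-th rake of phase $i$, with ties broken by ID when two adjacent degree-$1$ nodes form an isolated edge. Taking $C_i$ to be the compressed paths, the structural conditions then hold: isolated nodes, at most one higher neighbor, path endpoints with exactly one higher neighbor, and rake components of diameter $O(\gamma)$. The one adjustment is splitting compress paths longer than $2\ell$ into pieces of length in $[\ell,2\ell]$. I assign every $\ell$-th node of such a path to be an extra rake-type node at the start of the next layer, or cut the path by ID-based ruling sets, which here can be done deterministically in $O(1)$ rounds relative to position since only lengths matter. Note that this splitting is compatible with the shrinkage bound.

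For the running time, each Rake is one round and each Compress needs $O(\ell)=O(1)$ rounds plus the splitting. The total is $k(\gamma+O(1))+\gamma = O(k n^{1/(k+1)})$. Here each node knows $n$, so $\gamma$ is computable locally. I expect the main obstacle to be the shrinkage claim $|G_{i+1}| = O(\ell/\gamma)|G_i|$, in particular charging surviving high-degree nodes correctly in unbounded-degree trees. The fix I would try first is to charge each surviving degree-$\ge 3$ node to a distinct leaf of the contracted tree, and each leaf to its own raked chain of $\gamma$ nodes.
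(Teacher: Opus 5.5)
Your overall plan matches the paper, which imports the Chang--Pettie rake-and-compress decomposition. Your shrinkage argument is sound: charge each leaf of the post-rake forest to a disjoint raked chain of $\gamma$ nodes, bound degree-$\ge 3$ nodes by leaves, and bound short degree-$2$ paths by $\ell$ times the number of degree-$\neq 2$ nodes. This gives $|G_{i+1}| \le O(\ell/\gamma)\,|G_i|$ with no dependence on $\Delta$. So unbounded degree is not the obstacle where you expected it.

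\textbf{The gap is in the Compress step.} A maximal degree-$2$ path removed by Compress($\ell$) can have length $\Theta(n)$. To meet the definition, you must cut it into pieces of length in $[\ell,2\ell]$, separated by single nodes placed in a higher layer. Neither of your methods does this in $O(1)$ rounds. A node in the middle of a long path cannot learn its position, so ``every $\ell$-th node'' is not locally computable. Choosing separators with spacing in $[\ell,2\ell]$ is a ruling-set (symmetry-breaking) task on a path. Such a set yields an $O(1)$-coloring in $O(\ell)$ further rounds, so by Linial's lower bound it needs $\Omega(\log^* n)$ deterministic rounds.

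This is exactly the point the paper must handle for unbounded degrees. The original construction uses a global distance-$\ell$ coloring with $O(1)$ colors, which does not exist when $\Delta$ is unbounded. The paper instead computes this coloring only on the subgraph induced by the removed paths, which has maximum degree $2$. That takes $O(\log^* n)$ rounds, and the coloring is then used to place the cuts. With this fix each phase costs $\gamma + O(\ell) + O(\log^* n)$ rounds, so the total $O\bigl(k(n^{1/(k+1)} + \log^* n)\bigr) = O(kn^{1/(k+1)})$ still holds.

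Your claim that splitting is ``compatible with the shrinkage bound'' also needs a justification. There can be about $|G_i|/\ell$ separators, far more than $O(\ell/\gamma)\,|G_i|$. The fix is to observe that separators are isolated once $C_i$ is removed. They are then raked in the first sublayer of the next phase without changing any other node's degree, so they can be excluded from the node count in the recursion.
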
  

\begin{lemma}[\cite{CP19timeHierarchy}]\label{lem:logDecomp}
For $\ell \in O(1)$ and $\gamma \in O(1)$ then by setting $L \in O(\log n)$ a $(\gamma, \ell , L)$-decomposition can be computed in $O(\log n)$ rounds.
\end{lemma}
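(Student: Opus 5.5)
The plan is to re-run the Chang--Pettie peeling argument and check that no step uses a bound on the maximum degree; the only quantities that matter are node counts in the remaining forest. Each of the $L$ iterations will perform $\gamma$ Rakes and then one Compress($\ell$). I will show that one iteration removes a constant fraction $c = c(\ell) > 0$ of the nodes still present. Taking $L = \lceil \log_{1/(1-c)} n \rceil + 1 \in O(\log n)$ then guarantees that the forest is empty at the end. This is in fact already true after a single Rake followed by one Compress($\ell$).

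The counting step goes as follows. Let $F$ be the current forest on $n'$ nodes, and let $V_1, V_2, V_{\ge 3}$ be the nodes of degree $\le 1$, degree $2$, and degree $\ge 3$. In a forest, $|V_{\ge 3}| < |V_1|$, and this handshake argument holds for any degrees, in particular unbounded ones. Suppress the degree-$2$ nodes. Each maximal chain of degree-$2$ nodes becomes an edge of a forest on $V_1 \cup V_{\ge 3}$, so there are fewer than $2|V_1|$ chains. Chains of length below $2\ell$ (the threshold I actually compress at, see the next paragraph) therefore contain fewer than $4\ell |V_1|$ nodes.

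One Rake removes all of $V_1$, and the Compress step removes every node of the long chains. This gives two cases. If $|V_2| \ge 8\ell|V_1|$, at least $|V_2|/2 \ge n'/(4+o(1))$ nodes are removed. Otherwise at least $|V_1| \ge n'/(2+8\ell)$ nodes are removed, using $n' \le 2|V_1| + |V_2|$. Either way a constant fraction goes.

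Next come the layer constraints. A Rake sublayer $R_{i,j}$ consists of the current nodes of degree $\le 1$. A node of degree $\le 1$ has at most one remaining neighbor, and that neighbor is in a higher layer, as required. Two adjacent degree-$1$ nodes are separated by comparing IDs. With $\gamma$ constant, components of $R_i$ have diameter $O(\gamma)$.

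For the Compress layer, a maximal chain of length $\ge 2\ell$ must be cut into subpaths of length in $[\ell, 2\ell]$. I would choose separator nodes on each chain so that consecutive separators are between $\ell$ and $2\ell$ apart; this is an $(\ell,2\ell)$-ruling set on a path. The pieces between separators go into $C_i$, and the separators are promoted to the next rake layer. A separator's two path-neighbors are both in $C_i$, so after the Compress it has degree at most $0$ among the remaining nodes and is raked in $R_{i,1}$. Each piece's endpoints then have exactly one higher neighbor: either a separator or the original chain endpoint's outside neighbor. Inner nodes have none.

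I expect the round complexity to be the main obstacle. Rakes and the bookkeeping cost $O(\gamma+\ell) = O(1)$ per iteration, independent of degree, since every node only looks at its current neighbors. The ruling set on chains, however, costs $O(\log^* n)$ via Cole--Vishkin, which would naively give $O(\log n \log^* n)$. My first attempt would be to run Cole--Vishkin only once, as a preprocessing step, on the subgraph of original edges. In each iteration one can then compute the ruling set in $O(\ell)$ rounds from the precomputed IDs-to-colors reduction restricted to the current chain. Failing that, I would use the pipelining argument of \cite{CP19timeHierarchy}: ruling-set computations on chains that become active in different iterations can be overlapped, so the $O(\log^* n)$ term is paid additively. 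Either way this yields $O(\log n)$ rounds overall.
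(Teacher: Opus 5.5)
\textbf{The gap is in the round-complexity step, where you expected it.}

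Your peeling count is correct and degree-free. That is more than the paper checks: it takes the bounded-degree lemma of \cite{CP19timeHierarchy} as given, and argues only that its single degree-dependent ingredient can be handled. That ingredient is the constant-color distance-$\ell$ coloring used to cut long paths. Two small repairs to your count, both affecting only constants:
\begin{itemize}
\item After the rakes, a chain end adjacent to $V_1$ drops to degree $1$. So chains of length $2\ell$ or $2\ell+1$ in $F$ may not be compressed; count chains of length below $2\ell+2$ as short.
\item In a single-edge component the ID tie-break removes only one of the two $V_1$ nodes, so halve the bound on removed $V_1$ nodes.
\end{itemize}

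\textbf{Your first plan fails.} Running Cole--Vishkin once and reusing it is the bounded-degree argument, and it breaks here. A coloring computed before the chains exist must be distance-$\ell$ proper along whichever two neighbors of a high-degree node end up on a chain. In effect it must be a distance-$\ell$ coloring of the tree. For $\ell\ge 2$, a node together with its $\Delta$ neighbors already needs $\Delta+1$ colors, so no constant palette exists; the paper points out exactly this. Restricting precomputed Cole--Vishkin colors to the current chain also does not work. Those colors depend on the IDs $\Theta(\log^* n)$ steps along the specific path, and that path is unknown until the chain forms. You would pay $\Theta(\log^* n)$ per iteration again, for $O(\log n\log^* n)$ in total.

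\textbf{The pipelining fallback is the right fix and matches the paper, but you only assert it.} State as the key step the fact you already noted. Separators are interior chain nodes, and both of their neighbors lie in $C_i$. Hence the set of nodes surviving iteration $i$, and therefore every later rake and compress, does not depend on where the separators are placed. This gives the following schedule:
\begin{itemize}
\item A node detects in $O(\ell)$ rounds that it lies on a long chain, and then drops out of the peeling immediately.
\item The peeling continues at $O(\gamma+\ell)$ rounds per iteration.
\item In parallel, each chain computes its $(\ell,2\ell)$-ruling set from IDs in $O(\log^* n)$ rounds on the path itself, where degrees play no role.
\end{itemize}
Every node then knows whether it is in $C_i$ or $R_{i,1}$ after $O(L(\gamma+\ell)+\log^* n)=O(\log n)$ rounds.
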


Note that both results were originally proven for bounded-degree trees, but they extend to unbounded-degree trees using ideas from \cite{nodeAvgTrees} (Section 6).

\begin{corollary}
    \Cref{lem:polyDecomp,lem:logDecomp} also hold in the unbounded-degree setting.    
\end{corollary}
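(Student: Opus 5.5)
The plan is to re-examine the two ingredients of the proofs in \cite{CP19timeHierarchy} and check that neither depends on the maximum degree. The first ingredient is the \emph{local implementability} of each Rake and Compress($\ell$) step. The second is the \emph{counting argument} that bounds how many steps are needed before the graph is empty. Where needed I would follow the adaptation in \cite{nodeAvgTrees} (Section 6). The LOCAL model places no bound on message size, so a node of large degree can talk to all its neighbors in one round. The only way degree could matter is through the combinatorics, not the communication.

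\textbf{Local implementability.} A Rake removes all nodes of current degree $\le 1$. It costs one round, since each node only needs to know how many of its neighbors are still present. Ties between two adjacent remaining nodes are broken by ID. Sublayer $R_{i,j}$ is then the set of nodes removed in the $j$-th rake of phase $i$. Each such node has at most one neighbor in a higher layer, regardless of how many lower-layer neighbors it has. Every node of $R_i$ is removed within $\gamma$ rakes, so each component of $R_i$ has diameter $O(\gamma)$. This follows from its construction as a tree of height at most $\gamma$ hanging off a single exit edge, and high degree does not affect this.

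A Compress($\ell$) only involves nodes of current degree exactly $2$, and these induce disjoint paths. In $O(\ell)$ rounds a degree-$2$ node learns whether its maximal path has length at least $\ell$. Such paths must be cut into subpaths of length between $\ell$ and $2\ell$, with the cut nodes left for the next rake phase. For this I would compute an $(\ell,2\ell)$-ruling set on each path by Cole--Vishkin, in $O(\ell\log^* n)$ rounds. This happens entirely on paths, so it is degree-independent. Hence one phase costs $O(\gamma+\ell\log^* n)$ rounds.

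\textbf{Counting argument (expected main obstacle).} I need to show that after each phase of $\gamma$ rakes and one Compress($\ell$), the number of remaining nodes drops by a factor of roughly $\gamma$ in the polynomial regime, or by a constant factor when $\gamma,\ell\in O(1)$. This gives $k$ phases with $\gamma=O(n^{1/(k+1)})$ in the first case and $L=O(\log n)$ phases in the second. I would argue on the \emph{compressed skeleton} of the remaining forest $F$, meaning $F$ with every maximal degree-$2$ path contracted.

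In a tree, the number of nodes of degree $\ge 3$ is less than the number of leaves. This inequality holds for arbitrary degrees, so it is where the tree structure, not a degree bound, is used. Each surviving node after $\gamma$ rakes lies on a path in which every node has a subtree of depth at least $\gamma$ hanging from it. Charging along such paths as in \cite{CP19timeHierarchy}, each skeleton node or long degree-$2$ path surviving the phase is paid for by $\Omega(\gamma)$ removed nodes. Short degree-$2$ paths of length below $\ell$ contribute only an $O(\ell)$ factor to the skeleton.

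The point to verify is that a high-degree skeleton node is charged against its many leaf-side subtrees rather than being counted once. I expect this to be handled by charging each skeleton edge, not each skeleton node. I would check this carefully, since it is exactly where the bounded-degree proof might implicitly use $\Delta=O(1)$. With the per-phase shrinkage established, the round bounds $O(k n^{1/(k+1)})$ and $O(\log n)$ follow as in the original lemmas.
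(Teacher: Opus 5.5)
\textbf{Gap in the round count for \Cref{lem:logDecomp}.} You correctly identify the one degree-sensitive step, and it is the same one the paper singles out: cutting long compressed paths into pieces of length between $\ell$ and $2\ell$, which can be done by symmetry breaking on the paths alone. The problem is your round accounting. You run Cole--Vishkin inside every phase, so a phase costs $O(\gamma+\ell\log^* n)$. For \Cref{lem:logDecomp} there are $L=\Theta(\log n)$ phases, so the total is $\Theta(\log n\cdot\log^* n)$, not $O(\log n)$.

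Saying the bound then follows ``as in the original lemmas'' does not work. The bounded-degree proof pays $O(\log^* n)$ only once: it precomputes a distance-$\ell$ colouring of the whole graph with $O(1)$ colours and reuses it in every compress step. That precomputation is exactly what is unavailable when $\Delta$ is unbounded. (For \Cref{lem:polyDecomp} your accounting is fine, since $k$ is constant and $\gamma=\Theta(n^{1/(k+1)})$ dominates $\log^* n$.)

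\textbf{The missing idea.} The choice of cut points has no influence on the rest of the decomposition, for three reasons. First, which nodes are compressed in phase $i$ is determined by degrees in the remaining forest alone. Second, the two outside neighbours of a long path lose their path-neighbour no matter how the path is split. Third, an interior separator node has both neighbours in $C_i$, so it is isolated in the remaining forest; it can be placed in $R_{i,1}$ without changing any other node's degree. Hence you can first compute all layers with no symmetry breaking at all, in $O(L(\gamma+\ell))$ rounds. Afterwards you split all compress paths, which are vertex-disjoint paths, simultaneously in a single $O(\ell\log^* n)$-round step. The overhead is then additive, which is precisely the paper's argument.

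\textbf{The counting step.} The part you flag as the main obstacle is not degree-sensitive and needs no special charging for high-degree nodes. However, your sketch of it is inaccurate: interior nodes of a long surviving path have nothing hanging off them. The correct statement is about leaves. Each leaf of the forest left after $\gamma$ rakes owns a private chain of $\gamma$ removed nodes, so there are at most $n/\gamma$ leaves. In any forest, the number of nodes of degree at least $3$ is smaller than the number of leaves. Likewise, the number of maximal degree-$2$ paths is smaller than the number of remaining nodes of degree other than $2$. None of this uses $\Delta$, which is why the paper does not revisit it.
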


\begin{proof}
    The Compress procedure requires a distance-$\ell$ coloring with a constant number of colors. For bounded-degree graphs, this can be computed in $O(\log^* n)$ rounds. In the unbounded-degree setting such a coloring might not exist.

    However, the nodes removed by Compress do not depend on this coloring; only the remaining nodes require it. We can therefore compute the distance coloring locally in the subgraph induced by the removed nodes, which consists only of long paths. Computing a distance coloring on such paths still takes $O(\log^* n)$ rounds. 

    Since the rest of the graph does not depend on this coloring, this results only in an additive $O(\log^* n)$ overhead, leaving the asymptotic complexities of \Cref{lem:polyDecomp,lem:logDecomp} unchanged.
\end{proof}

With that out of the way, we can refocus on how we use these decompositions. We argue that given a $k$-good (,or $\infty$-good) independent assigner for $\Pi$ and a R\&C decomposition of $G$, we can solve $\Pi$ in $O(\gamma \cdot k)$ (, or $O(\log n)$) rounds. For this assume that all edges $\set{u,v}$, where $u$ has a strictly lower layer than $v$ are oriented from $u$ to $v$. Note that these are all edges, except the edges between nodes in a compress path. 

Based on the layer ordering, we inductively assign virtual trees to nodes and types to oriented edges. Specifically, every rake node is assigned a virtual tree with $0$ or $1$ poles, every compress path is assigned a virtual compress path, and every oriented edge $e$ is assigned a type $t_e \subset \Sout$.

\begin{itemize}
    \item \textbf{Rake layers $R_{0,i}$.}
    Consider a node $v$ in layer $R_{0,i}$ for some $i$. Let $T_v$ be the subtree rooted at $v$ consisting of all nodes reachable through strictly lower layers $R_{0,<i}$. Since $v \in R_{0,i}$, it has at most one neighbor in a higher layer.

    If $v$ has no outgoing edge, then $T_v$ is an isolated component of diameter at most $\gamma$. In this case, $v$ can brute-force a solution to $\Pi$, and we ignore such nodes.

    Otherwise, let $e_{out}$ be the unique outgoing edge of $v$. By \Cref{lem:TreesAreVirtualTrees}, the subtree $T_v$ induces a virtual tree $\T_v$ with one pole. We assign $\T_v$ to $v$ and assign the type $\Type(\T_v)$ to the oriented edge $e_{out}$.

    \item \textbf{Compress layers $C_i$.}
    Consider a maximal connected path $P \subset C_i$, and assume that all oriented edges whose source lies in a lower layer than $C_i$ already have types assigned.

    Each node $v \in P$ has exactly two neighbors that are not in lower layers. Let $e_{left}$ and $e_{right}$ be the corresponding halfedges, with input labels $x_{left}$ and $x_{right}$. Thus, $v$ corresponds to a virtual tree with two poles.

    For every edge $\{u,v\}$ where $u$ is a lower-layer neighbor of $v$, the oriented edge is already assigned a type $t_u$ by induction. Let $x^{(u)}_{far}$ and $x^{(u)}_{adj}$ be the input labels on the halfedges $(\{u,v\},u)$ and $(\{u,v\},v)$, respectively. Collect all tuples $(x^{(u)}_{adj}, x^{(u)}_{far}, t_u)$ into a multiset $\La_{in}$.

    Define the virtual tree
    \[
        \T_v = (\La_{in}, x_{left}, x_{right}).
    \]
    By \Cref{lem:finiteVirtualTrees}, there exists a virtual tree $\T'_v = (\La_{in}', x_{left}, x_{right})$ with $\La_{in}' \subset \La_{in}$ and
    \[
        |\La_{in}'| \le (s+1)\cdot|\Sin|^2\cdot 2^{|\Sout|}.
    \]
    Let $\Hp_P = (\T'_v)_{v \in P}$ be the resulting virtual compress path. Since $P \subset C_i$, the length of $\Hp_P$ lies between $\ell$ and $2\ell$. We therefore apply the independent assigner $f$ to $\Hp_P$ and assign the resulting types to the two boundary edges $e_{left}$ and $e_{right}$.

    \item \textbf{Higher rake layers $R_{i,j}$ with $i>0$.}
    Consider a node $v \in R_{i,j}$ with $i>0$. By definition, $v$ has at most one neighbor in a higher layer. Let $e_{out}$ be this edge if it exists, and let $x$ be the input label on the halfedge $(e_{out},v)$.

    All remaining neighbors of $v$ lie in lower layers, and the corresponding edges are oriented toward $v$. By induction, each such edge $\{u,v\}$ has an assigned type $t_u$. Let $x^{(u)}_{far}$ and $x^{(u)}_{adj}$ denote the input labels on $(\{u,v\},u)$ and $(\{u,v\},v)$, respectively, and collect all triples $(x^{(u)}_{far}, x^{(u)}_{adj}, t_u)$ into a multiset $\La_{in}$.

    If $e_{out}$ exists, we assign to $v$ the virtual tree $\T_v = (\La_{in}, x)$ and assign $\Type(\T_v)$ to $e_{out}$. If $e_{out}$ does not exist, we assign the virtual tree $\T_v = (\La_{in})$ with zero poles.
\end{itemize}

The assignment of virtual trees and edge types depends only on information from lower layers. Since there are at most $2k+1$ layers and each rake layer consists of at most $\gamma$ sublayers, the entire assignment process completes in $O(\gamma \cdot k)$ rounds.

\begin{observation}\label{obsv:runtimePropagation}
    The assignment of virtual trees and types can be performed in $O(\gamma \cdot k)$ rounds in the deterministic LOCAL model.
\end{observation}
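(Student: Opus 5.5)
We argue by induction along the layer order $R_{0,1} < \dots < R_{0,\gamma} < C_1 < R_{1,1} < \dots < R_k$. The claim to maintain is that once every node of layers strictly below some layer has fixed its virtual tree, and every oriented edge leaving those nodes has fixed its type, all nodes of that layer can fix theirs within $O(1)$ further rounds (or $O(\ell)=O(1)$ rounds for a compress layer). Summing over the at most $(k+1)\gamma$ rake sublayers and $k$ compress layers then gives $O(\gamma k + \ell k) = O(\gamma\cdot k)$ rounds, as stated. We assume the decomposition itself is already given, as in the surrounding discussion. Each node knows its own layer from the decomposition, so it knows which incident edges point to lower layers. It simply waits until all of those edges carry an assigned type.

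For a rake node $v \in R_{i,j}$, its lower-layer neighbours lie in strictly lower sublayers or layers. By induction they have sent their types to $v$ along the oriented edges. Node $v$ assembles $\La_{in}$ from these types and from the input labels on both halfedges of each incoming edge, all of which are known after one round. It then computes $\Type(\T_v)$ by \Cref{lem:typeComputation}; local computation is free in LOCAL, and \Cref{lem:finiteVirtualTrees} even keeps it finite. Finally it forwards the type along $e_{out}$ in one round, so each sublayer costs $O(1)$ rounds.

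For a compress path $P \subset C_i$, every node of $P$ first builds its two-pole virtual tree $\T'_v$ in one round, using the types received from lower layers. The nodes then gather the whole path. The path has length at most $2\ell$, so this takes $O(\ell)$ rounds. Both endpoints can then compute the same virtual compress path $\Hp_P$ and apply $f$ to it. This yields the types on $e_{left}$ and $e_{right}$, which are forwarded to the higher-layer neighbours.

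The one subtlety is that the two endpoints must agree on the orientation of $\Hp_P$, since $f$ is applied to a sequence. We resolve this with a canonical rule: the endpoint with the smaller ID plays the role of $s$. Because $\ell \in O(1)$, each compress layer costs $O(1)$ rounds. This orientation issue is the only step requiring care; everything else is bookkeeping along the layer order.
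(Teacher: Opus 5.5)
Your proposal is correct and takes the same approach as the paper: the assignment is processed layer by layer, with $O(1)$ rounds per rake sublayer and $O(\ell)=O(1)$ rounds per compress layer, giving $O(\gamma\cdot k)$ in total. Fixing the orientation of $\Hp_P$ by endpoint IDs is a detail the paper leaves implicit, and it is harmless because the testing procedure considers virtual compress paths in both orientations.
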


We now give a propagation argument analogous to the one used in the $O(D)$ algorithm.
Virtual trees with $0$ poles can locally choose a labeling that respects all incident
types, and this choice can then be propagated downwards through the layers.
At each step, output labels are chosen from the types assigned to the corresponding
halfedges. To ensure that this process is always possible, all assigned types must
be non-empty. This is exactly where the $k$-goodness of the function $f$ is required.

\begin{lemma}\label{lem:assignedTreesAreGood}
If a $k$-good (or $\infty$-good) function $f$ is used to assign independent classes to
virtual compress paths, then all assigned types are non-empty and all virtual trees
with $0$ poles are good.
\end{lemma}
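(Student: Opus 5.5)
We argue by induction along the layer order $R_0 < C_1 < R_1 < \dots < C_k < R_k$. The invariant is that, after processing a layer of level $i$, every (type, input) pair $(t_e, x)$ assigned to an oriented edge appears in the set $R$ held by the testing procedure after its $i$-th iteration. For $i=0$ this set is the output of \emph{Compute Types}$(\Pi,\emptyset)$. Since $f$ is $k$-good, the testing procedure completes $k$ iterations without terminating early, or returns $\infty$. In either case no empty type ever enters $R$ during these iterations, and every virtual tree with $0$ poles built from $R$ is good. Once the invariant holds, both claims of the lemma follow directly.

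\emph{Base case: layers $R_{0,j}$.} We induct on $j$. A node $v\in R_{0,j}$ with an outgoing edge receives the virtual tree $\T_v$. By \Cref{lem:TreesAreVirtualTrees}, this tree is built from types of actual subtrees, and those types lie in $R$ by \Cref{lem:constructTrees}. By \Cref{lem:finiteVirtualTrees}, $\T_v$ has the same type as a reduced tree $\T_v'$ with $|\La_{in}'|\le (s+1)|\Sin|^2 2^{|\Sout|}$. The reduced tree is assembled from elements of $R$, so \emph{Compute Types} enumerated it. Because \emph{Compute Types} did not output \emph{invalid problem}, $\Type(\T_v)=\Type(\T_v')\neq\emptyset$, and $(\Type(\T_v),x)\in R$.

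\emph{Compress layers $C_i$.} Each node of a path $P\subseteq C_i$ gets a two-pole virtual tree whose incoming types lie in the current $R$ by the invariant. Its reduction $\T_v'$ is therefore one of the trees enumerated in step~\ref{item:createBipolar} of iteration $i$. The path $\Hp_P$ has length between $\ell$ and $2\ell$, so it is among the virtual compress paths to which $f$ is applied in step~\ref{step:applyF}. The resulting sets $X,Y$ are non-empty, because otherwise the procedure would have terminated with output $i-1<k$. They are added to $R'$ together with the correct input labels.

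\emph{Rake layers $R_{i,j}$ with $i\ge1$.} The argument is the same as the base case. The incoming types now come from $R\cup R'$, which is closed under \emph{Compute Types} in step~\ref{step:computeTypes2}. Again the reduced virtual tree is enumerated there. Its type is non-empty, or the tree is good when it has zero poles, since otherwise \emph{invalid problem} would have been output and the procedure would have returned $i-1$. The new pair enters the updated $R$, which preserves the invariant.

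\emph{Main obstacle.} The delicate point is the layer ordering. The rake nodes of layer $R_i$ are only guaranteed to be covered by the closure after $i$ applications of $f$. They can never require a type from iteration $k+1$, because the decomposition has only $k$ compress layers. This needs careful index bookkeeping so that \emph{$k$ successful iterations} matches exactly the layers $C_1,\dots,C_k,R_k$.

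A secondary subtlety is that types are only monotone under taking subsets. The independent classes $(X,Y)$ are genuine types that restrict the far half-edges, so we must check that \Cref{lem:finiteVirtualTrees} still applies to virtual trees whose incoming types come from $f$. The proof of that lemma never uses where $t$ came from, so it does. For $\infty$-good $f$, the same argument works with any number of layers, since $R$ reaches a fixed point and every later iteration stays inside it.
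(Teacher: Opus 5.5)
Your proposal is correct and follows essentially the same route as the paper. You induct along the layer order with the invariant that every assigned (type, input) pair lies in the testing procedure's set $R$ after iteration $i$, reduce each assigned virtual tree via \Cref{lem:finiteVirtualTrees} to one the procedure actually enumerates, and then use $k$-goodness (resp.\ the fixed point for $\infty$-goodness) to rule out empty types and non-good $0$-pole trees; you cite \Cref{lem:constructTrees} in the base case where the paper cites \Cref{lem:typeComputation}, and you spell out the iteration bookkeeping more explicitly, but the argument is the same.
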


\begin{proof}
We prove by induction on $i$ that after iteration $i$ of the testing procedure, all
types assigned to oriented edges starting in layers $R_i$ or $C_i$, together with
their corresponding input labels, are contained in the set $R$.

For $i=0$, only rooted trees are considered. By \Cref{lem:typeComputation}, the assigned
types are exactly those computed by \emph{Compute Types}, which form the initial set $R$.

Assume the claim holds for all layers below $C_i$, that is any (type,input)-pair of a node with layer $<C_i$ is already contained in $R$ in the $i$-th iteration.
Since the virtual trees with two poles that are assigned to any node in $C_i$ use only these (type,input)-pairs, each such virtual tree is also constructed in Step~\ref{item:createBipolar} of the testing procedure loop.
Hence, all corresponding virtual compress paths are considered,
and for $f(\Hp_P) = (X,Y)$ the tuples $(X,x)$ and $(Y,y)$ are added to $R$ by the end of
iteration $i$.

Similarly, for nodes in $R_i$, \Cref{lem:finiteVirtualTrees} guarantees that each assigned virtual tree corresponds to one with at most $(s+1)\cdot |\Sin|^2 \cdot 2^{|\Sout|}$ incoming labelings. 
All such trees are considered by \emph{Compute Types} in Step~\ref{step:computeTypes2}, so their types are included in
$R$.

Thus, all assigned virtual trees and types are examined by the testing procedure within
the first $k$ iterations. Since $f$ is $k$-good, none of these types is empty and all
virtual trees with $0$ poles are good; otherwise the procedure would terminate earlier.
The argument for $\infty$-good functions follows analogously.
\end{proof}

We now formalize the downward propagation argument.

\begin{lemma}\label{lem:fastAlgo}
    If a $k$-good independent assigner exists for $\Pi$, for any constant $\ell \in O(1)$, then $\Pi$ can be solved in $O(n^{1/(k+1)})$ rounds in the deterministic LOCAL model.

    If a $\infty$-good independent assigner exists for $\Pi$, for any constant $\ell \in O(1)$, then $\Pi$ can be solved in $O(\log n)$ rounds in the deterministic LOCAL model.
\end{lemma}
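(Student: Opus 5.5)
The plan is to combine the R\&C decomposition with the type assignment and then run a downward (top-to-bottom) labeling phase, much like the $O(D)$ algorithm.

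\textbf{Setup.} For the $k$-good case, I invoke \Cref{lem:polyDecomp} with the constant $\ell$ of the independent assigner. This gives a $(\gamma,\ell,k)$-decomposition with $\gamma\in O(n^{1/(k+1)})$ in $O(k n^{1/(k+1)})$ rounds. For the $\infty$-good case, I instead invoke \Cref{lem:logDecomp} with $\gamma\in O(1)$ and $L\in O(\log n)$ layers. Next, following the layer order, I assign virtual trees to rake nodes, virtual compress paths to compress paths, and types to oriented edges. By \Cref{obsv:runtimePropagation} this costs $O(\gamma k)$ rounds, that is $O(n^{1/(k+1)})$ resp.\ $O(\log n)$. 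By \Cref{lem:assignedTreesAreGood}, every assigned type is nonempty and every zero-pole virtual tree is good. In the $\infty$-good case the testing procedure's set $R$ stabilises, so the induction of \Cref{lem:assignedTreesAreGood} applies to all $L$ layers.

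\textbf{Downward phase.} I then fix outputs layer by layer from the highest layer $R_k$ (resp.\ $R_L$) down to $R_0$. The invariant is: whenever a component is processed, each of its outgoing half-edges (toward higher layers) already carries an output label lying in the type assigned to that oriented edge. The cases are as follows.
\begin{itemize}
\item A rake node with zero poles is good. It picks labels on its incident half-edges and on the far halves of its incoming edges that satisfy $\C_V$, $\C_E$ and the incoming types.
\item A rake node with one pole has its outgoing label $y\in\Type(\T_v)$ already fixed. By the definition of $\Type(\T_v)$, some $\sigma$ extends $y$ consistently. Here I must also handle the reduction of \Cref{lem:finiteVirtualTrees}: the real node may have more incoming edges than the reduced virtual tree. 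I would reuse that lemma's argument, which assigns the extra copies the optional label $b^*$ used by the witness. The same fix applies wherever $\La_{in}'\subsetneq\La_{in}$.
\item A compress path $P$ has its endpoints' outgoing half-edges already labeled, by the invariant, with labels from the independent class $(X,Y)=f(\Hp_P)$. Since $(X,Y)$ is an independent class for $\Hp_P$, some valid labeling of the core path exists for this pair. Each core node must additionally satisfy its two-pole virtual tree $\T'_v$. This fixes the labels on the far halves of its lower-layer edges, each inside the corresponding type.
\end{itemize}
In all cases, the labels placed on far half-edges lie in the types of those lower-layer oriented edges, so the invariant is maintained. The edge constraint on each oriented edge is checked once, at the higher endpoint.

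\textbf{Correctness and runtime.} Each node's $\C_V$ constraint is verified when it is processed. Each edge's $\C_E$ constraint is verified by its higher endpoint. Compress-internal edges are handled by the independent class, which means a labeling is \emph{valid} for the whole compress path. A node $v$ must also extend a fixed outgoing label into its subtree. This is exactly what membership in $\Type(\T_v)$ guarantees, since virtual-tree types coincide with real-subtree types by \Cref{lem:TreesAreVirtualTrees}.

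The downward phase mirrors the upward one. Rake layers cost $\gamma$ sublayer steps each, and compress paths have length $\le 2\ell=O(1)$. So the total is again $O(\gamma k)$ rounds: $O(n^{1/(k+1)})$ for constant $k$, and $O(\log n)$ in the $\infty$ case.

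\textbf{Main obstacle.} I expect the subtle step to be the compress paths. One must verify that the independent class, which is computed on the \emph{reduced} virtual compress path, still yields valid local configurations at the real high-degree core nodes. This requires combining the optional-label padding argument of \Cref{lem:finiteVirtualTrees} with the two-pole definition of types.
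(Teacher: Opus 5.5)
Your proposal is correct and follows essentially the same route as the paper: compute the decomposition via \Cref{lem:polyDecomp} or \Cref{lem:logDecomp}, assign virtual trees and types bottom-up, use \Cref{lem:assignedTreesAreGood} for non-emptiness and goodness, and then propagate labels top-down through rake nodes and independent classes in $O(\gamma k)$ rounds. Your explicit invariant and the padding remark for reduced virtual trees spell out details the paper leaves implicit. One small correction: the assigned edge types are in general restrictions coming from independent classes, not real-subtree types, so the appeal to \Cref{lem:TreesAreVirtualTrees} is unnecessary, and the induction is carried by the one-step extension property in the definition of $\Type(\T_v)$.
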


\begin{proof}
We first compute the appropriate decomposition using \Cref{lem:polyDecomp}, or \Cref{lem:logDecomp} and assign virtual trees and types accordingly.

By \Cref{lem:assignedTreesAreGood}, every virtual tree with $0$ poles admits a valid
local labeling that respects all incident types. Each node that is assigned such a good virtual tree, picks such a labeling arbitrarily.

Consider a rake node $v$ with outgoing edge $(v,p)$. When $p$ assigns an output label
consistent with the type of $(v,p)$, the induced label on the half-edge at $v$ lies in
$\Type(\T_v)$. Therefore, this labeling can be extended to all edges incident to $v$
while respecting the assigned types of the lower layer neighbors.

For a compress path $P$, the two edges connecting $P$ to higher layers eventually
receive labels from the two types assigned by $f$. Since these types form an
independent class, any such choice can be extended to a valid labeling of the entire
path.

Because all assigned types are non-empty, this propagation continues until leaf nodes
are reached, yielding a correct solution to $\Pi$. As in
\Cref{obsv:runtimePropagation}, the total running time is $O(\gamma \cdot k)$ rounds, which yields $O(n^{1/(k+1)})$, if $f$ is $k$-good and $O(\log n)$, if $f$ is $\infty$-good.
\end{proof}

What is left is to specify which independent assigner to use, if any \emph{good} independent assigners even exist.

Note that knowing that a good independent assigner exists is actually sufficient. For any constant $\ell$, we can enumerate all possible independent assigners. Both the number of possible virtual compress paths of length between $\ell$ and $2\ell$, and the number of independent classes, are finite. So we can run the testing procedure for all possible independent assigners to determine what the largest $k$ is, such that a $k$-good independent assigner exists. 

In the next section, we will show that there is some universal constant $\ell_{pump}$ that is sufficient to check and that choice of $\ell_{pump}$ depends solely on the description of $\Pi$. 

Furthermore, we show that if $\Pi$ admits a fast algorithm, then also some good independent assigner must exist.

\subsection{A Pumping Lemma for Trees}
In this section we will recover a $k$-good independent assigner $f$ from an $o(n^{1/k})$-algorithm $\A$. 

\paragraph*{High-level plan}
Suppose we have a $T(n)$-round algorithm $\A$ for the LCL $\Pi$, and let
$H = (V,E,\phi,e_s,e_t)$ be a compress path of length at least $2T(n)+3$.
Let $v$ be the middle node of $H$ and simulate the execution of $\A$ at $v$.

Since $H$ is sufficiently long, the radius-$T(n)$ neighborhood of $v$ is
entirely contained in $H$. Therefore, the behavior of $\A$ at $v$ is
independent of the rest of the graph, and $\A$ fixes a valid output labeling
for all halfedges incident to $v$.

Fixing these labels induces restrictions on the possible output labels that
can be assigned to the boundary halfedges $e_s$ and $e_t$ connecting $H$ to
the remainder of the graph. Each restriction is a subset of the labels
allowed by $\Type(H)$ for the respective half-edge.

Since $\Pi$ is node-edge checkable and $v$ separates $H$ into a left and a
right subpath, the feasible labelings on the two sides are independent of
each other. Consequently, the induced restrictions on $e_s$ and $e_t$
together form an independent class for $H$.

\paragraph*{Getting long compress paths}
We will first recover a full compress path $H$ for every virtual compress path $\Hp$ considered in the testing procedure. After that we will see how to turn these into longer versions of the same compress paths, so that we may apply $\A$ to obtain an independent class, as described above.

The following definition formalizes this reconstruction, and the subsequent lemmas show that it is always possible. While the definition is somewhat technical, it is precisely the recursive construction one would expect based on the testing procedure.

\begin{definition}[Minimal Realization]\label{def:minimalReal}
For each tuple $(t,x)$ computed by the testing procedure, the \emph{minimal realization} is an input-labeled tree $T=(V,E,\phi,e)$, where $e$ denotes the special halfedge with input label $x$ and type $t$. It is defined inductively based on the iteration $k$ in which $(t,x)$ was first added to the set $R$:

\begin{enumerate}
    \item \textbf{Base case $k=0$.} These are the types produced by the initial execution of \emph{Compute Types(}$\emptyset$\emph{)}:
    \begin{itemize}
        \item If $(t,x)$ corresponds to a virtual tree $\T=(\emptyset,x)$, let $T$ consist of a leaf node $v$ and a parent $p$ with the edge $(\{v,p\},v)$ labeled $x$; the label on $(\{v,p\},p)$ can be arbitrary.
        \item For an iteration $i>0$ of \emph{Compute Types}, let $(t,x)$ correspond to a virtual tree $\T=(\La_{in},x)$. Start with a parent $p$ connected to a node $v$ representing $\T$, assigning $x$ to $(\{p,v\},v)$ and an arbitrary label to $(\{p,v\},p)$. 
        \item For every $(x_{adj},x_{far},t') \in \La_{in}$, take the minimal realization of the corresponding $(t',x_{far})$ obtained in a previous iteration, let its nodes be $(v',p')$, and identify $p'$ with $v$. Assign $x_{adj}$ to $(\{v,v'\},v)$.
    \end{itemize}

    \item \textbf{Iteration $k\ge 1$.} For $(X,x_1), (Y,y_L)$ belonging to an independent class $I=(X,Y)$ (Step~\ref{step:applyF}), construct the compress path $H=(V,E,\phi,e_s,e_t)$ by replacing all virtual trees in the path with their minimal realizations. Attach new nodes $s$ and $t$ to $e_s$ and $e_t$, respectively, with arbitrary labels on the other halfedges. Then:
    \begin{itemize}
        \item The minimal realization for $(X,x_1)$ is the tree $T_X=(V,E,\phi,e_s)$, with $v=v_1$ and $p=s$.
        \item The minimal realization for $(Y,y_L)$ is $T_Y=(V,E,\phi,e_t)$, with $v=v_l$ and $p=t$.
    \end{itemize}
    \item For any new types created in Step~\ref{step:computeTypes2}, we apply the same inductive construction, replacing virtual trees with minimal realizations as needed.
\end{enumerate}
\end{definition}

The next lemma makes the previous, cumbersome definition of minimal realizations practical. It states that if all compress paths used in constructing the minimal realization respect their assigned independent classes, then the resulting instance enforces the computed type.

\begin{lemma}\label{lem:solSatisfiesTypes}
Let $(t,x)$ be any tuple obtained during the testing procedure, and let $T$ be its minimal realization with special halfedge $e$. Let $\sigma$ be any output labeling such that:
\begin{itemize}
    \item $\sigma$ is a valid solution to $\Pi$.
    \item For every virtual compress path $\Hp = (\T_j)_{1 \le j \le L}$ encountered during the construction of $T$, with $f(\Hp) = (X,Y)$, the labeling on the corresponding subgraph $H = (V',E',\phi', e_s, e_t)$ of $T$ respects the independent class: $\sigma(e_s) \in X$ and $\sigma(e_t) \in Y$.
\end{itemize}
Then $\sigma(e) \in t$.
\end{lemma}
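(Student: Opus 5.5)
The plan is to argue by induction along the order in which tuples $(t,x)$ enter $R$ during the testing procedure; this order matches the recursive structure of \Cref{def:minimalReal}. The inductive claim is exactly the statement of the lemma. The hypothesis we carry is that for every tuple $(t',x')$ added strictly earlier, every labeling $\sigma$ satisfying both bullet conditions on the minimal realization of $(t',x')$ places a label from $t'$ on its special halfedge.

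\textbf{Rake-type tuples.} Let $(t,x)$ come from a virtual tree $\T=(\La_{in},x)$ with one pole, either in the initial \emph{Compute Types} or in Step~\ref{step:computeTypes2}. The base case $\La_{in}=\emptyset$ is immediate: for a leaf $v$, validity of $\sigma$ at $v$ means the single labeled halfedge $(x,\sigma(e))$ forms a node configuration. That is precisely the condition for $\sigma(e)\in\Type(\T)=t$. In general, let $v$ be the node representing $\T$. Each child $v'$ corresponding to $(x_{far},x_{adj},t')\in\La_{in}$ roots a copy of the minimal realization of $(t',x_{far})$. Its special halfedge is $(\{v,v'\},v')$.
\begin{itemize}
\item The restriction of $\sigma$ to that copy is still valid inside it.
\item It still respects every independent class of compress paths occurring in that copy, since these are among the paths encountered in constructing $T$.
\end{itemize}
By induction, $\sigma((\{v,v'\},v'))\in t'$. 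Moreover, $\sigma$ satisfies the edge constraint on each $\{v,v'\}$ and the node constraint at $v$, including the halfedge $e$ labeled $(x,\sigma(e))$. Hence the map sending each element of $\La_{in}$ to the pair of labels on its two halfedges witnesses $\sigma(e)\in\Type(\T)$, by \Cref{def:virtTrees}. One caveat: when \Cref{lem:finiteVirtualTrees} was used, the realization is built from the reduced $\La_{in}'$. We simply apply the argument to whichever multiset was actually realized, and these have the same type anyway.

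\textbf{Compress-type tuples.} Let $(X,x_1)$ arise from $f(\Hp)=(X,Y)$ in Step~\ref{step:applyF}. Then the special halfedge $e$ of the minimal realization $T_X$ is exactly $e_s$ of the realized compress path $H$. Since $\Hp$ is encountered in the construction, the second hypothesis gives $\sigma(e_s)\in X$ directly; the case $(Y,y_L)$ is symmetric. This makes the lemma essentially tautological for these tuples, which is why the hypothesis was phrased this way.

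\textbf{Main obstacle.} The delicate part is bookkeeping rather than mathematics. We must check that the realization of $(X,x_1)$, namely the path $H$ together with the dangling part reached through $e_t$ with an attached dummy node $t$, is handled correctly. The node $t$ carries an arbitrary label, so nothing beyond the independent-class hypothesis is needed. We must also check that every compress path inside a sub-realization is counted among the paths encountered in constructing $T$, so that the hypothesis restricts correctly to subtrees. Finally, the induction needs a well-founded order: each tuple's realization uses only tuples added in earlier iterations, either of \emph{Compute Types} or of the testing loop. This holds by the ``first added'' convention in \Cref{def:minimalReal}.
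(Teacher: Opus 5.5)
Your proof is correct and follows the same route as the paper. Both argue by induction along the construction of minimal realizations. For rake-type tuples, $\sigma$ itself witnesses membership in the virtual tree's type; for compress-type tuples, the claim follows directly from the independent-class hypothesis on $e_s$ (respectively $e_t$). Your extra bookkeeping makes explicit what the paper leaves implicit: that $\sigma$ restricted to a sub-realization is valid inside the subtree but not necessarily at the dummy parent, that the hypotheses carry over to sub-realizations, and that the first-added convention gives a well-founded order.
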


\begin{proof}
The proof proceeds by induction along the definition of minimal realizations.

\paragraph*{Base case ($k=0$)}  
We prove this by induction over the iterations $i$ of the \emph{Compute Types} procedure:
\begin{itemize}
    \item If $(t,x)$ was obtained in iteration $i=0$, it corresponds to a leaf node $v$ attached to a parent $p$, with $e = (\{v,p\},v)$ labeled $x$. By definition, all valid labelings must assign $e$ a label in $t$.
    \item If $(t,x)$ was obtained in iteration $i>0$, it corresponds to a virtual tree $\T=(\La_{in},x)$. Let $v$ represent $\T$ and $p$ its parent. For each child $c$ of $v$ corresponding to $(x_{adj}, x_{far}, t') \in \La_{in}$, the minimal realization of $(t',x_{far})$ ensures that any valid labeling assigns $e_c=(\{c,v\},c)$ a label from $t'$. Thus, the set of possible labels on $e=(\{v,p\},v)$ is exactly $t$.
\end{itemize}

\paragraph*{Inductive step ($k\ge 1$)}  
For tuples $(X,x)$ obtained as part of an independent class $I=(X,Y)$ in Step~\ref{step:applyF}, the assumptions of the lemma ensure that $\sigma$ assigns the correct labels to the corresponding edges.  
Similarly, any new tuples produced by \emph{Compute Types} in Step~\ref{step:computeTypes2} follow the same inductive argument.
\end{proof}

The problem now of course is that our compress paths are not long enough. This is where classical automata theory comes into play, in the next lemma we prove that the type of a virtual compress problem can be computed by a finite automaton. From this we will be able to proof a sort of pumping lemma for compress paths, which will allow us to replace each compress path in the minimal realization by a much longer version of that same compress path.

\begin{lemma}\label{lem:finiteAutomaton}
Let $\Hp = (\T_i)_{1 \le i \le L}$ be a virtual compress path of length $L \ge 2$.
Then $\Type(\Hp)$ can be computed solely from $\Type(\Hp_{-1})$ and $\Type(\T_L)$,
where $\Hp_{-1} = (\T_i)_{1 \le i \le L-1}$.
\end{lemma}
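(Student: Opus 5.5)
I plan to prove that $\Type(\Hp)$ can be described purely as the composition of two relations on $\Sout$. I will first unfold the definition of the type of a virtual compress path. A pair $(y_s,y_t)$ lies in $\Type(\Hp)$ exactly when there is a labeling $\sigma$ of the core-path halfedges, including $e_s,e_t$, satisfying two conditions: at every node $v_i$ the pair of pole labels lies in $\Type(\T_i)$, and every internal edge $\{v_i,v_{i+1}\}$ carries a pair allowed by $\C_E$ together with the input labels $x^i_{right},x^{i+1}_{left}$. Everything inside the virtual tree $\T_i$, namely its incoming edges and their far-side types, is already summarised by $\Type(\T_i)\subseteq\Sout^2$. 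Hence these are the only constraints, and each one involves only halfedges at a single node or on a single edge.

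I will then split along the last edge $\{v_{L-1},v_L\}$. Let $a$ be the label on the halfedge $(\{v_{L-1},v_L\},v_{L-1})$, which is the right pole of $\T_{L-1}$, and let $b$ be the label on $(\{v_{L-1},v_L\},v_L)$, the left pole of $\T_L$. The constraints then separate into three independent groups: those on $v_1,\dots,v_{L-1}$ and their internal edges, with $a$ playing the role of $e_t$ for $\Hp_{-1}$; the single edge constraint on $\{a,b\}$; and the node constraint $(b,y_t)\in\Type(\T_L)$. These groups share only the variables $a$ and $b$. This gives
\[
\Type(\Hp)=\set{(y_s,y_t)\mid \exists a,b\in\Sout:\ (y_s,a)\in\Type(\Hp_{-1}),\ \{(x^{L-1}_{right},a),(x^{L}_{left},b)\}\in\C_E,\ (b,y_t)\in\Type(\T_L)}.
\]

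To justify this formula I will prove both inclusions. For $\subseteq$, I restrict a witness labeling for $\Hp$ to $\Hp_{-1}$ and read off $a$ and $b$. For $\supseteq$, I glue a witness labeling for $(y_s,a)\in\Type(\Hp_{-1})$ with the labels $b$ and $y_t$ at $v_L$. The glued labeling is valid because no constraint spans both sides except the single edge condition, which holds by the choice of $a,b$.

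The right-hand side depends only on $\Type(\Hp_{-1})$, $\Type(\T_L)$, the edge constraints $\C_E$, and the input labels $x^{L-1}_{right},x^L_{left}$. The input labels are fixed by the problem data, and $x^L_{left}$ belongs to $\T_L$ itself, so $\Type(\Hp)$ is a function of the claimed data. I should double-check whether $x^{L-1}_{right}$ needs to be carried along as part of the automaton state. If so, the state becomes the pair consisting of $\Type(\Hp_{-1})$ and the last right input label; this is still a finite set and does not affect the later pumping argument.

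The main obstacle is making sure that $\Type(\T_i)$ really captures every constraint at $v_i$. This needs the node condition to be fully checked inside the type of the two-pole virtual tree, and the incoming edges' constraints to be accounted for there as well. That holds by \Cref{def:virtTrees}, and by \Cref{lem:finiteVirtualTrees} replacing $\T_i$ by a reduced version does not change its type. The remaining step, the gluing argument, is routine.
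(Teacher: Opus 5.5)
Your proposal is correct and matches the paper's proof. The paper also builds $\Type(\Hp)$ from $\Type(\Hp_{-1})$, the single $\C_E$ constraint on the edge $\{v_{L-1},v_L\}$, and $\Type(\T_L)$, and proves both inclusions by gluing and by restriction. Your caveat about input labels is justified: the paper's formula also uses $x^{L-1}_{right}$, which $\Type(\Hp_{-1})$ does not record. So ``solely'' should be read as letting the automaton state carry this last input label as well; the state space stays finite, so \Cref{lem:pumpingLemma} is unaffected.
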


\begin{proof}
We describe how to compute $\Type(\Hp)$ from $\Type(\Hp_{-1})$ and $\Type(\T_L)$.

Let
\[
\T_{L-1} = (\La_{in}^{L-1}, x_{left}^{L-1}, x_{right}^{L-1})
\quad\text{and}\quad
\T_L = (\La_{in}^{L}, x_{left}^{L}, x_{right}^{L}).
\]
For every $(y_s,y_t) \in \Type(\Hp_{-1})$, we check whether there exists
$(y_{left},y_{right}) \in \Type(\T_L)$ such that
\[
\bigl((x_{right}^{L-1},y_t),(x_{left}^{L},y_{left})\bigr) \in \C_E.
\]
If this is the case, then we include $(y_s,y_{right})$ in $\Type(\Hp)$.

To see correctness, let $\sigma_{-1}$ be a labeling certifying
$(y_s,y_t) \in \Type(\Hp_{-1})$ and $\sigma_L$ a labeling that certifies $(y_{left},y_{right}) \in \Type(\T_L)$.
By combining these two labelings into a labeling $\sigma$, of all of $H$,
yields a labeling certifying $(y_s,y_{right}) \in \Type(\Hp)$. The correctness follows from the correctness of $\sigma_{-1}$ and $\sigma_L$, except for the single edge that connects $\Hp_{-1}$ and $\Hp_L$, which is labeled
\[
\bigl((x_{right}^{L-1},y_t),(x_{left}^{L},y_{left})\bigr)
\]
but this is in $\C_E$ by definition.

Conversely, let $(y_s,y_t) \in \Type(\Hp)$.
Then there exists a labeling $\sigma$ of the halfedges of $\Hp$ certifying this.
Restricting $\sigma$ to $\Hp_{-1}$ yields
$(y_s,y_t') \in \Type(\Hp_{-1})$ for some output label $y_t'$.
Moreover, $\sigma$ assigns an output label $y_{left}$ to the half-edge
$x_{left}^{L}$ such that $(y_{left},y_t) \in \Type(\T_L)$.
Since $\sigma$ is consistent, it holds that
\[
\bigl((x_{right}^{L-1},y_t'),(x_{left}^{L},y_{left})\bigr) \in \C_E,
\]
and therefore the above procedure correctly reconstructs $(y_s,y_t) \in \Type{\Hp}$.
\end{proof}

Since $\Type(\Hp_{-1})$ and $\Type(\T_L)$ are both finite, \Cref{lem:finiteAutomaton} implies that the type $\Type(\Hp)$ can be computed by a finite automaton. As a result, we can apply the pumping lemma for regular languages to virtual compress paths. We obtain the following result.

\begin{lemma}\label{lem:pumpingLemma}
    For every LFL $\Pi$, there exists some constant $\ell_{pump} \in \N$, such that for any compress path $H$ of length at least $\ell_{pump}$ and and any integer $x\ge \ell_{pump}$, there exists a compress path $H'$, satisfying the following:
    \begin{itemize}
        \item The length of $H'$ is at least $x$ and at most $x+ \ell_{pump}$.
        \item $\Type(H) = \Type(H')$
        \item The sets of input labeled trees attached to the core paths of $H$ and $H'$ are the same.
    \end{itemize}
\end{lemma}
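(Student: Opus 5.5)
The plan is to view a compress path as a word over a finite alphabet and apply the pumping argument for the finite automaton supplied by \Cref{lem:finiteAutomaton}. The obstacle is that compress paths in the unbounded-degree setting are not words over a finite alphabet: every core node may carry an arbitrary subtree. We remove this by passing to types. By \Cref{lem:CompressToVirtual}, $H$ corresponds to a virtual compress path $\Hp=(\T_i)_{1\le i\le L}$ with $\Type(H)=\Type(\Hp)$. Each $\T_i$ was built by \Cref{lem:TreesAreVirtualTrees} from the actual subtree $T_{v_i}$ attached to $v_i$ together with the input labels on its two poles. The only information \Cref{lem:finiteAutomaton} uses about position $i$ is the triple $(x^i_{left}, x^i_{right}, \Type(\T_i))$, where $\Type(\T_i)\subseteq\Sout^2$. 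The alphabet $\Gamma:=\Sin\times\Sin\times\Pow(\Sout^2)$ is finite and depends only on $\Pi$.

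Next I would define the automaton. Its states are pairs $(\Type(\Hp_{\le i}), x^i_{right})$ together with a start state, giving at most $q:=2^{|\Sout|^2}\cdot|\Sin|+1$ states. The transition on reading letter $(x^{i+1}_{left},x^{i+1}_{right},t)$ is exactly the update rule of \Cref{lem:finiteAutomaton}: keep $(y_s,y_{right})$ whenever some $(y_s,y_t)$ in the current type and some $(y_{left},y_{right})\in t$ satisfy $((x^i_{right},y_t),(x^{i+1}_{left},y_{left}))\in\C_E$. Consequently, the state reached after reading the whole word determines $\Type(\Hp)$. I set $\ell_{pump}:=q$.

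Given $H$ of length $L\ge\ell_{pump}$, the run has $L+1>q$ states, so by pigeonhole two of the first $q+1$ positions coincide, say $i<j$ with $j-i\le q$. The word then splits as $uvw$ with $1\le|v|\le q$, and every word $uv^mw$ reaches the same final state, hence has the same type. Concretely, $H'$ is obtained by repeating the segment $v_{i+1},\dots,v_j$ of the core path, together with its attached subtrees, $m$ times, copying input labels on all half-edges including the edges between consecutive copies. This is consistent because the letters carry the pole labels. The types of the copied virtual trees are unchanged, so \Cref{lem:finiteAutomaton} applied inductively shows $\Type(H')=\Type(H)$. Every subtree attached in $H'$ is a copy of one attached in $H$, and every subtree of $H$ still occurs in $H'$, so the sets of attached input-labeled trees agree.

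Finally, I would choose the number of repetitions $m$ to hit the required length window. The length of $H'$ is $L+(m-1)|v|$, where $m\ge 0$ is allowed. If $L\le x$, take the smallest $m$ with $L+(m-1)|v|\ge x$; the length then overshoots $x$ by less than $|v|\le\ell_{pump}$. The harder case is $L>x+\ell_{pump}$, where we must shrink instead. Here I would pump down repeatedly: while the length exceeds $x+\ell_{pump}$, find a repeated state among the first $q+1$ positions and delete that loop, which shortens the path by at most $q$. Each deletion keeps the type and keeps the length at least $x$, so the process ends with a length in $[x,x+\ell_{pump}]$. The one delicate point is that deleting loops could remove every copy of some attached tree, which would violate the third condition. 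If the paper intends ``the same set'' as an exact equality of sets, this is the step I expect to be the main obstacle. I would first try to deal with it by enlarging the alphabet to record membership in a finite set of attached-tree types rather than the actual trees. Failing that, I would keep one occurrence of each distinct attached tree while deleting loops only among repeats, which requires a larger $\ell_{pump}$ depending on the number of distinct letters, still a constant depending only on $\Pi$. In the application, $L\le 2\ell$ is bounded, so the pumping-up direction is the essential one.
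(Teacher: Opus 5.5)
Your lengthening argument is the paper's proof. You pass to a virtual compress path via \Cref{lem:CompressToVirtual}, run the automaton of \Cref{lem:finiteAutomaton}, and use pigeonhole on states to find a loop of length at most $\ell_{pump}$. You then insert $\lceil (x-L)/l\rceil$ extra copies of it and rebuild $H'$ from copies of the original attached trees. Your automaton is stated more carefully than the paper's, because you put $x^{i}_{right}$ into the state. This is necessary: the edge check in \Cref{lem:finiteAutomaton} uses $x^{L-1}_{right}$, which $\Type(\Hp_{-1})$ does not determine. It is also what makes the new edge between two consecutive copies of the loop behave like the original edge $\{v_i,v_{i+1}\}$.

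The paper's proof only handles lengthening. For $L\ge x+2l$ its count $\lceil (x-L)/l\rceil$ of additional repetitions is at most $-2$, which cannot be realized with a single copy of the loop. So your shrinking step goes beyond the paper. Loop deletion does preserve the type and lands the length in $[x,x+\ell_{pump}]$.

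However, your second repair of the third bullet does not work. What must be preserved are actual input-labeled trees, not letters of $\Gamma$. There are finitely many letters but infinitely many trees, so $H$ can be a single repeated letter while carrying pairwise distinct attached trees at all $L$ core nodes. Any $H'$ of length at most $x+\ell_{pump}<L$ built from copies of core nodes of $H$ then misses one of these trees, whatever constant you take for $\ell_{pump}$. Your first repair only proves a weaker statement, equality of attached trees at the level of types.

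So the literal third bullet cannot be obtained in the shrinking direction by this method. The clean fix is the one you point to at the end: restrict to $L\le x+\ell_{pump}$, taking $H'=H$ when $x<L\le x+\ell_{pump}$. This is all the application needs, since there $L\le 2\ell_{pump}$ is a constant and a path already longer than the target can simply be kept.
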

\begin{proof}
    Let the length of $H$ be $L$, then by \Cref{lem:CompressToVirtual}, there exists a virtual compress path $\Hp = (\T_i)_{1 \le i \le L}$, of length $L$, such that $\Type(H)= \Type(\Hp)$. Furthermore, each of the $\T_i$ is a virtual tree with two poles representing one of the nodes on the core path of $H$. By \Cref{lem:finiteAutomaton} there exists a finite automaton $\A$ that reads $\Hp$ one virtual tree with 2 poles at a time and computes the type $\Type(\Hp)$. Let $\ell_{pump}$ be the number of states of $\A$. If $L \ge \ell_{pump}$, then when computing the type of $\Hp$, the automaton must loop on some subsequence of $(T_i)_{1 \le i \le L}$, of length $l$, at most $\ell_{pump}$. We consider $\Hp'$, that contains $\lceil \frac{x-L}{l}\rceil$ additional repetitions of this sequence (at the position of the original sequence). The length of $\Hp'$ is then 
    \[
        x \le L + l\cdot\left\lceil \frac{x-L}{l}\right\rceil \le x + l \le x + \ell_{pump}
    \]
    as desired. When reading $\Hp$ and $\Hp'$, $\A$, will terminate in the same state and so $\Type(\Hp) = \Type(\Hp')$. \\
    We then reconstruct an actual compress path, by replacing each $\T_i$ in $\Hp'$, with an isomorphic copy of the corresponding tree from $H$ and so also the last property follows.
\end{proof}

\subsection{Fast Algorithms Imply Good Functions}\label{sec:algoImpliesFunc}
We will now use this pumping lemma to construct a $k$-good function out of a deterministic $o(n^{1/k})$ algorithm $\A$. 
Using \Cref{lem:fastAlgo}, this then yields an algorithm running in
$O(n^{1/(k+1)})$ rounds.

We fix $n_0>0$, such that the runtime $T(n)$ of $A$ is bounded by $T(n_0) \le \varepsilon \cdot n_0^{1/k}$ for some very small constant $\varepsilon$ to be fixed later.

We will run the testing procedure again, this time constructing a function as we go. When asked for the independent class of a given virtual compress path, we will use the pumping lemma to turn the minimal realization into a graph instance on $n_0$ nodes. In that pumped realization all compress paths have length at least $2 T(n_0) +3$. We can then run our algorithm on that instance to determine the independent class.

To ensure that our algorithm behaves nicely, we will have to fix ids in a controlled manner on the created graph instances, this will be the main difficulty in reaching our result.

\begin{definition}[pumped realization]\label{def:pumpedRealization}
    For each type $t$ and its respective minimal realization, we define the pumped realization again inductively on the iteration $k$. 
    \begin{enumerate}
        \item For $k=0$ we change nothing.
        \item For $k\ge1$, for each tuple $(X,x)$ (and $(Y,y)$) obtained as part of an independent class in Step~\ref{step:applyF}, we first use \Cref{lem:pumpingLemma} to obtain a compress path $\Hp'$ of length between $10 T(n_0)$ and $10 T(n_0)
        + \ell_{pump}$ and then use the compress path $H=(V,E,\phi, e_s,e_t)$ of the same length, obtained by replacing all of the virtual trees on two poles with the minimal realizations that exist inductively.

        For the new types in Step~\ref{step:computeTypes2} we again do not change anything.
    \end{enumerate}
\end{definition}

Since we want to embed these pumped realizations into actual $n_0$ node graphs, we need to make sure that their size stays within some reasonable limits. 

\begin{lemma}\label{lem:CTSizes}
    Let $R$ be any subset of types, each with a pumped realization of size at most $S$. There is some constant $\kappa$, such that after Compute Types($R$), we can construct a pumped realization of each of the newly obtained types of size at most $\kappa \cdot S$.
\end{lemma}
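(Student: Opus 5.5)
We bound the growth of realization sizes across one call to \emph{Compute Types}. The key point is that \Cref{lem:finiteVirtualTrees} lets us work only with virtual trees having at most $B := (s+1)\cdot|\Sin|^2\cdot 2^{|\Sout|}$ incoming labelings, which is a constant depending only on $\Pi$.

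Consider a type $t$ that is newly produced in some iteration $j$ of Compute Types($R$). It is the type of a virtual tree $\T=(\La_{in},x)$ with $|\La_{in}|\le B$. Every entry of $\La_{in}$ refers to a tuple $(t',x_{far})$ that either lies in $R$ or was produced in an earlier iteration $j'<j$. Following \Cref{def:minimalReal}, and leaving the pumping unchanged as \Cref{def:pumpedRealization} does for Step~\ref{step:computeTypes2}, the realization of $t$ is built from three pieces: a new node $v$, a parent $p$, and one copy of the realization of each entry, glued in at $v$. Let $S_j$ be the maximum realization size of a type first obtained in iteration $j$, with $S_0:=S$ for the types of $R$ and the leaf base case. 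This gives
\[
S_j \le 2 + B\cdot \max_{j'<j} S_{j'}.
\]

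Compute Types terminates after a bounded number of iterations. Each iteration strictly enlarges a subset of $\Pow(\Sout)\times\Sin$, so there are at most $I:=2^{|\Sout|}\cdot|\Sin|$ of them, again a constant. Unrolling the recursion with $S\ge 1$ gives $S_j \le (B+2)^j S$. Setting $\kappa := (B+2)^{I}$ then bounds every newly obtained type by $\kappa\cdot S$.

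The main obstacle is to make sure that each new type really is witnessed by such a small virtual tree. If that fails, the realization of the original, possibly huge, virtual tree could be arbitrarily large. This is exactly what \Cref{lem:finiteVirtualTrees} rules out, since the procedure only enumerates trees of size at most $B$. A second point to check is that the realizations of distinct children are used as disjoint copies. Then sizes add up and do not interact, and the result remains a tree.

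Where a type can be produced in several ways, we fix the first virtual tree that produces it, so that the recursion stays well-founded.
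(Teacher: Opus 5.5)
Your proof is correct and takes essentially the same approach as the paper. Both combine the constant fan-out $B$ from \Cref{lem:finiteVirtualTrees} with a uniform bound on the number of Compute Types iterations to get a constant-factor blowup: the paper counts the recursion tree directly (at most $B^{h-1}$ fresh nodes plus $B^h$ reused realizations of size at most $S$), while you unroll the recurrence $S_j \le 2 + B\max_{j'<j}S_{j'}$, which is only a bookkeeping difference.
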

\begin{proof}
    Run the Compute Types procedure with every possible inputset and let $h$ be the maximum number of iterations of Compute Types in any execution.

    We recursively construct the pumped realization as in \Cref{def:minimalReal}. In the worst case, we have only newly computed tuples $(t,x)$ (that were not in the set $R$ at the start of Compute Types) for $h$ iterations. However, after $h$ steps, the virtual trees that are left to recursively replace, all must have been in the initial set $R$ and so we replace all of them with pumped realizations that already existed. The maximum fan-out in every step is $(s+1)\cdot |\Sin|^2 \cdot 2^{|\Sout|}$ and so the maximum number of pumped realizations we need to attach is at most 
    \[
    \left((s+1)\cdot |\Sin|^2 \cdot 2^{|\Sout|}\right)^h
    \]
    The total number of nodes is then all nodes in the tree plus the size of all of the pumped realizations.
    \[
    \left((s+1)\cdot |\Sin|^2 \cdot 2^{|\Sout|}\right)^{h-1} + \left((s+1)\cdot |\Sin|^2 \cdot 2^{|\Sout|}\right)^{h} \cdot S 
    \]
    Which is less than $\kappa \cdot S$ for some suitable constant $\kappa$, since $h,s,|\Sin|,|\Sout|$ are constants .
\end{proof}

\begin{lemma}\label{lem:boundedRealizations}
    For each tuple $(t,x)$ computed by the testing procedure in iteration $j$, the pumped realization has $\le \kappa +  \frac{1}{2} n_0^{j/k}$ nodes. For the constant $\kappa$ from \Cref{lem:CTSizes}.
\end{lemma}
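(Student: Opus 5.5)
We prove the bound by induction on the iteration $j$ of the testing procedure, proving the slightly stronger statement that every tuple present in $R$ after iteration $j$ (including those produced in Step~\ref{step:computeTypes2} of that iteration) has a pumped realization with at most $\kappa + \frac12 n_0^{j/k}$ nodes. For the base case $j=0$, the tuples come from \emph{Compute Types}$(\emptyset)$. Starting from the empty set, the recursive construction of \Cref{def:minimalReal} has depth at most $h$ and fan-out at most $(s+1)\cdot|\Sin|^2\cdot 2^{|\Sout|}$. The same computation as in \Cref{lem:CTSizes} (with no pre-existing realizations, i.e.\ $S$ contributing nothing) therefore bounds the size by a constant, which we absorb into $\kappa$, enlarging $\kappa$ if necessary.

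For the inductive step $j\ge 1$, we first bound the tuples created in Step~\ref{step:applyF}. Such a tuple is realized by a pumped compress path of length at most $10T(n_0)+\ell_{pump}$. Every node on its core path is replaced by a virtual tree with two poles whose at most $(s+1)\cdot|\Sin|^2\cdot 2^{|\Sout|}=:c$ incoming labelings come from $R$ at iteration $j-1$. By induction, each of these has size at most $\kappa+\frac12 n_0^{(j-1)/k}$. The total size is therefore at most
\[
(10T(n_0)+\ell_{pump})\bigl(1+c(\kappa+\tfrac12 n_0^{(j-1)/k})\bigr).
\]
Using $T(n_0)\le \varepsilon n_0^{1/k}$, this is $O(\varepsilon)\cdot c\, n_0^{j/k}$ plus lower-order terms of order $n_0^{1/k}$ and $n_0^{(j-1)/k}$ with constants depending only on $\Pi$. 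Choosing $\varepsilon$ small enough and $n_0$ large enough, the whole expression is at most $\frac{1}{2\kappa'}n_0^{j/k}$ for any prescribed constant $\kappa'$.

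Next we handle the tuples created in Step~\ref{step:computeTypes2} by applying \Cref{lem:CTSizes} with $S$ equal to the maximum size obtained so far. These tuples have size at most $\kappa S$. If we had only the additive form of the statement, this multiplicative blow-up would break the bound. We therefore choose $\varepsilon$ so that the Step~\ref{step:applyF} realizations have size at most $\frac{1}{2\kappa}n_0^{j/k}$, which makes the Step~\ref{step:computeTypes2} realizations have size at most $\frac12 n_0^{j/k}$. Older tuples carried over from iteration $j-1$ are already within the bound, since $\kappa+\frac12 n_0^{(j-1)/k}\le\kappa+\frac12 n_0^{j/k}$.

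There is one subtlety: \Cref{lem:CTSizes} is multiplicative in the maximum size $S$ of all realizations in $R$, and this includes the constant-size base realizations. To control it, we apply the lemma to $S=\max(\kappa_0,\frac{1}{2\kappa}n_0^{j/k})$, where $\kappa_0$ is the base-case constant. This yields the bound $\max(\kappa\kappa_0,\frac12 n_0^{j/k})\le \kappa+\frac12 n_0^{j/k}$ once the constant in the statement is taken to be $\kappa\kappa_0$, which we rename $\kappa$.

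The main obstacle is exactly this bookkeeping of constants. The constant $\varepsilon$ must be chosen independently of $j$ and after $\kappa$, $c$, $\ell_{pump}$ and $h$ are fixed. This is possible because all of these depend only on $\Pi$, and the testing procedure runs only a constant number of iterations (at most $k$ are relevant). We also need $n_0$ to be large enough that the lower-order terms $\ell_{pump}\cdot c\,\kappa$ and $n_0^{(j-1)/k}$ are dominated by $n_0^{j/k}$ for $j\ge1$.
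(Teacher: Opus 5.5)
Your proof is correct and follows the paper's argument. You induct on $j$ and use $T(n_0)\le\varepsilon n_0^{1/k}$ to push the Step~\ref{step:applyF} realizations below $\frac{1}{2\kappa}n_0^{j/k}$, so that the multiplicative factor of \Cref{lem:CTSizes} in Step~\ref{step:computeTypes2} lands at $\frac12 n_0^{j/k}$. For $j\ge 1$ and large $n_0$ the older realizations already fit under that same $S$, which is how the paper handles it, so your $\max(\kappa_0,\cdot)$ device is harmless but not needed. Leaving $\varepsilon$ as ``small enough relative to $\kappa$ and $c$'' is slightly more careful than the paper's fixed choice $\varepsilon\le\frac{1}{40\kappa^2}$, whose displayed inequality chain does not close for $j=1$ (the $\frac12 n_0^{1/k}$ term exceeds $\frac{1}{2\kappa}n_0^{1/k}$).
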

\begin{proof}
    By induction on $j$, for $j=0$ we get by \Cref{lem:CTSizes} that the realizations are at most of size $\kappa$.\\
    Let the statement hold true for all values $<j$, we prove that the pumped realizations of tuples $(t,x)$ created in iteration $j$ are small enough. There are two steps, where tuples get added, Step~\ref{step:applyF} and Step~\ref{step:computeTypes2}.\\
    For Step~\ref{step:applyF}, every virtual compress path $\Hp$ gets pumped to length $L$ between $10 T(n_0)$ and $10 T(n_0) + \ell_{pump}$. We chose $n_0$ such that $T(n_0)\le \varepsilon \cdot n_0^{1/k}$ for some very small constant $\varepsilon$. Let that constant be at most $\frac{1}{40\kappa^2}$, so 
    \[
    L \le 10 T(n_0) + \ell_{pump} \le 10 \frac{1}{40\kappa^2} n_0^{1/k} + \ell_{pump} \le \frac{1}{4\kappa^2} n_0^{1/k} + \ell_{pump} \le \frac{1}{2\kappa^2} n_0^{1/k}
    \]
    Now we attach to each node of that path at most $(s+1) \cdot |\Sin|^2 \cdot 2^{|\Sout|}$ many pumped realizations of tuples from previous iterations. By the induction hypothesis, all of these are of size at most $\kappa + \frac{1}{2} n_0^{(j-1)/k}$. So the total number of nodes in these pumped realizations is at most the sum of sizes of all of these realizations, plus the number of nodes in the path.
    \begin{align*}
        &\frac{1}{2\kappa^2} n_0^{1/k} \cdot ((s+1) \cdot |\Sin|^2 \cdot 2^{|\Sout|}) \cdot \left(\kappa + \frac{1}{2} n_0^{(j-1)/k}\right) + \frac{1}{2\kappa^2} n_0^{1/k} \\ 
        \le&  \frac{1}{2} n_0^{1/k} + \frac{1}{2\kappa^2} n_0^{1/k} + \frac{1}{4\kappa}n_0^{j/k}  \le \frac{1}{2\kappa}n_0^{j/k} 
    \end{align*}
    By remembering from the proof of \Cref{lem:CTSizes} that $(s+1) \cdot |\Sin|^2 \cdot 2^{|\Sout|}) < \kappa$ and by assuming that $n_0$ is sufficiently large ($\kappa$ does not depend on $n_0$).
    So the size of these pumped realizations of the tuples created in Step~\ref{step:applyF} is at most $\frac{1}{2\kappa}n_0^{j/k}$. So before Step~\ref{step:computeTypes2} the size of all pumped realizations is at most $\frac{1}{2\kappa}n_0^{j/k}$, so by \Cref{lem:CTSizes} after the execution of Compute Types in Step~\ref{step:computeTypes2} the sizes of all pumped realizations is at most $\kappa \cdot \frac{1}{2\kappa}n_0^{j/k} = \frac{1}{2}n_0^{j/k}$.
\end{proof}

Now that all pumped realizations have size less than $n_0$, we can pad them
arbitrarily to obtain graphs of exactly $n_0$ nodes. If we then simulate $\A$
on a node in the middle of a pumped path, $\A$ must commit to some output
label $y$ there without seeing the rest of the graph. This fixed choice
induces an independent class in the following sense.

\begin{lemma}\label{lem:independentClassFromLabel}
Let $\Hp = (\T_i)_{1 \le i \le L}$ be a virtual compress path of length $L$,
let $e$ be a half-edge of the core path of $\Hp$, and let $y$ be an output
label. Let $F$ be the set of all valid labelings of $\Hp$ that assign $y$ to
$e$. Then
\[
\Bigl(\{f(e_{left}) \mid f \in F\},\; \{f(e_{right}) \mid f \in F\}\Bigr)
\]
is an independent class for $\Hp$.
\end{lemma}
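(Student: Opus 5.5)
The proof is a cut-and-paste argument at the half-edge $e$. Fix $x \in X := \{f(e_{left}) \mid f\in F\}$ and $y' \in Y := \{f(e_{right}) \mid f \in F\}$. We must produce a single valid labeling of $\Hp$ that assigns $x$ to $e_{left}$ and $y'$ to $e_{right}$. By definition of $X$ and $Y$, there are labelings $f_1, f_2 \in F$ with $f_1(e_{left}) = x$ and $f_2(e_{right}) = y'$. Both agree on $e$, where each has the value $y$. The plan is to splice them: take $f_1$ on everything to the left of $e$ and $f_2$ on everything to the right. Then we check that the spliced labeling $g$ is valid.

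To make the splice precise, let $e = (\{v_i,v_{i+1}\},v_i)$ be a half-edge of the edge joining $v_i$ and $v_{i+1}$ on the core path (the case $e$ at $v_{i+1}$ is symmetric). Removing the edge $\{v_i,v_{i+1}\}$ splits $\Hp$ into a left part, consisting of $\T_1,\dots,\T_i$ together with $e_{left}$, and a right part, consisting of $\T_{i+1},\dots,\T_L$ together with $e_{right}$. Define $g$ to equal $f_1$ on all half-edges of the left part, including both half-edges incident to $v_i$ and hence $e$ itself. Define $g$ to equal $f_2$ on the right part, including the opposite half-edge $\bar e = (\{v_i,v_{i+1}\},v_{i+1})$. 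If $e$ is one of the distinguished half-edges $e_{left}$ or $e_{right}$, the class degenerates: for instance, $e = e_{left}$ gives $X=\{y\}$, and the claim follows by taking $f_2$ directly.

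The check uses the node--edge checkable formalism, which has effective radius $1/2$. Every node constraint involves only half-edges incident to a single node, and in the virtual path the node $v_j$ together with its virtual-tree type condition belongs entirely to one side. So each such constraint is satisfied, because $f_1$ or $f_2$ satisfies it. The same holds for every edge constraint except the one on $\{v_i,v_{i+1}\}$. The type conditions on the poles of each $\T_j$ also hold, since both poles of $\T_j$ take their values from the same $f_m$; this is immediate for $j\ne i, i+1$. For $j=i$, the right pole is $e$, which $g$ takes from $f_1$, so both poles of $\T_i$ come from $f_1$. For $j=i+1$, the left pole is $\bar e$, which $g$ takes from $f_2$, so both poles of $\T_{i+1}$ come from $f_2$.

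The main obstacle is the crossing edge $\{v_i,v_{i+1}\}$. Its label under $g$ is the pair $\bigl((\cdot, g(e)), (\cdot, g(\bar e))\bigr) = \bigl((\cdot,y),(\cdot,f_2(\bar e))\bigr)$. Since $f_1(e) = y = f_2(e)$, this pair is exactly the label $f_2$ places on this edge, and $f_2$ satisfies $\C_E$. Hence $g$ is valid with $g(e_{left})=x$ and $g(e_{right})=y'$, so $(X,Y)$ is an independent class.
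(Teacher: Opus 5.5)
Your proof is correct and uses the same cut-and-paste argument as the paper: splice $f_1$ on the side of $e$ containing $e_{left}$ with $f_2$ on the other side, which works because both labelings agree on $e$. Your version is somewhat more careful than the paper's, since you put $e$ and its twin $\bar e$ on opposite sides and check explicitly that the crossing edge receives exactly the valid edge label that $f_2$ gives it.
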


\begin{proof}
Fixing the output label $y$ on a half-edge $e$ of the core path separates
$\Hp$ into two subgraphs, $T_{left}$ and $T_{right}$. In any valid labeling,
the feasibility of the labeling on $T_{left}$ depends only on the labels
within $T_{left}$ and on the fixed label $y$ at $e$, and is independent of
the labeling on $T_{right}$. The same holds symmetrically for $T_{right}$.

Hence, for any two valid labelings $f_1,f_2 \in F$, the labeling that agrees
with $f_1$ on $T_{left}$ and with $f_2$ on $T_{right}$ is again a valid
labeling of $\Hp$. This is exactly the defining property of an independent
class.
\end{proof}

Using this idea, it is conceptually straightforward to derive an independent assigner from an algorithm $\A$. However, the output that $\A$ produces at the middle of a compress path may depend on the specific ID assignment in its $T(n_0)$-hop neighborhood.

Because of the recursive nature of our constructed instances, they may contain many copies of the same virtual compress path. To induce all of the types of the testing procedure, we must be able to reproduce the same independent class on each copy at the same time. To achieve this, we collect sets of ID assignments that all induce the same independent class.

\begin{definition}\label{def:impliedClass}
Let $\Hp$ be a virtual compress path and consider a pumped realization of $\Hp$. Let $v$ be the node in the middle of the core path of this realization, and let $e$ be a half-edge incident to $v$ that lies on the core path. A set of IDs $D$ \emph{implies} an independent class $I$ for $\Hp$ if there exists an assignment of the IDs in $D$ to the $T(n_0)$-hop neighborhood of $v$ such that, when $\A$ is run for $n_0$ rounds, the output label $y$ produced by $\A$ on $e$ induces the independent class $I$ as defined in \Cref{lem:independentClassFromLabel}.
\end{definition}

With this in place, we can now describe how an independent class is derived from $\A$.

\paragraph*{IndependentClass procedure}\hfill
\break
\emph{Input:}
\begin{itemize}
    \item A compress path $\Hp$ of length between $\ell_{\mathrm{pump}}$ and $2\ell_{\mathrm{pump}}$.
    \item A deterministic algorithm $\A$ and a parameter $n_0$.
    \item A set of available IDs $D$ of size $|\Sout| \cdot n_0$.
\end{itemize}

\emph{Output:}
\begin{itemize}
    \item An independent class $I$ for $\Hp$.
    \item A collection $\mathcal{D}$ of disjoint ID sets such that each $D' \in \mathcal{D}$ implies $I$ on $\Hp$.
\end{itemize}

Consider a pumped realization of $\Hp$, let $v$ be the node in the middle of its core path, and let $e$ be a half-edge incident to $v$ along the core path. Let $N$ denote the number of nodes in the $T(n_0)$-hop neighborhood of $v$.

Partition the ID set $D$ into disjoint subsets of size $N$. For each such subset, assign its IDs arbitrarily to the $T(n_0)$-neighborhood of $v$, attach a sufficiently long path at a node far from the simulated neighborhood to reach size $n_0$, and simulate $\A$ for $n_0$ rounds. Let $y$ be the output label assigned to $e$ most frequently over all simulations.

Fixing the label $y$ on $e$, apply \Cref{lem:independentClassFromLabel} to obtain an independent class $I$ for $\Hp$. Output $I$ together with all ID subsets that produced output $y$ on $e$.

\begin{lemma}\label{lem:enoughIds}
Let $\mathcal{D}$ be the collection of ID sets produced by the \emph{IndependentClass} procedure. Then the sets in $\mathcal{D}$ are disjoint and
\[
\left| \bigcup_{D \in \mathcal{D}} D \right| \ge n_0 .
\]
\end{lemma}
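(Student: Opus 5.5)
Disjointness is immediate from the construction: the \emph{IndependentClass} procedure first partitions $D$ into pairwise disjoint blocks of size $N$, and $\mathcal{D}$ is a subfamily of this partition, namely the blocks whose simulation produced the most frequent label $y$ on $e$. The substance of the lemma is therefore the counting bound, which I would prove by pigeonhole over the output alphabet.

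First I bound $N$, the number of nodes in the $T(n_0)$-hop neighborhood of the middle node $v$ in the pumped realization. That neighborhood lies inside the pumped realization, so by \Cref{lem:boundedRealizations} we have $N \le \kappa + \frac12 n_0^{j/k}$ for the relevant iteration $j \le k$. For $n_0$ large this gives $N \le n_0$. I will also note that padding with a long path to exactly $n_0$ nodes requires only that the realization has at most $n_0$ nodes, which the same lemma guarantees. Since $|D| = |\Sout| \cdot n_0$, the partition into blocks of size $N$ yields $m = \lfloor |\Sout| n_0 / N\rfloor$ full blocks. Any leftover IDs, fewer than $N$, are discarded.

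Each simulation outputs one label of $\Sout$ on $e$. By pigeonhole, the most frequent label $y$ is produced by at least $\lceil m/|\Sout|\rceil$ blocks. Hence
\[
\left|\bigcup_{D'\in\mathcal{D}} D'\right| \ge N\cdot \frac{m}{|\Sout|} \ge \frac{N}{|\Sout|}\left(\frac{|\Sout| n_0}{N}-1\right) = n_0 - \frac{N}{|\Sout|}.
\]

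The main obstacle is the floor loss of $N/|\Sout|$, which leaves us short of $n_0$ exactly. I would first try to absorb it. Using the ceiling, the union has size at least $N\lceil m/|\Sout|\rceil$. This is at least $n_0$ whenever $m/|\Sout| \ge n_0/N$, which holds for instance when $N$ divides $n_0$. Alternatively, one can observe that $N \le \frac12 n_0$ for large $n_0$, using $j\le k$ and the constant factor $\frac12$. This gives a bound of at least $n_0(1-\frac{1}{2|\Sout|}) \ge n_0/2$.

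If the exact constant matters downstream, I would adjust the procedure slightly. Either take $D$ of size $|\Sout|(n_0+N)$, or pad $N$ up to a divisor of $n_0$ by enlarging the simulated neighborhood with dummy nodes. Either change makes the pigeonhole bound give at least $n_0$ exactly. The remaining checks are routine: all simulations are on valid $n_0$-node instances with distinct IDs, and $\A$'s output on $e$ depends only on the IDs in the $T(n_0)$-hop neighborhood.
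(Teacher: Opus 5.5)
Your proposal is correct and uses the same pigeonhole argument as the paper: the blocks are disjoint by construction, and at least a $1/|\Sout|$ fraction of them produce the most frequent label, so a pool of $|\Sout|\cdot n_0$ IDs leaves at least $n_0$ IDs in $\mathcal{D}$. The paper's one-line proof tacitly assumes the partition into size-$N$ blocks is exact (and that $N\le n_0$). Your explicit treatment of the rounding loss is therefore a more careful version of the same argument, and the enlarged pool of size $|\Sout|(n_0+N)$ is a sound repair that still uses only $O(n_0)$ IDs.
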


\begin{proof}
At least a $1/|\Sout|$ fraction of the tested ID assignments must produce the most frequent output $y$. Since the initial ID set has size $|\Sout| \cdot n_0$ and is partitioned into disjoint subsets, the claim follows.
\end{proof}

\paragraph*{Modifying the testing procedure}
We now modify the testing procedure to generate a function, using an algorithm, thereby defining the independent assigner \(f\) on the fly rather than testing a fixed function. The IndependentClass procedure relies on pumped realizations. By \Cref{lem:boundedRealizations}, any pumped realization constructed in iteration $j$ has size at most
\[
\kappa + \tfrac{1}{2}\, n_0^{\,j/k} .
\]
This bound is at most $n_0$ only for $j \le k$. Therefore, we restrict the construction of the function $f$ to the first $k$ iterations (inclusive).

Recall that in the testing procedure, whenever a virtual compress path \(\Hp\) is encountered, an independent class must be assigned to it. Instead of applying a pre-defined function \(f\), the first time a new virtual compress path \(\Hp = (\T_1,\dots,\T_L)\) appears, we invoke the \emph{IndependentClass} procedure using the ID set \(D_\Hp\), obtaining an independent class
\[
I_\Hp = (X,Y).
\]
We then define \(f(\Hp) := I_\Hp\) and extend the set of known types by adding the tuples \((X,x_1)\) and \((Y,y_L)\), where \(x_1\) is the left input label of \(\T_1\) and \(y_L\) the right input label of \(\T_L\).

From this point on, the value \(f(\Hp)\) is fixed and reused whenever \(\Hp\) reappears. In this way, the testing procedure incrementally constructs the function \(f\).

Since the testing procedure may encounter several distinct virtual compress paths, this step must be performed for each new such path. To ensure that sufficient IDs are available, we assume access to at least
\[
(z+1)\cdot |\Sout| \cdot n_0 \in O(n_0)
\]
distinct IDs, where \(z\) denotes the maximum number of compress paths that can be encountered during the testing procedure. By \Cref{cor:constantlyManyCompresses}, \(z\) is a constant. We partition the ID space into \(z\) disjoint sets of size \(|\Sout| \cdot n_0\), assigning one set \(D_\Hp\) to each possible virtual compress path of length between \(\ell_{\mathrm{pump}}\) and \(2\ell_{\mathrm{pump}}\).

The notions of minimal and pumped realizations are defined exactly as in the testing procedure. By \Cref{lem:boundedRealizations}, all pumped realizations arising from types computed in the first \(k\) iterations have size at most \(n_0\). Consequently, on each such realization we can apply our ID assignments to force the algorithm to respect the corresponding independent classes. 

We now formalize this argument and show that the correctness of the algorithm implies that the function constructed in this way is \(k\)-good.

\begin{lemma}\label{lem:instanceForcesType}
Let $(t,x)$ be any tuple obtained during the construction of the function $f$ from the algorithm $\A$. Then there exist
\begin{itemize}
    \item an input-labeled tree $T=(V,E,\phi)$ with $|V|=n_0$,
    \item a distinguished halfedge $e\in E$, and
    \item an ID assignment to the nodes of $T$,
\end{itemize}
such that when $\A$ is executed on $T$, either the output assigned to $e$ lies in $t$, or the produced labeling is invalid.
\end{lemma}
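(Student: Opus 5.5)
The plan is to take $T$ to be the pumped realization of $(t,x)$ from \Cref{def:pumpedRealization}, padded to exactly $n_0$ nodes, and to choose IDs so that every compress path inside it is forced to respect its independent class. Then \Cref{lem:solSatisfiesTypes} gives the conclusion. The argument is an induction along the construction of $f$, i.e.\ on the iteration $j\le k$ in which $(t,x)$ first enters $R$.

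\textbf{Size and padding.} By \Cref{lem:boundedRealizations}, the pumped realization has at most $\kappa+\tfrac12 n_0^{j/k}\le n_0$ nodes for $j\le k$ and $n_0$ large. We pad it to exactly $n_0$ nodes by attaching a long path at the parent node $p$ of the special halfedge $e$. This is far from every simulated neighbourhood, since all core paths have length at least $10T(n_0)$. The padding does not change the subtree hanging below $e$, so the minimal-realization structure is preserved.

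\textbf{Fixing IDs.} For each compress path $H$ occurring in the construction, with virtual compress path $\Hp$ and $f(\Hp)=(X,Y)$, we place a fresh ID set $D'\in\mathcal D_{\Hp}$ on the $T(n_0)$-hop neighbourhood of the middle node of $H$'s core path. This neighbourhood lies inside $H$ because $H$ has length at least $10T(n_0)$. The same $\Hp$ may appear many times, but the number of occurrences is at most the number of nodes, i.e.\ less than $n_0$. Since the sets in $\mathcal D_{\Hp}$ are disjoint and each has size $N$, while their union has size at least $n_0$ by \Cref{lem:enoughIds}, we expect enough disjoint sets to be available. I would check this count carefully: a per-copy cost of $N$ IDs against at least $n_0/N$ available sets. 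If the bound falls short, I would rescale the ID pool by a constant factor. Different compress paths use the disjoint pools $D_{\Hp}$. The middle neighbourhoods of distinct compress paths are disjoint, and all remaining nodes receive unused IDs from the range $\{1,\dots,n_0^c\}$. All IDs are then distinct.

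\textbf{Forcing the classes.} Since $\A$ is a $T(n_0)$-round algorithm, its output on the middle halfedge of each $H$ depends only on the neighbourhood we fixed. It therefore equals the label $y$ chosen in the IndependentClass procedure. This requires the neighbourhood in $T$ to be isomorphic, with the same inputs, to the one simulated there. That holds because the procedure simulated $\A$ on the same pumped realization with padding far away. Now suppose the labeling produced by $\A$ is valid. Its restriction to $H$ is then a valid labeling assigning $y$ to that halfedge. By \Cref{lem:independentClassFromLabel} and the definition of the induced class, the boundary labels satisfy $\sigma(e_s)\in X$ and $\sigma(e_t)\in Y$. This holds for every compress path, so the hypotheses of \Cref{lem:solSatisfiesTypes} are met, giving $\sigma(e)\in t$. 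Otherwise the labeling is invalid, which is the other alternative in the statement.

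The main obstacle is consistency. The neighbourhood simulated in the IndependentClass procedure must coincide with the one occurring in the final instance, including nested realizations that are themselves pumped. In addition, the ID supply must suffice simultaneously for all copies across all recursion levels.
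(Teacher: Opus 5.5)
Your proposal follows the paper's proof essentially step by step: take the pumped realization of $(t,x)$, pad it at $p$ to exactly $n_0$ nodes, place one ID set produced by the IndependentClass procedure on the middle $T(n_0)$-neighbourhood of every compress-path occurrence so that $\A$ reproduces the label $y$ there, and conclude via \Cref{lem:independentClassFromLabel} and \Cref{lem:solSatisfiesTypes}. The ID count you were unsure about closes without any rescaling: the middle neighbourhoods of the copies of $\Hp$ are pairwise disjoint sets of $N$ nodes inside an $n_0$-node tree, so there are at most $n_0/N$ copies, while \Cref{lem:enoughIds} supplies at least $n_0/N$ disjoint ID sets of size $N$.
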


\begin{proof}
Since $(t,x)$ is produced during the construction of $f$, it arises either as the type of a virtual tree $\T$ with $\Type(\T)=t$, or from an independent class $I$ assigned to some virtual compress path $\Hp$ with $t\in I$.

Let $T=(V,E,\phi)$ be the pumped realization of $\T$ (or of $\Hp$), and let $v$ be the node representing $\T$, with special parent $p$. Let $e$ be the halfedge between $v$ and $p$. By \Cref{lem:boundedRealizations} the number of nodes in $T$ is at most $\kappa + \frac{1}{2} \cdot n_0^{(k)/k} \le n_0$. If the pumped realization has fewer than $n_0$ nodes, we attach a path of nodes to $p$ so that the resulting graph $T'$ has exactly $n_0$ nodes.

During the recursive construction of $T$, every encountered virtual compress path $\Hp'$ has already been processed by the construction procedure. For each such $\Hp'$, the IndependentClass procedure produced an independent class $f(\Hp')=(X,Y)$ together with a collection of disjoint ID sets, each implying this class. We assign one unused ID set to each occurrence of $\Hp'$ in the construction. By \Cref{lem:enoughIds}, the total number of available IDs suffices for the entire realization.

Moreover, IDs are assigned only within the middle $2T(n_0)$ nodes of each pumped compress path and their attached subtrees. Each compress path is pumped to length $10T(n_0)$, so distinct assigned regions are disjoint, ensuring that $\A$ behaves in each of these neighborhoods exactly as it does in the IndependentClass procedure. We assign arbitrary unused IDs to all other nodes.

Now consider any virtual compress path $\Hp'$ used in the construction, with corresponding subgraph $H=(V',E',\phi',e_s,e_t)$. By construction and \Cref{lem:independentClassFromLabel}, the ID assignment forces any valid labeling produced by $\A$ to assign $e_s$ an output in $X$ and $e_t$ an output in $Y$.

Thus the labeling produced by $\A$ satisfies the assumptions\footnote{Note that \Cref{lem:solSatisfiesTypes} technically requires minimal realizations, but since the pumped realizations contain compress paths of the same types as in the minimal realizations, the same proof works for pumped realizations as well.} of \Cref{lem:solSatisfiesTypes}, which implies that the output on $e$ must lie in $t$ unless the labeling is invalid.
\end{proof}

Now it is clear that $f$ must be $k$-good, as otherwise we would have an empty type, but then \Cref{lem:instanceForcesType} states that $\A$ assigns a label from an emptyset.

\begin{lemma}\label{lem:fIsGood}
    The independent assigner $f$ constructed by the construction procedure is $k$-good.
\end{lemma}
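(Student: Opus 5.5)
We argue by contradiction, following the structure of the testing procedure. Suppose the constructed $f$ is not $k$-good, i.e.\ the testing procedure run with $f$ terminates and outputs some $k' < k$. There are only three ways it can terminate early. Either some independent class $f(\Hp)=(X,Y)$ produced in Step~\ref{step:applyF} has $X=\emptyset$ or $Y=\emptyset$, or Compute Types reports \emph{invalid problem} in Step~\ref{step:computeTypes2}. The latter happens either because some one-pole virtual tree has empty type, or because some zero-pole virtual tree is not good. In every case, the offending object is built only from tuples $(t,x)$ obtained within the first $k'+1 \le k$ iterations. By \Cref{lem:boundedRealizations}, these tuples therefore have pumped realizations of size at most $\kappa + \frac12 n_0^{(k'+1)/k} \le n_0$. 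This is exactly the range in which the construction of $f$ via the IndependentClass procedure was carried out, so \Cref{lem:instanceForcesType} applies to all of them.

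In the case of an empty type, we take the tuple $(\emptyset, x)$ itself; for an empty independent class, the tuple $(\emptyset,x_1)$ or $(\emptyset,y_L)$ added in Step~\ref{step:applyF}. By \Cref{lem:instanceForcesType} there is an $n_0$-node tree $T$, a halfedge $e$ and an ID assignment such that running $\A$ on $T$ either puts a label from $\emptyset$ on $e$ or produces an invalid labeling. The first option is impossible, so $\A$ fails on a valid instance of size $n_0$. This contradicts correctness of $\A$. We should also confirm that $\Pi$ is solvable on $T$; if it were not, then $\Pi$ would have unsolvable instances, contrary to $\A$ being a correct algorithm.

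For the case of a zero-pole virtual tree $\T=(\La_{in})$ that is not good, we mimic \Cref{lem:instanceForcesType}. We build a node $v$ and attach, for each $(x_{far},x_{adj},t)\in\La_{in}$, the pumped realization of $(t,x_{far})$, identifying its parent $p$ with $v$. We pad to $n_0$ nodes, which is possible since the size stays below $n_0$ after one more constant factor from \Cref{lem:CTSizes}, absorbed by the slack in the choice of $\varepsilon$. We then assign the ID sets from the IndependentClass procedure on disjoint regions of all pumped compress paths, as in the proof of \Cref{lem:instanceForcesType}.

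If $\A$ outputs a valid labeling, then by \Cref{lem:solSatisfiesTypes} each child halfedge receives a label in its type $t$. The labels around $v$ then witness that $\T$ is good, which is a contradiction.

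The main obstacle is bookkeeping, in two places. First, the ID budget has to cover all occurrences of every compress path simultaneously, which is guaranteed by \Cref{lem:enoughIds} together with disjointness of the middle regions, since each compress path is pumped to length $10T(n_0)$. Second, the size bound must remain $\le n_0$ for iteration $k$ and also for this final zero-pole assembly; if needed, we shrink $\varepsilon$ further to gain the extra constant factor.
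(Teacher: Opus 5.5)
Your main line is the paper's proof. You assume $f$ is not $k$-good, take the offending empty type from the first $k$ iterations (you also cover an empty independent class via the tuple $(\emptyset,x_1)$), and apply \Cref{lem:instanceForcesType} to get an $n_0$-node instance on which $\A$ must either label $e$ from $\emptyset$ or output an invalid labeling. Your size and ID bookkeeping only makes explicit what the paper's proof leaves implicit.

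There is a gap in the extra case of a zero-pole virtual tree $\T=(\La_{in})$ that is not good, which the paper's proof does not address at all. You write ``we pad to $n_0$ nodes'' without saying where the padding path goes, and neither natural place works.
\begin{itemize}
\item If you attach it to $v$, the multiset checked against $\C_V$ at $v$ gains an extra halfedge. A valid output of $\A$ then only shows that $\La_{in}$ plus one more neighbor can be labeled, not that $\T$ is good.
\item If you attach it to a node $w$ inside one of the child realizations, the subtree below $w$ is no longer the realization of its virtual tree. The induction of \Cref{lem:solSatisfiesTypes} then breaks at $w$, and you cannot conclude that the corresponding child halfedge at $v$ receives a label in its type $t$.
\end{itemize}
In \Cref{lem:instanceForcesType} padding is harmless only because a one-pole realization has the spare parent $p$, whose neighborhood plays no role in the claim; a zero-pole tree has no such node. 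Closing this case needs an additional ingredient, for example a type-preserving way to add exactly the missing number of nodes, or a convention under which $\A$ may be run with parameter $n_0$ on a smaller instance. As written, that case is not established.
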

\begin{proof}
    Suppose not, then, in the first $k$ iterations, the testing procedure finds a virtual tree $\T$, such that the type $\Type(\T)$ is the empty set. Then \Cref{lem:instanceForcesType} gives us an instance on which $\A$ fails to compute a correct solution, a contradiction to the fact that $\A$ is a correct algorithm.
\end{proof}

These results now immediately imply our final result about the complexity gaps in the polynomial regime of the complexity landscape.

\begin{theorem}
    Let $\Pi$ be any LFL that admits an $o(n^{1/k})$ round deterministic LOCAL algorithm, for any integer $k \ge 1$.

    There exists a deterministic LOCAL algorithm that solves $\Pi$ in $O(n^{1/(k+1)})$ rounds.
\end{theorem}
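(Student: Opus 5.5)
The plan is to chain the results of this section; almost no new argument is needed. Let $\Pi$ be an LFL and $\A$ a deterministic algorithm solving it in $T(n) = o(n^{1/k})$ rounds.

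\textbf{Step 1: reduce to the node--edge checkable setting.} By Theorem \EquivNodeEdge{} (the equivalence theorem proved via \Cref{thm:reduceR} and \Cref{lem:rIsOne}), there is a node--edge checkable LFL $\Pi'$ that is inter-convertible with $\Pi$ in $O(1)$ rounds. Composing $\A$ with the $O(1)$-round conversion gives an $o(n^{1/k})$-round algorithm $\A'$ for $\Pi'$. It therefore suffices to prove the statement for $\Pi'$ and then convert back at additive $O(1)$ cost. From here on I write $\Pi$ for the node--edge checkable version.

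\textbf{Step 2: turn the fast algorithm into a good assigner.} Since $T(n) = o(n^{1/k})$, for the constant $\varepsilon = 1/(40\kappa^2)$ of \Cref{lem:boundedRealizations} we can fix $n_0$ large enough that:
\begin{itemize}
\item $T(n_0) \le \varepsilon n_0^{1/k}$;
\item $10T(n_0) \ge \ell_{pump}$;
\item the "sufficiently large $n_0$" conditions in the size calculation hold.
\end{itemize}
All constants $\kappa$, $\ell_{pump}$ and $z$ depend only on $\Pi$. We then run the modified testing procedure with $\ell = \ell_{pump}$. It calls the IndependentClass procedure on disjoint ID pools $D_\Hp$; the total of $(z+1)|\Sout| n_0$ IDs is polynomial in $n_0$, so these are legal IDs. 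This builds an independent assigner $f$. By \Cref{lem:fIsGood}, which rests on \Cref{lem:instanceForcesType} and \Cref{lem:boundedRealizations}, $f$ is $k$-good: an empty type or a non-good $0$-pole tree within the first $k$ iterations would yield an $n_0$-node instance on which $\A$ errs.

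\textbf{Step 3: check what "$k$-good" delivers.} If the testing procedure returns $\infty$, or any value at least $k$, for $f$, then $f$ is at least $k$-good. In that case \Cref{lem:fastAlgo} gives an $O(n^{1/(k+1)})$ algorithm, or even $O(\log n)$, which is also $O(n^{1/(k+1)})$. Running \Cref{lem:fastAlgo} requires the $(\gamma,\ell,k)$-decomposition of \Cref{lem:polyDecomp}, with $\gamma = O(n^{1/(k+1)})$ and constant $\ell = \ell_{pump}$. Its unbounded-degree extension is the corollary following \Cref{lem:logDecomp}. Note that $f$ is a fixed finite object, computed offline from $\Pi$ and $\A$ and hard-coded; no node needs to simulate $\A$ at run time.

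\textbf{Main obstacle.} The delicate point is the mismatch between the parameters fixed in Step 2 and those that Step 3 requires. Step 2 constructs $f$ only on compress paths of length between $\ell_{pump}$ and $2\ell_{pump}$, and only for the first $k$ iterations. I need to confirm two things:
\begin{itemize}
\item \emph{Lengths.} The decomposition's compress paths have lengths in exactly the range $[\ell, 2\ell]$ on which $f$ is defined.
\item \emph{Iterations.} A decomposition with $k$ compress layers only ever queries types from iterations $\le k$, matching the induction in \Cref{lem:assignedTreesAreGood}.
\end{itemize}
Any path not covered by the first $k$ iterations can be given an arbitrary class, since the decomposition never queries it. Once these ranges are aligned, combining Steps 1–3 yields an $O(n^{1/(k+1)}) + O(1)$ round algorithm for the original $\Pi$.
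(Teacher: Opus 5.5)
Your proposal is correct and follows the paper's own argument: the paper's proof is exactly ``build a $k$-good independent assigner from $\A$ (\Cref{lem:fIsGood}) and apply \Cref{lem:fastAlgo}'', and your Step~1 and the length/iteration checks in Step~3 only make explicit the standing node--edge reduction and the parameter alignment that the paper leaves implicit. One small nit: if $T(n)$ is bounded, your requirement $10T(n_0)\ge \ell_{pump}$ cannot be met as stated, but replacing $T(n)$ by the bound $T(n)+\ell_{pump}$, which is still $o(n^{1/k})$, fixes this.
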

\begin{proof}
    We construct an independent assigner $f$ that is $k$ good. Then by \Cref{lem:fastAlgo} there is an algorithm that solves $\Pi$ in $O(n^{1/(k+1)})$ rounds.
\end{proof}

\begin{theorem}
    Let $\Pi$ be any LFL that admits an $n^{o(1)}$ round deterministic LOCAL algorithm.

    There exists a deterministic LOCAL algorithm that solves $\Pi$ in $O(\log n)$ rounds.
\end{theorem}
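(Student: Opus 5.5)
Assume $\Pi$ admits a deterministic algorithm $\A$ with round complexity $T(n) = n^{o(1)}$. The plan is to show that there is a single finite threshold $K$, depending only on $\Pi$, such that any $K$-good independent assigner is in fact $\infty$-good. Then the $\infty$-good case of \Cref{lem:fastAlgo} gives an $O(\log n)$-round algorithm.

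The threshold comes from the testing procedure. In every iteration except the last, the set $R$ of (type, input)-pairs strictly grows, and $R \subseteq \Pow(\Sout)\times\Sin$. So the procedure runs for at most $K := 2^{|\Sout|}\cdot|\Sin| + 1$ iterations. Consequently, if it has not output some $k < K$ after $K$ iterations, it must have reached the fixed point $R'=R$ and output $\infty$.

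\textbf{Step 1.} Since $T(n)=n^{o(1)}$, we have $T(n) = o(n^{1/K})$. Apply the construction of \Cref{sec:algoImpliesFunc} with $k = K$: fix $n_0$ with $T(n_0)\le \varepsilon n_0^{1/K}$, and build $f$ on the fly through the modified testing procedure.

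\textbf{Step 2.} The main step is to check that the construction never needs more than $K$ iterations. \Cref{lem:boundedRealizations} guarantees pumped realizations of size at most $n_0$ for all iterations $j\le K$. By the argument of \Cref{lem:instanceForcesType} and \Cref{lem:fIsGood}, no empty type and no bad $0$-pole virtual tree can arise during the first $K$ iterations, since otherwise $\A$ would fail on some $n_0$-node instance. Because the procedure cannot run more than $K$ iterations in total, this means it stabilises with $R'=R$ within those $K$ iterations. Hence $f$ is $\infty$-good.

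I expect the main obstacle to be that the testing procedure as written may go through the loop a further time after the last growing step merely to detect the fixed point. That extra pass could in principle encounter new virtual compress paths that were not assigned during the first $K$ iterations, which would force us beyond the size bound. To handle this I would pick $K$ one larger than the count above. Alternatively, I would observe that once $R$ is stable, the set of virtual compress paths built from $R$ is identical to the set in the previous iteration, so every such path already has a value of $f$ and no new realizations are required.

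\textbf{Step 3.} Finally, combine the $\infty$-good $f$ with \Cref{lem:logDecomp} via \Cref{lem:fastAlgo} to solve $\Pi$ in $O(\log n)$ rounds.
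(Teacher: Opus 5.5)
Your proof is correct and follows the paper's own argument. You bound the number of testing-procedure iterations by a constant depending only on $\Pi$, use $T(n)=o(n^{1/K})$ to build an assigner that survives that many iterations, and conclude it must be $\infty$-good; the paper phrases this as building a $(k+1)$-good function, which is exactly your extra-pass slack.

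Only your ``alternatively'' remark is off. In the pass that detects $R'=R$, the compress paths are built from the set produced by the previous, still-growing iteration, so they can be new and need fresh values of $f$. Your primary fix via the $+1$ in $K$ already covers this.
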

\begin{proof}
    Let $k$ be the maximum number of iterations that the testing procedure can run (the set $R$ must grow in every iteration).

    Since the algorithm runs in $n^{o(1)}$ rounds, it can be made to run in time $o(n^{1/k})$ and so we can use it to construct a $(k+1)$ good function. Since $k$ is the maximum number of iterations the testing procedure can run, this means that $f$ must be $\infty$-good.

    Then by \Cref{lem:fastAlgo} there is an algorithm that solves $\Pi$ in $O(\log n)$ rounds.
\end{proof}

Together these two results imply \Cref{thm:GapResults}, what is left is to argue that we can determine the correct $k$ just based on the description of $\Pi$.

The constant $\ell_{\text{pump}}$ depends only on the description of $\Pi$ and with that value fixed, the number of possible independent assigners is a constant. We may now test all of them in the testing procedure to determine the maximum $k$ for which a $k$-good independent assigner exists. This means we can compute which theorem applies, based solely on the description of $\Pi$.  

\GapResults*

\section{Outlook and Open Questions}

The results in \Cref{thm:polyProblems,thm:GapResults} show that whether allowing unbounded degrees changes the complexity landscape on trees depends strongly on the chosen problem definition. In this work, we show that for LFLs the polynomial part of the complexity landscape coincides with that of LCLs on bounded-degree trees, while more generic definitions would not allow such gap results. This naturally raises the question of what happens below the $O(\log n)$ regime.

More specifically, which structural properties of a problem determine whether it can be solved efficiently when degrees are large? 

We believe that understanding the case of LFLs would be a big step in the right direction. Intuitively, we are interested in what complex behavior already emerges when restricting to a finite number of labels and a finite number of cases.

As discussed earlier, the complexity of MIS is $\Theta(\log n / \log \log n)$, or more precisely $\Theta(\min\{\Delta, \log_\Delta n\})$, showing that the lower regime of the complexity landscape must necessarily change in the unbounded-degree setting, even for LFLs. 
We know that the problem of Sinkless Orientation remains $\Theta(\log n)$ even in the unbounded-degree setting, whereas problems such as $(2,2)$-ruling sets—also encodable as LFLs—are known to have a $\sqrt{\Delta}$ dependence \cite{balliu2022hide}. This motivates the following question.

\vspace{0.15cm}
\begin{tcolorbox}
\textbf{Question:} What dependencies on $\Delta$ are possible for LFLs, and can one prove corresponding complexity gaps?
\end{tcolorbox}

One of the strongest gap results, due to \cite{CKP19exponential}, has as a starting point problems with complexity $o(\log_\Delta(n))$ and readily extends to LFLs.

\begin{theorem}[\cite{CKP19exponential}] \label{thm:logGap}
    Let $\Pi$ be an LFL on hereditary graphs and let $\A$ be a deterministic LOCAL algorithm for $\Pi$. If $\A$ solves $\Pi$ in $f(\Delta) + o(\log_\Delta n)$ rounds for some function $f$, then there exists a deterministic LOCAL algorithm solving $\Pi$ on the same class of graphs in
    \[
        O\bigl((1+f(\Delta))(\log^* n - \log^* \Delta)\bigr)
    \]
    rounds.
\end{theorem}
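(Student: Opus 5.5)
The plan is to check that the proof of Chang, Kopelowitz and Pettie \cite{CKP19exponential} never uses that the problem is specified by a finite list of isomorphism-type neighborhoods. It only uses that correctness is determined by the radius-$r$ views of nodes. That argument is a simulation trick. Given $\A$ with running time $f(\Delta)+o(\log_\Delta n)$, each node lies about the network size and runs $\A$ with a fake size $n^*$. Here $n^*$ is a constant depending only on $\Delta$, chosen so that $t:=f(\Delta)+o(\log_\Delta n^*)$ satisfies $\Delta^{O(t+r)} < n^*$. Each node also uses a \emph{locally unique} identifier instead of a globally unique one: it only needs to differ from all identifiers within distance $O(t+r)$. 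Such identifiers, from a range of size $\poly(n^*)$, can be computed by a distance-$O(t+r)$ coloring with $\Delta^{O(t+r)}$ colors. This takes $O((1+t)(\log^* n-\log^*\Delta))$ rounds, via Linial-style color reduction on the power graph $G^{O(t+r)}$, which has maximum degree $\Delta^{O(t+r)}$. Since $t=O(1+f(\Delta))$ once $n^*$ is fixed as a function of $\Delta$, this gives the claimed bound.

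The key step is correctness. Fix a node $v$ and its radius-$r$ view. Every node in $\Nh_r(v)$ decides its output from its own radius-$t$ view, so all these decisions depend only on $\Nh_{t+r}(v)$. Inside this ball the fake identifiers are distinct, and the ball has fewer than $n^*$ nodes. Here we use the hereditary assumption: the ball $\Nh_{t+r}(v)$, labeled with its identifiers, can be completed to a graph $G^*$ of the class with exactly $n^*$ nodes and globally distinct identifiers (for trees, by attaching a path), without changing the $(t+r)$-view of $v$. On $G^*$, $\A$ is a correct algorithm for instances of size $n^*$. So the outputs on $\Nh_r(v)$ coincide with those $\A$ produces on $G^*$, and those admit a matching homomorphism $f_v$ to some configuration of $\C$. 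Correctness of an LFL at $v$ is a property of the labeled graph $\Nh_r(v)$ alone: the existence of a homomorphism that is an isomorphism on the required part, with the right labels and the right center. Hence the same $f_v$ certifies local correctness at $v$ in the real graph. Each node can find $f_v$ by brute force after $r$ extra rounds. This is the only place where the LFL definition enters, and it goes through because matching, like isomorphism, is determined by the $r$-hop view.

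The main obstacle is the choice of $n^*$ and the degree bookkeeping. The ball $\Nh_{t+r}(v)$ has up to $\Delta^{t+r}$ nodes, and the same $\Delta$ appears in $f(\Delta)$. So one must pick $n^*=n^*(\Delta)$ such that $f(\Delta)+o(\log_\Delta n^*) + r$ is small enough that $\Delta^{c(t+r)}\le n^*$ for the constant $c$ required by the identifier range. The $o(\log_\Delta n)$ term allows this for $n^*$ large enough relative to $\Delta$, since the term is $\le \epsilon\log_\Delta n^*$ for an $\epsilon$ of our choice. One must also check that the completed instance $G^*$ has maximum degree $\Delta$, so that $\A$'s guarantee with parameter $\Delta$ applies. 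Attaching paths does not increase degrees, so this holds. With these choices the remaining steps are exactly as in \cite{CKP19exponential}.
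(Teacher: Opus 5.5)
Your strategy is what the paper relies on: it gives no proof, only the remark that the argument of \cite{CKP19exponential} readily extends. Your key observation is also right: whether a matching $f_v$ exists is decided by the labeled graph $\Nh_r(v)$ alone, so LFL correctness is as local as LCL correctness.

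The step that fails as written is the construction of $G^*$. Heredity means closure under taking induced subgraphs, so it only lets you go down, not up. It tells you that the subgraph induced by the ball lies in the class. It gives no way to complete the ball to a member of the class with exactly $n^*$ nodes, maximum degree $\Delta$, and the same view at $v$. In general no such completion exists. In the hereditary class of complete graphs, the ball is all of $K_{\Delta+1}$, and no complete graph on $n^*>\Delta+1$ nodes has maximum degree $\Delta$. (Even for trees, attaching a path does add a neighbor to the attachment vertex, so you must attach it at a leaf of a ball of radius $t+r+1$, not anywhere.)

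The argument in \cite{CKP19exponential} does not pad at all. It runs $\A$ on the induced ball itself, which is in the class by heredity and has fewer than $n^*$ nodes. It then uses that $\A$ must be correct whenever its size parameter is only an \emph{upper bound} on the number of nodes. You must either adopt that convention explicitly, or restrict the statement to classes closed under your padding, such as forests or trees. Note that this convention differs from this paper's preliminaries, where nodes know the exact $n$. The restricted version is all the paper needs.

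Two smaller points. First, nodes must know $\Delta$ to compute $n^*(\Delta)$ and $t$. \cite{CKP19exponential} assume this, but this paper's model does not grant it, so say so.

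Second, $t=O(1+f(\Delta))$ does not follow merely from $n^*$ being some function of $\Delta$. You need $\log_\Delta n^*=\Theta(f(\Delta)+r)$, say $\log_\Delta n^*=2c(f(\Delta)+r)$ with $\epsilon<1/(2c)$. This gives $t<2f(\Delta)+r$ and hence $\Delta^{c(t+r)}<n^*$. For this, the sublogarithmic term must already be at most $\epsilon\log_\Delta n$ at this $n^*$, i.e.\ beyond a threshold independent of $\Delta$. That is how \cite{CKP19exponential} phrase the hypothesis: running time $f(\Delta)+\epsilon\log_\Delta n$ for a sufficiently small constant $\epsilon$. A threshold growing arbitrarily with $\Delta$ would force a larger $n^*$ and break the claimed bound.
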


For bounded-degree graphs, this result recovers the $o(\log n)$ versus $\omega(\log^* n)$ gap, because the multiplicative $f(\Delta)$ dependencies become negligible.  

While the multiplicative $f(\Delta)$ dependencies in \Cref{thm:logGap} are unavoidable for some problems, such as $\Delta$-ruling sets or distance-$\Delta$ coloring, these problems cannot be encoded as LFLs. They require referencing something at distance $\Delta$ and hence can only be encoded using \emph{pointer chains} of length $\Delta$. These can only be realized with $\Omega(\Delta)$ distinct labels\footnote{As long as the checking radius needs to be a constant.}. In contrast, LFLs require a finite label set independent of the degree.

We believe that \Cref{thm:logGap} is not tight for LFLs. In particular, the multiplicative $f(\Delta)$ dependence may be unnecessary and could potentially be replaced by an additive one.

\vspace{0.15cm}
\begin{tcolorbox}
\textbf{Conjecture:} Any LFL $\Pi$ that admits an $f(\Delta) + o(\log_\Delta n)$-round LOCAL algorithm also admits an $O(1)$-locality SLOCAL algorithm, and hence an $O(g(\Delta) + \log^* n)$-round LOCAL algorithm for some function $g(\Delta)$.
\end{tcolorbox}

The above result by \cite{CKP19exponential} already gives us an idea of what the correct structural property is for problems that are $o(\log_\Delta n)$. Namely that these problems admit fast SLOCAL algorithms.

The SLOCAL model is a sequential analogue of the LOCAL model introduced in \cite{ghaffari2017complexity}. We process nodes in an arbitrary (adversarially chosen) order. When node $v$ is processed, $v$ can read its $r$-hop neighborhood and it computes and locally stores its output $y$ and potentially additional information.
When reading its $r$-hop neighborhood, $v$ also reads all the information that has been locally stored by the
previously processed nodes there. We call the parameter $r$ the locality of an SLOCAL algorithm.

Finally, for bounded-degree trees, the remaining part of the complexity landscape is the gap between $\omega(1)$ and $o(\log^* n)$. It is currently unclear whether this result—and its proof based on round elimination \cite{brandt21trees}—can be extended to LFLs. The node-edge checkable formalism for LFLs appears sufficiently expressive to potentially support such an extension, but it is unclear in which way a dependence on $\Delta$ might emerge, if at all.

\vspace{0.15cm}
\begin{tcolorbox}
\textbf{Question:} What complexities are possible for LFLs in the range $\omega(1)$ to $o(\log^* n)$?
\end{tcolorbox}

\appendix
\crefalias{section}{appendix}

\section{3-cycle Detection as an Elaborate Example}\label{apx:examples}
We finish our suite of examples with a more complex one: detecting 3-cycles. Formally every node must output 1 if and only if it is part of a 3-cycle and 0 otherwise. This is also the first case where it is nice to use a self-loop inside of a configuration. We define $3CYCLE = (\emptyset, \set{0,1}, 2, \C)$ with $\C$ being given in \Cref{fig:3Cycle}.

\begin{figure}[!ht]
    \centering
    \includesvg[width=0.6\linewidth]{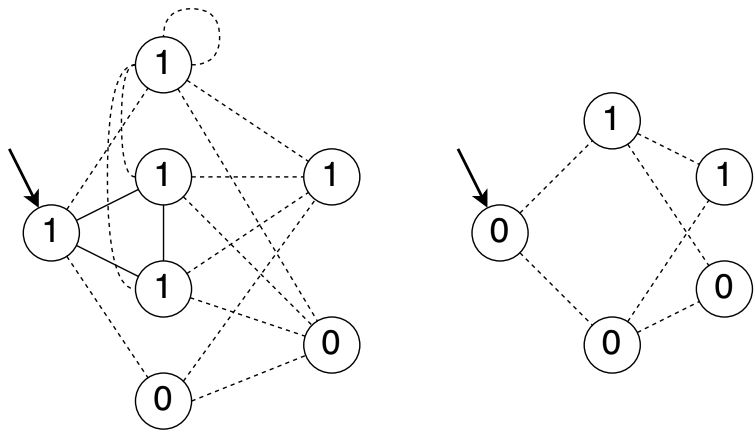}
    \caption{Two configurations to encode 3-cycle detection.}
    \label{fig:3Cycle}
\end{figure}

We will now argue that $3CYCLE$ accurately captures the problem. We will consider some node $v$ and show that it can output $1$ if and only if it is in a 3-cycle. 
We will only argue the case in which $v$ is not in a 3-cycle and the other case follows similarly. For this we show that $v$ cannot use the left configuration and therefore cannot output 1. Any two neighbors $u,w \in \Nh(v)$ that both output 1, cannot be mapped to the required nodes in the left configuration. Since $v$ is not in a 3-cycle $u,w$ cannot have an edge between them. But by Rule \ref{rule:required} this edge must exist if $v$ matches configuration 1. As a result $v$ cannot use this configuration. We now argue that $v$ can always use the second configuration. So $v$ outputs 0 and maps any neighbors at distance 1 to its immediate neighbors in the configuration. It then maps all of its distance 2 neighbors onto the respective distance 2 neighbors in the configuration, this is always possible, and so $v$ can be matched to the second configuration.

\section{LCLs with Infinitely Many Configurations}\label{sec:partition-gadgets}
We present a formal version of the construction described in the introduction. 
For this construction, we retain the \emph{Local} of Locally Checkable Labelings in the sense that solutions depend only on the local information around a node. To this end, we call a problem \emph{$r$-checkable} if 
\begin{enumerate}
    \item the local correctness of a solution at a node $v$ depends only on the $r$-hop neighborhood of $v$, and
    \item a solution is globally correct if and only if it is locally correct at every node.
\end{enumerate} 

While in the introduction we enumerated all possible rooted trees with a certain number of nodes, we do the same here formally by using partitions of integers. In principle each such partition represents a possible height 2 tree, which we will call a partition gadget.

These partition gadgets provide many distinct local configurations of the same overall size, allowing us to make nodes locally distinguishable without rapidly consuming nodes.

We begin by recalling some basic facts about integer partitions, which form the combinatorial foundation of our gadgets.

\begin{definition}[Integer Partition]
A \emph{partition} of a natural number $n \in \mathbb{N}$ is a multiset of positive integers 
\[
\La = \set{a_1, a_2, \ldots, a_k}
\]
such that 
\[
a_1 + a_2 + \cdots + a_k = n,
\]
where the ordering of the summands is irrelevant. 
We denote by $p(n)$ the number of distinct partitions of $n$.
\end{definition}

The function $p(n)$ grows quickly, providing the combinatorial explosion that we will exploit shortly. It is a classical result of Hardy and Ramanujan~\cite{HardyRamanujan1918} that the partition function satisfies the asymptotic estimate:
\begin{lemma}[\cite{HardyRamanujan1918}]\label{lem:boundPartitions}
\[
p(n) \sim \frac{1}{4n\sqrt{3}} \exp\!\Bigl(\pi \sqrt{2n/3}\Bigr).
\]
\end{lemma}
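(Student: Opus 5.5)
This is the Hardy--Ramanujan asymptotic, a classical cited result. Within this paper it would be invoked rather than reproved. Still, here is how I would establish it, or at least the weaker bound $\log p(n) = \Theta(\sqrt n)$ that the construction actually needs.

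The plan is analytic, via the generating function
\[
P(q)=\sum_{n\ge0}p(n)q^n=\prod_{m\ge1}\frac{1}{1-q^m}.
\]
First I would determine the behaviour of $P$ near $q=1$. Write $q=e^{-t}$. Then
\[
\log P(e^{-t})=\sum_{m\ge1}\sum_{j\ge1}\frac{e^{-jmt}}{j}=\sum_{j\ge1}\frac{1}{j}\cdot\frac{1}{e^{jt}-1}.
\]
A Mellin transform argument, or the modular transformation of the Dedekind eta function $\eta(\tau)=q^{1/24}\prod(1-q^m)$ under $\tau\mapsto-1/\tau$, gives
\[
\log P(e^{-t})=\frac{\pi^2}{6t}+\frac12\log\frac{t}{2\pi}-\frac{t}{24}+o(1).
\]
This is the key input. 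The modular relation is the cleanest route, since it gives an exact identity rather than an estimate.

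Next I would apply the saddle point method, writing
\[
p(n)=\frac{1}{2\pi i}\oint P(q)q^{-n-1}\,dq
\]
on the circle $|q|=e^{-t_0}$. The saddle point is at $\pi^2/(6t_0^2)=n$, that is, $t_0=\pi/\sqrt{6n}$, where the exponent is $\pi^2/(6t_0)+nt_0=\pi\sqrt{2n/3}$. A Gaussian approximation near the real axis, with second derivative $\pi^2/(3t_0^3)$, combined with the prefactor $\sqrt{t_0/2\pi}$ from the transformation formula, yields exactly $\frac{1}{4n\sqrt3}e^{\pi\sqrt{2n/3}}$.

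The main obstacle is bounding the contribution of the rest of the circle. One must show that $|P(e^{-t_0+i\theta})|$ is exponentially smaller than $P(e^{-t_0})$ once $|\theta|\gg t_0^{3/2}$. I would do this using
\[
\log|P(q)|-\log P(|q|)=-\sum_m\Re\log\frac{1-q^m}{1-|q|^m},
\]
and showing that the $m=1$ term alone already yields a saving of order $\theta^2/t_0^3$. Alternatively, one could use the full Hardy--Ramanujan circle method, or Newman's simplified Tauberian proof. If only the crude $\Theta(\sqrt n)$ bound on $\log p(n)$ were needed, an elementary argument suffices. The upper bound follows from $p(n)\le P(q)q^{-n}$ together with $\log P(e^{-t})\le\pi^2/(6t)$. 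The lower bound follows by counting partitions into distinct parts.
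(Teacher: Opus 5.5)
\textbf{The gap is in the tail estimate, the step you yourself flag as the main obstacle.} The rest of the analytic sketch is sound: the modular identity is correct, and with $\phi(t)=\pi^2/(6t)+nt$ the saddle point $t_0=\pi/\sqrt{6n}$, $\phi(t_0)=\pi\sqrt{2n/3}$, $\phi''(t_0)=\pi^2/(3t_0^3)$ and the prefactor $\frac{1}{2\pi}\sqrt{t_0/2\pi}\,\sqrt{2\pi/\phi''(t_0)}=\frac{1}{4n\sqrt3}$ all check out.

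In the product form you wrote, the $m=1$ term equals
\[
-\log\frac{|1-q|}{1-|q|}=-\frac12\log\Bigl(1+\frac{2r(1-\cos\theta)}{(1-r)^2}\Bigr)\approx-\frac12\log\Bigl(1+\frac{\theta^2}{t_0^2}\Bigr),\qquad r=e^{-t_0}.
\]
For small $\theta$ this saves only about $\theta^2/(2t_0^2)$, not $\theta^2/t_0^3$. At the Gaussian scale $\theta\asymp t_0^{3/2}$ that is $O(t_0)\to0$. For every $\theta$ the saving is at most $\log\frac{1+r}{1-r}=\frac12\log n+O(1)$. This cannot suffice. The central window has width about $n^{-3/4}$ while the tail has length of order $1$, so you need $|P(q)|/P(|q|)=o(n^{-3/4})$ off the window. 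The $m=1$ factor gives at best $O(n^{-1/2})$. In the product form the $\theta^2/t_0^3$ saving is collective: each of the roughly $1/t_0$ factors with $m\lesssim 1/t_0$ contributes about $\theta^2/t_0^2$. Using it would mean summing all of them while also controlling the factors where $m\theta$ is not small.

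\textbf{The clean repair} is to use the Lambert-series form you already wrote for real $t$, namely $\log P(q)=\sum_{j\ge1}\frac1j\frac{q^j}{1-q^j}$. Bound every term with $j\ge2$ by $\bigl|\frac{q^j}{1-q^j}\bigr|\le\frac{|q|^j}{1-|q|^j}$ and keep only $j=1$:
\[
\log P(|q|)-\log|P(q)|\ \ge\ \frac{r}{1-r}-\Re\frac{q}{1-q}=\frac{r(1+r)(1-\cos\theta)}{(1-r)\bigl((1-r)^2+2r(1-\cos\theta)\bigr)}.
\]
This is $\asymp\theta^2/t_0^3$ for $|\theta|\le t_0$ and $\gtrsim 1/t_0$ for $t_0\le|\theta|\le\pi$. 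Take a central window such as $|\theta|\le t_0^{1.4}$, where the cubic term $\theta^3/t_0^4$ is still $o(1)$. Then the tail is negligible and your saddle-point argument goes through.

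\textbf{Relation to the paper.} The paper gives no proof of this lemma; it cites Hardy--Ramanujan. It uses the lemma only in \Cref{cor:largeEnoughG}, which needs nothing beyond superpolynomial growth of $p(g)$. Your elementary lower bound is correct and already covers that use: subsets of $\{1,\dots,k\}$ with $k\approx\sqrt{2n}$, completed by one large distinct part, give $p(n)\ge 2^{\Omega(\sqrt n)}$.
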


\medskip
With this background in place, we now describe how to turn a partition into a local structure that can be attached to a node.

\begin{definition}[Partition Gadget]
    Let $\La = (a_1, a_2, \ldots, a_k)$ be a partition of some integer.  
    To construct the \emph{$\La$-gadget} attached to a node $v$, we proceed as follows. For each $1 \le i \le k$: 
    \begin{enumerate}
        \item if $a_i = 1$, attach a single leaf node directly to $v$;
        \item if $a_i > 1$, attach a new node $u_i$ to $v$, and then attach $a_i - 1$ leaf nodes to $u_i$.
    \end{enumerate}
\end{definition}

In other words, each part $a_i$ of the partition corresponds either to a single leaf or to a small star centered at a new node $u_i$. Distinct partitions thus lead to structurally different neighborhoods of radius $2$, while keeping the overall gadget size bounded.  

\medskip
Next, we explain how to combine partition gadgets into paths.

Let $0 < \alpha < 1$ and fix a natural number $g$. We fix an ordering $(\La_i)_{i \in [p(g)]}$ of all partitions of $g$ for each possible value of $g$. A \emph{gadget path} is then a path whose nodes $v_i$ carry these gadgets in sequence. Because $p(g)$ is exponentially large in $\sqrt{g}$, we have more than enough gadgets available to sustain long paths. However, to keep the total number of nodes under control, we truncate the path after $\tfrac{1}{2}\lfloor g^{\frac{1}{\alpha}- 1} \rfloor$ gadgets.  

If we choose $g$ large enough with respect to $\alpha$, then we always have enough gadgets. The following is an immediate consequence of \Cref{lem:boundPartitions}.

\begin{corollary}\label{cor:largeEnoughG}
Let $0<\alpha<1$, there exists a constant $g_\alpha$, such that all integers $g>g_\alpha$ satisfy   
\[
    g^{\frac{1}{\alpha}-1} < p(g)
    \]  
\end{corollary}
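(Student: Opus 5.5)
The plan is to derive the corollary directly from the Hardy--Ramanujan asymptotic of \Cref{lem:boundPartitions}, comparing logarithms. Write $\beta := \frac{1}{\alpha}-1$. Since $0<\alpha<1$, $\beta$ is a fixed positive real constant, so the left-hand side $g^{\beta}$ grows only polynomially in $g$. The right-hand side $p(g)$ grows like $\exp(\Theta(\sqrt{g}))$ up to a polynomial factor. The claim should therefore follow from the elementary fact that $\exp(c\sqrt{g})$ eventually dominates every polynomial.

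Concretely, I would first use the asymptotic equivalence in \Cref{lem:boundPartitions} to fix a threshold $g_1$ such that for all $g>g_1$,
\[
p(g) \ge \frac{1}{2}\cdot \frac{1}{4g\sqrt{3}}\exp\!\bigl(\pi\sqrt{2g/3}\bigr).
\]
This works because $\sim$ means the ratio tends to $1$, so it eventually exceeds $1/2$. It then suffices to show
\[
g^{\beta}\cdot 8\sqrt{3}\, g < \exp\!\bigl(\pi\sqrt{2g/3}\bigr)
\]
for all sufficiently large $g$. Taking logarithms, this becomes
\[
(\beta+1)\ln g + \ln(8\sqrt{3}) < \pi\sqrt{2/3}\,\sqrt{g}.
\]
This holds eventually since $\ln g = o(\sqrt{g})$. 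Let $g_2$ be a threshold beyond which it holds. Setting $g_\alpha := \max\{g_1,g_2\}$ finishes the argument.

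There is no genuine obstacle here. The only point needing care is that \Cref{lem:boundPartitions} is an asymptotic statement rather than an explicit inequality, so one must pass through the eventual lower bound with a constant factor, as above. If one wanted to avoid asymptotics altogether, the alternative is the elementary bound $p(g)\ge 2^{\lfloor\sqrt{g}\rfloor}$. It comes from counting partitions into distinct parts drawn from $\{1,\dots,\lfloor\sqrt{g}\rfloor\}$, with the remainder padded by $1$'s, since any subset sums to at most $g$. The same logarithmic comparison then applies.
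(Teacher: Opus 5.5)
Your argument is correct and follows the paper's route: the paper calls this corollary an immediate consequence of the Hardy--Ramanujan asymptotic (\Cref{lem:boundPartitions}), and your eventual lower bound with factor $1/2$, followed by the comparison $\ln g = o(\sqrt{g})$, fills in exactly that step. One small slip in your optional aside: subsets of $\{1,\dots,\lfloor\sqrt{g}\rfloor\}$ that differ only in whether they contain $1$ give the same partition after padding with $1$'s, so the counting argument yields $p(g)\ge 2^{\lfloor\sqrt{g}\rfloor-1}$ rather than $2^{\lfloor\sqrt{g}\rfloor}$ (indeed $p(1)=1<2$), though this weaker bound still suffices.
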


Now that we know that we have enough gadgets, we can define our length constrained gadget paths. 

\begin{definition}[Gadget Path]
    Fix a unique order $(\La_i)_{i \in [p(n)]}$ for the partitions of each natural number $n$.

    Let $0 < \alpha < 1$ be any real number and let $g$ be any integer large enough s.t.
    \[
    g^{\frac{1}{\alpha}-1} < p(g)
    \]
    then a path 
    \[
    P = (v_1, v_2, \ldots, v_L)
    \]
    in any graph is an \emph{$(\alpha,g)$ gadget path} if each $v_i$ has the $i$-th partition gadget for $g$ attached and the length $L$ is bounded by 
    \[
    1 < L \le \tfrac{1}{2} g^{\frac{1}{\alpha}-1}
    \]
\end{definition}

The key property of this definition is that despite the enormous variety of gadgets, the total number of nodes remains bounded with respect to $g$. In any graph on $n$ nodes, the length of any $(\alpha,g)$ gadget path is restricted by (i) the inherent length restriction of the definition (ii) the fact that for large values of $g$ the graph construction uses a lot of nodes. We obtain the following corollary. 

\begin{corollary}\label{cor:gadgetPaths}
    An $(\alpha,g)$-gadget path of length $L$ contains exactly 
    \[
    L + L \cdot g
    \]
    nodes. Consequently for any fixed $0 < \alpha < 1$ and all allowed integers $g$, the length of the longest $(\alpha,g)$-gadget path in an $n$-node graph is in $\Theta(n^{1-\alpha})$.
\end{corollary}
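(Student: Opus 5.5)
The corollary has two parts: an exact node count and an asymptotic bound on the longest gadget path. I will prove them in that order.

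\emph{Node count.} I count nodes node-by-node along the path. Fix a path node $v_i$ and its attached gadget for a partition $\La_i=(a_1,\ldots,a_k)$ of $g$. A part $a_j=1$ contributes one leaf, which is exactly $a_j$ nodes. A part $a_j>1$ contributes the node $u_j$ plus $a_j-1$ leaves, again exactly $a_j$ nodes. Summing over $j$, each gadget has $\sum_j a_j=g$ nodes besides $v_i$. Together with the $L$ path nodes this gives $L+Lg$ in total. I assume here, as the construction intends, that gadgets of distinct path nodes are disjoint.

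\emph{Upper bound.} Take an $(\alpha,g)$-gadget path of length $L$ in an $n$-node graph. The definition gives
\[
L\le \tfrac12 g^{\frac1\alpha-1}, \qquad\text{so}\qquad g\ge (2L)^{\frac{\alpha}{1-\alpha}}.
\]
Combining with the node count,
\[
n\ge L(1+g)\ge L\cdot (2L)^{\frac{\alpha}{1-\alpha}}=\Omega\bigl(L^{\frac{1}{1-\alpha}}\bigr).
\]
Rearranging yields $L=O(n^{1-\alpha})$, uniformly over all allowed $g$.

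\emph{Lower bound.} For large $n$ I choose $g=\lfloor c\,n^{\alpha}\rfloor$ for a small constant $c$ and $L=\lfloor\tfrac12 g^{\frac1\alpha-1}\rfloor$. This is the longest length the definition allows for that $g$, and it satisfies $L=\Theta(n^{1-\alpha})$. By \Cref{cor:largeEnoughG}, $g>g_\alpha$ holds for $n$ large enough. Hence $g^{\frac1\alpha-1}<p(g)$, so there are enough distinct partitions and the gadget path exists. Its size is
\[
L(1+g)=\Theta\bigl(g^{\frac1\alpha}\bigr)=\Theta(c^{\frac1\alpha}n)\le n
\]
once $c$ is small enough. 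I then pad with a disjoint path to reach exactly $n$ nodes.

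The main obstacle is interpreting ``longest'' in the statement. The upper bound holds for every $n$-node graph. The lower bound needs a specific graph, and it must also handle small $n$ or $g\le g_\alpha$, where the statement is only asymptotic. I would state the $\Theta$ as: the maximum over $n$-node graphs is $\Theta(n^{1-\alpha})$ for $n$ large enough.
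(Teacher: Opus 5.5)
Your proof is correct and follows the same idea as the paper: each gadget contributes exactly $\sum_j a_j = g$ nodes, and combining $n \ge L(1+g)$ with the definitional bound $L \le \tfrac12 g^{1/\alpha-1}$ gives $L = O(n^{1-\alpha})$. The paper leaves the matching lower bound implicit; it is effectively the $g=\lfloor n^\alpha\rfloor$ construction in the proof of \Cref{thm:polyProblems}. Your explicit construction simply fills in what the paper's one-line justification omits.
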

The second part comes from the fact that, by definition, the lengths $L$ of the gadget paths are bounded by $\tfrac{1}{2}g^{\frac{1}{\alpha}-1}$.

\subsection*{The Family of Problems via Gadget Paths}

We are now ready to define the actual locally checkable problem: a node must output $1$ if and only if it belongs to a valid gadget path, and must output $0$ otherwise. 

\begin{definition}[Gadget-Path Problem $\Pi(\alpha)$]
    An instance of $\Pi(\alpha)$ is a tree $T=(V,E)$ where nodes are required to output one of $\{0,1\}$. The labeling is \emph{locally correct} if and only if every node $v \in V$ satisfies the following:  
    \begin{itemize}
        \item If $v$ is contained in a $(\alpha, g)$-gadget path for some valid $g$, then $v$ must output $1$.  
        \item If $v$ is not contained in any valid gadget path, then $v$ must output $0$.  
    \end{itemize}
    A labeling is \emph{globally correct} if it is locally correct at every node.
\end{definition}

we make the following observations:
\begin{observation}\mbox{}
    \begin{enumerate}
        \item The problem is $4$ checkable.
        \item The problem can be represented in local configurations.
    \end{enumerate}
\end{observation}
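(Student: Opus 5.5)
We have to establish two claims about $\Pi(\alpha)$: that correctness at a node $v$ is decided by its $4$-hop neighborhood, and that this decision can be written as a (now infinite) list of allowed labeled $4$-hop configurations.

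The plan is to fix an explicit local certificate for ``$v$ lies on some $(\alpha,g)$-gadget path'' and then check that it sits inside $\Nh_4(v)$. Recall that each path node $v_i$ carries the $i$-th partition gadget of $g$, which is a height-$2$ tree rooted at $v_i$. The two path neighbors $v_{i-1},v_{i+1}$ are the only neighbors of $v_i$ that are not gadget nodes. Hence from $\Nh_2(v_i)$ one can read off:
\begin{itemize}
\item which neighbors are leaves and which are stars with $a_j-1$ leaves, and so the partition $\La$ and the value $g=\sum a_j$ (given a fixed parsing convention, e.g.\ the path neighbors are exactly the two neighbors whose subtrees are not of height $\le 1$ with only leaves);
\item the index $i$ of $\La$ in the fixed order.
\end{itemize}
For a path neighbor $v_{i\pm1}$, its gadget lies within distance $2$ of it, so within distance $3$ of $v_i$. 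To confirm that the outermost layer of that gadget is really a leaf layer, i.e.\ that the adjacent gadget has the right shape and no extra structure hangs off it, we need distance $4$.

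Concretely, local correctness at $v$ is defined as follows, with the output of $v$ required to be $1$ exactly when one of these holds.
\begin{enumerate}
\item \emph{$v$ is itself a path node.} $v$ carries gadget $\La_i$ of $g$, with $g^{1/\alpha-1}<p(g)$. For $i>1$ it has a neighbor carrying $\La_{i-1}$. For $i<\tfrac12 g^{1/\alpha-1}$ it may have a neighbor carrying $\La_{i+1}$. There are no further non-gadget neighbors, and the pattern is consistent with a path of length between $2$ and the bound.
\item \emph{$v$ is a gadget node of such a path node.} This is visible within distance $2$ from $v$ plus the $2$-hop view of the root, so within $4$ hops.
\end{enumerate}

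The delicate point is the path-level condition. Membership in \emph{some} gadget path only needs local consistency, not knowledge of the whole path. A consecutive chain $\La_i,\La_{i+1},\dots$ that does not start at $\La_1$ must be ruled out. We do this by requiring at each node that its predecessor exists whenever $i>1$. Then any maximal chain containing $v$ is forced by induction to extend back to $\La_1$, while each node only inspects its immediate neighbors' gadgets.

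Similarly, the length bound $L\le\tfrac12 g^{1/\alpha-1}$ is enforced locally by forbidding index $i$ beyond the bound. Since a valid path may be any prefix, a maximal chain from $\La_1$ of admissible length is itself a valid path. So a node is on a valid gadget path if and only if these local conditions hold along its chain, and the global correctness condition coincides with ``locally correct at every node.'' We also need the gadget structures to be unambiguous in a tree; this is where the parsing convention is used, and it is the main technical nuisance.

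The second claim then follows by enumeration. For each $g$ and each admissible $i$, we list every labeled $4$-hop ball that is accepted by the above rule, up to isomorphism. This gives a countable (infinite) set of configurations $\C$, and the labeling is correct exactly when every $\Nh_4(v)$ is isomorphic to a member of $\C$, as for LCLs but without finiteness. The main obstacle is verifying the radius bound $4$ precisely, i.e.\ that neighbors' gadgets, including their leaf layer, are fully contained in $\Nh_4(v)$; everything else is bookkeeping.
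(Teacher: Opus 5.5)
\textbf{There is a genuine gap in how you enforce the path-level condition.}

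Your local rule makes the required output of $v$ a function of the \emph{unlabeled} $4$-hop topology. A node carrying $\La_i$ with $i>1$ must output $1$ as soon as some neighbor carries $\La_{i-1}$. But membership in an $(\alpha,g)$-gadget path is not a local property. It requires the chain of gadgets to reach back to $\La_1$, and that is exactly what gives $\Pi(\alpha)$ its $\Theta(n^{1-\alpha})$ complexity.

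Concretely, take two adjacent nodes carrying $\La_4$ and $\La_5$, where the $\La_4$-node has no $\La_3$-neighbor. Your rule forces the $\La_5$-node to output $1$, while $\Pi(\alpha)$ requires $0$. The induction you invoke (``any maximal chain containing $v$ is forced by induction to extend back to $\La_1$'') is not available. Local correctness at the $\La_4$-node, which outputs $0$, places no constraint on the $\La_5$-node under your rule. So your sentence ``membership in some gadget path only needs local consistency, not knowledge of the whole path'' is precisely what fails.

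A quick sanity check confirms this. If your checker were right, the problem would be solvable in $4$ rounds. That contradicts the lower bound in the proof of \Cref{thm:polyProblems}, where deleting $v_1$ leaves your verdict at $v_L$ unchanged.

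\textbf{The missing idea, which the paper uses, is to let the output labels carry the certificate.} The configurations should be:
\begin{itemize}
\item the propagation configuration for $\La_i$ ($i>1$) requires a neighbor carrying $\La_{i-1}$ \emph{that outputs $1$};
\item the start configuration requires $\La_1$ together with a $\La_2$-neighbor;
\item output $0$ is permitted exactly when neither applies.
\end{itemize}
Then, in a labeling that is locally correct at every node, an induction on the gadget index shows that a node carrying $\La_i$ outputs $1$ if and only if it has a predecessor chain carrying $\La_{i-1},\dots,\La_1$. The same applies to your clause for gadget nodes: it must read the root's output rather than re-derive path membership from topology.

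Since the predecessor's output sits at distance $1$, your radius-$4$ accounting is unaffected. Your enumeration for part 2 also goes through, once the enumerated balls record neighbors' output labels.

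Separately, your condition ``there are no further non-gadget neighbors'' is stronger than the definition of a gadget path. The paper's configurations explicitly allow arbitrary non-gadget neighbors. This is needed, for instance, in the lower-bound instance, where an extra path is attached to $v_L$.
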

We now explain these two points in more detail.  

First, each node $v$ can verify whether or not it has a gadget attached using information from within distance~3. The nodes of $v$'s gadget are at distance 2 from $v$, so checking whether these nodes are leafes requires distance 3 information\footnote{Refer to the definitions in the preliminaries to verify exactly how these distances are defined.}.
Moreover, to identify the gadgets attached to its neighbors, $v$ requires information up to distance~4. From this information, $v$ can determine the corresponding parameter $g$ and which gadget is attached to itself and its neighbor. 

Second, to represent the problem as local configurations, we need a configuration that indicates the start of such a path and one which handles the propagation along the path. We require these configurations for every possible choice of $g$. This generic construction is illustrated in \Cref{fig:partitionProblem}. Importantly, the arbitrary extensions of these configurations (i.e., the dotted nodes marked “arbitrary non-gadget neighbors”) must be constrained so that they cannot themselves form part of a gadget.

To see why this is necessary, consider the top configuration in \Cref{fig:partitionProblem}, and assume that the center node $v$ (marked by the arrow) has a gadget $G_g(\La_i)$ attached. Suppose further that $v$ also has a neighbor $u$ which itself has only nine degree-one neighbors (i.e., leaves). In this case, $v$ would not actually have gadget $G_g(\La_i)$ attached; instead, the combined structure would correspond to a partition of a number that is larger by ten. To avoid such ambiguities, all neighbors of $v$ that are either leaves themselves or that have only leaves (other than $v$) are considered part of a single gadget.  

\begin{figure}[!ht]
    \centering
    \includesvg[width=0.8\linewidth]{img/partitionProblemConfigs.svg}
    \caption{ Key local configurations of $\Pi(\alpha)$, with center nodes indicated by arrows. Nodes without an explicitly specified output may take an arbitrary label (0 or 1). The $G_g(i)$ cones indicate that the gadget of the $i$-th partition of $g$ is attached to a node.\\
    \hspace{1.5cm}\textbf{Bottom :} Start of an $(\alpha,g)$ gadget path. The center node must output 1 if it has gadget $G_g(1)$ attached and it has a neighbor with the $G_g(2)$ gadget attached.\\
    \hspace{1.5cm}\textbf{Top:} Propagation within an $(\alpha,g)$ gadget path. The center node has gadget $G_g(i)$ attached (for some $i \in \mathbb{N}$), and one of its neighbors must have gadget $G_g(i-1)$ attached and output 1. So the configuration essentially requires a predecessor. This configuration only exists up until $i \le \frac{1}{2}g^{\frac{1}{\alpha}-1}$ for each $g$.}
    \label{fig:partitionProblem}
\end{figure}

Now the only thing left to do, is proof that our construction allows us to obtain any polynomial complexity.
\PolyProblems*
\begin{proof}
    Let $0 < r < 1$ be arbitrary, and set $\alpha = 1-r$.  
    We argue that the Gadget-Path Problem $\Pi(\alpha)$ has complexity $\Theta(n^{1-\alpha}) = \Theta(n^r)$.  
    As before, the upper and lower bounds follow from standard distributed arguments.

    \smallskip
    \noindent\textbf{Upper bound.}  
    Consider the deterministic algorithm where each node $v$ first determines the largest partition gadget attached to it.  
    By counting the size $g$ of this gadget, it can determine the $(\alpha,g)$-path constraints that it needs to satisfy (and exactly where its attached gadget lies in the fixed ordering of all partitions of $g$).  
    By \Cref{cor:gadgetPaths}, it is sufficient to explore a neighborhood of radius $O(n^{1-\alpha})$ to decide whether $v$ belongs to such a path.  
    Therefore, after at most $O(n^{1-\alpha})$ rounds in the LOCAL model, node $v$ can correctly output $1$ if it lies on a valid gadget path, and $0$ otherwise.  
    By construction, this yields a correct solution.

    \smallskip
    \noindent\textbf{Lower bound.}  
    For the lower bound, we use the standard indistinguishability argument.  
    Suppose a randomized LOCAL algorithm runs in time $T(n)$.  
    Then for any two instances in which a node $v$ has the same $T(n)$-neighborhood, the output distribution of $v$ must be identical in both.

    Assume, for contradiction, that there exists a constant $\varepsilon > 0$ such that $T(n) \le \varepsilon n^{1-\alpha}$ and the algorithm succeeds with probability greater than $1/2$.  
    Choose $n$ large enough so that these conditions hold and such that $g:=\lfloor n^\alpha \rfloor$ is large enough for \Cref{cor:largeEnoughG}.

    Consider an $(\alpha, g)$-gadget path 
    \[
    P = (v_1, v_2, \ldots, v_L)
    \]
    of length 
    \[
    L = \tfrac{1}{4} \lfloor n^{1-\alpha}\rfloor < \frac{1}{2} g^{\frac{1}{\alpha}-1}
    \]
    , which by \Cref{cor:gadgetPaths} uses fewer than $n$ nodes.  
    We can extend $P$ to an $n$-node tree by, for example, attaching an additional path to $v_L$, which preserves the gadget-path property.

    Now construct two instances:
    \begin{itemize}
        \item $G_1$: the graph containing the full path $P$,
        \item $G_2$: the same graph with $v_1$ removed. In this case, $G_2$ contains no valid $(n,\alpha)$-gadget path (nor any valid $(x,\alpha)$-path for other $x$).
    \end{itemize}
    In $G_1$, the node $v_L$ must output $1$, while in $G_2$, the same node must output $0$.  
    However, since \[
    L = \tfrac{1}{4} \lfloor n^{1-\alpha} \rfloor > \varepsilon n^{1-\alpha} \ge T(n)
    \]
    , the $T(n)$-neighborhood of $v_L$ is identical in both graphs.  
    Thus $v_L$ follows the same output distribution in $G_1$ and $G_2$, and therefore with probability at least $1/2$ the algorithm errs on one of the two instances.

    Hence, the randomized complexity of $\Pi(\alpha)$ is $\Omega(n^{1-\alpha})$, which matches the deterministic upper bound.

    \medskip
    Combining both bounds, the distributed complexity of $\Pi(\alpha)$ is
    \[
        \Theta(n^{1-\alpha}) = \Theta(n^r),
    \]
    as claimed.
\end{proof}

\section{Partially Ordered Locally Finite Labelings } \label{sec:LFLpO}
In some of the proofs in \Cref{sec:ReReduction}, we need to work with more complex output labelings that consist of multiple components, i.e., output labels that are tuples. To keep the configurations manageable, we introduce auxiliary labels that serve as wildcards. For instance, one particularly useful auxiliary label is the special symbol “any label”. 

To put the proofs on a rigorous mathematical foundation, we explicitly define LFLs with partially ordered outputs and show that they are equivalent to our standard notion of LFLs.

The idea of “any label” is that it appears only inside configurations: if a configuration specifies that a node has output “any label”, then this matches any actual output assigned to that node in the solution. In other words, whatever output label a node $v$ carries, it can be matched to a configuration node labeled “any label”.  

We formalize this construction by defining a partial order that relates auxiliary labels to the ordinary output labels. Formally, we introduce partially ordered LFLs, whose definition is modified in the following way:

\begin{itemize}
    \item Definition of a problem: A partially ordered LFL is a tuple $\Pi=(\Sin, \Sout, A, \le, r,\C)$, where $\Sin$, $\Sout$, $r$, $\C$ remain unchanged, but there are now also a set $A$ of auxiliary labels and $\le$ a partial order on $\Sout \cup A$.
    \item Definition of Configurations: an $r$-hop configuration $C=(\V, \E, \cen, \mu, \tau)$ now requires $\mu:V \rightarrow \Sin \times (\Sout\cup A)$ to assign inputlabels together with an outputlabel, or an auxiliary label.
    \item The \textbf{Labelings are respected} rule now requires all of the labels to match in a different sense. We say $f_v:\Nh_r(v) \rightarrow C$ respects the labeling if and only if for any node $u \in \Nh_r(v)$, it holds that the input label assigned to $u$ is the same as the input label assigned to $f(u)$ by $\mu$ and the outputlabel assigned to $u$ is less than or equal to the label assigned to $f(u)$.
\end{itemize}

\paragraph*{3-coloring with partially ordered labels}
With a partial order, we can express the problem of 3-coloring as very intuitive configurations. Namely the LFL $3COL = (\emptyset, \set{R,G,B}, \{\bar{R}, \bar{G}, \bar{B}\}, \leq_3, 1, \C)$, where $\le_3$ is defined in the following way:
\begin{align*}
    G &\le_3 \bar{R}, B \le_3 \bar{R}\\
    R &\le_3 \bar{G}, B \le_3 \bar{G}\\
    R &\le_3 \bar{B}, G \le_3 \bar{B}
\end{align*}

Basically $\bar{R}$ encodes ''not R'' and therefore both $B$ and $G$ can match $\bar{R}$, this leads us to the nice representation of the configurations $\C$ depicted in \Cref{fig:3COLpO}. 

\begin{figure}[!ht]
    \centering
    \includesvg[width=0.7\linewidth]{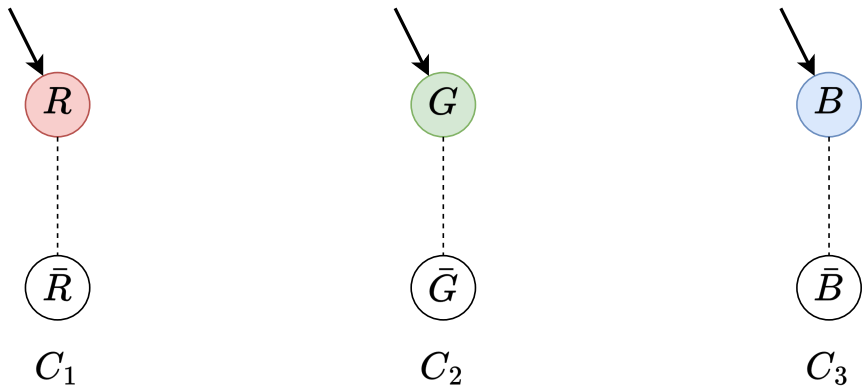}
    \caption{Using the ''not \_\_'' labels ($\{\bar{R}, \bar{G}, \bar{B}\}$) we can encode 3-coloring in the very intuitive way, where a node that outputs $R$ must have only neighbors that are ''not $R$''.}
    \label{fig:3COLpO}
\end{figure}

We want to show that allowing auxiliary labels with a partial order is just a notational convenience and does not actually change the set of problems. To achieve this, we first have to fix some notation.

\begin{definition}[required node]
    For any configuration $C = (\V, \E, \cen, \mu, \tau)$, we call a node $v \in \V\setminus \set{\cen}$ \emph{required} if and only if there exists an incident edge $e$, such that $\tau(e) = required$. In the same way, we call $v$ optional if and only if for all adjacent edges $e$ it holds that $\tau(e) = optional$.
\end{definition}

We prove the equivalence of LFLs with partially ordered outputs to regular LFLs, by removing auxiliary labels from our configurations one at a time. To that end we prove the following lemma.

\begin{lemma}\label{lem:redAux}
    Let $\Pi = (\Sin, \Sout, A, \le, r, \C)$ be an LFL with partially ordered outputs. For every $r$-hop configuration $C \in \C$ which contains at least one node with an auxiliary label, there exists a finite family $\C_C$ of $r$-hop configurations, such that 
    \begin{itemize}
        \item Every $C' \in \C_C$ contains strictly fewer auxiliary labels than $C$.
        \item If an input/output-labeled $r$-hop neighborhood $\Nh_r(v)$ matches $C$, then $\Nh_r(v)$ matches some configuration $C' \in \C_C$.
        \item Conversely, if an input/output-labeled $r$-hop neighborhood $\Nh_r(v)$ matches some configuration $C' \in \C_C$, then $\Nh_r(v)$ also matches $C$.
    \end{itemize}
\end{lemma}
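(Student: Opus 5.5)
We plan to remove a single auxiliary label at a time. Fix a configuration $C=(\V,\E,\cen,\mu,\tau)\in\C$ and pick one node $w\in\V$ whose label is $\mu(w)=(x,a)$ with $a\in A$. Let $S_a:=\{b\in\Sout \mid b\le a\}$ be the finite set of ordinary output labels that can be matched onto $w$. The family $\C_C$ will consist of configurations obtained from $C$ by replacing $w$ with copies carrying concrete labels from $S_a$. Every other node keeps its label. Hence each member of $\C_C$ has exactly one auxiliary label fewer than $C$, which gives the first property.

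\textbf{Case 1: $w$ is required, or $w=\cen$.} By Rule~\ref{rule:required}, exactly one node of $\Nh_r(v)$ is mapped onto $w$, so it carries exactly one concrete label $b\in S_a$. For each $b\in S_a$ we let $C_b$ be $C$ with $\mu(w)$ replaced by $(x,b)$, and we take $\C_C=\{C_b\mid b\in S_a\}$.

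For the forward direction, a matching $f$ of $C$ is also a matching of $C_b$ with $b=\sigma(f^{-1}(w))$. For the converse, a matching of $C_b$ is literally a matching of $C$, because $b\le a$ and $\tau$ and the graph structure are unchanged.

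If the center carries the auxiliary label, the same argument applies, since exactly one node (namely $v$) is mapped to $\cen$.

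\textbf{Case 2: $w$ is optional.} This is the main obstacle. Arbitrarily many nodes of $\Nh_r(v)$, with different concrete labels from $S_a$, may be mapped onto $w$. Moreover, each of them is the root of a part of the neighborhood mapped onto the subtree hanging at $w$.

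Since girth is not assumed here, we avoid relying on trees. Let $p$ denote the neighbors of $w$ in $C$; all edges at $w$ are optional. We replace $w$ by $|S_a|$ twin nodes $w_b$, one for each $b\in S_a$, with $\mu(w_b)=(x,b)$. Each twin $w_b$ is adjacent to exactly the same nodes as $w$, with the same $\tau$ value (optional); if $w$ has a self-loop, we give $w_b$ a self-loop too. This yields a single configuration $C'$, and we set $\C_C=\{C'\}$.

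The folding map $g:C'\to C$, which sends every $w_b$ to $w$, is a graph homomorphism. It respects labels because $b\le a$. It is the identity on the required subgraph, because no required edge touches $w$.

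\emph{Converse direction.} Given a matching $f'$ of $C'$, the composition $g\circ f'$ matches $C$. The centering, homomorphism and label conditions are immediate. The required-isomorphism condition holds because $g$ is bijective on required nodes and $f'^{-1}$ of the required part equals $(g f')^{-1}$ of it.

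\emph{Forward direction.} Given a matching $f$ of $C$, define $f'(u)=w_{\sigma(u)}$ if $f(u)=w$, and $f'(u)=f(u)$ otherwise. This respects labels by construction. It is a homomorphism because each edge $\{u,u'\}$ with $f(u)=w$ maps to $\{w_{\sigma(u)},f'(u')\}$. That pair is an edge of $C'$, since $w_{\sigma(u)}$ copies all edges of $w$. The twins are also mutually adjacent whenever $w$ has a self-loop, which covers the case where both endpoints map to $w$.

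\textbf{Remaining check.} The care needed is to confirm that this duplication does not create new required edges and does not alter the radius bound. The twins have the same distances to $\cen$ as $w$, and the required condition is untouched. After this, iterating the construction over all auxiliary nodes terminates by the first property, and it yields a configuration set without auxiliary labels.
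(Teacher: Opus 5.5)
Your treatment of non-center auxiliary nodes matches the paper's proof. Case~1 is identical. Your Case~2 is the paper's ``$|Y|$ copies of the node, one per label in $Y$'' construction, made precise by letting the twins share $w$'s adjacency and checking both directions via the folding homomorphism $g$. State explicitly that the edges $\{w_b,w_{b'}\}$ are added when $w$ has a self-loop, since your forward direction relies on them. That part is correct and more careful than the paper's sketch.

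The gap is an auxiliary-labeled center. The paper's case split does not cover this case either, because required and optional are defined only for non-center nodes. Your claim that exactly one node, namely $v$, is mapped to $\cen$ follows from the required-edge rule only when $\cen$ is incident to a required edge. Otherwise matchings are just homomorphisms and can fold back onto the center.

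Concretely, let $r=2$, and let $C$ consist of $\cen$ labeled $(x,a)$ with a single optional neighbor $p$ labeled $(x,c)$. Let $S_a=\{b_1,b_2\}$ with $b_1,b_2$ incomparable. The path $v,u,z$ with outputs $b_1,c,b_2$ matches $C$ via $v,z\mapsto\cen$ and $u\mapsto p$. It matches neither $C_{b_1}$ nor $C_{b_2}$, so your forward direction fails.

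The repair combines your two cases:
\begin{itemize}
\item For each $b\in S_a$, label $\cen$ by $(x,b)$.
\item Absorb all other preimages of $\cen$ with optional, concretely labeled non-center twins $t_{b'}$ ($b'\in S_a$) of $\cen$ that share its adjacency. For $r=1$ these are needed only if $\cen$ has a self-loop, and then only the edges to $\cen$.
\item Folding the $t_{b'}$ onto $\cen$ gives the converse exactly as in your Case~2.
\end{itemize}

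A smaller degenerate point, also shared with the paper: if $S_a=\emptyset$, your Case~2 simply deletes $w$, which can disconnect $C$. In that case restrict to the radius-$r$ ball of $\cen$ in $C-w$, or take $\C_C=\emptyset$ if a required node is lost.
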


\begin{proof}
    Let $\Pi=(\Sin,\Sout,A,\le,r,\C)$ be an LFL with partially ordered outputs, and let
    $C=(\V,\E,\cen,\mu,\tau)\in\C$ be a configuration that uses at least one auxiliary
    label. We pick one auxiliary-labeled node and eliminate its auxiliary label by
    replacing $C$ with a finite family $\C_C$ of configurations with strictly fewer auxiliary labels. 

    Let $v \in \V$ be a node that is assigned an auxiliary label $y$.
    Let 
    \[
        Y=\{y'\in\Sout \mid y'\le y\}
    \]
    be the set of all outputlabels that are allowed for $v$ under the partial
    order. We distinguish two cases depending on whether or not $v$ is required.

    \paragraph*{Case 1: $v$ is required}
    For every $y'\in Y$ we create a copy of $C$ in which $v$ is labeled with $y'$. Let $\C_C$ be the set of all such copies.

    \paragraph*{Case 2: $v$ is optional}
    We now form a single configuration $C'$.  
    Create $|Y|-1$ copies of $v$ (so we have $|Y|$ versions of $v$ total) and label one of them with $y'$, for every possible $y' \in Y$.  
    Since all replaced labels are in $\Sout$, the configuration $C'$ contains strictly
    fewer auxiliary labels.

    Define $\C_C=\{C'\}$ in this case.

    \medskip
    Clearly we satisfy the first part of the lemma, since each newly created configuration has exactly 1 auxiliary label less.

    \paragraph*{If $\Nh_r(v)$ matches $C$, then it matches some $C'\in\C_C$}
    Let $f:\Nh_r(v)\to \V$ be a matching of $C$.

    \begin{itemize}
        \item \emph{Case 1}  
        Since $v$ is required exactly one node is mapped to $v$ by $f$. Because $f$ respects labels, the label assigned to $v$ by $f$ is in $Y$ and hence a suitable version of $C$ exists in $\C_C$.

        \item \emph{Case 2:}  
        If $f$ maps some node $w$ to $v$, again because $f$ respects labels, $w$ has some output $y' \in Y$, so map $w$ to the suitable copy in $C'$.
    \end{itemize}

    The other direction follows analogously.
\end{proof}

Equipped with \Cref{lem:redAux} we can show that auxiliary variables are just a nice tool, but do not change the class of problems.

\begin{lemma}\label{lem:poLFlisLFL}
    For every LFL with partially ordered outputs, there exists an equivalent LFL without partially ordered outputs, and vice versa.
\end{lemma}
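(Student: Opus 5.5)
The ``vice versa'' direction is immediate. An ordinary LFL $\Pi=(\Sin,\Sout,r,\C)$ is a partially ordered LFL $(\Sin,\Sout,\emptyset,=,r,\C)$: there are no auxiliary labels and the trivial order is equality. With this order, the modified \textbf{Labelings are respected} rule ($\sigma(u)\le\mu(f_v(u))$) reduces to equality. The two problems therefore have exactly the same solutions $(\sigma,F)$.

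For the main direction, I eliminate auxiliary labels by induction, repeatedly applying \Cref{lem:redAux}. Given $\Pi=(\Sin,\Sout,A,\le,r,\C)$, I use the potential $\Phi(\C)=\sum_{C\in\C} a(C)$, where $a(C)$ is the number of auxiliary-labeled nodes in $C$; this is finite because $\C$ is finite and each configuration is finite. While some $C\in\C$ has $a(C)>0$, I replace $C$ by the family $\C_C$ from \Cref{lem:redAux}.

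The potential alone does not guarantee termination, since $\C_C$ may contain several configurations. I therefore argue termination per configuration. Each $C'\in\C_C$ has $a(C')<a(C)$, so the process applied to $C$ forms a finitely branching tree whose depth is at most $a(C)$. This yields a finite family $\widehat{\C}_C$ of configurations, each free of auxiliary labels. I set $\C'=\bigcup_{C\in\C}\widehat{\C}_C$, which is finite, and $\Pi'=(\Sin,\Sout,r,\C')$.

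Equivalence follows by transitivity, inducting on the depth, from the two matching directions of \Cref{lem:redAux}. A labeled neighborhood $\Nh_r(v)$ matches $C$ if and only if it matches some member of $\widehat{\C}_C$. Consequently, $(\sigma,F)$ is a solution of $\Pi$ exactly when each $v$ admits a matching into some configuration of $\C'$. The solution sets agree on $\sigma$, and the matching maps can be converted locally, by brute force over the finite configuration, in $0$ extra rounds beyond reading $\Nh_r(v)$.

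The main subtlety is that once $\C'$ contains no auxiliary labels, the matching rule for configurations of $\C'$ must coincide with the ordinary LFL rule. The order $\le$ may still relate ordinary labels to one another; for instance $y\le y'$ could hold for distinct $y,y'\in\Sout$. In that case ``$\sigma(u)\le\mu(f(u))$'' is weaker than equality. I would handle this by assuming, as the definition intends, that $\le$ restricted to $\Sout$ is equality. If it is not, each ordinary label $y$ is treated like an auxiliary label with downset $Y=\{y'\in\Sout\mid y'\le y\}$, and the same case split as in \Cref{lem:redAux} applies: copies for required nodes and multiplied optional copies for optional nodes. After this step every configuration label has a singleton downset, and the rule reduces to equality.
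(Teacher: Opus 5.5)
Your proposal is correct and follows the paper's proof. The trivial direction uses $A=\emptyset$ with the discrete order, and the main direction exhaustively applies \Cref{lem:redAux} to each configuration and takes the union of the resulting auxiliary-free families. Your explicit termination argument and your treatment of an order that is nontrivial on $\Sout$ are extra rigor the paper omits; in particular, its final step tacitly assumes $\le$ restricted to $\Sout$ is equality. Your fix is cleanest if stated as first replacing each ordinary configuration label $y$ by a fresh auxiliary label with the same downset and making the order on $\Sout$ discrete, rather than saying the labels ``acquire singleton downsets''.
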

\begin{proof}
    Clearly, every LFL is also an LFL with partially ordered outputs, by setting $A=\emptyset$ and using the trivial partial order in which distinct elements are incomparable.
    
    \smallskip

    We therefore focus on the more substantial direction. 
    Let $\Pi = (\Sin, \Sout, A, \le, r, \C)$ be an LFL with partially ordered outputs. 
    For every configuration $C\in\C$, we iteratively apply \Cref{lem:redAux} until we obtain a set of configurations that contains no auxiliary labels. Let $\C'$ be the union of all configuration sets produced in this way. 
    Since no configuration in $\C'$ uses any auxiliary labels, the structure
    \[
        \Pi' = (\Sin, \Sout, r, \C')
    \]
    is a valid LFL \emph{without} partially ordered outputs.

    Finally, the two correctness guarantees of \Cref{lem:redAux}---that every neighborhood matching a configuration in $\C$ matches some configuration in the expanded family, and that every match to a configuration in the expanded family also matches the original one---imply that $\Pi$ and $\Pi'$ define exactly the same set of legal solutions.
\end{proof}

\bibliographystyle{plainurl}
\bibliography{biblio}

\end{document}